\definecolor{ms}{rgb}{0,.4,1}
\newcommand{\ms}[1]{{\color{black}#1}}
\definecolor{henrik}{rgb}{1,.4,0}
\newcommand{\mb}[1]{\mathbb{#1}}
\newcommand{\Tr}[1]{\mathrm{Tr}\left[ #1\right]} 
\newcommand{\id}{\mathbbm{1}}
\newcommand{\R}{\mb{R}}
\newcommand{\ket}[1]{\left.\left|{#1}\right.\right\rangle}
\newcommand{\bra}[1]{\left.\left\langle{#1}\right.\right|}
\newcommand{\braket}[2]{\left\langle #1 \middle| #2 \right\rangle}
\newcommand{\ketbra}[2]{\ket{#1} \!\! \bra{#2}}
\newcommand{\bigo}[1]{\mathcal{O}\left (#1\right)}
\newcommand{\rom}[1]{\uppercase\expandafter{\romannumeral #1\relax}}
\newcommand{\norbra}[1]{\left( #1\right)}
\newcommand{\sqrbra}[1]{\left[ #1\right]}
\newcommand{\curbra}[1]{\left\{ #1\right\}}
\newcommand{\lind}{\mathcal{L}}
\newcommand{\J}{\mathbb{J}}
\newcommand{\dt}{{\rm d}t\,}
\newcommand{\de}{{\rm d}}
\newcommand{\RR}{\mathbb{R}}
\newcommand{\LL}{\mathbb{L}}
\newcommand{\TrR}[2]{\mathrm{Tr}_{#1}\left[ #2\right]}
\newcommand{\idO}{\mathds{I}}
\newcommand{\norbraB}[1]{\boldsymbol{(}#1\boldsymbol{)}}
\newtheorem{lemma}{Lemma}
\newtheorem{theorem}{Theorem}
\newtheorem{corollary}{Corollary}[theorem]
\theoremstyle{definition}
\newtheorem{definition}{Definition}
\newcounter{counterN}
\newcommand{\ord}[1]{{\mathcal{O}\left(#1\right)}}
\newcommand{\cJ}{\mathcal{J}}
\newcommand{\DD}{\mathcal{D}}
\newtheorem*{theorem*}{Theorem}
\definecolor{my-green}{RGB}{0,144,81}
\definecolor{my-red}{RGB}{255,113,91}
\newenvironment{additional-info}[1]{ 
	\begin{tcolorbox}[breakable,title=\textbf{#1},width=\textwidth]}{\end{tcolorbox}}
\newenvironment{additional-proof}[1]{ 
	\begin{tcolorbox}[breakable,colback=my-red!15!white,colframe=my-red!85!black,title=\textbf{#1},width=\textwidth]}{\end{tcolorbox}}
\begin{document}
	
	\title{Quantum Fisher Information and its dynamical nature}
	
	\author{Matteo Scandi}
        \affiliation{Institute for Cross-Disciplinary Physics and Complex Systems (IFISC) UIB-CSIC, Campus Universitat Illes Balears,E-07122 Palma de Mallorca, Spain}
	\affiliation{ICFO - Institut de Ciencies Fotoniques, The Barcelona Institute of\\
		Science and Technology, Castelldefels (Barcelona), 08860, Spain}
	
	\author{Paolo Abiuso}
	\affiliation{Institute for Quantum Optics and Quantum Information - IQOQI Vienna,
		Austrian Academy of Sciences, Boltzmanngasse 3, A-1090 Vienna, Austria}
	\affiliation{ICFO - Institut de Ciencies Fotoniques, The Barcelona Institute of\\
		Science and Technology, Castelldefels (Barcelona), 08860, Spain}
	
	\author{Jacopo Surace}
	\affiliation{Perimeter Institute for Theoretical Physics,
		31 Caroline Street North, Waterloo, ON N2L 2Y5, Canada}
	\affiliation{ICFO - Institut de Ciencies Fotoniques, The Barcelona Institute of\\
		Science and Technology, Castelldefels (Barcelona), 08860, Spain}

	\author{Dario De Santis}
	\affiliation{Scuola Normale Superiore, I-56126 Pisa, Italy}
	\affiliation{ICFO - Institut de Ciencies Fotoniques, The Barcelona Institute of\\
		Science and Technology, Castelldefels (Barcelona), 08860, Spain}

	\begin{abstract}
		The importance of the Fisher information metrics and its quantum generalisations is testified by the number of applications that this has in very different fields, ranging from hypothesis testing to  metrology, passing through thermodynamics. Still, from the rich range of possible quantum Fisher informations, only a handful are typically used and studied. This review aims at collecting a number of results scattered in the literature and provide a cohesive treatment to people who begin the study of Fisher information and to those who are already working on it to have a more organic understanding of the topic. Moreover, we complement the review with new results about the relation between Fisher information and physical evolutions. 
		Extending the analysis of~\cite{abiuso2022characterizing,scandi2023physicality},
		we show that dynamical properties such as (complete) positivity, Markovianity, detailed balance, retrodictive power of evolution maps can be caracterized in terms of their relation with respect to the Fisher information metrics. These results show a fact that was partially overseen in the literature, namely the inherently dynamical nature of Fisher information.
	\end{abstract}
	\maketitle
	
	
	
	
	\newpage
	\begin{tcolorbox}[breakable,colback=ms!15!white,colframe=ms!85!black,title=\textbf{Index},width=\textwidth]\label{index}
		\hyperref[introduction]{\textbf{I. Introduction}}\\
		~\\
		\hyperref[contrastFunctionsToFisher]{\textbf{II. From quantum contrast functions to quantum Fisher information}}\\
		\emph{Generalised quantum divergences (contrast functions) are introduced, and their general properties  discussed. Thm.~\ref{cf:thm:Ruskai} gives a characterisation of their local behaviour in terms of a family of functionals. The same objects also appear in the characterisation of monotone metrics on the space of states (Thm.~\ref{cf:thm:Petz}). These are the two ways of defining the family of quantum Fisher information metrics. The informative boxes of this section provide more details about the relation between contrast functions and their $g$ function (\hyperref[box:ContrastFunctions]{Box 1}),  a first characterisation of standard monotones (\hyperref[box:standardMonotones]{Box 2}) and  other possible quantifiers of statistical difference (\hyperref[box:statisticalQuantifiers]{Box 3}).}\\
		~\\
		\hyperref[FisherDynamics]{\textbf{III. Dynamical properties of Fisher information}}\\
		\emph{The equivalence between the positivity of linear maps and the contraction of Fisher information is proved (Thm.~\ref{cf:theo:CPiffContracts}). This implies that one can define the physicality of an evolution by looking at whether it contracts everywhere the Fisher information on the space of states tensor an ancilla of the same dimension (Corollary~\ref{cf:cor:CPiffContracts}). In \hyperref[box:cpIffContracts]{Box 4} the same results are derived by considering contrast functions.}
		\begin{enumerate}[label=\textbf{\Alph*.}]
			\item \hyperref[markovianity]{\textbf{Markovianity as monotonous contraction of Fisher information}}\\
			\emph{A novel expression of the Fisher information currents is presented (Thm.~\ref{cf:thm:FisherFlow}). The relation between Fisher information and Markovianity is completely characterised: it is shown that the monotonous contraction of Fisher information on the space of states (adjoined with an ancilla of the same dimension) is equivalent to the Markovianity of the corresponding evolution (Thm.~\ref{cf:theo:markov_iff_contract}). Despite this deep connection between the two notions, it is impossible to  witness non-Markovianity through the expansion of Fisher information for all evolutions only by the use of extra ancillas and copies of the channel (Thm.~\ref{cf:theo:no-go}). Still, if one allows for a post-processing at the end of the dynamics, it is also shown that one can actually provide an operational witness (Thm.~\ref{cf:theo:witness}).}
			\item \hyperref[retrodiction]{\textbf{Retrodiction and Fisher information}}\\
			\emph{ A generalised version of the Petz recovery map is introduced, and it is shown how this allows to canonically map evolved states (close to a prior)  back into their initial condition. Markovianity is then related to the behaviour of the error committed during this procedure (Thm.~\ref{cf:theo:recovery_fisher}). In~\hyperref[box:contractionCoefficients]{Box 9} we discuss the topic of universal recovery, and we give some preliminary results using $\chi^2_f$-divergences (Thm.~\ref{theo:chi2recoverybound} and corollaries thereof). Moreover, we find that the traditional Petz recovery map can be characterised in two different ways: either as the unique universal recovery map that is a quantum channel (Corollary~\ref{cor:eqChi2divergences} and~\ref{cor:generalisedUniversalRecovery}); or as the one whose spectrum dominates all other maps in the family (Thm.~\ref{thm:petzSup}).}
			\item \hyperref[detailedBalance]{\textbf{Fisher information and detailed balance}}\\
			\emph{The geometrical structure of quantum and classical dynamics satisfying detailed balance is discussed. 
				In Thm.~\ref{cf:thm:classicalDB} we show how this condition can naturally be formulated for classical systems in terms of the self-adjointness of the generator with respect to the Fisher scalar product. A similar result is also proved for quantum dynamics (Thm.~\ref{cf:thm:eqDB}) where we find a slightly more general form of the Lindbladian compared to the most used choice in the literature. }
		\end{enumerate}
		\hyperref[propertiesFisher]{\textbf{IV. Mathematical properties of  quantum Fisher information}}
		\begin{enumerate}[label=\textbf{\Alph*.}]
			\item \hyperref[sec:standardMonotonesProperties]{\textbf{The set of standard monotone functions}}\\
			\emph{The set of standard monotone functions is characterised, focusing on an important symmetry of the set and its convex structure, generated by a continuous family of vertices.}
			\item \hyperref[sec:FisherOperatorProperties]{\textbf{Properties of quantum Fisher operators}}\\
			\emph{The way in which the properties of the defining functions are mirrored on their corresponding Fisher operators is presented, giving a first general characterisation.}
			\item \hyperref[sec:CPfisher]{\textbf{Complete positivity of the Fisher information operators}}\\
			\emph{A full characterisation of the complete positive Fisher information operator is provided. In particular, Thm.~\ref{cf:thm:cpCharacterisation} gives the most general expression for such maps.}
		\end{enumerate}
		\hyperref[gardenFisher]{\textbf{V. A garden of quantum Fisher information}}\\
		\emph{We present the most notable examples of quantum Fisher operators, summarised in table~\ref{cf:fig:figstandardmonotones}, and in Fig.~\ref{cf:fig:figstandardmonotones2}.}\\
		
		\hyperref[conclusions]{\textbf{VI. Conclusions and open questions}}
	\end{tcolorbox}
	
	\section{Introduction}\label{introduction}
	
	The goal of physics is to find fixed laws in an evolving world. The discoveries of the last century, especially in statistical mechanics, but even more fundamentally in quantum theory,
	have shown that relaxing the concept of a  law from the determinism that characterised Newtonian mechanics to the more general notion of probabilistic relation not only allows for a richer expressibility, but ultimately for a more faithful representation of our experience of the world. The same change in perspective is also useful in the description of time evolutions, justifying the replacement of deterministic transitions with noisy transformations acting on probability distributions (or density matrices, in the case of quantum theory). For this reason it is no surprise that statistical considerations play such a central role in modern physics. 
	For example, when assessing the quality of a theory, one is confronted with the problem of quantifying the proximity between some experimental data and the predictions of a theoretical model.
	More fundamentally, quantifying the discrepancy between (classical and quantum) statistical mixtures is crucial in all areas of physics, in order to characterize the informational content of one's particular description of a physical phenomenon.
	
	Without further constraints, there is no unique procedure that can capture the many different ways in which probability distributions can differ. Still, if one restricts their attention to 
	regular metric structures,
	one can single out a unique family of statistical distances, namely the Fisher information one, only by imposing a simple request: the fact that the distinguishability between two probabilities should decrease under noisy transformations. This condition reflects the intuitive idea that if noise affects the dynamics of an experiment, the corresponding capability to distinguish different physical states will diminish as well. The fact that statistical distances can be solely identified from a requirement about their dynamics, which goes under the name of Chentsov-Petz Theorem~\cite{cencovStatisticalDecisionRules2000,campbell1986extended,petzMonotoneMetricsMatrix1996} (see Thm.~\ref{cf:thm:Petz}), is quite remarkable, as \emph{a priori} it is not obvious why there should be such a strong relation with time evolutions.
	
	This peculiarity was long time overlooked, eclipsed by the abundance of operational uses of the Fisher information in disparate fields:
	most famously, in the field of metrology the
	Cramér-Rao bound \cite{Cramer46,Rao92} quantifies the ultimate limits to parameter estimation via the Fisher information, both in the classical and quantum scenario \cite{paris2009quantum,sidhu2020geometric};
	in a related area, the Chernoff bounds characterise the asymptotics of multi-copy hypothesis testing~\cite{Chernoff52,audenaert2007discriminating}; quantum Fisher information has been used as well in the witnessing and characterization of entanglement~\cite{pezze2009entanglement,li2013entanglement} and critical behaviour of systems close to phase transitions~\cite{camposvenuti2007quantum,zanardi2007information,ma2009fisher}; applications of it have been studied in the fields of 
	computational complexity~\cite{rissanen1996fisher}, chemistry~\cite{nalewajski2008use}, and even evolution theory~\cite{frank2009natural}.
	
	Still, few works have appeared exploring the relation between Fisher metrics and the rate at which general dynamics degrade information (see for example~\cite{temmeH2divergenceMixingTimes2010, hiai2016contraction}, or more recently~\cite{gao2023sufficient}, which expands the results of~\cite{jungeUniversalRecoveryMaps2018a}  connecting the the amount of information loss 
	to the redtrodicting power of the Petz recovery map~\cite{petz1986sufficient}). 
	Here, following the investigations started in~\cite{abiuso2022characterizing,scandi2023physicality}, we go to the root of the interconnection between Fisher information metrics and physical evolutions: in Sec.~\ref{FisherDynamics} a dual of the Chentsov-Petz Theorem is proven, showing that (complete) positivity of a linear map can be defined solely in terms of the induced decrease of Fisher information (cf. also~\cite{scandi2023physicality}); this same tool can then be used to characterize Markovian evolutions (Sec.~\ref{markovianity}) and witnesses of non-Markovian effects; we then show how the quantum Fisher information operator formalism induces a natural notion of retrodiction for evolution maps, both in terms of retrodiction maps as well as in the quantification of information degradation in time (Sec.~\ref{retrodiction}); finally we study quantum detailed balance under the lenses of the Fisher information operators, analysing the geometry and structural characterisation of detailed-balanced Lindblad operators (Sec.~\ref{detailedBalance}). These identifications hint to some deep relation between statistical inference and time evolutions.
	
	Then, the goal of this work is twofold: on the one hand, we aim to motivate the reader to the study of the Fisher information by proving its intimately dynamical nature, showing how properties of physical evolutions can naturally be cast in this language (Sec.~\ref{FisherDynamics}); on the other, conscious of the mathematical technicalities that could scare away new practitioners, we provide a comprehensive guide to the  properties of quantum Fisher information while trying to keep the exposition as accessible as possible. To this end, we first introduce in Sec.~\ref{contrastFunctionsToFisher} the bare minimum formalism to discuss the dynamical properties of Fisher information, postponing the more mathematical discussions to the second part of the work, including the analysis of the most relevant formal properties of the quantum Fisher information operators in Sec.~\ref{propertiesFisher}. We point out in this context that Sec.~\ref{gardenFisher}, which contains a list of different expressions that the Fisher information can take, was designed to favour the sporadic consultation, and should be considered as a field guide to the many different forms that this quantity can take. Moreover, in order to keep the exposition as pedagogical as possible, we complement the main text with informative boxes where specific subjects are explored in greater detail. When not interested, readers can skip these boxes without compromising the understanding of the main text.

	\section{From quantum contrast functions to quantum Fisher information}\label{contrastFunctionsToFisher}
	
	There are two paths to arrive to the definition of quantum Fisher information metrics: one statistical and one dynamical. The first defines them as the metrics that arise from the local expansion of quantifiers of statistical difference, as it is explained in Thm.~\ref{cf:thm:Ruskai}. The other approach singles out the Fisher information family from all the possible metrics on the space of states, as the unique family that contracts under all physical evolutions (Thm.~\ref{cf:thm:Petz}). It is a truly remarkable fact that these two definitions actually define the same family, leaving us with the choice of which path to undertake. We preferred here to start with the statistical definition, in order to insert local results in a more global context. Moreover, this allows us to introduce the bare minimum formalism needed to understand the dynamical nature of Fisher information (Sec.~\ref{FisherDynamics}).
	
	We then have to ask ourselves how to assess the similarity of different classical statistical distributions. This problem does not have a straightforward answer: different methods yield different quantifications, and there is no clear argument that would single out a unique strategy over all the others. As a matter of fact, this difficulty actually reflects the many different behaviours that a probability distribution shows depending on the regime one is focusing on: two distributions could be very similar in the asymptotic regime, but show substantial differences when one restricts their attention to single-shot experiments. 
	For this reason, rather than trying to reduce the richness of the phenomenology that one can focus on,
	Csiszár generalised the usual relative entropy with an axiomatic construction based on the ansatz~\cite{csiszarInformationTheoryStatistics2004}: 
	\begin{align}
		H_g(\rho||\sigma) :=\Tr{\rho\,g(\sigma/\rho )}\,,\label{cf:eq:CsizarDivergence}
	\end{align}
	where $\rho$ and $\sigma$ are classical probability vectors\footnote{Here and in the rest of the work, we identify probability vectors with diagonal density matrices, so the expression in Eq.~\eqref{cf:eq:CsizarDivergence} should be understood as: $H_g(\rho||\sigma) :=\sum_i \;\rho_i\,g(\sigma_i/\rho_i)$.} $g$ is a convex function with domain over the positive reals, and finally we abuse the notation slightly by denoting with $\sigma/\rho$ the vector with components $\{\sigma_i/\rho_i\}$. The quantities in Eq.~\eqref{cf:eq:CsizarDivergence} take the name of \emph{statistical divergences} or \emph{contrast functions}~\cite{csiszarInformationTheoryStatistics2004,lesniewskiMonotoneRiemannianMetrics1999,petz2002covariance}.
	
	In~\cite{petz1986quasi, lesniewskiMonotoneRiemannianMetrics1999} the same axiomatic construction was presented for non-commuting states. In particular, the axioms chosen are:
	\begin{enumerate}
		\item \emph{positivity: } $H(\rho||\sigma)\geq0$, with equality \emph{iff} $\rho\equiv\sigma$;\label{cf:item:positivity}
		\item \emph{homogeneity:} $H(\lambda\rho||\lambda\sigma) = \lambda\,H(\rho||\sigma) $, for $\lambda>0$;\label{cf:item:homogeneity}
		\item \emph{joint convexity:} namely the condition that 
		$
		H(\lambda \rho_1 +(1-\lambda )\rho_2||\lambda \sigma_1 +(1-\lambda )\sigma_2)\leq \lambda H(\rho_1 ||\sigma_1 ) + (1-\lambda) H(\rho_2 ||\sigma_2 )
		$
		for $0\leq\lambda\leq1$; \label{cf:item:jointConvexity}
		\item \emph{monotonicity:} for any Completely Positive Trace Preserving (CPTP) map $\Phi$, it should hold that $H(\Phi(\rho)||\Phi(\sigma))\leq H(\rho||\sigma)$ (this condition is implied directly from the previous two, see~\hyperref[box:standardMonotones]{Box 1});\label{cf:item:monotonicity}
		\item \emph{differentiability:} the function $h_{\rho,\sigma}(x,y) := H(\rho + x A || \sigma + y B)$, for $A$ and $B$ Hermitian operators, is $C^\infty$.\label{cf:item:differentiability}
	\end{enumerate}
	Whereas the axioms are a straightforward generalisation of the ones chosen by Csiszár, for quantum states it is not clear how to generalise the original ansatz, especially due to the ratio in the classical expression in Eq.~\eqref{cf:eq:CsizarDivergence}. One standard approach, coming from constructions in ${\rm C}^*$-algebras, is to introduce the two superoperators:
	\begin{align}
		\LL_\rho[A] = \rho\,A\qquad\qquad\qquad
		\RR_\rho[A] = A\,\rho
	\end{align}
	that generalise the multiplication by a scalar to the non-commuting case. It is straightforward to verify that the inverse of the left/right multiplication operator is simply given by $(\LL_\rho)^{-1}/(\RR_\rho)^{-1} = \LL_{\rho^{-1}}/\RR_{\rho^{-1}}$. Moreover, it should also be noticed that 
	$[\LL_\rho,\RR_\sigma] =0$ for any two states $\rho$ and $\sigma$. Then, the ansatz proposed in~\cite{petz1986quasi} reads:
	\begin{align}\label{cf:eq:HGexpressionApp}
		H_g (\rho||\sigma) := \Tr{ g\boldsymbol{(}\LL_\sigma\RR_\rho^{-1}\boldsymbol{)}\sqrbra{\rho}}\,,
	\end{align}
	where now the role of the ratio in Eq.~\eqref{cf:eq:CsizarDivergence} is taken by $\LL_\sigma\RR_\rho^{-1}$, called the relative modular operator.
	In order for the axioms above to hold, one needs to impose that $g(x)$ is an operator convex function\footnote{An operator convex function $g(x)$ is defined by the property $g( t A + (1-t) B)\leq t\,g(A) + (1-t)\, g(B)$ for any two Hermitian $A$ and $B$, and $t\in[0,1]$.} $g(x):(0,\infty)\rightarrow\RR$ satisfying $g(1) = 0$~\cite{lesniewskiMonotoneRiemannianMetrics1999}. In the \hyperref[box:ContrastFunctions]{Box 1} at the end of the section we show how the properties of $g$ are connected to the axioms above. It should also be noticed that if one chooses $g(x) = -\log x$, then Eq.~\eqref{cf:eq:HGexpressionApp} gives the usual relative entropy. 
	
	In order to highlight the similarities and the differences with the classical contrast functions, it is insightful to look at the coordinate expression of Eq.~\eqref{cf:eq:HGexpressionApp}. First, define the eigensystems of $\rho$ and $\sigma$ as:
	\begin{align}
		\rho = \sum_i \;\rho_i \ketbra{\rho_i}{\rho_i}\,, \qquad\qquad\sigma = \sum_j \;\sigma_j \ketbra{\sigma_j}{\sigma_j}\,.\label{eq:rhoSigmaCoordinateExpression}
	\end{align}
	Since $\LL_\sigma\RR_\rho^{-1} [\ketbra{\sigma_j}{\rho_i}] = \frac{\sigma_j}{\rho_i}\ketbra{\sigma_j}{\rho_i}$, the relative modular operator is diagonal in the basis given by $\{\ketbra{\sigma_j}{\rho_i}\}$. Hence, Eq.~\eqref{cf:eq:HGexpressionApp} can be expanded as:
	\begin{align}
		H_g (\rho||\sigma) :&= \sum_{i,j}\; \Tr{g\boldsymbol{(}\LL_\sigma\RR_\rho^{-1}\boldsymbol{)}\sqrbra{\ketbra{\sigma_j}{\sigma_j}\rho\ketbra{\rho_i}{\rho_i} }} = \sum_{i,j}\;\rho_i\,g\norbra{\frac{\sigma_j}{\rho_i}} |\!\braket{\sigma_j}{\rho_i}\!|^2\,,\label{cf:eq:HGCoordinateExpression}
	\end{align}
	where we inserted the two resolutions of the identity $\sum_i \ketbra{\rho_i}{\rho_i}  = \sum_j \ketbra{\sigma_j}{\sigma_j}  =\id$ to rewrite the state $\rho$ in the diagonal basis of the relative modular operator.
	In particular, it is clear that if $\rho$ and $\sigma$ are diagonal in the same basis, then $|\!\braket{\sigma_j}{\rho_i}\!|^2 = \delta_{i,j}$, and Eq.~\eqref{cf:eq:HGexpressionApp} reduces to the classical expression in Eq.~\eqref{cf:eq:CsizarDivergence}. Moreover, it should be noticed that the requirement $g(1) = 0$ implies that $H_g (\rho||\rho) =0$ for any $\rho$.
	
	Interestingly, if one restricts their attention to normalised states, any linear term in $g(x)$ does not contribute to the contrast function, as it follows from the identity $H_{(x-1)} (\rho||\sigma)= \Tr{\rho-\sigma} = 0$ (which can be verified in coordinates or directly from Eq.~\eqref{cf:eq:HGexpressionApp} (see \hyperref[box:ContrastFunctions]{Box 1}). For this reason, we consider two functions to be equivalent if $g_1(x)-g_2(x)\propto(x-1)$. Then, ignoring the linear contributions, any operator convex function with $g(1)=0$ has the following integral expression:
	\begin{align}\label{cf:eq:gIntegralExpression}
		g(x) =\int_0^1\de\nu_g(s)\; \frac{(x-1)^2}{x+s} + \int_0^1\frac{\de\nu_g(s^{-1})}{s}\; \frac{(x-1)^2}{1+sx}\,,
	\end{align}
	where $\nu_g$ is a positive measure with finite mass (see App.~\ref{app:contrastFunctions}). Eq.~\eqref{cf:eq:gIntegralExpression} is particularly useful to give a unified expression for all contrast functions. Indeed, it can be proven that 
	contrast functions in Eq.~\eqref{cf:eq:HGexpressionApp} can be expressed as:
	\begin{align}
		H_g (\rho||\sigma) &= \Tr{(\rho-\sigma)\, \RR_\rho^{-1}h\boldsymbol{(}\LL_\sigma\RR_\rho^{-1}\boldsymbol{)}[(\rho-\sigma)]} \label{cf:eq:HGAlternative},
	\end{align}
	where we defined the function $h(x) := g(x)/(x-1)^{2}$ (here and in the following we refer to App.~\ref{app:contrastFunctions} for the explicit calculations). Putting together Eq.~\eqref{cf:eq:gIntegralExpression} and Eq.~\eqref{cf:eq:HGAlternative}, it is a matter of simple algebra to give the general integral form of quantum contrast functions:
	\begin{align}
		H_g( \rho || \sigma)  
		= \int_0^1 \de\nu_g(s)\;\Tr{(\rho-\sigma)(\LL_\sigma + s \RR_\rho)^{-1}[(\rho-\sigma)]}+\int_0^1\frac{\de\nu_g(s^{-1})}{s}\;\Tr{(\rho-\sigma)(\LL_\rho + s \RR_\sigma)^{-1}[(\rho-\sigma)]}\,.\label{cf:eq:integralRepresentationG}
	\end{align}
	This expression also shows that a contrast function is symmetric if and only if $\de\nu_g(s) =\de\nu_g(s^{-1})/s $, since exchanging $\rho\leftrightarrow\sigma$ exchanges the two integrals above (the other direction follows from the fact that the quantities in the integral of Eq.~\eqref{cf:eq:integralRepresentationG} are extreme points in the set of symmetric contrast functions, see \hyperref[box:contrastAdditionalProperties]{Box 10}). At the level of the defining function this corresponds to the requirement that $g(x)=x \,g(x^{-1})$. In order to put contrast functions in relation with their symmetric versions, it is useful to introduce the measure $\de N_g(s):=(\de\nu_g(s)+ \de\nu_g(s^{-1})/s)$. Then, it follows from Eq.~\eqref{cf:eq:integralRepresentationG} that:
	\begin{align}
		H_g( \rho || \sigma) +H_g( \sigma|| \rho ) = \int_0^1 \de N_g(s)\;\Tr{(\rho-\sigma)\norbra{(\LL_\sigma + s \RR_\rho)^{-1}+(\LL_\rho + s \RR_\sigma)^{-1}}[(\rho-\sigma)]}\,.\label{cf:eq:AppIntegralExpression}
	\end{align}
	
	We are now ready to study the local behaviour of contrast functions. Indeed, thanks to Eq.~\eqref{cf:eq:HGAlternative} and Eq.~\eqref{cf:eq:integralRepresentationG}, it is particularly simple to show that the following theorem holds:
	\begin{theorem}[Lesniewski, Ruskai~\cite{lesniewskiMonotoneRiemannianMetrics1999}]\label{cf:thm:Ruskai}
		For each $g$ satisfying the required properties to define a contrast function, one  can locally approximate $	H_g(\pi + \varepsilon A || \pi+ \varepsilon B) $ up to corrections of order $\bigo{\varepsilon^3}$ as:
		\begin{align}
			H_g(\pi + \varepsilon A || \pi+ \varepsilon B)  &= 	\varepsilon^2\,\int_0^1\de N_g(s)\;\Tr{(A-B)(\LL_\pi + s \RR_\pi)^{-1}[(A-B)]} + \bigo{\varepsilon^3}=	\label{cf:eq:thm2HGexpIntegral}\\
			&=\frac{\varepsilon^2}{2}\, \Tr{(A-B)\,\J_f^{-1}\big|_{\pi} [(A-B)]} + \bigo{\varepsilon^3}\,,\label{cf:eq:thm2HGexp}
		\end{align}
		where $A$ and $B$ are traceless, Hermitian perturbations, and the superoperator $\J_f\big|_\pi $ is defined as:
		\begin{align}
			\J_f\big|_\pi := \RR_\pi \,f\norbraB{\LL_\pi\RR_\pi^{-1}}\,.\label{cf:eq:65}
		\end{align}
		Moreover, the two functions $g$ and $f$ are connected by the equation:
		\begin{align}\label{cf:eq:correspondenceFG}
			f(x) = \frac{(x-1)^2}{g(x) + x \, g(x^{-1})}\,.
		\end{align}
	\end{theorem}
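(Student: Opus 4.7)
The plan is to Taylor-expand the contrast function around $\pi$ starting from the integral representation~\eqref{cf:eq:integralRepresentationG} and to identify the leading-order coefficient with the Fisher superoperator. Substituting $\rho=\pi+\varepsilon A$ and $\sigma=\pi+\varepsilon B$, the difference $\rho-\sigma=\varepsilon(A-B)$ appears quadratically, producing an overall $\varepsilon^2$ prefactor. The resolvents $(\LL_\sigma+s\RR_\rho)^{-1}$ and $(\LL_\rho+s\RR_\sigma)^{-1}$ differ from $(\LL_\pi+s\RR_\pi)^{-1}$ only by $O(\varepsilon)$ perturbations which, multiplied by the $\varepsilon^2$ prefactor, contribute only at order $\varepsilon^3$. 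Crucially, upon setting $\rho,\sigma\to\pi$ both resolvents reduce to the \emph{same} operator $(\LL_\pi+s\RR_\pi)^{-1}$, so the two integrals in Eq.~\eqref{cf:eq:integralRepresentationG} automatically merge through the measure $\de N_g(s)=\de\nu_g(s)+\de\nu_g(s^{-1})/s$, yielding Eq.~\eqref{cf:eq:thm2HGexpIntegral} up to an $O(\varepsilon^3)$ remainder.

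To recast this into the form~\eqref{cf:eq:thm2HGexp}, I would compute $1/f(x)$ from the defining relation~\eqref{cf:eq:correspondenceFG} using the integral expression~\eqref{cf:eq:gIntegralExpression} for $g$. A short calculation (substituting $x\mapsto x^{-1}$ and multiplying by $x$) shows that $xg(x^{-1})$ interchanges the roles of the two terms appearing in Eq.~\eqref{cf:eq:gIntegralExpression}, so the symmetric combination $g(x)+xg(x^{-1})$ collapses into
\[
\frac{1}{f(x)}\;=\;\int_0^1 \de N_g(s)\,\norbra{\frac{1}{x+s}+\frac{1}{1+sx}}.
\]
Applying the functional calculus with $x\to\LL_\pi\RR_\pi^{-1}$ (which commutes with $\RR_\pi^{-1}$) and left-multiplying by $\RR_\pi^{-1}$ converts this scalar identity into the operator identity
\[
\J_f^{-1}\big|_\pi\;=\;\int_0^1 \de N_g(s)\,\sqrbra{(\LL_\pi+s\RR_\pi)^{-1}+(\RR_\pi+s\LL_\pi)^{-1}}.
\]

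The final step is to reconcile the factor $1/2$ in Eq.~\eqref{cf:eq:thm2HGexp} with the single resolvent appearing in Eq.~\eqref{cf:eq:thm2HGexpIntegral}. Expanding in the eigenbasis of $\pi$ with eigenvalues $\{\pi_i\}$ one finds $\Tr{X(\LL_\pi+s\RR_\pi)^{-1}[X]}=\sum_{ij}|X_{ij}|^2/(\pi_i+s\pi_j)$ for any Hermitian $X$; renaming the summation indices $i\leftrightarrow j$ produces the same expression with $\LL_\pi\leftrightarrow\RR_\pi$, so the contraction is invariant under that swap. Applied to $X=A-B$, this identity folds the two resolvents in $\J_f^{-1}|_\pi$ into twice the one appearing in Eq.~\eqref{cf:eq:thm2HGexpIntegral}, precisely absorbing the $1/2$ prefactor and completing the derivation.

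The main technical hurdle is the symbolic manipulation behind the scalar relation $1/f(x)=\int_0^1 \de N_g(s)\sqrbra{(x+s)^{-1}+(1+sx)^{-1}}$: one must carefully track how the two measures $\de\nu_g(s)$ and $\de\nu_g(s^{-1})/s$ interchange under $x\leftrightarrow x^{-1}$ so that their sum collapses onto the symmetric form above. All other ingredients are first-order perturbative estimates and trace cyclicity identities, routine in light of the machinery already developed for Eq.~\eqref{cf:eq:integralRepresentationG}, and can be deferred to the appendix.
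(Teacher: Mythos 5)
Your proposal is correct, and its first half (the derivation of Eq.~\eqref{cf:eq:thm2HGexpIntegral} from Eq.~\eqref{cf:eq:integralRepresentationG} via the quadratic dependence on $\rho-\sigma$ and the $O(\varepsilon)$ freedom in the resolvents) coincides with the paper's argument. Where you diverge is in establishing the second equality \eqref{cf:eq:thm2HGexp}: the paper goes through the alternative representation $H_g(\rho||\sigma)=\Tr{(\rho-\sigma)\,\RR_\rho^{-1}h\boldsymbol{(}\LL_\sigma\RR_\rho^{-1}\boldsymbol{)}[(\rho-\sigma)]}$ with $h(x)=g(x)/(x-1)^2$ and then symmetrises over $A\leftrightarrow B$ (using that swapping the arguments does not affect the result at order $\varepsilon^2$) to assemble $1/f(x)=\bigl(g(x)+x\,g(x^{-1})\bigr)/(x-1)^2$, whereas you bypass $h$ entirely, derive the scalar identity $1/f(x)=\int_0^1\de N_g(s)\,\norbra{(x+s)^{-1}+(1+sx)^{-1}}$ directly from Eq.~\eqref{cf:eq:gIntegralExpression} (this is exactly Eq.~\eqref{eq:21nuova}, and your resulting operator identity is the one recorded in Box~10, Eq.~\eqref{cf:eq:AppIntegralExpression2}), and then close the gap with the index-swap identity $\Tr{X(\LL_\pi+s\RR_\pi)^{-1}[X]}=\Tr{X(\RR_\pi+s\LL_\pi)^{-1}[X]}$, which is the observation that $\RR_\pi$ is the transpose of $\LL_\pi$ under the bilinear form $\Tr{XY}$ (the paper uses the same trick in App.~\ref{app:orderContrast}). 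Both routes rest on the same two pillars — the integral representation of $g$ and the correspondence \eqref{cf:eq:correspondenceFG} — so the difference is one of bookkeeping rather than substance; your version has the mild advantage of making the factor of $1/2$ and the merging of the two resolvents fully explicit in coordinates, at the cost of re-deriving material the paper delegates to Eq.~\eqref{eq:21nuova} and Box~10. All the individual computations you sketch (the interchange of the two terms of Eq.~\eqref{cf:eq:gIntegralExpression} under $x\mapsto x^{-1}$, the commutation of $\RR_\pi^{-1}$ with $\LL_\pi\RR_\pi^{-1}$, and the $i\leftrightarrow j$ relabelling for Hermitian $X$) check out.
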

	Eq.~\eqref{cf:eq:thm2HGexpIntegral} can be directly derived from Eq.~\eqref{cf:eq:integralRepresentationG}. Indeed, thanks to the quadratic dependence of the contrast function on $(\rho-\sigma)$, up to correction of order $\bigo{\varepsilon^3}$, one can substitute $(\LL_{\pi + \varepsilon B} + s \RR_{\pi + \varepsilon A})^{-1} =  (\LL_{\pi + \varepsilon A} + s \RR_{\pi + \varepsilon B})^{-1} = (\LL_{\pi} + s \RR_{\pi })^{-1}$.   Then by using Eq.~\eqref{cf:eq:HGAlternative} one can see that:
	\begin{align}
		H_g (\pi +  \varepsilon A||\pi +  \varepsilon B) 
		&=\frac{\varepsilon^2}{2} \,\norbra{\Tr{(A-B)\, \norbra{\RR_{\pi +  \varepsilon A}^{-1} h\boldsymbol{(}\LL_{\pi +  \varepsilon B}\RR_{\pi +  \varepsilon A}^{-1}\boldsymbol{)}}[(A-B)]} + \Tr{A\leftrightarrow B} }+ \bigo{\varepsilon^3}=\label{cf:eq:A19} \\
		&= \frac{\varepsilon^2}{2} \,\Tr{(A-B)\, \RR_{\pi }^{-1}\, \frac{1}{f}\boldsymbol{(}\LL_{\pi }\RR_{\pi }^{-1}\boldsymbol{)}\,[(A-B)]} + \bigo{\varepsilon^3}\,,\label{cf:eq:A20}
	\end{align}
	where in Eq.~\eqref{cf:eq:A19} we denote by $\Tr{A\leftrightarrow B}$ the first trace of the line but with $A$ and $B$ exchanged, and we used the fact that exchanging the arguments of $H_g(\rho||\sigma)$ does not affect the result at this order of approximation. Then, in the last line one can read the explicit expression of $\J_f^{-1}\big|_{\pi}$. In fact, due to the multiplicative behaviour of $\LL_\pi/\RR_\pi$ the inverse of $\J_f\big|_{\pi}$ is given by:
	\begin{align}
		\J_f^{-1}\big|_{\pi} = \RR_{\pi }^{-1}\, \frac{1}{f}\boldsymbol{(}\LL_{\pi }\RR_{\pi }^{-1}\boldsymbol{)}\,.\label{cf:eq:65Inverse}
	\end{align}
	
	It should be noticed that each $f$ is in one to one relation with a unique symmetric contrast function, i.e., that for which $g(x) = x g(x^{-1})$ holds, from which it follows that $f(x) = (x-1)^2/(2\,g^{\rm symm}(x))$. Moreover, $f$ satisfies the following three conditions: 
	\begin{enumerate}
		\item since $g$ is matrix convex, $f$ is matrix monotone~\cite{lesniewskiMonotoneRiemannianMetrics1999,hiaiIntroductionMatrixAnalysis2014}. Indeed, combining Eq.~\eqref{cf:eq:gIntegralExpression} with Eq.~\eqref{cf:eq:correspondenceFG}, we get that $1/f$ can be expanded as:
		\begin{align}
			\frac{1}{f(x)} = \int_0^1\de N_g(s)\;\norbra{\frac{1}{x+s}+\frac{1}{1+s \,x}}\,.\label{eq:21nuova}
		\end{align}
		Since the functions in the integrand are matrix monotone decreasing, the same holds for $1/f$. This directly implies that $f$ is matrix monotone;
		\item $f(x)$ satisfies the symmetry $f(x) = x f(x^{-1})$, as it can be verified by a straightforward calculation;
		\item without loss of generality, we require the normalisation condition $f(1) = 1$, corresponding to fixing the value of $g''(1) = 1$, which can be imposed by the simple rescaling $\tilde{g}(x) = g(x)/g''(1)$.\label{item:3standardMonotoneFirst}
	\end{enumerate}
	The functions satisfying these three conditions are called standard monotone. It can be shown (see \hyperref[box:standardMonotones]{Box 2} at the end of the section) that they are all pointwise bounded as:
	\begin{align}
		\frac{2x}{x+1} \,\leq f(x)\leq \,\frac{x+1}{2}\,.\label{cf:eq:rangeStandardMonotones}
	\end{align}
	Moreover, thanks to normalisation condition $f(1) = 1$, in the case of commuting observables (i.e., $[\pi,A]=[\pi,B]=[A,B]=0$), it follows that:
	\begin{align}
		H_g (\pi +  \varepsilon A||\pi +  \varepsilon B) 
		&=\frac{\varepsilon^2}{2} \,\Tr{(A-B)^2 \,\pi^{-1}} + \bigo{\varepsilon^3}= \frac{\varepsilon^2}{2}\sum_i \;\frac{(A_i-B_i)^2}{\pi_i} + \bigo{\varepsilon^3}\,,\label{eq:classicalFisher}
	\end{align}
	where we used the fact that on commuting observables $\LL_{\pi }\RR_{\pi }^{-1}[A] = \idO[A] = A$, and we expanded $A,\,B$ and $\pi$ on a common eigenbasis. This shows that from the point of view of classical probability, up to a normalisation, all contrast functions locally behave in the same way. This is a well known result in statistics, and it is one of the ways of defining the Fisher information metric. Indeed, consider the case in which $[\pi, \delta\rho] = 0$, where $\delta\rho$ is a vector in the tangent space of $\pi$ (i.e., a traceless, Hermitian perturbation). Then, from Eq.~\eqref{eq:classicalFisher} we can see that:
	\begin{align}
		H_g (\pi ||\pi +  \varepsilon \,\delta\rho) &=\frac{\varepsilon^2}{2}\sum_i \;\frac{(\delta\rho_i)^2}{\pi_i} + \bigo{\varepsilon^3} = \frac{\varepsilon^2}{2}\sum_{i,j} \;\delta\rho_i \, \eta_{i,j} \, \delta\rho_j + \bigo{\varepsilon^3}\,,\label{eq:classicalFisher2}
	\end{align}
	where we introduced the matrix $\eta_{i,j} = \delta_{i,j}/\pi_i$. This is  symmetric, positive, and it smoothly depends on the base-point (assuming that $\pi$ is a full rank state, i.e., we are on the interior of the space of states). These are the defining properties of a metric, so that $\eta$ endows the tangent space of diagonal density matrices with a metric structure. In this context, $\eta$ is exactly what takes the name of classical Fisher information metric.
	
	It should be noticed, though, that the uniqueness of $\eta$ is lost when moving to non-commuting observables. This might leave open the doubt whether the family of operators $\J_f^{-1}\big|_\pi $ is the right generalisation of the Fisher information metric to quantum states, or if one should rather introduce some further constraints to single out a unique quantity. In order to resolve this question one can use another independent characterisation of the classical Fisher information: the Chentsov theorem. This says that the unique metric on the space of probability distributions contracting under all stochastic maps is exactly the Fisher information~\cite{cencovStatisticalDecisionRules2000,campbell1986extended}. 
	The generalisation of this result to quantum systems was provided by Petz in the study of monotone metrics. These are  scalar products  $K_\pi(A,B)$  on the tangent space of the manifold of states that satisfy the two additional properties:
	\begin{enumerate}
		\item \emph{smoothness:} $K_\pi(A,B)$ depends smoothly on the base-point $\pi$;
		\item \emph{monotonicity:} for every CPTP $\Phi$ the metric is contractive: $$K_{\Phi(\pi)}(\Phi(A),\Phi(A))\leq K_\pi(A,A) \,.$$
	\end{enumerate}
	As we mentioned, Chentsov theorem identifies the Fisher information as the unique monotone metric on the space of classical probability distributions. For quantum states, on the other hand, we have:
	\begin{theorem}[Petz~\cite{petzMonotoneMetricsMatrix1996}]\label{cf:thm:Petz}
		The monotone metrics on quantum states are all given in the form:
		\begin{align}
			K_{f,\pi}(A,B)	 := \Tr{A\, \,\J_f^{-1}\big|_\pi [B]},\label{cf:eq:monotoneMetrics}
		\end{align}
		where $f:\RR^+\rightarrow\RR^+$ is an operator monotone function. Moreover, requiring that $K_{f,\pi}(A,B)$ is real and that it reduces to the classical Fisher information for commuting variables constrains $f$ to be a standard monotone function.
	\end{theorem}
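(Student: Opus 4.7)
My plan is to follow the three-step strategy of Petz's original derivation. Given an arbitrary monotone metric $\pi\mapsto K_\pi(\cdot,\cdot)$ satisfying smoothness and contractivity, I first observe that smoothness implies that at each $\pi$ the sesquilinear form is determined by a unique smooth positive superoperator $M_\pi$ via $K_\pi(A,B)=\Tr{A\,M_\pi[B]}$. The goal is to identify $M_\pi$ with $\J_f^{-1}\big|_\pi$ for some operator monotone $f$, and then to derive the additional properties (reality and reduction to the classical Fisher information) that make $f$ standard monotone.

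First, I would exploit unitary invariance. For any unitary $U$, both $\Phi_U(\cdot)=U(\cdot)U^\dagger$ and $\Phi_{U^\dagger}$ are CPTP, so applying monotonicity in both directions yields the equality $K_{U\pi U^\dagger}(UAU^\dagger,UBU^\dagger)=K_\pi(A,B)$. Restricting to $U$ in the commutant of $\pi$ forces $M_\pi$ to intertwine with $\mr{ad}_U$ for every such $U$; writing the spectral decomposition $\pi=\sum_i p_i P_i$ with distinct eigenvalues $p_i$, this property constrains $M_\pi$ to preserve each sector $P_i(\cdot)P_j$ and to act on it as scalar multiplication by some number $c_{ij}$. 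Smoothness in $\pi$ and consistency as eigenvalues coalesce then force $c_{ij}=c(p_i,p_j)$ for a smooth symmetric function $c:(0,\infty)^2\to(0,\infty)$.

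Second, I would use monotonicity under a judicious family of CPTP maps to constrain the form of $c$. Embedding $\pi$ via ancilla-tensoring and collapsing back with partial traces, combined with invariance under the pinching onto $\pi$-eigenspaces, produces two-way inequalities that collapse to the homogeneity relation $c(\lambda x,\lambda y)=\lambda^{-1}c(x,y)$. This reduces the two-variable $c$ to a one-variable profile $c(x,y)=y^{-1}m(x/y)$ for a smooth $m:(0,\infty)\to(0,\infty)$, so that in superoperator form $M_\pi=\RR_\pi^{-1}\,m\!\left(\LL_\pi\RR_\pi^{-1}\right)=\J_f^{-1}\big|_\pi$ with $f=1/m$.

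Third, monotonicity against every CPTP map must now be upgraded to the statement that $f$ is operator monotone in the L\"owner sense. Specialising $\pi$ and the tangent vectors $A,B$ to live in a two-dimensional subspace and testing against the pinching channels $\rho\mapsto P\rho P+(\id-P)\rho(\id-P)$ for arbitrary rank-one $P$, the monotonicity inequality for $K_{f,\pi}$ reduces, after unpacking, to an inequality of the form $f(T^*XT)\leq T^*f(X)T$ on positive $2\times 2$ matrices $X$ with a contraction $T$, which by L\"owner's theorem is exactly operator monotonicity of $f$. This final step is the main obstacle, as one must choose tangent vectors that genuinely fail to commute with $\pi$ in order to expose the non-classical part of the metric; classical test states alone only reveal the overall normalisation. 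Finally, the reality of $K_{f,\pi}(A,B)$ on Hermitian perturbations swaps the roles of $\LL_\pi$ and $\RR_\pi$ and enforces the symmetry $f(x)=x f(x^{-1})$, while matching Eq.~\eqref{eq:classicalFisher} on commuting tangent vectors fixes $f(1)=1$. These three properties together place $f$ in the class of standard monotone functions.
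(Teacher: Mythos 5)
First, note that the paper does not prove this theorem: it is quoted from Petz~\cite{petzMonotoneMetricsMatrix1996} without proof, so there is no in-paper argument to compare against. Your sketch follows the standard architecture of Petz's derivation (unitary covariance $\Rightarrow$ a two-variable kernel $c(p_i,p_j)$; tensor/partial-trace invariance $\Rightarrow$ homogeneity and hence a one-variable profile; monotonicity $\Rightarrow$ operator monotonicity of $f$), but as written it has three genuine gaps.

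(i) The commutant argument does not control the diagonal block. For $i\neq j$ the sectors $\ketbra{i}{j}$ carry inequivalent characters of the commutant of $\pi$, so Schur's lemma gives scalars $c_{ij}$ there; but every $\ketbra{i}{i}$ carries the \emph{trivial} character, so covariance alone allows $M_\pi$ to mix different diagonal matrix units arbitrarily. Pinning the diagonal block to $\mathrm{diag}(1/(f(1)p_i))$ requires the classical Chentsov theorem (monotonicity under stochastic maps restricted to the commutative subalgebra), which your argument never invokes. (ii) The transformer inequality is stated in the wrong direction and with the wrong attribution: for an operator monotone $f$ on $[0,\infty)$ with $f(0)\geq 0$ and a contraction $T$, the Hansen--Pedersen inequality gives $T^{*}f(X)T\leq f(T^{*}XT)$, not $f(T^{*}XT)\leq T^{*}f(X)T$; the latter characterises operator \emph{convex} functions with $f(0)\leq 0$. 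Since the entire content of the theorem lives in this step (monotonicity of the metric $\iff$ $\Phi\,\J_f\big|_{\pi}\,\Phi^{\dagger}\leq \J_f\big|_{\Phi(\pi)}$ $\iff$ $V f(\LL_\pi\RR_\pi^{-1})V^{\dagger}\leq f(V\LL_\pi\RR_\pi^{-1}V^{\dagger})$ for the induced contractions $V$), the sign must be right. (iii) Restricting $\pi$, $A$, $B$ to a two-dimensional subspace and testing against pinchings can at best establish $2$-matrix-monotonicity of $f$, which is strictly weaker than operator monotonicity; the converse direction needs test channels (block embeddings followed by partial traces) realising contractions in every dimension. The reality and normalisation claims at the end are correct.
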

	This last result corroborates the interpretation of $\J_f^{-1}\big|_\pi$ as the natural extension of the classical Fisher information metric to quantum mechanical systems. Moreover, it also shows that the definition of contrast functions in Eq.~\eqref{cf:eq:HGexpressionApp} is well justified, as their local behaviour correctly reduces to the quantum Fisher information.
	
	This concludes the introduction to the quantum Fisher information. A more in-depth treatment will be provided in Sec.~\ref{propertiesFisher}, where we present the main properties of the superopertator $\J_f\big|_\pi$ and of the standard monotone functions,  and in Sec.~\ref{gardenFisher}, where we provide a list of the most frequently used quantum Fisher information.
	
	Before moving on, though, it is useful to clarify the nomenclature. We will use the expression quantum Fisher information scalar product to refer to $K_{f,\pi}(A,B)$, whereas the quantity $\mathcal{F}_{f,\pi}(\delta\rho) := K_{f,\pi}(\delta\rho,\delta\rho)$ is traditionally referred to as quantum Fisher information. Finally, we call the two maps $\J_f\big|_\pi$ and $\J_f^{-1}\big|_\pi$ the quantum Fisher operators (and its inverse). Since our results apply both to the quantum and to the classical case, in the following we sometimes drop the reference to which scenario we are considering, as it should also be clear from the context.

	\vspace{0.2cm}
	
	\begin{additional-info}{Box 1. Conditions on the defining function $g$~\hyperref[box:standardMonotones]{$\rightarrow$}}\label{box:ContrastFunctions}
		In this section we explore which properties one has to impose on the function $g$  in Eq.~\eqref{cf:eq:HGexpressionApp} in order for conditions \ref{cf:item:positivity}-\ref{cf:item:differentiability} to be satisfied. Condition~\ref{cf:item:positivity} can be decomposed in two parts: first, that the contrast function is positive, and moreover, that it is zero if and only if $\rho\equiv\sigma$ (\emph{faithfulness}).  As one can verify from the coordinate expression in Eq.~\eqref{cf:eq:HGCoordinateExpression}, this second part of condition~\ref{cf:item:positivity} corresponds to imposing that $g(1)=0$ is the only zero of the function. Moreover, as we mentioned in the main text, one can verify that $g(x) = (x-1)$ is identically zero for normalised states, as this corresponds to:
		\begin{align}
			H_{(x-1)} (\rho||\sigma)= \Tr{ \boldsymbol{(}\LL_\sigma\RR_\rho^{-1} - \idO\boldsymbol{)}\sqrbra{\rho}}= \Tr{\sigma-\rho} = 0\,.
		\end{align}
		Then, in order to ensure the positivity of $H_g(\rho||\sigma)$ it is sufficient to require that $g(x) + a(x-1)\geq 0$, for some arbitrary  constant $a$. On the other hand, thanks to the way in which the ansatz is formulated, condition~\ref{cf:item:homogeneity} is automatically satisfied, while one needs to require that $g(x)$ is matrix convex at $x = 1$ for the joint convexity to hold (condition~\ref{cf:item:jointConvexity}). This directly implies  the monotonicity of the contrast functions, condition~\ref{cf:item:monotonicity}, fact that can be proved as follows: first, it should be noticed that $H_g(\rho||\sigma)$ are unitary invariant. Indeed, one has:
		\begin{align}
			H_g (U\,\rho\,U^\dagger||U\,\sigma\,U^\dagger) :&= \Tr{U\,\rho^{1/2}\,U^\dagger\, g\boldsymbol{(}\LL_{U\,\sigma\,U^\dagger}\RR_{U\,\rho^{-1}\,U^\dagger}\boldsymbol{)}\sqrbra{U\,\rho^{1/2}\,U^\dagger}} =\label{cf:eq:127n}\\
			&=\Tr{U\,\rho^{1/2}\,U^\dagger\,U\, g\boldsymbol{(}\LL_{\sigma}\RR^{-1}_{\rho^{-1}}\boldsymbol{)}\sqrbra{\rho^{1/2}}\,U^\dagger} = H_g (\rho||\sigma) \,,\label{cf:eq:128n}
		\end{align}
		where the step from Eq.~\eqref{cf:eq:127n} to the Eq.~\eqref{cf:eq:128n} can be verified either in coordinates (see Eq.~\eqref{cf:eq:HGCoordinateExpression}), or by expanding $g(x)$ in Laurent series, while in Eq.~\eqref{cf:eq:128n} we exploited the unitarity of $U$ and the cyclicity of the trace. Secondly, also notice that for generic contrast functions it holds that:
		\begin{align}
			H_g (\rho\otimes\tau||\sigma\otimes\tau) &= \Tr{g\boldsymbol{(}\LL_{\sigma\otimes\tau}\RR_{\rho\otimes\tau}^{-1}\boldsymbol{)}\sqrbra{\rho\otimes\tau}} = \Tr{ g\boldsymbol{(}\LL_{\sigma}\RR_{\rho}^{-1}\otimes\idO\boldsymbol{)}\sqrbra{\rho\otimes\tau}}=\label{cf:eq:130B}\\
			&= \Tr{    g\boldsymbol{(}\LL_{\sigma}\RR_{\rho}^{-1}\boldsymbol{)}\sqrbra{\rho}}\Tr{\tau}= H_g (\rho||\sigma)\label{cf:eq:131n}\,,
		\end{align}
		where in Eq.~\eqref{cf:eq:130B} we used the fact that $\LL_\tau\RR_\tau^{-1}$ coincides with the identity operator on the commutant of $\tau$ (again one can see this either from the Laurent series or directly from the coordinate expression in Eq.~\eqref{cf:eq:HGCoordinateExpression}). These two facts allow to deduce the monotonicity of the contrast functions in Eq.~\eqref{cf:eq:HGexpressionApp} from their joint convexity. Indeed, given a CPTP map $\Phi$, one can express it in terms of its Stinespring's dilation:
		\begin{align}\label{cf:eq:stinespring}
			\Phi(\rho) = \TrR{E}{U\, (\rho \otimes \ketbra{\psi}{\psi})\,U^\dagger}\,,
		\end{align}
		where $U$ is a unitary operator and $\ketbra{\psi}{\psi}$ is an environmental pure state of dimension $d_E$. Take a unitary basis $\{V_i\}$ for the space of bounded operators of dimension $d_E\times d_E$. It is a well-known result that $\sum_i (V_i(\rho_E)V_i^\dagger)/d_E^2 = \id_E/d_E$ for any $\rho_E$~\cite{nielsen2002quantum}. We denote this superoperator by $\Delta_1(\rho)$. This identity, together with Eq.~\eqref{cf:eq:stinespring} allows to rewrite the action of the channel as:
		\begin{align}\label{cf:eq:130n}
			\Phi(\rho)\otimes \frac{\id_{d_E}}{d_E} = \frac{1}{d_E^2} \,\sum_{i=1}^{d_E^2} \; (\id\otimes V_i)\,U\, (\rho \otimes \ketbra{\psi}{\psi})\,U^\dagger(\id\otimes V_i^\dagger)\,.
		\end{align}
		Putting together this expression with Eq.~\eqref{cf:eq:131n} we can finally prove monotonicity. Indeed, one has:
		\begin{align}
			H_g(\Phi(\rho)||\Phi(\sigma)) &= H_g\norbra{\Phi(\rho)\otimes \frac{\id_{d_E}}{d_E} \bigg|\bigg|\Phi(\sigma)\otimes \frac{\id_{d_E}}{d_E} } = \\
			&\!= H_g\norbra{(\idO\otimes\Delta_1)\norbra{U\, (\rho \otimes \ketbra{\psi}{\psi})\,U^\dagger}||(\idO\otimes\Delta_1)\norbra{U\, (\sigma \otimes \ketbra{\psi}{\psi})\,U^\dagger}} \leq\\
			&\leq \frac{1}{d_E^2} \,\sum_{i=1}^{d_E^2} \;H_g(\rho \otimes \ketbra{\psi}{\psi}||\sigma \otimes \ketbra{\psi}{\psi}) =H_g(\rho ||\sigma ) \,,\label{cf:eq:36}
		\end{align}
		where in the first step we used Eq.~\eqref{cf:eq:131n}, then we applied the decomposition in Eq.~\eqref{cf:eq:130n} (notice that we denote by $\idO$ the identity superoperator), and the inequality comes from the joint convexity of $H_g(\rho||\sigma)$, together with the unitary invariance of contrast functions. Finally, we applied Eq.~\eqref{cf:eq:131n} once more. This proves condition~\ref{cf:item:monotonicity}.
		
		Finally, condition~\ref{cf:item:differentiability} follows from requiring $g(x)$ to be smooth. 
		
		Hence, in order for $H_g(\rho||\sigma)$ to be a proper contrast function, $g(x)$ has to be a matrix convex function in $C^\infty(\RR^+)$, such that $g(1)= 0$ is the unique zero and $g(x) + a(x-1)\geq 0$, for some arbitrary  constant $a$.
	\end{additional-info}
	
	\begin{additional-info}{Box 2. Bounding the set of standard monotone functions~\hyperref[box:ContrastFunctions]{$\leftarrow$},\hyperref[box:statisticalQuantifiers]{$\rightarrow$}}\label{box:standardMonotones}
		The characterisation in Eq.~\eqref{cf:eq:rangeStandardMonotones} of the set of standard monotone functions follows from the  Lemma:
		\begin{lemma}[Theorem 4.43 from~\cite{hiaiIntroductionMatrixAnalysis2014}]\label{cf:lemma:TF}
			Consider a function $f:(0,\infty)\rightarrow(0,\infty)$. The following conditions are equivalent:
			\begin{enumerate}
				\item $f(x)$ is matrix monotone;
				\item $[Tf](x) := x/f(x)$ is  matrix monotone;\label{item:tTransform}
				\item $f(x)$ is matrix concave.
			\end{enumerate}
		\end{lemma}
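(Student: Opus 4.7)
The lemma asserts three equivalences for a positive function $f:(0,\infty)\to(0,\infty)$. The key observation is that $T$ is an involution: $T(Tf)(x)=x/Tf(x)=x\cdot f(x)/x=f(x)$, so proving (1)$\Rightarrow$(2) for every positive $f$ automatically yields (2)$\Rightarrow$(1) by applying that implication to the positive function $Tf$. It therefore suffices to establish (1)$\Leftrightarrow$(3) and (1)$\Rightarrow$(2).

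For (1)$\Leftrightarrow$(3) I would invoke Löwner's classical theorem. Every positive matrix monotone function on $(0,\infty)$ admits an integral representation
\begin{equation*}
f(x)=\alpha+\beta x +\int_{0}^{\infty}\frac{x(1+\lambda)}{x+\lambda}\,d\mu(\lambda),
\end{equation*}
with $\alpha,\beta\geq 0$ and $\mu$ a positive measure with suitable integrability. Each kernel $\phi_\lambda(x)=x(1+\lambda)/(x+\lambda)=(1+\lambda)-\lambda(1+\lambda)/(x+\lambda)$ is both matrix monotone (it is a Pick function) and matrix concave on $(0,\infty)$, the latter because $-1/(x+\lambda)$ is a translate of the matrix concave function $-1/x$. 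Positive superpositions of matrix concave functions remain matrix concave, giving (1)$\Rightarrow$(3). The converse (3)$\Rightarrow$(1) is the Hansen--Pedersen characterization: a positive matrix concave function on $(0,\infty)$ admits the same integral representation and is therefore matrix monotone.

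For (1)$\Rightarrow$(2) I would use the companion representation for the reciprocal. If $f$ is positive matrix monotone on $(0,\infty)$, then $1/f$ is matrix antitone and matrix convex, and admits a Herglotz-type form
\begin{equation*}
\frac{1}{f(x)}=\gamma+\frac{\delta}{x}+\int_{0}^{\infty}\frac{1}{x+s}\,d\nu(s),
\end{equation*}
with $\gamma,\delta\geq 0$ and $\nu$ a positive measure; for standard monotones this is precisely Eq.~\eqref{eq:21nuova}. Multiplying through by $x$ gives
\begin{equation*}
Tf(x)=\gamma x +\delta +\int_{0}^{\infty}\frac{x}{x+s}\,d\nu(s)=\gamma x +\delta +\int_{0}^{\infty}\left(1-\frac{s}{x+s}\right)d\nu(s),
\end{equation*}
which is a positive combination of the matrix monotone functions $x$, the constants, and the kernels $-s/(x+s)$, and so is itself matrix monotone.

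The main obstacle is the reliance on the Löwner and Herglotz integral representations: these are the analytically deep ingredients and rest on Nevanlinna--Pick theory and analytic continuation of Pick functions to the upper half-plane. An arguably more elementary alternative is the Kubo--Ando theory of operator connections, where the composition of the transpose and adjoint operations on connections (both of which preserve matrix monotonicity by the Kubo--Ando axioms) corresponds exactly to $f\mapsto Tf$, sidestepping the integral representations altogether.
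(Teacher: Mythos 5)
Your argument is correct, but note that the paper does not actually prove this lemma: it is stated in Box~2 purely as a citation to Theorem~4.43 of~\cite{hiaiIntroductionMatrixAnalysis2014}, so there is no in-text proof to match. What you supply is essentially the standard textbook route, and it is consistent with the machinery the paper does use elsewhere: the Herglotz-type form you write for $1/f$ is exactly the representation appearing in Eq.~\eqref{eq:21nuova} and Eq.~\eqref{cf:eq:80}, and your reduction of $(2)\Rightarrow(1)$ to $(1)\Rightarrow(2)$ via the involutivity of $T$ mirrors how the paper itself exploits $T$ throughout Sec.~IV. The one step you might make explicit is the integrability check when multiplying the representation of $1/f$ by $x$ term by term: the measure $\nu$ satisfies $\int \de\nu(s)/(1+s)<\infty$, and since $x/(x+s)\leq C_x/(1+s)$ for fixed $x$, the resulting integral for $Tf$ converges, so the positive superposition of monotone kernels is legitimate. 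Your closing remark on the Kubo--Ando alternative is accurate (the dual connection, i.e.\ the adjoint of the transpose, has representing function $x/f(x)$), though it only partially sidesteps the analytic input, since the bijection between connections and operator monotone functions itself rests on the Löwner representation. The deep ingredients — Löwner's theorem and the Hansen--Pedersen equivalence of operator concavity and operator monotonicity for positive functions — are cited rather than proved, which is appropriate given that the paper treats the whole lemma the same way.
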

		Then, we can show that standard monotones are all contained in the interval:
		\begin{align}
			\frac{2x}{x+1} \,\leq f(x)\,\leq \frac{x+1}{2}\,.\label{cf:eq:rangeStandardMonotones2}
		\end{align}
		This can be proved as follows: thanks to the condition $f(x) = x f(x^{-1})$ one only needs to characterise the properties of $f(x)$ in the interval $[0,1]$. Moreover, the same condition also implies that $f'(1) = \frac{1}{2}$. In fact, this can be easily verified from the equation
		\begin{align}
			f'(x) =  f(x^{-1}) - \frac{1}{x} f'(x^{-1})\,,
		\end{align} 
		by setting $x$ to $1$ and using the normalisation $f(1)=1$. Then, from concavity it follows that $f(x) \leq f(1) + f'(1)(x-1) = (x+1)/2$. The upper bound satisfies all the necessary constraints, so it can be identified as the largest standard monotone function, which we indicate by $f_B$. Finally, notice that the transformation $f\rightarrow Tf$ (defined in~\ref{item:tTransform}) inverts the inequality and maps standard monotones into standard monotones. For this reason, any standard monotone function can be pointwise bounded as in Eq.~\eqref{cf:eq:rangeStandardMonotones2}.
	\end{additional-info}
	
	\begin{additional-info}{Box 3. Other possible quantifiers of statistical difference~\hyperref[box:standardMonotones]{$\leftarrow$},\hyperref[FisherDynamics]{$\rightarrow$}}\label{box:statisticalQuantifiers}
		It is important to point out that the contrast functions in Eq.~\eqref{cf:eq:HGexpressionApp} are not the only quantifiers of statistical difference that satisfy conditions~\ref{cf:item:positivity}-\ref{cf:item:differentiability} and locally expand to the Fisher information metric. 
		
		For example, it is straightforward to verify that the family of $\chi^2_f$-divergences introduced in~\cite{temmeH2divergenceMixingTimes2010}:
		\begin{align}
			\chi^2_f(\rho||\sigma) = \Tr{(\rho-\sigma)\, \,\J_f^{-1}\big|_\rho [(\rho-\sigma)]}\,\label{eq:chi2Divergences}
		\end{align}
		are positive, homogeneous, jointly convex, monotonous under CPTP maps and differentiable. Moreover, they trivially reduce to the corresponding Fisher information for close-by states. This shows that the one in Eq.~\eqref{cf:eq:HGexpressionApp} is just an ansatz, and that other possible functions may be chosen to satisfy the same properties.
		
		Indeed, another possibility is given by the geodesic distance ${\rm D}_f(\rho,\sigma)$ associated to each metric $\J_f^{-1}$. In this case we have the additional two properties that ${\rm D}_f(\rho,\sigma)$ is symmetric (as it is a distance) and that it satisfies the triangle inequality:
		\begin{align}
			{\rm D}_f(\rho,\sigma) \leq {\rm D}_f(\rho,\tau) + {\rm D}_f(\tau,\sigma)\,.
		\end{align}
		For classical probability distributions the geodesic distance has a closed expression, namely ${\rm D}_f(\rho,\sigma) = 2\,\arccos \Tr{\sqrt{\rho\sigma}}$, whereas for the quantum case the solution is known only in a couple of cases (see Sec.~\ref{Bures} and Sec.~\ref{wignerYanase}).
		
		Despite the formal difference between the expression for contrast functions and $\chi^2_f$-divergences, one could wonder whether there are some special functions $g$ and $f$ for which the two coincide. For this reason, we provide here a criterion to verify whether an arbitrary quantifier satisfying conditions~\ref{cf:item:positivity}-\ref{cf:item:differentiability} can actually be expressed as a contrast function. 
		
		First, it should be noticed that given a contrast function $H_g(\rho||\sigma)$, the defining function is completely characterised as $g(x) := H_g\norbra{\frac{\id}{d}\big|\big|\,x\,\frac{\id}{d}}$, where we implicitly extended the definition of $H_g$ also to unnormalised states. 
		
		Moreover, starting from two arbitrary states $\rho$ and $\sigma$ one can define the two probability distributions:
		\begin{align}
			(\rho:\sigma)^1_{i,j} = \rho_i |\!\braket{\sigma_j}{\rho_i}\!|^2\qquad\qquad (\rho:\sigma)^2_{i,j} = \sigma_j |\!\braket{\sigma_j}{\rho_i}\!|^2\,,
		\end{align}
		where we used the same notation as in Eq.~\eqref{eq:rhoSigmaCoordinateExpression}. Then, it follows from the coordinate expression in Eq.~\eqref{cf:eq:HGCoordinateExpression} that:
		\begin{align}
			H_g(\rho||\sigma) = H_g((\rho:\sigma)^1||(\rho:\sigma)^2)\,,
		\end{align}
		where on the left hand side one has the contrast function between two quantum states, and on the right hand side between two classical probability distributions. Hence, if two contrast functions $H_{g_1}(\rho||\sigma)$ and $H_{g_2}(\rho||\sigma)$ coincide for arbitrary classical distributions, then they will coincide for quantum states as well.
		
		Consider now the case of $\chi^2_f$-divergences. It is easy to verify that $\chi^2_f\norbra{\frac{\id}{d}\big|\big|x\,\frac{\id}{d}}= (x-1)^2$, so that if $\chi^2_f$ were to be expressable in terms of contrast functions, this should happen for $g(x) = (x-1)^2$. The corresponding contrast function is studied in Sec.~\ref{harmonic}, where its explicit expression is provided. Interestingly, it is proved in there that for $f_H(x) := \frac{2x}{x+1}$, one can rewrite:
		\begin{align}
			H_{\frac{(x-1)^2}{2}}(\rho||\sigma) = \frac{1}{2}\:\chi^2_{f_H}(\rho||\sigma)\,,
		\end{align}
		proving the representability of the $\chi^2_{f_H}$-divergence. On the other hand, for all other  $\chi^2_{f}$-divergences this also shows that they cannot be expressed in terms of contrast functions.
		
		The method just presented is taken from~\cite{hiai2011quantum}, where it was used to prove that the fidelity, the Chernoff distance and the Hoeffding distances are all not representable as contrast functions.
	\end{additional-info}
	
	\section{Dynamical properties of Fisher information}\label{FisherDynamics}
	
	In the {standard} treatment of  Fisher information, most of the focus goes into its significance as a distinguishability measure (see Thm.~\ref{cf:thm:Ruskai}), and in the many different results connecting it to estimation theory and information theory, as for example the Cramér-Rao bound~\cite{Cramer46,Rao92} or the Chernoff bound~\cite{Chernoff52} (cf. Eq.~\eqref{cf:eq:cramerRao130} and Eq.~\eqref{cf:eq:chernoffBound}). Nonetheless, Thm~\ref{cf:thm:Petz} hints at a more dynamical character of the Fisher information metrics: {these are the only metrics} that monotonically decrease under the action of any CPTP {- i.e., physical -} map. This shows that Fisher information metrics can be understood in two very distinct, almost independent, ways: one statistical (Thm.~\ref{cf:thm:Ruskai}) and one dynamical (Thm.~\ref{cf:thm:Petz}). Indeed, it is not even obvious \emph{a priori} that the two should be connected, partially explaining why the interest in the statistical properties of the Fisher information lead to so less insights about the dynamical features.
	
	This section is devoted to justify and legitimate the dynamical point of view, in order to demonstrate the intimately dynamical nature of the Fisher metrics. 
	Indeed, we find that not only the Fisher information can be defined in terms of its contractivity under physical evolutions, but also many properties of the dynamical maps can be given solely in terms of their  behaviour with respect to the Fisher information. A central result in this direction is given by the following Theorem and its Corollary~\cite{scandi2023physicality}:
	\begin{theorem}\label{cf:theo:CPiffContracts}
		Consider a trace-preserving, Hermitian preserving, linear map $\Phi: \mathcal{M}_d(\mathbb{C})\rightarrow\mathcal{M}_d(\mathbb{C})$, where $\mathcal{M}_d(\mathbb{C})$ is the space of $d \times d$ complex matrices. Define $  \mathcal{S}_d\subset\mathcal{M}_d(\mathbb{C})$ to be the set of positive semidefinite, trace-one matrices. If $\Phi$ satisfies the following three properties: 
		\begin{enumerate}\label{cf:thm:contractIf_P}
			\item $\Phi$ is invertible;\label{cf:it:1}
			\item $\Phi$ maps at least one point from the interior of $ \mathcal{S}_d$ into $ \mathcal{S}_d$;\label{cf:it:2}
			\item for any state $\rho\in \mathcal{S}_d$ such that  $\Phi(\rho)\in \mathcal{S}_d$ and tangent vector $\delta \rho$, one has:\label{cf:it:3}
			\begin{align}
				K_{f,\rho}(\delta\rho,\delta\rho) \geq
				K_{f,\Phi(\rho)}(\Phi(\delta\rho),\Phi(\delta\rho)) \,;\label{eq:contractionContrast}
			\end{align}
		\end{enumerate}
		then the image of $\Phi$ is completely contained in $\mathcal{S}_d$ ($\Phi$ is a Positive  (P) map).
	\end{theorem}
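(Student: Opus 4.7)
The plan is to argue by contradiction. Suppose $\Phi$ is not positive; then there exists $\rho_1\in\mathcal{S}_d$ with $\Phi(\rho_1)\notin\mathcal{S}_d$. By condition~\ref{cf:it:2} there also exists a full-rank interior state $\sigma_0$ with $\Phi(\sigma_0)\in\mathcal{S}_d$. I would consider the convex segment $\rho_t:=(1-t)\sigma_0+t\rho_1$, which lies entirely in $\mathcal{S}_d$ by convexity and is strictly interior for every $t\in[0,1)$ since $\sigma_0$ is full rank. Its image $\Phi(\rho_t)=(1-t)\Phi(\sigma_0)+t\Phi(\rho_1)$ sits in $\mathcal{S}_d$ at $t=0$ and leaves it by $t=1$, so continuity of eigenvalues yields a critical $t^*\in(0,1)$ at which $\Phi(\rho_{t^*})$ is positive semidefinite with at least one vanishing eigenvalue, while $\rho_{t^*}$ itself remains strictly interior.

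The next step is to produce a tangent vector that witnesses the violation. Let $\ket{0}$ be a null eigenvector of $\Phi(\rho_{t^*})$ and let $\ket{j}$ be any eigenvector with strictly positive eigenvalue $\mu_j$, which exists because $\Phi(\rho_{t^*})$ has unit trace. Using the invertibility assumption~\ref{cf:it:1}, I set
\begin{equation}
\delta\rho := \Phi^{-1}\bigl(\ketbra{0}{0}-\ketbra{j}{j}\bigr) .
\end{equation}
This is Hermitian and traceless, because $\Phi^{-1}$ inherits both properties from $\Phi$, hence a legitimate tangent vector. By construction $\Phi(\delta\rho)=\ketbra{0}{0}-\ketbra{j}{j}$ is diagonal in the eigenbasis of $\Phi(\rho_{t^*})$, so the quantum Fisher form there reduces to the commuting expression of Eq.~\eqref{eq:classicalFisher}, namely
\begin{equation}
K_{f,\Phi(\rho_{t^*})}\bigl(\Phi(\delta\rho),\Phi(\delta\rho)\bigr) \;=\; \frac{1}{\mu_0} + \frac{1}{\mu_j} \;=\; +\infty ,
\end{equation}
since $\mu_0=0$. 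On the other hand $K_{f,\rho_{t^*}}(\delta\rho,\delta\rho)$ is finite because $\rho_{t^*}$ is full rank and $\delta\rho$ is bounded. This flatly contradicts the contractivity assumption~\ref{cf:it:3}, so the initial assumption must fail and $\Phi$ is positive.

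The main technical point to get right is the divergence of the Fisher form at a rank-deficient base point. If one is uncomfortable evaluating $K_{f,\pi}$ at a non-full-rank $\pi$, the cleanest fix is to apply condition~\ref{cf:it:3} at $\rho_{t^*-\varepsilon}$ for small $\varepsilon>0$, where both base points are strictly interior and both Fisher forms are finite; using the integral representation of $\J_f^{-1}\big|_\pi$ obtained by combining Eq.~\eqref{cf:eq:65Inverse} with Eq.~\eqref{eq:21nuova}, one then checks that $K_{f,\Phi(\rho_{t^*-\varepsilon})}(\Phi(\delta\rho),\Phi(\delta\rho))\sim 1/\mu_0(\varepsilon)$ blows up as $\mu_0(\varepsilon)\to 0^+$, while $K_{f,\rho_{t^*-\varepsilon}}(\delta\rho,\delta\rho)$ stays bounded, reproducing the contradiction. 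The key design choice that makes the divergence transparent is taking $\delta\rho$ such that $\Phi(\delta\rho)$ commutes with $\Phi(\rho_{t^*})$, which reduces the required estimate to the unambiguous classical expression and sidesteps any delicate non-commutative analysis; a more general $\delta\rho$ would instead require the pointwise bound $2x/(x+1)\le f(x)$ from Eq.~\eqref{cf:eq:rangeStandardMonotones} to extract the divergence uniformly in the choice of standard monotone $f$.
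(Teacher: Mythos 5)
Your proof is correct and follows essentially the same route as the paper's: argue by contradiction, interpolate between a good interior state and a bad one, locate the critical point where the image touches the boundary of $\mathcal{S}_d$, and build a perturbation commuting with the image state (obtained via $\Phi^{-1}$, which is exactly how the paper uses invertibility) whose image Fisher form diverges like the inverse of the vanishing eigenvalue while the source form stays bounded on the interior segment. One small correction to your closing remark: to lower-bound $K_{f,\pi}$ uniformly in $f$ and extract the divergence you need the upper half of Eq.~\eqref{cf:eq:rangeStandardMonotones}, namely $f(x)\leq (x+1)/2$ (which gives $\J_f^{-1}\big|_\pi\geq \J_{f_B}^{-1}\big|_\pi$), not the lower bound $2x/(x+1)\leq f(x)$ that you cite.
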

	
	From this result it trivially follows that:
	
	\begin{corollary}\label{cf:cor:CPiffContracts}
		Consider  the extended channel $\Phi_{\rm anc}:=\Phi \otimes\idO_d $. Under the same assumptions of Thm~\ref{cf:thm:contractIf_P}, where the contractivity requirement for $\Phi_{\rm anc}$ takes the form:
		\begin{itemize}
			\item[3'.] for any state $\rho\in \mathcal{S}_d\otimes\mathcal{S}_d$ such that  $\Phi_{\rm anc}(\rho)\in \mathcal{S}_d\otimes\mathcal{S}_d$ and tangent vector $\delta \rho$, it holds that: 
			\begin{align}
				K_{f,\rho}(\delta\rho,\delta\rho) \geq
				K_{f,\Phi_{\rm anc}(\rho)}(\Phi_{\rm anc}(\delta\rho),\Phi_{\rm anc}(\delta\rho)) \;;
			\end{align}
		\end{itemize}
		then, $\Phi$ is Completely-Positive (CP).
	\end{corollary}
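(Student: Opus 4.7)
The plan is to derive the corollary as a direct application of Theorem~\ref{cf:theo:CPiffContracts} to the extended map $\Phi_{\rm anc}=\Phi\otimes\idO_d$, and then invoke the standard Choi-type characterisation of complete positivity (namely, $\Phi$ is CP if and only if $\Phi\otimes\idO_d$ is positive, where $d$ is the dimension of the system). So the work reduces to verifying, for $\Phi_{\rm anc}$, the three hypotheses of Thm.~\ref{cf:theo:CPiffContracts}, together with trace preservation and Hermiticity preservation.

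First I would observe that $\Phi_{\rm anc}$ inherits trace preservation and Hermiticity preservation from $\Phi$, since the identity channel $\idO_d$ is trivially both, and these properties are stable under tensor products. Next I would check invertibility (condition~\ref{cf:it:1}): since $\Phi$ is invertible by assumption, so is $\Phi_{\rm anc}$, with inverse $\Phi^{-1}\otimes\idO_d$. For condition~\ref{cf:it:2}, I would exhibit a concrete interior point of the state space $\mathcal{S}_{d^2}$ whose image lies in $\mathcal{S}_{d^2}$: if $\rho_0$ denotes the interior state that $\Phi$ maps into $\mathcal{S}_d$ (whose existence is granted by condition~\ref{cf:it:2} applied to $\Phi$), then the product $\rho_0\otimes\tfrac{\id_d}{d}$ is full rank (hence in the interior of $\mathcal{S}_{d^2}$) and its image under $\Phi_{\rm anc}$ is $\Phi(\rho_0)\otimes\tfrac{\id_d}{d}$, which is positive semidefinite and trace one, i.e.\ an element of $\mathcal{S}_{d^2}$. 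Condition~\ref{cf:it:3} is exactly the hypothesis $3'$ of the corollary.

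Having verified all hypotheses, Thm.~\ref{cf:theo:CPiffContracts} applied to $\Phi_{\rm anc}$ yields that $\Phi_{\rm anc}$ is a positive map. Since this is precisely the definition of complete positivity of $\Phi$, the conclusion follows.

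The only real subtlety, and the one place that deserves some care, is the verification of condition~\ref{cf:it:2} for $\Phi_{\rm anc}$: one has to make sure the chosen interior witness on the bipartite system maps into $\mathcal{S}_{d^2}$ using only the hypothesis that $\Phi$ maps an interior point of $\mathcal{S}_d$ into $\mathcal{S}_d$. The product ansatz $\rho_0\otimes\tfrac{\id_d}{d}$ above handles this cleanly, because tensoring with the maximally mixed ancilla preserves positivity and normalisation on the image side. Everything else is a verbatim transfer of properties from $\Phi$ to $\Phi\otimes\idO_d$, so no genuinely new work is required beyond Thm.~\ref{cf:theo:CPiffContracts} itself.
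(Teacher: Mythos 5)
Your proposal is correct and is exactly the argument the paper intends: the paper states the corollary ``trivially follows'' from Thm.~\ref{cf:theo:CPiffContracts} without spelling out the details, and your verification of the hypotheses for $\Phi\otimes\idO_d$ (in particular the product witness $\rho_0\otimes\tfrac{\id_d}{d}$ for condition~\ref{cf:it:2}) together with the Choi-type characterisation of complete positivity is the intended, and correct, way to fill them in.
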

	
	The theorem and its corollary show that the property of being completely positive can be defined exclusively in statistical terms. CP maps are all the physically realisable dynamics, so the results above gives a characterisation of which physical evolutions are possible without any reference to an actual theory of the world. Indeed, this characterisation could be regarded as the dual of the Chentsov--Petz theorem: not only the Fisher information is the unique family of metrics that contracts under arbitrary CP maps, but also CP maps are the only set of maps that contract the Fisher information.
	
	\begin{proof}
		We prove Theorem~\ref{cf:thm:contractIf_P}
		by contradiction: suppose there exists a map $\Phi$ that is not P, but that satisfies Eq.~\eqref{eq:contractionContrast}. Thanks to condition~\ref{cf:it:2} there exists at least one point $\pi$ in the interior of $ \mathcal{S}_d$ such that $\Phi(\pi)\in  \mathcal{S}_d$ is also a state. Moreover, from the assumption that $\Phi$ is not P there is also at least one state $\sigma$ such that $\Phi(\sigma)\notin \mathcal{S}_d$. Without loss of generality one can choose $\sigma$ to be in the interior of $ \mathcal{S}_d$: if this is not the case, one can take a ball around $\Phi(\sigma)$ whose image still lays outside of the state space, and by inspecting the intersection between its preimage and $ \mathcal{S}_d$ one can find a point satisfying the assumption. Hence, the line $\rho_\lambda := (1-\lambda)\,\pi + \lambda \,\sigma$ also lays in the interior of the state space for $\lambda\in[0,1]$. From this, it follows that the following superior is finite:
		\begin{align}
			\sup_{\lambda, \Tr{\delta\rho^2} = 1}\,K_{f,\rho_\lambda}(\delta\rho,\delta\rho)  = \sup_{\lambda, \Tr{\delta\rho^2} = 1} \, \Tr{\delta\rho\,\J_f^{-1}\big|_{\rho_\lambda} [\delta\rho]} < \infty\,,
		\end{align}
		where we used the fact that the Fisher information is a bounded operator when one is restricted to a closed set completely inside the interior of the state space (see Eq.~\eqref{cf:eq:182x}). 
		Since by varying $\lambda$, $\Phi(\rho_{\lambda})$ interpolates linearly between the positive definite matrix $\Phi(\pi)$ and one with at least one negative eigenvalue, namely $\Phi(\sigma)$, there exists a $\lambda^*$ such that $\Phi(\rho_{\lambda^*})$ is a state with at least one zero-eigenvalue. We are now ready to prove the claim. Set the state $\rho_\eta$ so that the smallest eigenvalue of $\Phi(\rho_\eta)$ is of order $\eta$, where $\eta\ll 1$. Moreover, consider a perturbation $\delta \rho_\eta$ such that $[\Phi(\rho_\eta), \Phi(\delta\rho_\eta)] = 0$, and having  a positive finite contribution along the eigenvectors corresponding to $\eta$-eigenvalues (the positivity condition ensures that in the limit $\eta\rightarrow0$ the perturbed state is still in the interior of $ \mathcal{S}_d$ for any finite $\eta$). Since $\Phi$ is full rank one can always find such $\delta\rho_\eta$. Then, choosing a common eigenbasis for $\rho_\eta$ and $\delta\rho_\eta$,  the evolved Fisher information reads: 
		\begin{align}
			K_{f,\Phi(\rho_\eta)}(\Phi(\delta\rho_\eta),\Phi(\delta\rho_\eta)) = \sum_{i}\; \frac{\Phi(\delta\rho_\eta)_i^2}{\Phi(\rho_\eta)_i}\,,\label{eq:46}
		\end{align}
		where we have used the expression of the Fisher information for commuting observables (see Eq.~\eqref{eq:classicalFisher2}). The quantity in \eqref{eq:46} scales as $\eta^{-1}$ as $\eta\rightarrow0$. We can thus always find a $\eta$ small enough such that:
		\begin{align}
			K_{f,\Phi(\rho_\eta)}(\Phi(\delta\rho_\eta),\Phi(\delta\rho_\eta))
			> \sup_{\lambda, \Tr{\delta\rho^2} = 1}\,K_{f,\rho_\lambda}(\delta\rho,\delta\rho)
			\geq \;
			K_{f,\rho_\eta}(\delta\rho_\eta,\delta\rho_\eta) \;,
		\end{align}
		contradicting the assumption that $\Phi$ contracts the Fisher metric for any two points in the interior of the space of states (condition~\ref{cf:it:3}). This proves the claim.
	\end{proof}
	
	It should be noticed that since the counterexample we find only comprises commutative observables, there is no need to specify which Fisher information we are actually using in Thm.~\ref{cf:theo:CPiffContracts}: indeed, if this condition is valid for one $f$, then it is automatically valid for all others. \ms{Moreover, as it was proven in~\cite{scandi2023physicality}, one can drop the condition~\ref{cf:it:1} at the cost of a slightly more technical proof (we refer to~\cite{scandi2023physicality} for the details on this matter, together with some slight generalisation of the result to include trace non-preserving maps).}
	
	The results just proven demonstrate a deep connection between the Fisher information and the dynamics of quantum systems. Indeed, this also reflects in the fact that a number of dynamical properties can be naturally formulated in terms of the Fisher metrics. 
	In the following, we are going to focus on three dynamical aspects in particular:
	\begin{itemize}
		\item[{\hyperref[markovianity]{A}}.] Markovianity of {an evolution} can be related to the monotonous contraction of Fisher information, using the same principle that allows to express the CP-ness of a map in terms of its contractivity property with respect to the Fisher information;
		\item[{\hyperref[retrodiction]{B}}.] The contractivity of Fisher information strictly relates also to the ability of recovering the original states of the dynamics, via a generalisation of Bayesian retrodiction;
		\item[{\hyperref[detailedBalance]{C}}.] Detailed balanced dynamics can be characterized in terms of the adjointness of the dynamical generators of the evolution, with respect to the scalar product induced by the Fisher information.
	\end{itemize}  
	
	These topics are the subject of the next subsections and corroborate the interpretation of the Fisher information as an intimately dynamical quantity.
	
	\begin{additional-info}{Box 4. Corollaries for finite divergences~\hyperref[markovianity]{$\rightarrow$}}\label{box:cpIffContracts}
		Thm.~\ref{cf:theo:CPiffContracts} gives a characterisation of P-maps in terms of their contractivity with respect to the Fisher information. Still, it should be noticed that since the proof proceeds by providing a local counterexample, the same result could in principle be formulated in terms of finite distinguishability measures. In particular, it is easy to prove the following:
		\begin{corollary}
			Consider a channel satisfying the premises of Thm.~\ref{cf:theo:CPiffContracts}, and condition~\ref{cf:it:1} and~\ref{cf:it:2} therein. If one also requires that:
			\begin{itemize}
				\item[ 3*.] For any two states $\rho\in\mathcal{S}_d$ and $\sigma\in\mathcal{S}_d$ such that $\Phi(\rho)\in\mathcal{S}_d$ and $\Phi(\sigma)\in\mathcal{S}_d$ it holds that:
				\begin{align}
					H_g (\rho||\sigma)\geq H_g (\Phi(\rho)||\Phi(\sigma))\,;
				\end{align}
			\end{itemize}
			then, the image of $\Phi$ is completely contained in $\mathcal{S}_d$, i.e., $\Phi$ is a P-map.
		\end{corollary}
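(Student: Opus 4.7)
The plan is to reduce the corollary to Theorem~\ref{cf:theo:CPiffContracts} by using Theorem~\ref{cf:thm:Ruskai} to translate the finite contractivity of $H_g$ into its local (Fisher) counterpart, and then to exploit the very same counterexample already built for Theorem~\ref{cf:theo:CPiffContracts}. I would argue by contradiction: suppose $\Phi$ is not P. Reusing the setup of the proof of Theorem~\ref{cf:theo:CPiffContracts}, conditions~\ref{cf:it:1} and~\ref{cf:it:2} furnish an interior state $\pi$ with $\Phi(\pi)\in\mathcal{S}_d$ and, under the contradictory hypothesis, an interior state $\sigma$ with $\Phi(\sigma)\notin\mathcal{S}_d$, so the line $\rho_\lambda:=(1-\lambda)\pi+\lambda\sigma$ contains a point $\rho_{\lambda^*}$ whose image lies on the boundary of $\mathcal{S}_d$. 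A small displacement yields $\rho_\eta$ in the interior of $\mathcal{S}_d$ with $\Phi(\rho_\eta)$ having a smallest eigenvalue of order $\eta$, and I select $\delta\rho_\eta$ exactly as in the original proof: Hermitian, traceless, commuting with $\Phi(\rho_\eta)$, and with a strictly positive component along the $\eta$-eigenspace of $\Phi(\rho_\eta)$.

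With these ingredients in place, I would apply condition~3* to the pair $\rho_\eta$ and $\sigma_\eta:=\rho_\eta+\varepsilon\,\delta\rho_\eta$, for $\varepsilon>0$ small enough that both $\sigma_\eta\in\mathcal{S}_d$ (automatic, since $\rho_\eta$ is in the interior) and $\Phi(\sigma_\eta)=\Phi(\rho_\eta)+\varepsilon\,\Phi(\delta\rho_\eta)\in\mathcal{S}_d$ (guaranteed by the choice of $\delta\rho_\eta$, which raises the small eigenvalues of $\Phi(\rho_\eta)$). Expanding both sides of the resulting inequality $H_g(\rho_\eta||\sigma_\eta)\geq H_g(\Phi(\rho_\eta)||\Phi(\sigma_\eta))$ via Theorem~\ref{cf:thm:Ruskai}, dividing by $\varepsilon^2$, and sending $\varepsilon\to 0$ at fixed $\eta$ produces
\begin{equation*}
K_{f,\rho_\eta}(\delta\rho_\eta,\delta\rho_\eta)\;\geq\;K_{f,\Phi(\rho_\eta)}(\Phi(\delta\rho_\eta),\Phi(\delta\rho_\eta)),
\end{equation*}
which is exactly the local contraction inequality refuted in the proof of Theorem~\ref{cf:theo:CPiffContracts}: the right-hand side diverges as $\eta^{-1}$ by the commutative formula in Eq.~\eqref{eq:classicalFisher2}, whereas the left-hand side stays uniformly bounded on the compact segment from $\pi$ to $\sigma$, which is contained in the interior of $\mathcal{S}_d$. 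Sending $\eta\to 0$ then supplies the contradiction.

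The main subtlety, and the step that needs the most care, is the order of limits. The $\mathcal{O}(\varepsilon^3)$ remainder in Theorem~\ref{cf:thm:Ruskai} carries an $\eta$-dependent prefactor that in general blows up as $\rho_\eta$ approaches the boundary of $\mathcal{S}_d$, so one cannot let $\varepsilon$ and $\eta$ go to zero jointly. The remedy is to take $\varepsilon\to 0$ first at fixed $\eta$: for every interior state the expansion is perfectly well defined and the finite inequality cleanly collapses to an inequality between Fisher quadratic forms, after which the $\eta\to 0$ step proceeds verbatim as in Theorem~\ref{cf:theo:CPiffContracts}. Apart from this, only bookkeeping remains, as the admissibility of the pair $(\rho_\eta,\sigma_\eta)$ under condition~3* is inherited from the positivity component built into $\delta\rho_\eta$.
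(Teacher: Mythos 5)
Your proof is correct and follows essentially the same route as the paper's: the identical contradiction setup ($\pi$, $\sigma$, the segment $\rho_\lambda$, the near-boundary state $\rho_\eta$ and the commuting perturbation $\delta\rho_\eta$), with the blow-up of the evolved quantity as $\eta^{-1}$ against a uniform bound on the interior segment. The only (cosmetic) difference is the order of limits: you send $\varepsilon\to 0$ first to collapse condition~3* into the local Fisher inequality of Thm.~\ref{cf:theo:CPiffContracts} and then let $\eta\to 0$, whereas the paper keeps $\varepsilon$ fixed and compares the finite divergences directly via their $\varepsilon^2/\eta$ scaling; both are sound, and your explicit handling of the $\eta$-dependence of the $\bigo{\varepsilon^3}$ remainder is a welcome clarification.
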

		\begin{proof}
			The proof of the corollary directly follows from the one of Thm.~\ref{cf:theo:CPiffContracts}. Indeed, by contradiction, assume there exists a state $\sigma$ that  gets mapped outside of $\mathcal{S}_d$, and consider a state $\pi\in \mathcal{S}_d$ such that $\Phi(\pi)\in\mathcal{S}_d$. Analogously to the proof above, define $\rho_\lambda := (1-\lambda)\,\pi + \lambda \,\sigma$. For $\varepsilon$ small enough we can also apply Thm.~\ref{cf:thm:Ruskai} to obtain:
			\begin{align}
				\sup_{\lambda, \Tr{\delta\rho^2} = 1}\,H_g(\rho_\lambda ||\rho_\lambda+\varepsilon\,\delta\rho) = \sup_{\lambda, \Tr{\delta\rho^2} = 1}\,\frac{\varepsilon^2}{2}\,K_{f,\rho_\lambda}(\delta\rho,\delta\rho)  +\bigo{\varepsilon^3}< \infty\,.
			\end{align}
			Then, following the steps presented for Thm.~\ref{cf:theo:CPiffContracts}, we define a $\rho_\eta$ such that $\Phi(\rho_\eta)$ is of order $\eta$, where $\eta\ll 1$, and a $\delta\rho_\eta$ such that $[\Phi(\rho_\eta), \Phi(\delta\rho_\eta)] = 0$, and having  a positive finite contribution along the eigenvectors corresponding to $\eta$-eigenvalues. The positivity condition ensures that in the limit $\eta\rightarrow0$ the perturbed state is still in the interior of $ \mathcal{S}_d$ for any finite $\eta$, and $\varepsilon$ small enough. Then, $H_g(\Phi(\rho_\eta) ||\Phi(\rho_\eta+\varepsilon\,\delta\rho_\eta))$ scales as $\varepsilon^2/\eta$. Hence, one can find $\eta$ small enough such that:
			\begin{align}
				H_g(\Phi(\rho_\eta) ||\Phi(\rho_\eta+\varepsilon\,\delta\rho_\eta)) > \sup_{\lambda, \Tr{\delta\rho^2} = 1}\,H_g(\rho_\lambda ||\rho_\lambda+\varepsilon\,\delta\rho) \geq \;H_g(\rho_\eta ||\rho_\eta+\varepsilon\,\delta\rho_\eta) \,.
			\end{align}
			This gives the desired contradiction, concluding the proof.
		\end{proof}
		The same kind of argument could be made for $\chi^2_f$-divergences and geodesic distances. Moreover, even in this case one can restrict their attention to a single contrast function (or any other distinguishability measure that locally expands to the Fisher information), as the counterexample we use is commutative. 
	\end{additional-info}
	
	\subsection{Markovianity as monotonous contraction of Fisher information}\label{markovianity}
	
	In order to investigate how the Fisher information relates to Markovianity we need to introduce the concept of  CP-divisible evolutions. To this end, consider a family of CP-maps $\Phi_t$ depending smoothly on $t$, which represents the time-parameter associated to the evolution. We assume that for any two times $t$ and $s$ ($t\geq s$) one can define an intermediate map $\Phi_{t,s}$ satisfying the property $\Phi_t = \Phi_{t,s}\circ\Phi_s$. This kind of one-parameter families, or evolutions, are called divisible\footnote{{Note that if $\Phi_t$ is invertible $\forall \,t$, then it is trivially divisible as one can set $\Phi_{t,s} \equiv \Phi_t \circ \Phi_s^{-1}$}.}. In particular, if all the intermediate maps are positive preserving, the global dynamics is called P-divisible; in the same way, if the intermediate maps are all CP, then the corresponding dynamics is called CP-divisible. In the following we will identify CP-divisible evolutions with Markovian ones, according to the most canonical notion of quantum Markovianity~\cite{rivas2014quantum}. It should be noticed that for CP-divisible evolutions all the contrast functions are monotonically non-increasing, as it can be verified from their derivative:
	\begin{align}
		\frac{\de}{\de t} H_g( \Phi_t(\rho) || \Phi_t(\sigma)) =\lim_{\varepsilon\rightarrow0} \frac{H_g( \Phi_{t+\varepsilon,t}\Phi_t(\rho) || \Phi_{t+\varepsilon,t}\Phi_t(\sigma))-H_g( \Phi_t(\rho) || \Phi_t(\sigma))}{\varepsilon}\leq 0\;,
		\label{cf:eq:183B}
	\end{align}
	where we used the contractivity of the contrast functions under the action of quantum channels. Since $\rho$ and $\sigma$ are arbitrary, the same result also holds for quantum Fisher metrics, and this behaviour is referred to as monotonic degradation of information, which also justifies the identification of CP-divisibility with Markovian dynamics.
	
	
	Since $\Phi_{t,s}$ is defined for any $t\geq s$, one can translate the study of the semigroup to the one of the generators, defined by the limit:
	\begin{align}
		\lind_t := \lim_{\varepsilon\rightarrow0}\, \frac{\Phi_{t+\varepsilon,t}-\idO}{\varepsilon}\,.\label{cf:eq:generatorDef}
	\end{align}
	
	\begin{figure}[h!]
		\centering
		\includegraphics[width=1\linewidth]{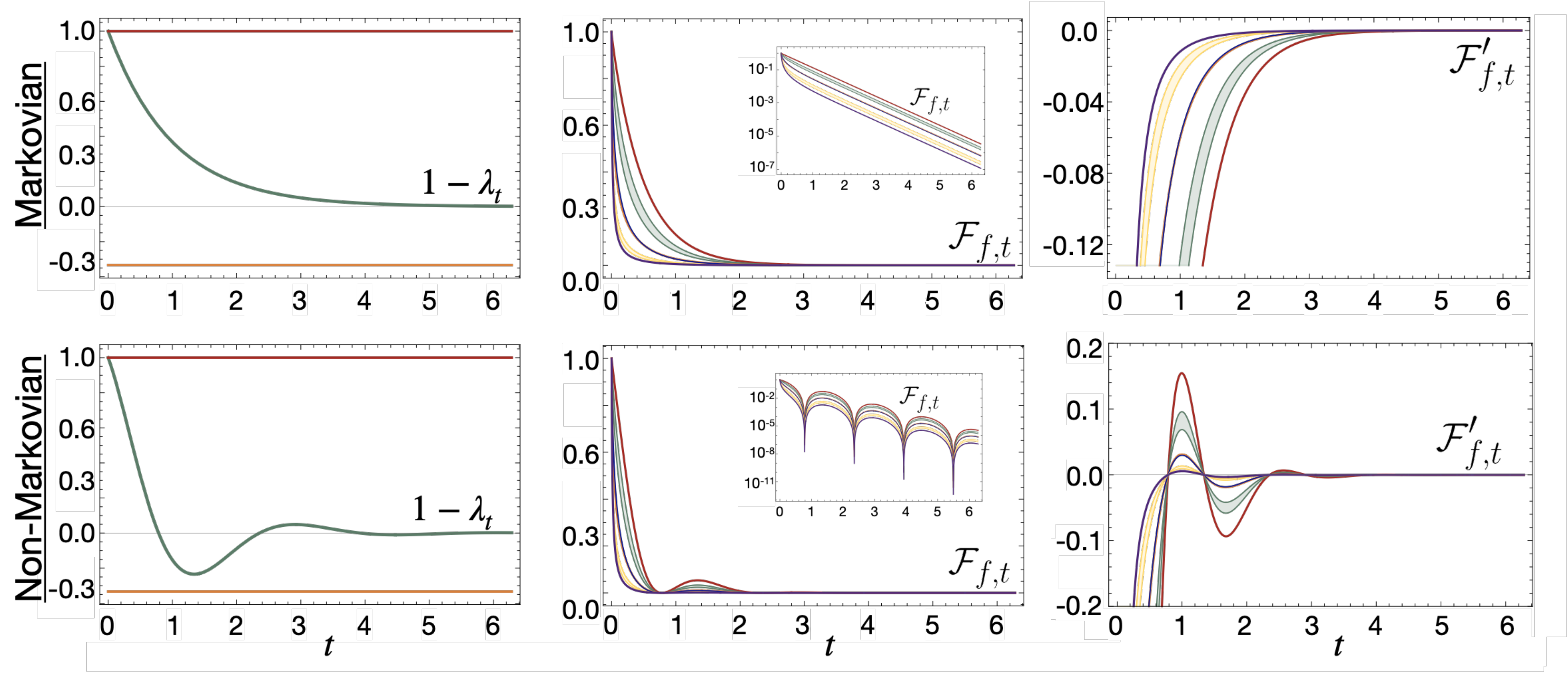}
		\caption{Evolution of the quantum Fisher informations under the action of the depolarising channel $\Delta_{\lambda_t} (\rho) = (1-\lambda_t)\rho +\lambda_t\,\frac{\id}{2}$ on a qubit. The time dependent contraction coefficient are $\lambda_t^M = 1-e^{-t}$ and $\lambda_t^{NM} = 1-e^{-t}\cos (2t)$, respectively in the Markovian and non-Markovian case. Notice that non-Markovianity is associated to a local increase in the value of $1-\lambda_t$. In the first panel we show the evolution of $\lambda_t$, in the second of $\mathcal{F}_{f,t}$ (the inset is in logarithmic scale) and in the third the behaviour of $\mathcal{F}'_{f,t}$. Non monotonicity in $\lambda_t$ are mirrored in the change of sign of $\mathcal{F}'_{f,t}$. The colour scheme is the same as in Fig.~\ref{cf:fig:figstandardmonotones2}.}
		\label{cf:fig:nonmarkovflux}
	\end{figure}
	
	It can be shown that an invertible evolution is divisible if and only if $\lind_t$ can be written in the time-dependent GKLS form~\cite{hall2014canonical}, namely:
	\begin{align}
		\lind_t[\rho] = -i[H(t),\rho] + \sum_{\alpha=1}^{d^2-1} \;\lambda_\alpha(t)\,\norbra{A_\alpha(t) \,\rho\, A_\alpha^\dagger(t) - \frac{1}{2}\{A_\alpha^\dagger(t) \,A_\alpha(t), \rho\}}\,,\label{cf:eq:lindDiagonal}
	\end{align}
	where $H(t)$ is a Hermitian matrix, $\lambda_\alpha(t)$ are real scalars called rates, while the operators ${A_\alpha}(t)$, which go under the name of jump operators, are traceless matrices which are orthonormal with respect to the Hilbert-Schmidt scalar product (i.e., $\Tr{A_\alpha^\dagger A_\beta}=\delta_{\alpha\beta}$). In this context, it can be proven that CP-divisibility is equivalent to the requirement that the rates are positive at all times ($\lambda_\alpha(t)\geq0$).
	
	Thanks to the linearity of the derivative, one can express the  evolution of the Fisher information metric under divisible dynamics as a sum of independent fluxes, each associated to a single rate $\lambda_\alpha(t)$. In particular, our main object of interest is given by the following derivative:
	\begin{align}
		\frac{\de}{\dt}H_{g}(\Phi_t(\pi)||\Phi_t(\pi+\varepsilon \,\delta\rho))&=\frac{\varepsilon^2}{2} \frac{\de}{\dt} \Tr{\Phi_t(\delta\rho)\, \J_f^{-1}\big|_{\Phi_{t}(\pi)} [\Phi_t(\delta\rho) ]} +\bigo{\varepsilon^3} =\label{cf:eq:201B}\frac{\varepsilon^2}{2} \,\mathcal{F}'_{f,t} +\bigo{\varepsilon^3}\,,
	\end{align}
	where we used the shorthand notation for the Fisher information $\mathcal{F}_{f,t} := \mathcal{F}_{f,\Phi_t(\pi)}(\Phi_t(\delta\rho))$.
	In the following we also use the notation $\pi_t := \Phi_t(\pi)$ and $\delta\rho_t := \Phi_t(\delta\rho)$. Then, we have that:

	\begin{theorem}\label{cf:thm:FisherFlow}
		For any divisible dynamics, let $\{A_\alpha(t)\}$ and $\{\lambda_\alpha(t)\}$ be respectively the time dependent jump operators and time dependent rates, defined according to Eq.~\eqref{cf:eq:lindDiagonal}. Then, the derivative of the Fisher information takes the form:
		\begin{align}
			\mathcal{F}'_{f,t} = \sum_\alpha\; \lambda_\alpha(t)\, \,\mathcal{I}^f_\alpha(t)\,,\label{cf:eq:fisherRates}
		\end{align}
		where the current $\mathcal{I}^f_\alpha(t)$ is given by:
		\begin{align}
			\mathcal{I}^f_\alpha(t) = - 2\,\int_0^1 \de N_g (s)\;&\bigg(\Tr{\pi_t\, [A_\alpha(t),B_{s}(t)^\dagger]^\dagger\,[A_\alpha(t),B_{s}(t)^\dagger] }+s\,\Tr{\pi_t\, [A_\alpha(t),B_{s}(t)]^\dagger\,[A_\alpha(t),B_{s}(t)] }\bigg)\,,\label{cf:eq:flowFisherInformation}
		\end{align}
		and the measure $\de N_g(s)$ is the one used in Eq.~\eqref{cf:eq:thm2HGexpIntegral}, while the operators $B_{s}(t)$ are defined as:
		\begin{align}
			B_{s}(t) := (\LL_{\pi_t}+s\,\RR_{\pi_t})^{-1}[\delta\rho_t]\,.\label{cf:eq:bOperators}
		\end{align}
	\end{theorem}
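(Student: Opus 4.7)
My strategy is to differentiate the integral representation of the Fisher information directly, and then reorganise the resulting traces using the defining equation for the operators $B_{s}(t)$. Concretely, from Thm.~\ref{cf:thm:Ruskai} and Eq.~\eqref{cf:eq:thm2HGexpIntegral} I can write, to leading order in the perturbation,
\begin{align}
\mathcal{F}_{f,t} \;=\; 2\int_0^1 \de N_g(s)\;\Tr{\delta\rho_t\,(\LL_{\pi_t} + s\RR_{\pi_t})^{-1}[\delta\rho_t]} \;=\; 2\int_0^1 \de N_g(s)\;\Tr{\delta\rho_t\,B_{s}(t)}\,,
\end{align}
with $B_{s}(t)$ characterised equivalently by the identity $\pi_t\,B_{s}(t)+ s\,B_{s}(t)\,\pi_t = \delta\rho_t$.

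Next, I differentiate under the integral sign and apply the chain rule. The operators $\pi_t$ and $\delta\rho_t$ both evolve under $\lind_t$, i.e.\ $\dot{\pi}_t = \lind_t[\pi_t]$ and $\dot{\delta\rho}_t = \lind_t[\delta\rho_t]$, while the inverse $(\LL_{\pi_t}+s\RR_{\pi_t})^{-1}$ differentiates via $\tfrac{\rmd}{\rmd t}M_t^{-1} = -M_t^{-1}\dot{M}_t M_t^{-1}$. Using self-adjointness of $(\LL_{\pi_t}+s\RR_{\pi_t})^{-1}$ with respect to the Hilbert--Schmidt product, the three terms combine into
\begin{align}
\tfrac{\rmd}{\rmd t}\Tr{\delta\rho_t\,B_{s}(t)} \;=\; 2\,\Tr{B_{s}(t)\,\lind_t[\delta\rho_t]} \,-\, \Tr{B_{s}(t)\,\lind_t[\pi_t]\,B_{s}(t)} \,-\, s\,\Tr{B_{s}(t)^2\,\lind_t[\pi_t]}\,.
\end{align}
At this point I insert the GKLS form of $\lind_t$ from Eq.~\eqref{cf:eq:lindDiagonal}. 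The Hamiltonian piece $-i[H(t),\cdot]$ must drop out: this follows from the unitary invariance of the Fisher information (Thm.~\ref{cf:thm:Petz}), which implies that pure unitary conjugation produces zero derivative; alternatively, one checks by direct substitution that the three contributions telescope under cyclicity of the trace. Linearity in the rates then immediately yields $\mathcal{F}'_{f,t}=\sum_\alpha\lambda_\alpha(t)\,\mathcal{I}^f_\alpha(t)$, with $\mathcal{I}^f_\alpha(t)$ given by the above expression but with $\lind_t$ replaced by the single dissipator $\mathcal{D}_{A_\alpha}[X] = A_\alpha X A_\alpha^\dagger - \tfrac12\{A_\alpha^\dagger A_\alpha, X\}$.

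The final task is to recast the three-term combination, for a fixed dissipator, into the symmetric commutator form in Eq.~\eqref{cf:eq:flowFisherInformation}. Writing $A\equiv A_\alpha(t)$ for brevity, I expand each trace by cyclicity and use repeatedly the defining identity $\pi_t B_{s}+sB_{s}\pi_t = \delta\rho_t$ to eliminate $\delta\rho_t$ in favour of $B_{s}$. Grouping terms, quadratic expressions of the type $\Tr{\pi_t\,B_{s}A^\dagger A B_{s}}$, $\Tr{\pi_t\,A^\dagger B_{s}^2 A}$ and their permutations appear, which can be reorganised as $\Tr{\pi_t[A,B_{s}^\dagger]^\dagger[A,B_{s}^\dagger]}$ and $s\,\Tr{\pi_t[A,B_{s}]^\dagger[A,B_{s}]}$ with an overall minus sign. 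The factor of $2$ comes from combining the doubled term $2\Tr{B_{s}\mathcal{D}_A[\delta\rho_t]}$ with the remaining two pieces.

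The main obstacle is exactly this last algebraic reduction: the manipulation is routine but unforgiving, and it is crucial to exploit the identity satisfied by $B_{s}(t)$ to replace $\delta\rho_t$ in \emph{different} positions of the trace (left, right, and in the inner slot of a sandwich by $\pi_t$) so that the accounting of terms closes into the claimed commutator squares. Positivity of each current then becomes manifest, since each commutator-square trace against the positive operator $\pi_t$ is nonnegative, which also serves as a useful consistency check for the signs.
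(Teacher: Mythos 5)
Your plan follows essentially the same route as the paper's proof in App.~\ref{app:sec:fisherFlow}: differentiate the integral representation from Eq.~\eqref{cf:eq:thm2HGexpIntegral}, use $\tfrac{\de}{\dt}M_t^{-1}=-M_t^{-1}\dot M_t M_t^{-1}$, discard the Hamiltonian part by unitary invariance, split into one current per jump operator by linearity, and close the algebra via the defining identity for $B_s(t)$. One caution on your intermediate display: since $B_s(t)$ is not Hermitian for $s\neq 1$, transposing $(\LL_{\pi_t}+s\RR_{\pi_t})^{-1}$ through the bilinear trace form produces $(\RR_{\pi_t}+s\LL_{\pi_t})^{-1}[\delta\rho_t]=B_s(t)^\dagger$ on the left slot, not $B_s(t)$; so the terms should read $-\Tr{B_s(t)^\dagger\,\lind_t[\pi_t]\,B_s(t)}-s\,\Tr{B_s(t)^\dagger B_s(t)\,\lind_t[\pi_t]}$, and in the first term one must substitute $\delta\rho_t=(\RR_{\pi_t}+s\LL_{\pi_t})[B_s(t)^\dagger]$. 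Tracking these daggers is exactly what makes the final expression close into the manifestly positive commutator squares $[A,B_s^\dagger]^\dagger[A,B_s^\dagger]$ and $s\,[A,B_s]^\dagger[A,B_s]$ (together with taking the real part, which justifies combining the two $\dot{\delta\rho}_t$ terms into a single factor of $2$).
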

	
	The proof of this theorem is deferred to App.~\ref{app:sec:fisherFlow}. It should be noticed that the two traces in Eq.~\eqref{cf:eq:flowFisherInformation} are positive definite, as they take the form $\Tr{\pi_t\,X^\dagger X}$ for some operator $X$. This implies that the currents $\mathcal{I}^f_\alpha(t)$ are always negative, showing that the summands in Eq.~\eqref{cf:eq:fisherRates} can become positive only if the corresponding rate $\lambda_\alpha(t)$ becomes negative, i.e., in the presence of non-Markovianity. In the same way, we see that $\mathcal{F}'_{f,t}$ will always be negative for Markovian dynamics, signalling the expected monotonic contraction of the Fisher information. The effects of this decomposition are exemplified in Fig.~\ref{cf:fig:nonmarkovflux}, where we plotted the Fisher information and its derivative for a depolarising channel, both in the Markovian and non-Markovian regime. As it can be seen, in this case the oscillations in $\mathcal{F}'_{f,t}$ mirror the onset of non-Markovianity.
	
	\begin{additional-info}{Box 5. Examples of Fisher information currents}
		In order to give a more practical feeling of the expression in Eq.~\eqref{cf:eq:flowFisherInformation}, we present here some specific examples for which the currents $\mathcal{I}^f_\alpha(t)$ take a particularly simple form. The first case that is interesting to study is the one of classical evolutions, corresponding to jump operators all of the form $A_{i\leftarrow j} = \ketbra{i}{j}$, together with the requirement that all the observables commute with $\pi_t$. Thanks to this fact, $B_s(t)$ is simply given by $B_s(t) = B_s(t)^\dagger = \delta\rho_t /((1+s)\pi_t)$, where we use a slight abuse of notation to indicate the component-wise division. This allows us to rewrite Eq.~\eqref{cf:eq:flowFisherInformation} as:
		\begin{align}
			\mathcal{I}^{{c}}_{i\leftarrow j}(t) &= - 2\,\int_0^1 \de N_g (s)\,(1+s)\:\Tr{\pi_t \sqrbra{\ketbra{i}{j},\frac{\delta \rho_t}{(1+s) \pi_t}}^\dagger\sqrbra{\ketbra{i}{j},\frac{\delta \rho_t}{(1+s) \pi_t}} } =\\
			&=- \,\int_0^1 \de N_g (s)\;\frac{2}{(1+s)} \norbra{\frac{(\delta \rho_t)_{j}}{( \pi_t)_j} - \frac{(\delta \rho_t)_{i}}{( \pi_t)_i}}^2\, ( \pi_t)_j =\\\label{cf:eq:classicalFisherFlow}
			&= - \norbra{\frac{(\delta \rho_t)_{j}}{( \pi_t)_j} - \frac{(\delta \rho_t)_{i}}{( \pi_t)_i}}^2\, ( \pi_t)_j\,,
		\end{align}
		where in the last line we used the normalisation condition  on the measure $\de N_g(s)$ (see Eq.~\eqref{eq:normalisationFisherMeasure}):
		\begin{align}
			\int_0^1 \de N_g(s)\; \frac{2}{1+s} = 1\,,\label{cf:eq:normalisationDN}
		\end{align}
		This result indeed coincides with the one obtained for classical stochastic dynamics~\cite{abiuso2022characterizing}.
		
		Another case of particular interest is given by the flux of Bures metric (see Sec.~\ref{Bures}). This corresponds to a measure of the form $\de N(s) = \delta(s-1)\,\de s$. Then, thanks to the self-adjointness relation $B_{1}(t) =B_{1}(t)^\dagger$, by carrying out the integration one obtains:
		\begin{align}
			\mathcal{I}^B_\alpha(t) = -4 \, \Tr{\pi_t\, [A_\alpha(t),B_{1}(t)]^\dagger\,[A_\alpha(t),B_{1}(t)] }\,.
		\end{align}
		Interestingly, in this case the current is directly connected to the symmetric logarithmic derivative of the state. In fact, by inverting Eq.~\eqref{cf:eq:bOperators}, one obtains:
		\begin{align}
			B_{1}(t)\,\pi_t + \pi_t\, B_{1}(t) = \delta \rho_t\,.
		\end{align}
		Comparing this expression with the one in Eq.~\eqref{cf:eq:103}, it is apparent that $2 B_1$ corresponds to the symmetric logarithmic derivative $L$ in $\pi_t$ and in the direction of $\delta\rho_t$.  It should be pointed out that the expression of the Bures flux in terms of $L$ was already found in~\cite{luQuantumFisherInformation2010}.
		
		At the other extreme, the smallest among the contrast functions corresponds to the measure $\de N(s) =  \frac{\delta(s)}{2}\,\de s$ (see Sec.~\ref{harmonic}). Then, the current can be expressed as:
		\begin{align}
			\mathcal{I}^H_\alpha(t)  = - \, \Tr{\pi_t\, [A_\alpha(t),\delta\rho_t\,\pi_t^{-1}]^\dagger\,[A_\alpha(t),\delta\rho_t\,\pi_t^{-1}] }\,.
		\end{align}
		This shows how the formula in Thm.~\ref{cf:thm:FisherFlow} generalises the computations presented in~\cite{luQuantumFisherInformation2010} to the whole family of quantum Fisher information.
	\end{additional-info}
	
	We are now ready to give a complete characterisation of the relation between Markovianity and Fisher information metrics (see Fig.~\ref{cf:fig:fig1b}). As it was mentioned above, all Fisher information metrics monotonically contract under Markovian evolutions. With the hindsight of Thm.~\ref{cf:theo:CPiffContracts} it is then not surprising that the reverse also holds: 
	\begin{theorem}\label{cf:theo:markov_iff_contract}
		A divisible evolution $\Phi_{t}$ acting on a $d$-dimensional state space is P-divisible if it induces a monotonic decrease in the Fisher information at all times and on all states in $\mathcal{S}_d^{\mathrm{o}}$, the interior of the space of states. In formulae, P-divisibility of the evolution $\Phi_{t}$ is implied by the condition:
		\begin{align}
			\frac{\de}{\de t'}\,\Tr{\Phi_{t',t}(\delta\rho)\,\J_f^{-1}\big|_{\Phi_{t',t}(\rho)}[\Phi_{t',t}(\delta\rho)]}\bigg|_{t'=t}\leq 0\quad\forall \,t\,,\rho\,, \delta\rho\,,\label{eq:monotonicContraction}
		\end{align}
		where $\rho$ is an arbitrary point in $\mathcal{S}_d^{\mathrm{o}}$, and $\delta\rho$ an arbitrary perturbation in the tangent space. Moreover, the same condition for the dynamics $\Phi_t\otimes\idO_d$ is equivalent to the fact that $\Phi_t$ is CP-divisible, i.e., Markovian.
	\end{theorem}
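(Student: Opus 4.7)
The plan is to upgrade the infinitesimal contraction hypothesis \eqref{eq:monotonicContraction} into a finite-time contraction of every intermediate map $\Phi_{t_1,t_0}$, and then invoke Theorem~\ref{cf:theo:CPiffContracts} (and its Corollary~\ref{cf:cor:CPiffContracts} for the CP-divisibility statement) to conclude positivity of those maps. Note that for the Markovianity statement the converse ``only if'' is already immediate, since $\Phi_t\otimes\idO_d$ being CP-divisible makes each intermediate map CPTP, and~\eqref{cf:eq:183B} then yields the monotonic decay of every Fisher metric.

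For any $t_0\leq t_1$ and any $\rho\in\mathcal{S}_d^{\mathrm{o}}$ with $\Phi_{t_1,t_0}(\rho)\in\mathcal{S}_d^{\mathrm{o}}$, I would consider the curves $\rho(t):=\Phi_{t,t_0}(\rho)$ and $\delta\rho(t):=\Phi_{t,t_0}(\delta\rho)$ on $[t_0,t_1]$ and write the telescoping identity
\begin{align}
\mathcal{F}_{f,\Phi_{t_1,t_0}(\rho)}(\Phi_{t_1,t_0}(\delta\rho))-\mathcal{F}_{f,\rho}(\delta\rho)=\int_{t_0}^{t_1}\de t\;\frac{\de}{\de t'}\Tr{\Phi_{t',t}(\delta\rho(t))\,\J_f^{-1}\big|_{\Phi_{t',t}(\rho(t))}[\Phi_{t',t}(\delta\rho(t))]}\bigg|_{t'=t}\,.
\end{align}
Whenever $\rho(t)$ stays inside $\mathcal{S}_d^{\mathrm{o}}$ throughout $[t_0,t_1]$, the integrand is non-positive by hypothesis, so $\Phi_{t_1,t_0}$ contracts the Fisher metric along this interior-to-interior pair. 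Together with the invertibility implicit in divisibility and the existence of at least one interior point mapped into the interior (which follows by continuity from $\rho$ itself), these are exactly the premises of Theorem~\ref{cf:theo:CPiffContracts}, which then identifies $\Phi_{t_1,t_0}$ as a P-map. Since $t_0,t_1$ are arbitrary, the evolution is P-divisible; the CP-divisibility statement is obtained by repeating the argument verbatim for $\Phi_t\otimes\idO_d$ in place of $\Phi_t$ and invoking Corollary~\ref{cf:cor:CPiffContracts}.

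I expect the main obstacle to be the case in which the trajectory $\rho(t)$ grazes the boundary of $\mathcal{S}_d$ at some intermediate time: there $\J_f^{-1}$ diverges and the pointwise assumption~\eqref{eq:monotonicContraction} is silent. The natural remedy is a density argument, approximating $(\rho,\delta\rho)$ by a sequence whose trajectories remain strictly inside the state space on the whole interval, establishing the inequality for each approximant and then passing to the limit using smoothness of the Fisher metric at the endpoint pair $(\rho,\Phi_{t_1,t_0}(\rho))$, which by assumption lies in the interior. A complementary strategy, exploiting the smoothness of $\Phi_t$ to show that the set of ``bad'' times has measure zero, would let one simply excise those times from the integral and close the argument.
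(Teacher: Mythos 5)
Your reduction to Theorem~\ref{cf:theo:CPiffContracts} is the right idea and is also how the paper proceeds, but the execution differs in a way that matters. The paper applies Theorem~\ref{cf:theo:CPiffContracts} to the \emph{infinitesimal} intermediate maps $\Phi_{t+\varepsilon,t}$: for $\varepsilon$ small these are invertible and $\varepsilon$-close to the identity (so they map the interior of $\mathcal{S}_d$ essentially onto itself and conditions~\ref{cf:it:1} and~\ref{cf:it:2} are automatic), and in the limit $\varepsilon\to 0$ the finite contraction requirement~\eqref{eq:contractionContrast} becomes exactly the derivative condition~\eqref{eq:monotonicContraction}; positivity of each $\Phi_{t+\varepsilon,t}$ then gives P-divisibility of the whole evolution by composition. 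You instead integrate the derivative condition over a finite interval $[t_0,t_1]$ to obtain finite-time contraction of $\Phi_{t_1,t_0}$, and only then apply the theorem to that finite map.

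The problem with the finite-time route is the circularity you half-identify at the end: the telescoping identity is only usable when the trajectory $\rho(t)=\Phi_{t,t_0}(\rho)$ remains in $\mathcal{S}_d^{\mathrm{o}}$ for all $t\in[t_0,t_1]$, since outside the state space the hypothesis~\eqref{eq:monotonicContraction} is vacuous and $\J_f^{-1}\big|_{\rho(t)}$ is not even defined. But the trajectory stays inside precisely when the intermediate maps are positive along it --- which is the conclusion you are trying to reach. Neither proposed remedy closes this gap: if some intermediate map fails to be positive, trajectories generically leave $\mathcal{S}_d$ on a whole interval of times, not on a measure-zero set; and the set of initial pairs whose trajectories stay strictly interior need not be dense in the set of pairs you must test for Theorem~\ref{cf:theo:CPiffContracts} (in particular it may miss the near-boundary states $\rho_\eta$ on which that theorem's contradiction argument is built). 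A correct repair is either a bootstrap in $t_1$ (take $t^*$ to be the supremum of times up to which all intermediate maps are positive, and extend past it by continuity), or --- more simply --- the paper's move of applying Theorem~\ref{cf:theo:CPiffContracts} only to $\Phi_{t+\varepsilon,t}$ with $\varepsilon\to0$, where no trajectory ever has time to escape. Your treatment of the converse direction and of the CP-divisible case via Corollary~\ref{cf:cor:CPiffContracts} is fine and matches the paper.
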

	
	This theorem, together with Thm.~\ref{cf:theo:no-go} and~\ref{cf:theo:witness}, was already presented in~\cite{abiuso2022characterizing}, to which we refer for the proofs. Still, it should be noticed that Thm.~\ref{cf:theo:markov_iff_contract} is a direct corollary of Thm.~\ref{cf:theo:CPiffContracts}. Indeed, we can see that for invertible divisible dynamics, given $\varepsilon$ small enough the intermediate map $\Phi_{t+\varepsilon,t}$ satisfies the first two conditions of Thm.~\ref{cf:theo:CPiffContracts}, and in the limit $\varepsilon\rightarrow0$,  $\Phi^{-1}_{t+\varepsilon,t}\norbraB{ \mathcal{S}_d\cap\Phi_{t+\varepsilon,t}( \mathcal{S}_d)} =\mathcal{S}_d^{\mathrm{o}}$ (as $\Phi^{-1}_{t+\varepsilon,t}$ is $\varepsilon$-close to the identity superoperator). Moreover, it is easy to see that in the limit $\varepsilon\rightarrow0$,  Eq.~\eqref{eq:contractionContrast} and Eq.~\eqref{eq:monotonicContraction} are equivalent. Then, it directly follows from Thm.~\ref{cf:theo:CPiffContracts} that  the assumptions of Thm.~\ref{cf:theo:markov_iff_contract} force $\Phi_t$ to be P-divisible. Interestingly, for classical evolutions this directly implies Markovianity, as in this context there is no difference between P-divisibility and CP-divisibility.
	
	\begin{figure}
		\centering
		\includegraphics[width=.9\linewidth]{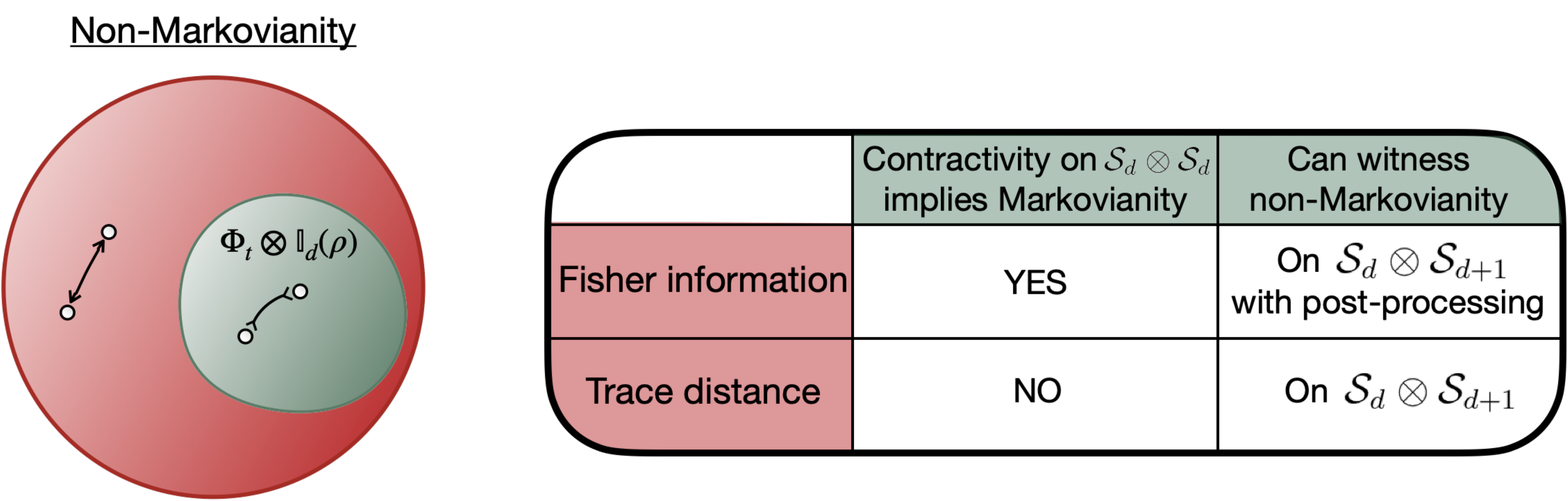}
		\caption{Illustrative summary of the results presented in Sec.~\ref{markovianity}. On the left we give a pictorial representation of Thm.~\ref{cf:theo:markov_iff_contract}: in red we depict the whole state space $  \mathcal{S}_d\otimes  \mathcal{S}_d$, while in green the image of the evolution, i.e., $\Phi_t\otimes \idO_d ( \mathcal{S}_d\otimes \mathcal{S}_d)$. Thm.~\ref{cf:theo:markov_iff_contract} tells us that a map is non-Markovian if and only if there exists at least two points in $\mathcal{S}_d\otimes \mathcal{S}_d$ (not necessarily in the image of the map) for which the Fisher distance increases. On the right, we compare the Fisher information with the most canonical quantity to witness non-Markovianity, namely the trace distance. While the monotone contractivity of the Fisher information implies the Markovianity of the dynamics, the same does not hold for the trace distance (see Thm.~\ref{cf:theo:markov_iff_contract} and \hyperref[box:fisherAndTrace]{Box 7} on the trace distance). On the other hand, supplying ancillas to the system allows for the detection of non-Markovianity through the latter quantity, while for the Fisher information one additionally needs some post-processing of the state (Thm.~\ref{cf:theo:no-go} and~\ref{cf:theo:witness}).}
		\label{cf:fig:fig1b}
	\end{figure}
	
	It should be noticed that the equivalence one can find for the contractivity of Fisher information is quite peculiar to this quantity. Indeed, one can contrast the result just obtained with the one for the trace distance, the most canonical quantity studied in the context of non-Markovianity (we provide a small review on this topic in~\hyperref[box:fisherAndTrace]{Box 7} at the end of the section): in this case one can explicitly construct classical non-Markovian dynamics for which the trace distance between any two points contracts. For quantum evolutions, on the other hand, the use of a $d$-dimensional ancilla is necessary to separate positive preserving maps from completely positive maps, as complete positivity of $\Phi_t$ is equivalent to the fact that $\Phi_t\otimes\idO_d$ is P. Despite the addition of a $d$-dimensional ancilla, examining the trace distance is not enough to obtain a result along the lines of Thm.~\ref{cf:theo:markov_iff_contract}. Indeed, it can be shown that one needs the ancillary space to be at least of dimension $d+1$ for the trace distance to expand in the presence of any non-Markovianity~\cite{bylicka2017constructive}. 
	
	The difference with the trace distance is actually even sharper. Indeed, in order for the trace distance to operationally witness the non-Markovianity of the evolution it is sufficient to use ancillas of dimension high enough (thanks to the construction in~\cite{bylicka2017constructive}, $d+1$ is enough). This means that an increase of trace distance can always be obtained on the image of $\Phi_t$ when a sufficient number of extra degrees of freedom are provided. Thm.~\ref{cf:theo:markov_iff_contract} on the other hand, ensures that an expansion in the Fisher information metrics always happens in the presence of non-Markovianity, but it does not say anything about whether the states needed to actually verify it can be physically prepared. If the violation of the monotonicity happens close to the boundary of the state space, for example, and the initial part of the evolution is particularly contracting, there is no way to actually detect the non-Markovian behaviour by looking at the Fisher information alone. Still, one could hope that by using a sufficient number of ancillas the Fisher information could provide witnesses for non-Markovianity. Somehow surprisingly, one can prove that this cannot be the case:
	\begin{theorem}
		\label{cf:theo:no-go}
		Given a divisible evolution $\Phi_t$, no ancillary degree of freedom of finite dimension or finite number of copies of the dynamics are sufficient to witness all possible non-Markovian evolutions via revivals of the Fisher distance between two initially prepared states.
	\end{theorem}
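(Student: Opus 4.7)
The theorem is a no-go result, and my plan is to establish it by explicit counterexample. I would fix an arbitrary ancilla dimension $n$ and number of dynamical copies $k$, and aim to exhibit a divisible, non-CP-divisible (hence non-Markovian) evolution $\Phi_t$ for which, for every pair of initial states $\rho_0, \sigma_0 \in \mathcal{S}_{d^k n}$, the Fisher distance
\[
t \;\mapsto\; \mathrm{D}_f\!\left((\Phi_t^{\otimes k}\otimes\idO_n)(\rho_0),\,(\Phi_t^{\otimes k}\otimes\idO_n)(\sigma_0)\right)
\]
is monotonically non-increasing in $t$. Since $n$ and $k$ are arbitrary, this shows that no finite ancilla and no finite number of copies can witness the full family of non-Markovian evolutions through Fisher-distance revivals of initially prepared states.

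The key idea is that Thm.~\ref{cf:theo:markov_iff_contract} only guarantees the existence of some state in $\mathcal{S}_d\otimes\mathcal{S}_d$ where the Fisher metric expands under the intermediate maps of a non-CP-divisible evolution, but this witnessing state need not lie in the image of $\Phi_t^{\otimes k}\otimes\idO_n$ acting on $\mathcal{S}_{d^k n}$. The construction should therefore engineer $\Phi_t$ so that the Fisher-expanding directions lie outside the reachable image. Concretely, I would take $\Phi_t = \Psi_t \circ \Phi^{\mathrm{M}}_t$, where $\Phi^{\mathrm{M}}_t$ is a rapid Markovian contraction that brings any input state to an $\epsilon$-neighbourhood of a full-rank reference state $\omega\in\mathcal{S}_d^{\mathrm{o}}$ over a short initial interval $[0,\tau]$, while $\Psi_t$ is an invertible family of maps combining a unitary rotation with a small-amplitude non-CP-divisible perturbation acting over $[\tau,T]$. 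Non-Markovianity of $\Phi_t$ is then certified by computing its time-dependent GKLS generator via Eq.~\eqref{cf:eq:lindDiagonal} and checking that at least one rate $\lambda_\alpha(t)$ becomes negative on $[\tau,T]$.

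The analytical core is to bound the Fisher distance on the image of $\Phi_t^{\otimes k}\otimes\idO_n$ after the contraction. For $t>\tau$ the image is contained in an $\epsilon$-neighbourhood of $\omega^{\otimes k}\otimes\mathcal{S}_n$, which lies in the interior of $\mathcal{S}_{d^k n}$ when $\omega$ is full-rank, so the Fisher metric is bounded on a neighbourhood thereof by a constant $C(n,k)$. By a continuity argument the Fisher distance between two evolutes decomposes into (i) a time-invariant contribution coming from the ancilla marginals (preserved exactly since the dynamics does not touch the ancilla) and (ii) an $\bigo{\epsilon}$ correction from residual system-ancilla correlations. Tuning $\epsilon$ and the non-CP amplitude $\delta$ as functions of $(n,k)$ makes the revival of the Fisher distance during $[\tau,T]$ strictly smaller than its decrease during $[0,\tau]$, yielding overall monotonicity.

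The hard part will be engineering the non-CP perturbation so that its Fisher-expanding directions lie outside the image of $\Phi_t^{\otimes k}\otimes\idO_n$ on all of $\mathcal{S}_{d^k n}$ simultaneously. A naive depolarising counterexample fails precisely because the Fisher expansion then takes place on the image itself, producing a local revival that is detected from initial states. My intended resolution is to place the Choi-matrix negative eigenvector of the intermediate map $\Psi_{t+\delta,t}$ in directions that the contracting phase suppresses---for instance highly entangled states whose structure cannot be reached by a channel whose image is concentrated near $\omega^{\otimes k}\otimes\mathcal{S}_n$. Proving that such perturbations exist for every finite $(n,k)$ requires a dimension-counting argument, exploiting the fact that the space of possible CP violations of $\Psi_{t+\delta,t}\otimes\idO_m$ grows unboundedly with $m$, so that one extra ``hidden'' ancilla copy beyond the allowed budget always provides an inaccessible direction for the witness.
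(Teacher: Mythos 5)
The paper does not prove this theorem but defers to~\cite{abiuso2022characterizing}; the only guidance it offers is the heuristic that, unlike the translation-invariant trace distance, the Fisher metric depends strongly on the base-point, so an expanding direction located near the boundary of the state space need not be reachable from the image of $\Phi_t^{\otimes k}\otimes\idO_n$. Your high-level strategy --- fix $(n,k)$, build a non-CP-divisible $\Phi_t$ whose contracting prefix confines the image to a small full-rank neighbourhood, and arrange the non-Markovian perturbation so that its Fisher-expanding states lie outside that image --- is in exactly this spirit, and you get the quantifier structure right (for each finite budget, exhibit one undetected evolution).

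Two steps, however, would fail as written. First, your closing estimate aims to make ``the revival of the Fisher distance during $[\tau,T]$ strictly smaller than its decrease during $[0,\tau]$, yielding overall monotonicity.'' This conflates net decrease with monotonicity: a revival is any interval on which the time derivative of ${\rm D}_f$ is strictly positive, however small its amplitude, and a tiny increase is witnessed just as well as a large one. What must be controlled is the sign of the derivative at every time and for every admissible initial pair, and your decomposition into an ``invariant ancilla contribution plus an $\bigo{\epsilon}$ correction'' does not do this --- the $\bigo{\epsilon}$ term can perfectly well have positive derivative somewhere. The correct quantitative target (cf.\ Thm.~\ref{cf:thm:FisherFlow}) is to show that, on every state in the image, the negative-rate contribution $\lambda_\alpha(t)\,\mathcal{I}^f_\alpha(t)$ is dominated by the positive-rate ones, which in turn forces the expanding base-points to have an eigenvalue below a threshold that the image never attains. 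Second, your mechanism for why extra copies and ancillas never help --- that the ``space of possible CP violations of $\Psi_{t+\delta,t}\otimes\idO_m$ grows unboundedly with $m$'' --- is not correct: complete positivity is fully decided by the Choi matrix, i.e.\ by an ancilla of dimension $d$, so no new negative directions appear for larger $m$. The genuine obstruction lies not in the Choi negativity but in the base-point dependence of $\J_f^{-1}\big|_\pi$: one must show that the minimal eigenvalue of states in the image of $\Phi_t^{\otimes k}\otimes\idO_n$ stays above the expansion threshold for the given budget, keeping in mind that $k$ copies of a state with minimal eigenvalue $\lambda$ have minimal eigenvalue $\lambda^k$, so the construction must be tuned to $(n,k)$ rather than obtained from a dimension count.
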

	The difference in behaviour of the trace distance and the Fisher information metrics arises from the linearity of the map $\Phi_t$ and the translational invariance of the trace distance~\cite{abiuso2022characterizing,bylicka2017constructive}: thanks to this property, if a witness exists anywhere on the state space, then it can always be translated into the image of $\Phi_t$. The Fisher information, on the other hand, has a strong dependence on the base-point, so that the same kind of argument cannot be applied.  We refer to~\cite{abiuso2022characterizing} for the proof of Thm.~\ref{cf:theo:no-go}.
	
	Despite the negative result of Thm.~\ref{cf:theo:no-go}, one can still define a witness based on Fisher information by using post-processing:
	\begin{theorem}
		\label{cf:theo:witness}
		Given an evolution $\Phi_t$, for any state $\rho$ and perturbation $\delta\rho$ defined on the system space and sufficiently many ancillary degrees of freedom, 
		it is possible to implement a class of CP-maps $F^{(t)}_{\delta\rho}$ depending on $\Phi_t$ and $\delta\rho$ that can be used to witness non-Markovianity at time~$t$ through the expansion of Fisher information. Formally, if the intermediate evolution $\Phi_{t+\de t,t}$ is Positive (CP in the quantum case), then
		\begin{align}
			\frac{\de}{\de s}\,\Tr{F^{(t)}_{\delta\rho}\circ\Phi_{s}(\delta\rho)\,\J_f^{-1}\big|_{F^{(t)}_{\delta\rho}\circ\Phi_{s}(\rho)}[F^{(t)}_{\delta\rho}\circ\Phi_{s}(\delta\rho)]}\bigg|_{s=t}\leq 0\quad\forall \,\rho\,, \delta\rho\,,
		\end{align}
		whereas in the presence of non-Markovianity there always exists at least one $\delta\rho$ for which the left-hand side is strictly positive.
		
		The minimal dimension of the ancilla for classical systems is $d_A=2$, while for quantum maps one needs $d_A = d+1$.
	\end{theorem}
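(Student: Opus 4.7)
The proof proposal decomposes along the two implications in the statement. For the easy direction (contraction when $\Phi_{t+\de t, t}$ is positive/CP), the observation is that $F^{(t)}_{\delta\rho}$ is a CP post-processing by construction, so $F^{(t)}_{\delta\rho}\circ\Phi_{t+\de t,t}$ is itself CP whenever $\Phi_{t+\de t,t}$ is. Thm.~\ref{cf:thm:Petz} then delivers contractivity of the Fisher information along the composition, and differentiating at $s=t$ yields the stated non-positive expression. The classical case, in which only P-divisibility is assumed, follows because P restricted to commuting states coincides with CP, so Thm.~\ref{cf:thm:Petz} still applies uniformly.

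For the converse, the idea is to transport the abstract state/perturbation witness supplied by Thm.~\ref{cf:theo:CPiffContracts} into the image of the actual dynamics, where it becomes operationally accessible. The failure of (complete) positivity of $\Phi_{t+\de t,t}\otimes\idO_d$ guarantees, via the counterexample constructed in the proof of Thm.~\ref{cf:theo:CPiffContracts}, an interior witness pair $(\omega,\xi)\in\mathcal{S}_d^{\otimes 2}$ on which the Fisher information strictly expands under $\Phi_{t+\de t,t}\otimes\idO_d$. With a sufficient ancilla, I would build $F^{(t)}_{\delta\rho}$ as a CP measure-and-prepare channel whose action on $(\Phi_t(\rho),\Phi_t(\delta\rho))$ effectively produces $(\omega,\xi)$: one component of the ancilla reads off the amplitude in the direction of $\Phi_t(\delta\rho)$, while the remaining dimensions inject the fixed witness base-point $\omega$. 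The freedom to let $F^{(t)}_{\delta\rho}$ depend on both $\Phi_t$ and $\delta\rho$ is exactly what makes such a pre-computed steering well defined, since the construction requires knowledge of $\Phi_t(\rho)$ and $\Phi_t(\delta\rho)$ to align the post-processing with the witness pair.

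The principal obstacle is establishing the minimality of the ancilla dimensions, namely $d_A=d+1$ in the quantum case and $d_A=2$ classically. Sufficiency should follow from a Choi-representation count: one needs enough ancillary room to host the extended witness perturbation $\xi$ plus one additional slot to mix in the prepared witness base-point $\omega$ through a valid CP channel. Necessity is more delicate, and I expect it to be the technically heaviest step: it requires exhibiting, for each smaller ancilla size, a non-Markovian dynamics that no CP post-processing on the smaller extended space can turn into a Fisher-expanding one. This mirrors the obstruction underlying Thm.~\ref{cf:theo:no-go}, where the lack of sufficient ancillary room prevents one from probing all the directions in which the Choi operator of $\Phi_{t+\de t,t}$ fails to be positive, and should proceed through an explicit family of counterexamples parametrized by the ancilla size.
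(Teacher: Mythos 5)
First, a caveat: the paper does not actually prove Thm.~\ref{cf:theo:witness}; it states the result and defers both the proof and the explicit construction of $F^{(t)}_{\delta\rho}$ to~\cite{abiuso2022characterizing}. Judged on its own merits, your proposal has a genuine gap in what you call the easy direction. Monotonicity of the Fisher metric under the CP composition $F^{(t)}_{\delta\rho}\circ\Phi_{t+\de t,t}$ yields
\begin{align}
K_{f,\,F^{(t)}_{\delta\rho}\Phi_{t+\de t}(\rho)}\bigl(F^{(t)}_{\delta\rho}\Phi_{t+\de t}(\delta\rho),\,F^{(t)}_{\delta\rho}\Phi_{t+\de t}(\delta\rho)\bigr)\;\leq\; K_{f,\,\Phi_{t}(\rho)}\bigl(\Phi_{t}(\delta\rho),\,\Phi_{t}(\delta\rho)\bigr)\,,
\end{align}
i.e., it bounds the filtered Fisher information at time $t+\de t$ by the \emph{unfiltered} one at time $t$. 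The derivative in the theorem, however, compares the filtered quantity at $t+\de t$ with the \emph{filtered} quantity at $t$, and the latter is generically strictly smaller than the unfiltered one (again by monotonicity, since $F^{(t)}_{\delta\rho}$ is a nontrivial channel). Your inequality is therefore on the wrong side: a generic CP post-processing can make the filtered Fisher information increase across a perfectly CP intermediate step, producing false positives. Establishing the $\leq 0$ branch requires the specific structure of $F^{(t)}_{\delta\rho}$ --- e.g., an intertwining property relating $F^{(t)}_{\delta\rho}\circ\Phi_{t+\de t,t}$ to a CP map acting \emph{after} the filter, or a direct computation of the filtered Fisher currents in the spirit of Thm.~\ref{cf:thm:FisherFlow} --- and this is precisely the nontrivial content of the construction in~\cite{abiuso2022characterizing}.

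Your sketch of the converse (read off the component along $\Phi_t(\delta\rho)$, inject the interior witness pair supplied by the proof of Thm.~\ref{cf:theo:CPiffContracts}) is the right idea and matches the spirit of the cited construction, but as written it does not confront the tension with the point above: the more aggressively the filter steers $(\Phi_t(\rho),\Phi_t(\delta\rho))$ onto the witness pair, the less reason there is for the filtered Fisher information to stay monotone under CP intermediate maps. Designing a single CP filter that satisfies both requirements simultaneously, with ancillas of dimension $2$ (classical) and $d+1$ (quantum), is the heart of the proof rather than an afterthought. Your instinct that the dimension claims are delicate is fair as regards necessity, but the substantive missing ingredient is the explicit filter together with the verification of the Markovian ($\leq 0$) branch.
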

	
	There is a shortcoming to this construction, though: in the definition of the post-processing $F^{(t)}_{\delta\rho}$ one needs to assume complete knowledge about the dynamics $\Phi_t$ until the onset of Markovianity. In this way, one either needs to try all the possible $\delta\rho$, or has to know in advance the structure of the dynamics in order to provide an explicit construction. Still, this example serves more as a proof of principle showing the possibility of designing post-processing filters to exploit the Fisher information for the detection of non-Markovianity. Again, the proof of this fact is contained in~\cite{abiuso2022characterizing}, where the explicit expression of $F^{(t)}_{\delta\rho}$ was also presented.

	Theorems~\ref{cf:theo:no-go} and~\ref{cf:theo:witness}
	complete the characterisation of the relation between Markovianity and contractivity of the Fisher information, both on the image of $\Phi_t$ and on the rest of the state space, as summarised in Fig.~\ref{cf:fig:fig1b}. We point out once again the importance of Thm.~\ref{cf:theo:markov_iff_contract}: both Chentsov theorem and its quantum generalisation by Petz identify the defining property of the Fisher information metric to be its contractivity under stochastic maps or quantum channels. Thm.~\ref{cf:theo:markov_iff_contract}, on the other hand, could be read off as saying that the defining property of Markovianity is that it contracts the Fisher information monotonically at all times. This second implication shows how natural the concept of Fisher metric is in the context of open system dynamics.
	
	\begin{additional-info}{Box 6. Markovianity for classical evolutions~\hyperref[box:fisherAndTrace]{$\rightarrow$}}\label{box:markovianEvolutionsClassical}
		Classical dissipative evolutions are described by stochastic maps, i.e., matrices $\Phi$ satisfying the two conditions:
		\begin{align}
			&\sum_{i}\;(\Phi)_{i,j} = 1\label{appM:eq:T_require}\,;\\
			&(\Phi)_{i,j} \geq 0\qquad \forall\;\; \{i,j\}\,,\label{appM:eq:T_requireP}
		\end{align}
		where the first condition ensures the conservation of total probability, while the second is needed to make sure that states are mapped into states.
		In complete analogy to the quantum case, a family of stochastic maps $\Phi_t$ depending smoothly on $t$ is called divisible if for any two times $t$ and $s$ ($t\geq s$) one can define an intermediate map $\Phi_{t,s}$ satisfying the relation $\Phi_{t} = \Phi_{t,s}\circ\Phi_s$. A divisible stochastic dynamics is Markovian if all the intermediate maps $\Phi_{t,s}$ are stochastic.
		
		The smoothness in $t$ allows to define the rate matrix $R_t$ through the equation
		\begin{align}
			R_t:=\lim_{\varepsilon\rightarrow 0}\,\frac{\Phi_{t+\varepsilon,t} - \idO}{\varepsilon}\,.\label{eq:rateMatrixClassical}
		\end{align}
		Then, since the composition of two stochastic maps is again stochastic, Markovianity holds if and only if $R_t$ generates a stochastic evolution for any time $t$. For this reason, it is useful to characterise this kind of rate matrices. To this end, consider the matrix $\Phi_{t+\varepsilon,t}\simeq \idO + \varepsilon\,R_t$. For the global evolution to be Markovian, this matrix should satisfy the two conditions in Eq.~\eqref{appM:eq:T_require} and Eq.~\eqref{appM:eq:T_requireP}, namely:
		\begin{align}
			&\sum_i \;\norbra{\delta_{ij} +\varepsilon\, (R_t)_{i,j}} = 1+\varepsilon\, \sum_i \; (R_t)_{i,j} = 1\,,\label{cf:eq:Rate_conditions}\\
			&\delta_{ij} + \varepsilon\, (R_t)_{i,j}\geq 0 \qquad\forall\;\; \{i,j\}\,,\label{cf:eq:Rate_conditions2}
		\end{align}
		where $\delta_{ij}$ denotes the Kronecker delta. From the first condition we can deduce that $\sum_i  (R_t)_{i,j} = 0$. In particular, highlighting the diagonal terms, one obtains $R_{j,j}^{(t)} = -\sum_{i\neq j}  R_{i,j}^{(t)} $. Matrices satisfying this constraint can be decomposed as:
		\begin{align}
			R_t=\sum_{i\neq j} \; a^{(t)}_{i\leftarrow j}\left( \ketbra{i}{j}-\ketbra{j}{j} \right)\label{appM:rateMatrixTracePreserving}
		\end{align}
		for $a^{(t)}_{i\leftarrow j}$ some real coefficients. We assume that this condition is always satisfied, also for non-Markovian evolutions, as it corresponds to the requirement that the dynamics preserves the normalisation. In fact, since non-Markovian evolutions are trace preserving on their domain, one can argue by linearity that this condition can be extended to the whole space of states.
		
		Moreover, the condition in Eq.~\eqref{cf:eq:Rate_conditions2} implies that $ (R_t)_{i,j}\geq 0$ whenever $i\neq j$. In the parametrisation above this means that the rates satisfy $a^{(t)}_{i\leftarrow j}\geq 0$. Hence, Markovianity corresponds to the request of having positive rates $a^{(t)}_{i\leftarrow j}$ for all times. 
	\end{additional-info}
	
	\begin{additional-info}{Box 7. Use of the trace distance in non-Markovianity~\hyperref[box:markovianEvolutionsClassical]{$\leftarrow$},\hyperref[retrodiction]{$\rightarrow$}}\label{box:fisherAndTrace}
		The study of non-Markovianity is mainly carried out in terms of distinguishability distances. Indeed, since Markovian dynamics leads to a monotonic decrease in these distances, any increase thereof signals the appearance of non-Markovianity. In this context, the most used quantity is given by the trace distance:
		\begin{align}
			D_{\rm Tr}(\rho,\sigma)=\Tr{|\rho-\sigma|}\,,
		\end{align}
		which can be connected to the maximal probability $p_d$ of distinguishing $\rho$ from $\sigma$ in a single shot measurement, thanks to the relation $p_d (\rho,\sigma)= (1+D_{\rm Tr}(\rho,\sigma))/2$~\cite{nielsen2002quantum}. Moreover, since it is translational invariant, it is particularly appealing for analytical calculations. In particular, suppose $\rho$ and $\sigma$ are two probability vectors, and define $\delta\rho := \sigma-\rho$. Then, the trace distance is given by:
		\begin{align}
			D_{\rm Tr}(\rho,\sigma) =D_{\rm Tr}(\rho,\rho + \delta\rho) = \Tr{|\delta\rho|}\,.
		\end{align}
		For classical systems, if the evolution is described by the rate matrix $R_t$, it is straightforward to see that:
		\begin{align}
			\frac{\de}{\dt}\, D_{\rm Tr}(\rho,\sigma) &= \frac{\de}{\dt}\,\Tr{|\delta\rho|}=\sum_i   \;\frac{\de}{\dt} |{\delta\rho_i}| = \sum_i  \;{\rm sign}(\delta\rho_i) \delta\dot{\rho_i}=\\
			&=\sum_i  {\rm sign}(\delta\rho_i)  \sum_{j} \,(R_t)_{i,j}\, d_j =\sum_{i\neq j} \;{\rm sign}(\delta\rho_i)\left(a_{i\leftarrow j}^{(t)}\delta\rho_j - a_{j\leftarrow i}^{(t)}\delta\rho_i\right)=\\
			&=\sum_{i\neq j} \;a_{j\leftarrow i}^{(t)}\,\delta\rho_i \left({\rm sign}(\delta\rho_j)- {\rm sign}(\delta\rho_i)\right)\,.\label{trace_deriv}
		\end{align}
		where we used the parametrisation of the rate matrix in Eq.~\eqref{appM:rateMatrixTracePreserving}, and in the last line we swapped the indexes of the first term in order to put the coefficient $a_{j\leftarrow i}^{(t)}$ in evidence. It should be noticed that if all the rates are positive, then the sum will be negative: in fact, either ${\rm sign}(\delta\rho_j)= {\rm sign}(\delta\rho_i)$, in which case the term inside the parenthesis is zero, or ${\rm sign}(\delta\rho_j)= -{\rm sign}(\delta\rho_i)$, so that $\delta\rho_i({\rm sign}(\delta\rho_j)- {\rm sign}(\delta\rho_i))=-2\,\delta\rho_i\,{\rm sign}(\delta\rho_i)=-2|\delta\rho_i|$. 
		This calculation shows explicitly how the trace distance decreases under Markovian maps. 
		
		Interestingly, it is a well known fact that an evolution is Markovian if and only if the trace norm of any vector $\boldsymbol{v}$ decreases monotonically~\cite{rivas2014quantum}. For this reason, it would be natural to expect the same to hold for the trace distance as well. It is easy to see, though, that this is false: one simple counterexample can be given in dimension $d=2$, with $a_{1\leftarrow 2}<0$ and $a_{2\leftarrow 1}>0$, and the additional condition that $|a_{2\leftarrow 1}|>|a_{1\leftarrow 2}|$. With this choice of rates in Eq.~\eqref{trace_deriv}, since in dimension $2$ tracelessness of $\delta\rho$ implies that $\delta\rho_1=-\delta\rho_2$, it is easy to check that the derivative of the trace distance stays negative. Still, there is no contradiction between the two results: indeed, the trace distance can only access vectors of the form $\delta \rho= \sigma-\rho$, i.e., that are traceless. This reduces the dimension of the vectors tested by one. In fact, by choosing the traceful vector $v_i=\delta_{i2}$, one is able to witness non-Markovianity in the counterexample just presented.
		
		Suppose now to have access to extra ancillary degrees of freedom on which the dynamics acts trivially, i.e., the global evolution is given by $\Phi_t\otimes\idO_{d_A}$, where $d_A$ is the dimension of the ancilla. Then, one can always find two states $\rho$ and $\sigma$ on the composite space such that tracing out the system gives $\TrR{S}{\rho-\sigma} \neq 0$, while $\TrR{A}{\rho-\sigma} = 0$, so that the total trace is zero. In this way, ancillary degrees of freedom give access to traceful vectors, and so to the possibility of witnessing non-Markovianity. A similar argument was also presented in~\cite{bylicka2017constructive} for the case of quantum dynamics. These results lead to the following:
		\begin{theorem}
			\label{lem:ancilla}
			Given a divisible dynamics $\Phi_t$, there always exists an ancilla of finite dimension $d_A$ on which the dynamics acts trivially (i.e., the total evolution is given by $\Phi_t\otimes\idO_{d_A}$) such that one can witness any non-Markovianity in $\Phi_t$ via revivals in the trace distance between initially prepared states.
		\end{theorem}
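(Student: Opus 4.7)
The plan is to connect non-Markovianity of $\Phi_t$ to the expansion of the trace norm of some (possibly traceful) Hermitian operator, and then to lift any such trace-norm witness to a genuine trace-distance witness between two legitimate initial states by appending a qubit ancilla on which the dynamics acts trivially.

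First I would establish the key characterisation: for a non-Markovian divisible dynamics there exist a time $t$ and a Hermitian operator $X$ on $\mathcal{S}_d\otimes\mathcal{S}_{d'}$ (with $d'=1$ in the classical case and $d'=d$ in the quantum case) such that $\|\Phi_t\otimes\idO_{d'}(X)\|_1$ is strictly non-monotonic near $t$. In the classical setting this follows from the rate-matrix parametrisation in Eq.~\eqref{appM:rateMatrixTracePreserving}: a negative rate $a^{(t)}_{i\leftarrow j}<0$ implies $(R_t)_{i,j}<0$, and a direct computation of the 1-norm of $\ketbra{j}{j}+\varepsilon R_t\ketbra{j}{j}$ - in the spirit of Box~\ref{box:fisherAndTrace} but without restriction to traceless vectors - shows the expansion $\|\ketbra{j}{j}+\varepsilon R_t\ketbra{j}{j}\|_1=1+2\varepsilon\sum_{i:a^{(t)}_{i\leftarrow j}<0}|a^{(t)}_{i\leftarrow j}|>1$. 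In the quantum case, the Choi-Jamio{\l}kowski isomorphism converts non-CP of the intermediate map $\Phi_{t+\varepsilon,t}$ into a Hermitian operator $X$ on $\mathcal{S}_d\otimes\mathcal{S}_d$ whose image under $\Phi_{t+\varepsilon,t}\otimes\idO_d$ has strictly larger trace norm, the excess being twice the sum of the negative eigenvalues.

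The second step remedies the fact that $X$ is in general traceful, while the trace distance only probes traceless directions of the form $\rho-\sigma$. For this I would tensor on one extra qubit ancilla and, using the Jordan decomposition $X=X_+-X_-$, set
\begin{equation*}
\rho\,:=\,\frac{X_+\otimes\ketbra{0}{0}+X_-\otimes\ketbra{1}{1}}{\Tr{X_+}+\Tr{X_-}},\qquad\sigma\,:=\,\frac{X_-\otimes\ketbra{0}{0}+X_+\otimes\ketbra{1}{1}}{\Tr{X_+}+\Tr{X_-}}.
\end{equation*}
These are manifestly positive, trace-one states on the extended space, with $\rho-\sigma=X\otimes(\ketbra{0}{0}-\ketbra{1}{1})/\|X\|_1$. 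Since the dynamics factorises through the added qubit, one has $(\Phi_t\otimes\idO)(\rho-\sigma)=\Phi_t(X)\otimes(\ketbra{0}{0}-\ketbra{1}{1})/\|X\|_1$, and therefore $D_{\rm Tr}(\Phi_t(\rho),\Phi_t(\sigma))=2\|\Phi_t(X)\|_1/\|X\|_1$. The non-monotonicity of $\|\Phi_t(X)\|_1$ obtained in Step~1 transfers directly to a revival of $D_{\rm Tr}(\Phi_t(\rho),\Phi_t(\sigma))$. Adding this qubit to the ancilla of Step~1 yields $d_A=2$ in the classical case and $d_A=2d$ in the quantum case (the sharper bound $d_A=d+1$ following from the refined embedding of~\cite{bylicka2017constructive}).

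I expect the main technical obstacle to be in the quantum part of Step~1: one needs to show that the Choi-type witness of non-CP of $\Phi_{t+\varepsilon,t}$, a priori produced on the finite interval $[t,t+\varepsilon]$, can be chosen so that the instantaneous derivative $\frac{\rmd}{\rmd t'}\|\Phi_{t'}\otimes\idO_d(X)\|_1|_{t'=t}$ is strictly positive. This calls for a limiting argument as $\varepsilon\to 0$ exploiting continuity of the Choi matrix and of its spectrum, ensuring that the violating direction survives in the limit and is not an artefact of the finite time interval. The ancilla embedding of Step~2 is purely algebraic and poses no further difficulty once the trace-norm witness is in hand.
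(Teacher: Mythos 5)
Your overall strategy coincides with the paper's: identify non-Markovianity with the expansion of the trace norm of some \emph{traceful} operator, and then use an ancilla to convert that traceful witness into a legitimate difference of two initially prepared states. Your Step~2 --- the Jordan decomposition $X=X_+-X_-$ together with a qubit flag, giving $\rho-\sigma\propto X\otimes(\ketbra{0}{0}-\ketbra{1}{1})$ and $D_{\rm Tr}(\Phi_t(\rho),\Phi_t(\sigma))=2\lVert\Phi_t(X)\rVert_1/\lVert X\rVert_1$ --- is correct and is in fact a more explicit version of the construction the paper only sketches (the paper merely asserts that ancillas give access to traceful directions and defers the details to~\cite{bylicka2017constructive}). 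Your dimension counts ($d_A=2$ classically, $2d$ quantumly, versus the optimal $d+1$) are compatible with the statement, which only requires finiteness.

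The gap sits at the interface between your two steps. The witnesses you actually exhibit in Step~1 --- the basis vector $\ketbra{j}{j}$ classically, the maximally entangled state quantumly --- are operators that expand under the \emph{intermediate} map $\Phi_{t+\varepsilon,t}$ when inserted \emph{at time $t$}. Your Step~2, however, prepares states at time $0$ whose difference is $X$ itself, so the quantity that must revive is $t'\mapsto\lVert(\Phi_{t'}\otimes\idO)(X)\rVert_1$; for your concrete choices of $X$ this is identically constant, because the \emph{global} maps of a divisible family are stochastic (resp.\ CPTP) by assumption, so $\lVert\Phi_{t'}(\ketbra{j}{j})\rVert_1\equiv 1$ and $\lVert(\Phi_{t'}\otimes\idO_d)(\ketbra{\Omega}{\Omega})\rVert_1\equiv 1$: no revival occurs. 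The missing step is to pull the time-$t$ witness $Y$ back to the initial time, i.e.\ to set $X:=\Phi_t^{-1}(Y)$ (which requires invertibility of $\Phi_t$, or an argument that the witnessing direction lies in the image of $\Phi_t$ --- the reachability issue the paper itself raises when contrasting trace distance with Fisher information), and then to apply your Jordan/qubit construction to this preimage. Note that $\Phi_t^{-1}(Y)$ is in general \emph{not} positive even when $Y$ is, and that is precisely where your qubit ancilla earns its keep: applied directly to the positive operators $\ketbra{j}{j}$ or $\ketbra{\Omega}{\Omega}$ the Jordan decomposition is trivial ($X_-=0$) and the construction degenerates. With this pullback inserted (and the invertibility caveat made explicit), your argument closes and matches the paper's intended proof.
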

		
		For quantum states the minimal dimension of the ancilla is $d_A = d+1$~\cite{bylicka2017constructive}. It should be noticed, though, that in order to actually speak about complete positivity one always needs an ancilla of dimension at least $d$ (as complete positivity coincides with the positivity of the map $\Phi\otimes\idO_d$). In this way, the trace distance needs one extra dimension more than the minimum in order to witness non-Markovianity. A similar result actually holds also for classical systems. In this case, one only needs to enlarge the state space by one extra dimension. Then, for any vector $\boldsymbol{v}$ on the original space, one can always construct a traceless state on this extended space as:
		\begin{align}
			\delta\rho = \begin{cases}
				\delta\rho_i = v_i & {\rm if }\;i\in\curbra{1,\dots,d}\,;\\
				\delta\rho_{i}=-\sum_{j=1}^{d}\delta\rho_j & {\rm if }\;i\equiv d+1\,.
			\end{cases}
		\end{align}
		This state satisfies $\Tr{|\delta\dot{\rho}|} = \Tr{|\dot{\boldsymbol{v}}|}$, which proves the claim. Still, in this case it should be noticed that this construction cannot be carried out by adjoining an ancilla, as in that case the minimum dimension of the state space is $d_A=2$. Nonetheless, this example is useful as it shows that two dimensional ancillas are enough, and that in principle it would be sufficient to embed the dynamics $\Phi_t$ in a space only one dimension larger than the original space in order to witness classical non-Markovianity.
	\end{additional-info}
	
	\subsection{Retrodiction and Fisher information}\label{retrodiction}
	
	In the literature about non-Markovianity, one of the most used expression is \emph{backflow of information}. Still, by looking at the usual quantifiers defined to assess it, one might be surprised to discover that for the most part they are actually distinguishability measures accounting for the statistical difference between states \emph{at time $t$} - such as the trace distance and Fisher information, on which we focused in the previous section.
	Whereas one could in principle justify the interpretation of non-monotonicity for these quantities as actual backflow of information, here we take a different approach: we prove, in fact, that the contractivity of the Fisher information is in one-to-one correspondence with the monotonic degradation in the ability of an agent to retrodict the \emph{initial state} of the system (Thm.~\ref{cf:theo:recovery_fisher}). 
	
	Before doing so, we need some formalism. First of all, we present a way to simulate the Fisher scalar product $\J_f^{-1}\big|_{\Phi_{t}(\pi)}$ at time $t$ by the scalar product at time $\J_{f'}^{-1}\big|_{\pi}$ at time $t=0$. This is obtained by the following rearranging:
	\begin{align}
		\nonumber
		K_{f,\Phi_t(\pi)}(\Phi_t(A),\Phi_t(B))&= \Tr{\Phi_t(A)\, \J_f^{-1}\big|_{\Phi_{t}(\pi)} [\Phi_t(B) ]} =\Tr{A\, \J_{f'}^{-1}\big|_{\pi}\sqrbra{\norbra{ \J_{f'}\big|_{\pi}\,\Phi_t^\dagger \;\J_f^{-1}\big|_{\Phi_{t}(\pi)}}\Phi_t(B) }} =\\ \label{cf:eq:generalisedPetzDef}
		&=\Tr{A\, \J_{f'}^{-1}\big|_{\pi}[\widetilde{\Phi}_t\big|^{(f',f)}_\pi\,\Phi_t(B)] } = K_{f',\pi}(A,\widetilde{\Phi}_t\big|^{(f',f)}_\pi\Phi_t(B))\,,
	\end{align}
	where in the last line we have implicitly defined the \emph{generalised Petz recovery map} $\widetilde{\Phi}_t\big|^{(f',f)}_\pi$, generalizing the famous recovery introduced by Petz~\cite{ohya2004quantum}, which is obtained for $f(x)=f'(x)=\sqrt{x}$. In this way, the evolution of any Fisher scalar products can be modelled without the need to actually evolve the state, at the cost of introducing a time dependent vector $\widetilde{\Phi}_t\big|^{(f',f)}_\pi\Phi_t(B)$.  
	
	We provide here a list of properties that $\widetilde{\Phi}_t\big|^{(f',f)}_\pi$ satisfies and we defer the proofs of these facts to  \hyperref[box:proofGeneralisedPetz]{Box 8} at the end of the section:
	\begin{enumerate}
		\item $\widetilde{\Phi}_t\big|^{(f',f)}_\pi$ reduces to the Bayes rule for classical dynamics, i.e., if $\Phi_t$ is a stochastic map, and $\pi$ a diagonal state, for any $f$ and $f'$ one has that:\label{item:reducesToBayes}
		\begin{align}
			\widetilde{\Phi}_t\big|^{(f',f)}_\pi(\boldsymbol{\cdot}) = \sum_{i,j}\; \frac{(\Phi_t)_{j,i} \, \pi_i}{(\Phi_t(\pi))_j}\, \ketbra{i}{j} \boldsymbol{\cdot}\ketbra{j}{i}\,;\label{eq:classicalBayes}
		\end{align}
		\item if $\pi$ is full-rank, $\widetilde{\Phi}_t\big|^{(f',f)}_\pi$ is trace preserving;\label{item:tracePreserving}
		\item the operator $\widetilde{\Phi}_t\big|^{(f',f)}_\pi\Phi_t$ is positive, and self-adjoint with respect to $\J_{f'}^{-1}\big|_{\pi}$, meaning that:\label{item:positiveSelfAdjoint}
		\begin{align}
			K_{f',\pi}(A,\widetilde{\Phi}_t\big|^{(f',f)}_\pi\Phi_t(B)) = K_{f',\pi}(\widetilde{\Phi}_t\big|^{(f',f)}_\pi\Phi_t(A),B)\,;\label{eq:selfAdjointRetrodiction}
		\end{align}
		\item the state $\pi$, called the \emph{prior state} due to the analogy with Bayes rule in Eq.~\eqref{eq:classicalBayes}, is perfectly retrieved by  $\widetilde{\Phi}_t\big|^{(f',f)}_\pi\Phi_t$, i.e.:\label{item:retrievesThePrior}
		\begin{align}
			\widetilde{\Phi}_t\big|^{(f',f)}_\pi\Phi_t(\pi) = \pi\,;\label{eq:retrievesThePrior}
		\end{align}
		\item If $f'(x)\leq f(x)$ for every $x\in \RR^+$, then the spectrum of $\widetilde{\Phi}_t\big|^{(f',f)}_\pi\Phi_t$ is completely contained in $[0,1]$. Moreover, as it can be seen from Eq.~\eqref{eq:retrievesThePrior}, 1 is always part of the spectrum;\label{item:spectrumInZeroOne}
		\item the transformation associating to a map its generalised Petz recovery, which we denote by $P_{(f',f),\pi}(\Phi_t):=\widetilde{\Phi}_t\big|^{(f',f)}_\pi$, can be reversed as:\label{item:involutivity}
		\begin{align}
			P_{(f,f'),\Phi_t(\pi)}(P_{(f',f),\pi}(\Phi_t)) = \Phi_t\,;
		\end{align}
		\item \label{item:markovianity} In the case of divisible evolutions $\Phi_t  = \Phi_{t,s}\circ\Phi_s$, one can express $P_{(f',f),\pi}(\Phi_t)$ in terms of $P_{(f',f''),\pi}(\Phi_{s})$ and $P_{(f'',f),\Phi_s(\pi)}(\Phi_{t,s})$, for any $f''$, as:
		\begin{align}
			P_{(f',f),\pi}(\Phi_{t})  = P_{(f',f''),\pi}(\Phi_{s})\circ P_{(f'',f),\Phi_s(\pi)}(\Phi_{t,s})\,.
		\end{align}
		Similarly with what happens for the adjoint, the generalised Petz recovery map composes from left to right.
	\end{enumerate}
	
	\begin{figure}
		\centering
		\includegraphics[width=.9\linewidth]{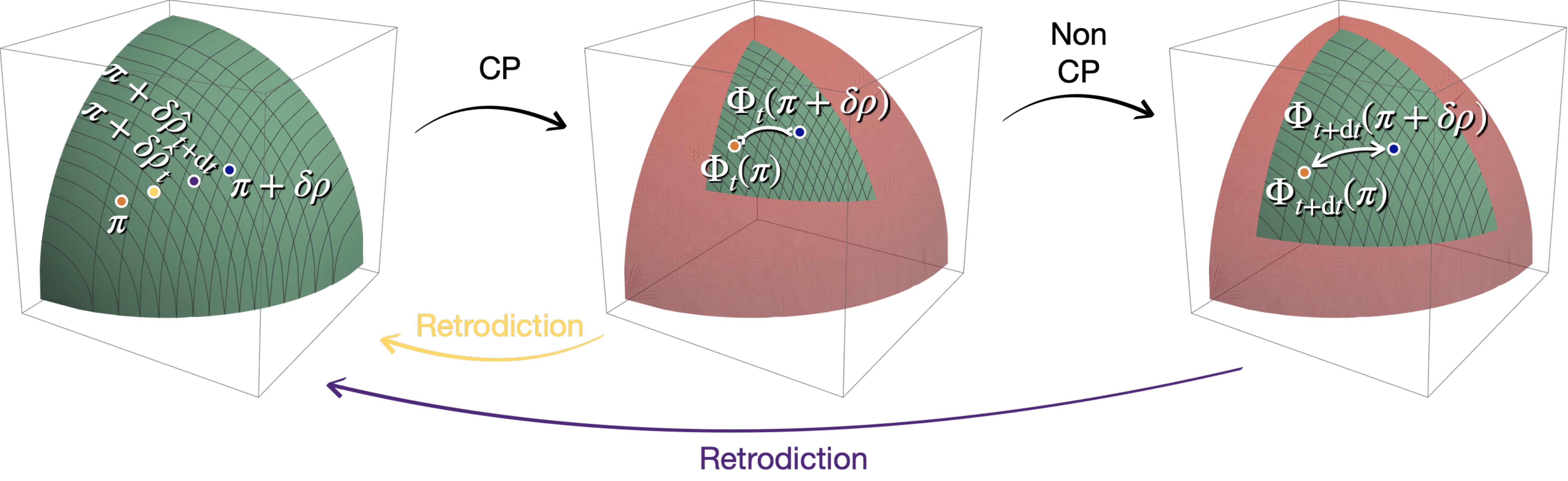}
		\caption{Depiction of the content of Thm.~\ref{cf:theo:recovery_fisher}: consider a dynamics $\Phi_t$ which is Markovian till some time $t$. This implies that the Fisher distance between any two points on the state space contracts. If between time $t$ and $t+\dt$ we observe non-Markovianity through the Fisher distance, i.e., two points in the image of $\Phi_t$ gets farther away, then retrodicting at time $t+\dt$ gives a better result than doing so at time $t$ (we use the notation $\delta\hat\rho_t := \widetilde{\Phi}_t\big|^{(f',f)}_\pi\Phi_t(\delta\rho)$).}
		\label{fig:retrodictionmarkov}
	\end{figure}
	
	Thanks to Eq.~\eqref{eq:classicalBayes}, one can consider $\widetilde{\Phi}_t\big|^{(f',f)}_\pi$ as some kind of generalisation of Bayes rule to quantum systems. This interpretation can be corroborated in the following context:
	consider a state $\pi+\delta\rho$, with $\delta\rho$ a small perturbation, and evolve it according to $\Phi_t$; at this point, $\widetilde{\Phi}_t\big|^{(f',f)}_\pi$ is applied to recover as much information about the initial state as possible. 
	Thanks to the definition in Eq.~\eqref{cf:eq:generalisedPetzDef}, there is a one-to-one correspondence between the Fisher information at time $t$, and the  Fisher scalar product at the initial state with the retrodicted vector. In formulae, this means that:
		\begin{align}
			K_{f,\Phi_t(\pi)}(\Phi_t(\pi+\delta\rho),\Phi_t(\pi+\delta\rho))=
			K_{f',\pi}(\pi+\delta\rho,\widetilde{\Phi}_t\big|^{(f',f)}_\pi\Phi_t(\pi+\delta\rho))\;. 
		\end{align}
		Using the property in Eq.~\eqref{eq:retrievesThePrior},  we can further simplify the above equality to:
		\begin{align}
			1 + K_{f,\Phi_t(\pi)}(\Phi_t(\delta\rho),\Phi_t(\delta\rho))=
			1 + K_{f',\pi}(\delta\rho,\widetilde{\Phi}_t\big|^{(f',f)}_\pi\Phi_t(\delta\rho))\;, 
		\end{align}
		which makes explicit the appearance of the Fisher information. As the above equality shows, increases/decreases of the Fisher information at time $t$ correspond to increases/decreases of the overlap between $\pi+\delta\rho$ and its retrodiction, as measured by the Fisher scalar product.
	
	At the level of distinguishability measures, the quality of the retrieval can be quantified by the contrast function:
	\begin{align}\nonumber
		&H_{g(f')}(\pi+\delta\rho||\widetilde{\Phi}_t\big|^{(f',f)}_\pi\,\Phi_t(\pi+\delta\rho)) = H_g(\pi+\delta\rho||\pi+\widetilde{\Phi}_t\big|^{(f',f)}_\pi\,\Phi_t(\delta\rho))=\\ \label{cf:eq:a1214}
		&\qquad=\frac{1}{2}\,\Tr{(\delta\rho-\widetilde{\Phi}_t\big|^{(f',f)}_\pi\,\Phi_t(\delta\rho)) \, \J_{f'}^{-1}\big|_{\pi}[(\delta\rho-\widetilde{\Phi}_t\big|^{(f',f)}_\pi\,\Phi_t(\delta\rho))]}+\ord{|\delta\rho|^3}\,,
	\end{align}
	where we introduced the notation $g(f')$ to indicate the contrast function that locally expands to $\J_{f'}^{-1}\big|_{\pi}$ (i.e., $g(f')$ and $f'$ are related by Eq.~\eqref{cf:eq:correspondenceFG}), and in the first line we used Eq.~\eqref{eq:retrievesThePrior} to apply the operators to $\delta\rho$ only. Then, the interpretation of  $\widetilde{\Phi}_t\big|^{(f',f)}_\pi$  as a recovery map might suggest that if $\Phi_t$ is non-Markovian at time $t$, then the divergence above will decrease at that time: thanks to the backflow of information, the ability of an agent to retrodict the initial state improves. Still, from a mathematical point of view, in principle there is no intuitive reason for this to be the case: in fact, the behaviour of $H_g(\rho||\sigma)$ is characterised only if the same channel is applied on both states, whereas in the equation above $\widetilde{\Phi}_t\big|^{(f',f)}_\pi\,\Phi_t$ is applied only on the right.  Moreover, as we discuss at the end of the section, $\widetilde{\Phi}_t\big|^{(f',f)}_\pi\,\Phi_t$ might not even be a channel in general, further eroding any mathematical justification to the intuition discussed. Still, the following theorem bridges the gap between the intuitive interpretation of Eq.~\eqref{cf:eq:a1214} and its actual mathematical form, showing that indeed $H_{g(f')}(\pi+\delta\rho||\widetilde{\Phi}_t\big|^{(f',f)}_\pi\,\Phi_t(\pi+\delta\rho))$ decreases for non-Markovian dynamics:
	\begin{theorem}
		\label{cf:theo:recovery_fisher}
		Given a divisible dynamics $\Phi_t$ and a prior state $\pi$, define the generalised Petz map $\widetilde{\Phi}_t\big|^{(f',f)}_\pi$ according to Eq.~\eqref{cf:eq:generalisedPetzDef}. If $f'(x)\leq f(x)$ for every $x\in \RR^+$, we have the implication:
		\begin{align}
			\exists\, \delta \rho\,,\;\; \frac{\de}{\dt}\Tr{\Phi_t(\delta\rho) \,\J_f^{-1}\big|_{\Phi_t(\pi)} [\Phi_t(\delta\rho)]}>0\;\;\implies\;\;\;\;\frac{\de}{\dt}\,H_{g(f')}(\pi+\delta\rho||\widetilde{\Phi}_t\big|^{(f',f)}_\pi\,\Phi_t(\pi+\delta\rho))<0  \,.\label{eq:90}
		\end{align}
		Since the Fisher information is monotonically contractive under Markovian dynamics, the condition on the left is only attainable in the non-Markovian regime.
	\end{theorem}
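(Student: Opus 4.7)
The plan is to recast both quantities appearing in~\eqref{eq:90} as functions of a single operator on the tangent space at $\pi$. Introducing $\mathcal{M}_t := \widetilde{\Phi}_t|^{(f',f)}_\pi \circ \Phi_t$ and using the defining identity~\eqref{cf:eq:generalisedPetzDef}, the Fisher information at time $t$ becomes $\phi(t) := K_{f',\pi}(\delta\rho, \mathcal{M}_t\delta\rho)$, while Property~\ref{item:retrievesThePrior} ($\mathcal{M}_t\pi = \pi$) combined with the local expansion~\eqref{cf:eq:a1214} gives $\psi(t) := \tfrac{1}{2}\,K_{f',\pi}\bigl((1-\mathcal{M}_t)\delta\rho,(1-\mathcal{M}_t)\delta\rho\bigr) + \ord{|\delta\rho|^3}$. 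Exploiting the self-adjointness of $\mathcal{M}_t$ with respect to $K_{f',\pi}$ (Property~\ref{item:positiveSelfAdjoint}), these differentiate cleanly to
\[
\phi'(t) = K_{f',\pi}(\delta\rho,\dot{\mathcal{M}}_t\delta\rho),\qquad \psi'(t) = -K_{f',\pi}\bigl((1-\mathcal{M}_t)\delta\rho,\dot{\mathcal{M}}_t\delta\rho\bigr).
\]

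The hypothesis $f'(x) \leq f(x)$ enters through Property~\ref{item:spectrumInZeroOne}, which guarantees $0 \leq \mathcal{M}_t \leq 1$ as a $K_{f',\pi}$-symmetric operator. The first key step is a structural lemma: on the eigenspace $V_1 \subseteq \ker(1-\mathcal{M}_t)$, the derivative $\dot{\mathcal{M}}_t$ must satisfy $P_{V_1}\,\dot{\mathcal{M}}_t\,P_{V_1} = 0$. Indeed, for any $v \in V_1$ the real function $s \mapsto K_{f',\pi}(v,\mathcal{M}_{t+s} v)$ is bounded above by $\|v\|^2_{f',\pi}$ and attains this value at $s=0$, forcing its first derivative $K_{f',\pi}(v,\dot{\mathcal{M}}_t v)$ to vanish; polarization then promotes this identity to the whole of $V_1$.

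The core step is the construction of the witness direction. If $\phi'(t;\delta\rho) > 0$ for some $\delta\rho$, then $\dot{\mathcal{M}}_t$ is not negative semidefinite; being $K_{f',\pi}$-self-adjoint it admits an eigenvector $u$ with strictly positive eigenvalue $\kappa$. The lemma rules out $u \in V_1$, since this would give $K_{f',\pi}(u,\dot{\mathcal{M}}_t u) = 0 \neq \kappa\|u\|_{f',\pi}^2$; hence $u$ has a nonzero component in $V_1^\perp$, where $\mathcal{M}_t$ has spectrum strictly below $1$, so that $K_{f',\pi}(u,\mathcal{M}_t u) < \|u\|_{f',\pi}^2$. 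Plugging $u$ into the formula for $\psi'$ gives
\[
\psi'(t;u) \;=\; -\kappa\,\bigl(\|u\|_{f',\pi}^2 - K_{f',\pi}(u,\mathcal{M}_t u)\bigr) \;<\; 0,
\]
so that $u$ simultaneously witnesses $\phi'>0$ and $\psi'<0$, which is the conclusion of~\eqref{eq:90}.

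The main technical obstacle I foresee is handling the case in which $V_1$ has positive dimension and varies non-trivially with $t$: the argument above is first-order, and when an eigenvalue of $\mathcal{M}_t$ just grazes $1$ at the instant under consideration some extra care with degenerate or second-order perturbation theory may be required to secure $P_{V_1}\dot{\mathcal{M}}_t P_{V_1} = 0$ cleanly. Modulo this technicality, the remaining steps are purely algebraic manipulations built from the properties of the generalised Petz map collected around~\eqref{cf:eq:generalisedPetzDef}.
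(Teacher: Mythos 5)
Your proposal is correct and follows essentially the same route as the paper's proof: rewrite both derivatives in terms of $\mathcal{M}_t:=\widetilde{\Phi}_t\big|^{(f',f)}_\pi\Phi_t$, use its $K_{f',\pi}$-self-adjointness and the fact that its spectrum lies in $[0,1]$ (via $f'\leq f$), and exhibit an eigenoperator of $\dot{\mathcal{M}}_t$ with positive eigenvalue as the witness direction. Your structural lemma showing $P_{V_1}\dot{\mathcal{M}}_t P_{V_1}=0$ on $\ker(\idO-\mathcal{M}_t)$ is a welcome addition: the paper's final step concludes strict negativity of $\lambda\,K_{f',\pi}\bigl(\delta\tilde{\rho},(\idO-\mathcal{M}_t)(\delta\tilde{\rho})\bigr)$ from positivity of $(\idO-\mathcal{M}_t)$ alone, which only yields $\leq 0$ unless one also rules out $\delta\tilde{\rho}\in\ker(\idO-\mathcal{M}_t)$ — exactly the degenerate case your lemma excludes.
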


	To the best of our knowledge, this result is the first that explicitly considers backflow of information in a single state at time $t=0$: indeed, whereas here we compare the initial condition with our best guess about it, the usual approach to non-Markovianity is to compare the behaviour of two different states at time $t$. There are two messages one should get from Thm.~\ref{cf:theo:recovery_fisher}: first, it legitimates the intuitive interpretation of $\widetilde{\Phi}_t\big|^{(f',f)}_\pi$ as a state retrieval map; second, it directly connects the contractivity properties of the Fisher information at time $t$ (which is a distinguishability measure), with the improvement in the ability of an agent to retrodict the initial state of the system.
	
	\begin{proof}
		Starting from the expansion in Eq.~\eqref{cf:eq:a1214}, and ignoring corrections of the order $\ord{|\delta\rho|^3}$,  we can reassemble the terms as:
		\begin{align}
			&\frac{\de}{\dt}H_{g(f')}(\pi+\delta\rho||\widetilde{\Phi}_t\big|^{(f',f)}_\pi\,\Phi_t(\pi+\delta\rho))
			=\frac{1}{2}\frac{\de}{\dt}\,\Tr{((\idO-\widetilde{\Phi}_t\big|^{(f',f)}_\pi\,\Phi_t)(\delta\rho)) \, \J_{f'}^{-1}\big|_{\pi}[((\idO-\widetilde{\Phi}_t\big|^{(f',f)}_\pi\,\Phi_t)(\delta\rho))]}  =\\
			&= \frac{1}{2}\frac{\de}{\dt}\,\Tr{\delta\rho\; \J_{f'}^{-1}\big|_{\pi}[((\idO-\widetilde{\Phi}_t\big|^{(f',f)}_\pi\,\Phi_t)^2(\delta\rho))]} =-\Tr{\delta\rho\; \J_{f'}^{-1}\big|_{\pi}[((\idO-\widetilde{\Phi}_t\big|^{(f',f)}_\pi\,\Phi_t)\circ\norbra{\frac{\de}{\dt}\,\widetilde{\Phi}_t\big|^{(f',f)}_\pi\,\Phi_t}(\delta\rho))]}\,,
		\end{align}
		where in the first equality is just a rewriting of Eq.~\eqref{cf:eq:a1214}, then we used the self-adjointness property~\ref{item:positiveSelfAdjoint} in Eq.~\eqref{eq:selfAdjointRetrodiction}  to group the superoperator $(\idO-\widetilde{\Phi}_t\big|^{(f',f)}_\pi\,\Phi_t)$ on the right, and finally we carried out the time derivative.
		
		Now, it should be noticed that thanks to the property~\ref{item:spectrumInZeroOne} the spectrum of  $(\idO-\widetilde{\Phi}_t\big|^{(f',f)}_\pi\,\Phi_t)$ is also contained in $[0,1]$. On the other hand, $-\frac{\de}{\dt}\,\widetilde{\Phi}_t\big|^{(f',f)}_\pi\,\Phi_t$ has positive eigenvalues if and only if the Fisher information corresponding to $f$ is contractive, as one can see from the standard rewriting:
		\begin{align}
			-\,\Tr{A\; \J_{f'}^{-1}\big|_{\pi}[\norbra{\frac{\de}{\dt}\,\widetilde{\Phi}_t\big|^{(f',f)}_\pi\,\Phi_t)}(A)]} = -\frac{\de}{\dt}\Tr{\Phi_t(A)\; \J_f^{-1}\big|_{\Phi_t(\pi)}[\Phi_t(A))]}\,. 
		\end{align}
		Then, suppose that the Fisher information expands at some point. This means that there exists at least one negative eigenvalue of $-\frac{\de}{\dt}\,\widetilde{\Phi}_t\big|^{(f',f)}_\pi\,\Phi_t$, which we denote by $\lambda$. Also, denote by $\delta\tilde{\rho}$ the corresponding eigenoperator. Then, the right hand side of Eq.~\eqref{eq:90} reads:
		\begin{align}
			K_{f',\pi}\norbra{\delta\tilde{\rho},(\idO-\widetilde{\Phi}_t\big|^{(f',f)}_\pi\,\Phi_t)\circ\norbra{-\frac{\de}{\dt}\,\widetilde{\Phi}_t\big|^{(f',f)}_\pi\,\Phi_t}(\delta\tilde{\rho})} = \lambda\, K_{f',\pi}\norbra{\delta\tilde{\rho},(\idO-\widetilde{\Phi}_t\big|^{(f',f)}_\pi\,\Phi_t)(\delta\tilde{\rho})} < 0\,,
		\end{align}
		where the last step follows from the fact that $\lambda<0$ and $(\idO-\widetilde{\Phi}_t\big|^{(f',f)}_\pi\,\Phi_t)$ is a positive operator. This concludes the proof.
	\end{proof}
	
	As we mentioned, $\widetilde{\Phi}_t\big|^{(f',f)}_\pi$ might not even be a CP map for general evolutions: for example, by choosing $\Phi =\idO$, the requirement that $\widetilde{\idO}_{(f',f), \pi} = \J_{f'}\big|_{\pi}\J_f^{-1}\big|_{\pi}$ is CP is equivalent to constraining both $\J_{f'}\big|_{\pi}$ and $\J_f^{-1}\big|_{\pi}$ to be CP. The conditions that are needed to ensure this are deferred to Sec.~\ref{sec:CPfisher}, but we can already use Thm.~\ref{cf:thm:cpCharacterisation} to give the most general expression of a CP Petz recovery $\widetilde{\Phi}_t\big|^{(f',f)}_\pi$:
	\begin{align}
		\widetilde{\Phi}_t\big|^{(f',f)}_\pi = \int_{-\infty}^\infty \de\nu^+_{f'}(t) \int_{-\infty}^\infty\de \nu^-_f(s)\;\mathcal{V}_{\pi} \norbra{\tfrac{1}{2}-it}\circ\Phi_t^\dagger\circ\mathcal{V}_{\Phi(\pi)} \norbra{is-\tfrac{1}{2}}\,.\label{eq:95}
	\end{align}
	where $\de\nu^+_{f'}$ and $\de \nu^-_f$ are two symmetric probability distribution on the real line, and we introduced the CP map $\mathcal{V}_{\pi} (z)[A] := \pi^z\,A\, (\pi^z)^\dagger$. This expression is somehow reminiscent of the one found in~\cite{jungeUniversalRecoveryMaps2018a} for the universal recovery map. Then, it directly follows from Thm.~\ref{cf:theo:recovery_fisher} that:
	\begin{corollary}\label{cf:cor:cpRetrodiction}
		Suppose $f$ and $f'$ are chosen such that $\J_f^{-1}\big|_{\pi}$ and $\J_{f'}\big|_{\pi}$ are both CP. Then, there exists a channel $\widetilde{\Phi}_t\big|^{(f',f)}_\pi$ of the form in Eq.~\eqref{eq:95}  that perfectly retrieves the prior state (i.e., $\widetilde{\Phi}_t\big|^{(f',f)}_\pi\Phi_t(\pi) = \pi$), reduces to the Bayes' rule for classical dynamics, and such that Eq.~\eqref{eq:90} is satisfied.
	\end{corollary}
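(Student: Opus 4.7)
The plan is to assemble this corollary from four ingredients that have already been established earlier in the excerpt: the definition of the generalised Petz map in Eq.~\eqref{cf:eq:generalisedPetzDef}, the enumerated list of properties~\ref{item:reducesToBayes}--\ref{item:markovianity} immediately following it, the characterisation of CP Fisher operators from Thm.~\ref{cf:thm:cpCharacterisation}, and the backflow statement Thm.~\ref{cf:theo:recovery_fisher}. The corollary amounts to checking that, under the CP hypothesis on the Fisher operators, the map $\widetilde{\Phi}_t\big|^{(f',f)}_\pi$ becomes a bona fide channel expressible in the closed form of Eq.~\eqref{eq:95}, and that it inherits the three operational properties already proven in the general setting.

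First, I would establish that $\widetilde{\Phi}_t\big|^{(f',f)}_\pi$ is CPTP. Unpacking the definition $\widetilde{\Phi}_t\big|^{(f',f)}_\pi = \J_{f'}\big|_{\pi}\circ \Phi_t^\dagger \circ \J_f^{-1}\big|_{\Phi_t(\pi)}$, the composition is CP because $\Phi_t^\dagger$ is CP (as the Heisenberg adjoint of the channel $\Phi_t$) and both Fisher factors are CP by the hypothesis of the corollary; trace preservation is precisely property~\ref{item:tracePreserving} of the list, valid for full-rank $\pi$. Second, I would derive the explicit integral representation Eq.~\eqref{eq:95} by substituting into each Fisher factor the canonical form provided by Thm.~\ref{cf:thm:cpCharacterisation}: the factor $\J_f^{-1}\big|_{\Phi_t(\pi)}$ yields an integral of conjugations $\mathcal{V}_{\Phi_t(\pi)}(is-\tfrac{1}{2})$ against a symmetric measure $\de\nu^-_f$, while $\J_{f'}\big|_{\pi}$ yields an analogous integral against $\de\nu^+_{f'}$. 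Swapping integrals with the sandwiched $\Phi_t^\dagger$ and combining them produces the double integral stated.

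The remaining three claims are then immediate: perfect retrieval of the prior, $\widetilde{\Phi}_t\big|^{(f',f)}_\pi\Phi_t(\pi)=\pi$, is property~\ref{item:retrievesThePrior}; the reduction to the Bayes rule for classical dynamics is property~\ref{item:reducesToBayes} together with Eq.~\eqref{eq:classicalBayes}; and the validity of Eq.~\eqref{eq:90} is simply Thm.~\ref{cf:theo:recovery_fisher} applied under the standing condition $f'(x)\leq f(x)$, which must be carried over as a tacit assumption (since without it the backflow implication does not hold). The main obstacle I anticipate is ensuring coherence in the integral representation step: Thm.~\ref{cf:thm:cpCharacterisation} characterises CP Fisher maps individually, and care is needed so that the two conjugation families $\mathcal{V}_\pi$ and $\mathcal{V}_{\Phi_t(\pi)}$ appear with the correct base points after composition with $\Phi_t^\dagger$, and so that the exponents $\tfrac{1}{2}-it$ and $is-\tfrac{1}{2}$ line up correctly between the $\J$ and $\J^{-1}$ integrals. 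A minor technical point, to be addressed by continuity, concerns non-invertibility of $\J_f\big|_{\Phi_t(\pi)}$ when $\Phi_t(\pi)$ fails to be full rank, which can be handled by restricting the argument to the interior of $\mathcal{S}_d$ and extending by density.
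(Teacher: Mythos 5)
Your proposal is correct and follows essentially the same route as the paper, which itself only sketches the argument: Eq.~\eqref{eq:95} is obtained by inserting the canonical integral forms of Thm.~\ref{cf:thm:cpCharacterisation} into the two Fisher factors of $\widetilde{\Phi}_t\big|^{(f',f)}_\pi = \J_{f'}\big|_{\pi}\circ\Phi_t^\dagger\circ\J_f^{-1}\big|_{\Phi_t(\pi)}$, complete positivity follows from composing CP maps, trace preservation is property~\ref{item:tracePreserving}, and the three operational claims are properties~\ref{item:reducesToBayes}, \ref{item:retrievesThePrior} and Thm.~\ref{cf:theo:recovery_fisher}. One refinement: the ordering $f'(x)\leq f(x)$ is not a tacit extra assumption that must be carried over, but is automatically implied by the CP hypotheses, since $\J_{f'}\big|_\pi$ CP forces $f'\leq f_{SQ}$ and $\J_f^{-1}\big|_\pi$ CP forces $f_{SQ}\leq f$ (see Sec.~\ref{sec:CPfisher}), whence $f'\leq f$ and Thm.~\ref{cf:theo:recovery_fisher} applies without further conditions.
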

	The fact that $\widetilde{\Phi}_t\big|^{(f',f)}_\pi$ can be made CP means that there exists a physical procedure that an agent can carry out to retrodict the initial state of the system after the noisy operation $\Phi$ is applied. Moreover, Corollary~\ref{cf:cor:cpRetrodiction} gives a recipe to exploit non-Markovianity for error correction, showing that the backflow of information is indeed beneficial.
	
	\begin{additional-info}{Box 8. Proofs of the properties of the generalised Petz recovery~\hyperref[box:contractionCoefficients]{$\rightarrow$}}\label{box:proofGeneralisedPetz}
		We present here the derivations of the properties that generalised Petz recovery maps satisfy.
		
		First, regarding condition~\ref{item:reducesToBayes}, i.e., the fact that $\widetilde{\Phi}_t\big|^{(f',f)}_\pi$ reduces to Bayes rule for classical dynamics this can be seen as follows: a channel implementing a classical evolution takes the form $\Phi_t(\cdot) = \sum_{i,j}\, (\Phi_t)_{i,j}\,\ketbra{i}{j} \boldsymbol{\cdot}\ketbra{j}{i}$, where $\{\ket{i}\}$ are part of an orthonormal basis. Moreover, if we restrict to diagonal states, $\J_f^{-1}\big|_{\Phi_{t}(\pi)}(\cdot) =  \sum_{i}\,(\Phi_{t}(\pi)_i)^{-1} \ketbra{i}{i} \boldsymbol{\cdot}\ketbra{i}{i}$ and $\J_{f'}\big|_{\pi}(\cdot) =  \sum_{i}\,\pi_i \ketbra{i}{i} \boldsymbol{\cdot}\ketbra{i}{i}$ irrespective of $f$ and $f'$. Then, putting this elements together one obtains Eq.~\eqref{eq:classicalBayes}. This can be interpreted as the Bayes rule by thinking of $\pi_i$ as the probability of obtaining the microstate $i$, $(\Phi_t(\pi))_j$ as the probability to be in the microstate $j$ after the evolution $\Phi_t$ and, finally, $(\Phi_t)_{j,i}$ as the conditional probability of the transition $i\rightarrow j$.
		
		Condition~\ref{item:tracePreserving} tells us that $\widetilde{\Phi}_t\big|^{(f',f)}_\pi$ is trace preserving if $\pi$ is full-rank (which we always assume to be the case in this work). This can be readily verified by applying its adjoint on the identity matrix:
		\begin{align}
			(\widetilde{\Phi}_t\big|^{(f',f)}_\pi)^\dagger [\id] = \J_f^{-1}\big|_{\Phi_{t}(\pi)}\,\Phi_t\;\J_{f'}\big|_{\pi}[\id] =  \J_f^{-1}\big|_{\Phi_{t}(\pi)}  [\Phi_t(\pi)] = \id\,,
		\end{align}
		where we used the fact that for every map $\Phi$ the condition of being trace preserving can be written as $\Phi^\dagger(\id) = \id$.
		
		Let us now pass to condition~\ref{item:positiveSelfAdjoint}. This gives a first characterisation of the spectral properties of $\widetilde{\Phi}_t\big|^{(f',f)}_\pi\Phi_t$. First, it should be noticed that this map is self-adjoint with respect to $\J_{f'}^{-1}\big|_{\pi}$, as it can be readily verified: 
		\begin{align}
			&\Tr{A \,\J_{f'}^{-1}\big|_{\pi} [\widetilde{\Phi}_t\big|^{(f',f)}_\pi\,\Phi_t(B)]} = \Tr{\Phi_t(A) \,\J_f^{-1}\big|_{\Phi_t(\pi)} [\Phi_t(B)]}=\\ &\qquad\qquad=\Tr{\norbra{\J_{f'}\big|_{\pi} \Phi^\dagger\J_f^{-1}\big|_{\Phi_t(\pi)}}\Phi_t(A) \,\J_{f'}^{-1}\big|_{\pi} [B]} 
			=\Tr{\widetilde{\Phi}_t\big|^{(f',f)}_\pi\,\Phi_t(A) \,\J_{f'}^{-1}\big|_{\pi} [B]}\,,\label{cf:eq:selfAdjFB}
		\end{align}
		where we repeatedly used the self-adjointness of the Fisher superoperators to move them from right to left. This shows that $\widetilde{\Phi}_t\big|^{(f',f)}_\pi\,\Phi_t$ is self-adjoint, so its spectrum is real. Moreover, it is also positive since it can be rewritten as:
		\begin{align}
			\widetilde{\Phi}_t\big|^{(f',f)}_\pi\,\Phi_t = \J_{f'}^{\frac{1}{2}}\big|_{\pi}\norbra{ (\J_f^{-\frac{1}{2}}\big|_{\Phi_t(\pi)}\Phi_t\,\J_{f'}^{\frac{1}{2}}\big|_{\pi})^\dagger(\J_f^{-\frac{1}{2}}\big|_{\Phi_t(\pi)}\Phi_t\,\J_{f'}^{\frac{1}{2}}\big|_{\pi}}\J_{f'}^{-\frac{1}{2}}\big|_{\pi}\,.
		\end{align}
		This proves the claim, as similarity transformations preserve the spectrum.
		
		It is also easy to see that $\widetilde{\Phi}_t\big|^{(f',f)}_\pi\,\Phi_t$ retrieves the prior (condition~\ref{item:retrievesThePrior}):
		\begin{align}\label{cf:eq:priorRecover}
			\widetilde{\Phi}_t\big|^{(f',f)}_\pi\,\Phi_t (\pi) &=  \J_{f'}\big|_{\pi}\,\Phi_t^\dagger\; \J_f^{-1}\big|_{\Phi_{t}(\pi)}[\Phi_t (\pi)]  = \J_{f'}\big|_{\pi}[\Phi_t^\dagger(\id) ]= \J_{f'}\big|_{\pi}[\id] = \pi\,.
		\end{align}
		This equation shows that the evolution of the state $\pi$ can be completely undone by applying $\widetilde{\Phi}_t\big|^{(f',f)}_\pi$. Moreover, the spectrum of $\widetilde{\Phi}_t\big|^{(f',f)}_\pi$ contains $1$, and the associated eigenoperator is $\pi$.
		
		We can now show that, with some restrictions on $f$ and $f'$, condition~\ref{item:spectrumInZeroOne} is satisfied, namely the fact that the spectrum is actually contained in the interval $[0,1]$. Indeed, this follows from the chain of inequalities:
		\begin{align}\label{cf:eq:1209}
			\frac{\Tr{A \, \J_{f'}^{-1}\big|_{\pi}[\widetilde{\Phi}_t\big|^{(f',f)}_\pi\,\Phi_t(A)]}}{\Tr{A \, \J_{f'}^{-1}\big|_{\pi}[A]}} = \frac{\Tr{\Phi_t(A) \,\J_f^{-1}\big|_{\Phi_t(\pi)} [\Phi_t(A)]}}{\Tr{A \, \J_{f'}^{-1}\big|_{\pi}[A]}}\, \leq\, \frac{\Tr{A \,\J_f^{-1}\big|_{\pi} [A]}}{\Tr{A \, \J_{f'}^{-1}\big|_{\pi}[A]}}\, \leq\, \frac{\Tr{A \,\J_{f'}^{-1}\big|_{\pi} [A]}}{\Tr{A \, \J_{f'}^{-1}\big|_{\pi}[A]}}=1\,,
		\end{align}
		where the first inequality follows from the contractivity of the Fisher information, while the second only holds if $\J_{f}^{-1}\big|_{\pi}\leq\J_{f'}^{-1}\big|_{\pi}$, corresponding to the case in which $f'(x)\leq f(x)$ for every $x\in\RR^+$ (see Eq.~\eqref{cf:eq:orderJ}).
		
		Define now the map $P_{(f',f),\pi}(\Phi_t):=\widetilde{\Phi}_t\big|^{(f',f)}_\pi$. Condition~\ref{item:involutivity} tells us that $P_{(f,f'),\Phi_t(\pi)}(P_{(f',f),\pi}(\Phi_t)) = \Phi_t$. This can be verified by a direct computation:
		\begin{align}
			P_{(f,f'),\Phi_t(\pi)}(P_{(f',f),\pi}(\Phi_t)) &= \J_f\big|_{\Phi_t(\pi)}(\widetilde{\Phi}_t\big|^{(f',f)}_\pi)^\dagger \J_{f'}^{-1}\big|_{\widetilde{\Phi}_t\big|^{(f',f)}_\pi\Phi_{t}(\pi)} =\\
			&=\J_f\big|_{\Phi_t(\pi)}\norbra{\J_f^{-1}\big|_{\Phi_{t}(\pi)}\,\Phi_t\;\J_{f'}\big|_{\pi}} \J_{f'}^{-1}\big|_{\pi} =\Phi_t\,.
		\end{align}
		Interestingly, this make the transformation $P_{(f,f),\pi}$ involutive, since it can be reversed by just changing the base-point, i.e., through $P_{(f,f),\Phi_t(\pi)}$. This is one of the key properties of the classical Bayes retrieval (as it was discussed in~\cite{surace2022state}). Moreover, if one also requires $P_{(f,f),\pi}(\Phi_t)$ to be CP for general maps, this constrains the defining function to $f_{SQ}(x)=\sqrt{x}$, as it is the only case in which both $\J_{f_{SQ}}\big|_{\pi}$ and $\J_{f_{SQ}}^{-1}\big|_{\pi}$ are CP for any state. This gives a way to single out the usual Petz recovery map~\cite{petz1986sufficient}.
		
		Finally, condition~\ref{item:markovianity} explains the relation between the generalised Petz recovery and Markovian evolutions. In particular, for divisible dynamics one has:
		\begin{align}
			P_{(f',f),\pi}(\Phi_{t})  &= \J_{f'}\big|_{\pi}\,(\Phi_{t,s}\circ\Phi_s)^\dagger \;\J_f^{-1}\big|_{\Phi_{t}(\pi)} =\norbra{\J_{f'}\big|_{\pi} \Phi_s^\dagger\, \J_{f''}^{-1}\big|_{\Phi_{s}(\pi)}}\norbra{\J_{f''}\big|_{\Phi_{s}(\pi)}\Phi_{t,s}^\dagger\,\J_f^{-1}\big|_{\Phi_{t,s}\Phi_{s}(\pi)}}=\\
			&=P_{(f',f''),\pi}(\Phi_{s})\circ P_{(f'',f),\Phi_s(\pi)}(\Phi_{t,s})\,,
		\end{align}
		proving the claim. It should be noticed that in the case of $f\equiv f'$ the composition becomes $P_{(f,f''),\pi}(\Phi_{s})\circ P_{(f'',f),\Phi_s(\pi)}(\Phi_{t,s})$ for any $f''$, and in  particular if one choses $f'' = f$ it shows that $P_{(f,f),\pi}$ is compatible with the structure of divisible semigroups in the sense that $P_{(f,f),\pi}(\Phi_{t}) =P_{(f,f),\pi}(\Phi_{s})\circ P_{(f,f),\Phi_s(\pi)}(\Phi_{t,s})$.
	\end{additional-info}
	
	\begin{additional-info}{Box 9. Generalised universal recovery maps ~\hyperref[box:proofGeneralisedPetz]{$\leftarrow$},\hyperref[detailedBalance]{$\rightarrow$}}\label{box:contractionCoefficients}
		We discuss here a possible generalisation of the result in~\cite{jungeUniversalRecoveryMaps2018a}, namely the fact that the contraction of the relative entropy (defined in Eq.~\eqref{cf:eq:relativeEntropy}) can be bounded as:
		\begin{align}
			S(\rho||\sigma)-S(\Phi(\rho)||\Phi(\sigma)) \geq-\, \int_{-\infty}^\infty\dt\;\beta(t) \log F(\rho,\widetilde{\Phi}_{P,\sigma,t} \Phi (\rho))\,,\label{eq:106}
		\end{align}
		where $\beta(t)$ is a symmetric probability distribution (we refer to~\cite{jungeUniversalRecoveryMaps2018a} for its particular expression), $\widetilde{\Phi}_{P,\sigma,t} := \mathcal{V}_{\sigma} \norbra{\frac{1}{2}-it}\circ\Phi^\dagger\circ\mathcal{V}_{\Phi(\sigma)} \norbra{it-\frac{1}{2}}$ is called the rotated Petz recovery map,  and we introduced the fidelity $F(\rho,\sigma) := \Tr{\sqrt{\sqrt{\rho}\,\sigma\sqrt{\rho}}}$. The interest in this bound is twofold: on the one hand it quantifies the minimum amount of information lost when one applies the channel $\Phi$; on the other, it implies that if $S(\rho||\sigma)=S(\Phi(\rho)||\Phi(\sigma))$, then $\widetilde{\Phi}_{P,\sigma,t}\Phi(\rho)=\rho$, i.e., $\widetilde{\Phi}_{P,\sigma,t}$ perfectly recovers every state for which the dynamics does not decrease the relative entropy (with respect to $\sigma$). This condition is what one refers to as universality in this context. This can be proved as follows: first, it should be noticed that for any two states the fidelity $F(\rho,\sigma)$ is in $[0,1]$, so its logarithm is negative definite. Hence, in this case we have the chain of inequalities:
		\begin{align}
			0 = S(\rho||\sigma)-S(\Phi(\rho)||\Phi(\sigma)) \geq-\, \int_{-\infty}^\infty\dt\;\beta(t) \log F(\rho,\widetilde{\Phi}_{P,\sigma,t} \Phi (\rho))\geq 0\,,\label{eq:107}
		\end{align}
		which implies that $F(\rho,\widetilde{\Phi}_{P,\sigma,t} \Phi (\rho)) = 1$  for all $t$. This is only possible if $\widetilde{\Phi}_{P,\sigma,t} \Phi (\rho) = \rho$, proving the claim.
		
		One possibility of generalising Eq.~\eqref{eq:106} is to try to find a similar result holding for the general family of contrast functions defined in Eq.~\eqref{cf:eq:HGexpressionApp}. Unfortunately, to the best of our knowledge this problem is still open. Nevertheless, if one focuses on $\chi^2_f$-divergences alone (see the definition in Eq.~\eqref{eq:chi2Divergences}), we can prove that:
		\begin{theorem}\label{theo:chi2recoverybound}
			Given a CPTP map $\Phi$ and any two standard monotone functions $f$ and $f'$ such that $f'(x)\leq f(x)$ for every $x\in\RR^+$, the following bound holds: 
			\begin{align}
				\chi^2_{f'}(\rho||\sigma) -\chi^2_f(\Phi(\rho)||\Phi(\sigma) ) &= \Tr{\Delta\,\J_{f'}^{-1}\big|_\rho\sqrbra{(\idO-\widetilde{\Phi}_{(f',f),\rho}\,\Phi)(\Delta)}}\geq\label{eq:108}\\
				&\geq \mathcal{F}_{f',\rho}\norbra{(\idO-\widetilde{\Phi}_{(f',f),\rho}\,\Phi)(\Delta)} \geq \left| \left|(\idO-\widetilde{\Phi}_{(f',f),\rho}\,\Phi)(\Delta)\right| \right|_1^2\,,\label{eq:109}
			\end{align}
			where we used the abbreviation $\Delta = (\rho-\sigma)$ and introduced the trace norm $\left| \left|A\right| \right|_1 = \Tr{|A|}$.
		\end{theorem}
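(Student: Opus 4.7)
The strategy is to handle the identity and the two inequalities in~\eqref{eq:108}--\eqref{eq:109} separately, each resting on a distinct structural property of the generalised Petz map already collected in the excerpt. The equality in~\eqref{eq:108} is an immediate unpacking of the defining identity~\eqref{cf:eq:generalisedPetzDef}: taking $A=B=\Delta$ there gives $\chi^2_f(\Phi(\rho)||\Phi(\sigma))=K_{f,\Phi(\rho)}(\Phi(\Delta),\Phi(\Delta))=K_{f',\rho}(\Delta,\widetilde{\Phi}_{(f',f),\rho}\,\Phi(\Delta))$, and subtracting this from $\chi^2_{f'}(\rho||\sigma)=K_{f',\rho}(\Delta,\Delta)$ while factorising $\idO-\widetilde{\Phi}_{(f',f),\rho}\Phi$ out of the right argument reproduces the stated form.

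For the first inequality in~\eqref{eq:109} I write $M:=\idO-\widetilde{\Phi}_{(f',f),\rho}\Phi$ and combine two structural properties of the Petz composition: property~\ref{item:positiveSelfAdjoint} says that $\widetilde{\Phi}_{(f',f),\rho}\Phi$---hence $M$---is self-adjoint for the scalar product $K_{f',\rho}(\cdot,\cdot)$; and property~\ref{item:spectrumInZeroOne}, which is the single place where the hypothesis $f'\leq f$ is used, tells us that its spectrum lies in $[0,1]$, so the same holds for $M$. Functional calculus for a $K_{f',\rho}$-self-adjoint operator with spectrum in $[0,1]$ gives $M\geq M^2$, whence
\[K_{f',\rho}(\Delta, M\Delta)\,\geq\,K_{f',\rho}(\Delta, M^2\Delta)\,=\,K_{f',\rho}(M\Delta, M\Delta)\,=\,\mathcal{F}_{f',\rho}(M\Delta),\]
which is the claim.

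The second inequality is the technically richest step and is where I expect the main work. I would first reduce to the Bures case through the universal bound $f'(x)\leq f_B(x)=(x+1)/2$ on standard monotones~\eqref{cf:eq:rangeStandardMonotones}: together with the general fact $f_1\leq f_2\Rightarrow \J_{f_1}^{-1}\big|_\rho\geq \J_{f_2}^{-1}\big|_\rho$, this gives $\mathcal{F}_{f',\rho}(Y)\geq \mathcal{F}_{f_B,\rho}(Y)$ for the Hermitian traceless operator $Y:=M\Delta$ (hermiticity and tracelessness are inherited since $\Phi$ and, by property~\ref{item:tracePreserving}, $\widetilde{\Phi}_{(f',f),\rho}$ are both hermiticity- and trace-preserving on the full-rank prior $\rho$). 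It then suffices to prove the bound $\mathcal{F}_{f_B,\rho}(Y)\geq \|Y\|_1^2$. Writing $Y=V|Y|$ for the polar decomposition and setting $L:=\J_{f_B}^{-1}\big|_\rho[Y]$ (the symmetric logarithmic derivative), the Cauchy--Schwarz inequality for the inner product $\langle A,B\rangle_{f_B,\rho}:=\Tr{A^\dagger\,\J_{f_B}\big|_\rho[B]}$ applied with $A=V$, $B=L$ yields
\begin{align*}
\|Y\|_1^2 = |\Tr{V^\dagger Y}|^2 = |\langle V,L\rangle_{f_B,\rho}|^2 \leq \langle V,V\rangle_{f_B,\rho}\,\mathcal{F}_{f_B,\rho}(Y) \leq \mathcal{F}_{f_B,\rho}(Y),
\end{align*}
since $\langle V,V\rangle_{f_B,\rho}=\Tr{\rho(V^\dagger V+VV^\dagger)/2}\leq \Tr{\rho}=1$ because both $V^\dagger V$ and $VV^\dagger$ are orthogonal projectors. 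The genuine obstacle is the bookkeeping when $Y$ is rank-deficient, so that $V$ is only a partial isometry: the bound $\langle V,V\rangle_{f_B,\rho}\leq 1$ still holds since projectors are contractions, but the identification $V^\dagger Y=|Y|$ must be justified on the support of $Y$.
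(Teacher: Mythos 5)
Your proof is correct and follows essentially the same route as the paper's: the equality by unpacking the definition of $\widetilde{\Phi}_{(f',f),\rho}$, the first inequality from the $K_{f',\rho}$-self-adjointness and $[0,1]$ spectrum of $\idO-\widetilde{\Phi}_{(f',f),\rho}\Phi$ (giving $M\geq M^2$), and the last inequality by reducing to the Bures metric and bounding $\mathcal{F}_{f_B,\rho}(Y)\geq\|Y\|_1^2$. The only cosmetic difference is that you re-derive this last bound via the polar decomposition and a partial isometry, whereas the paper simply invokes its Eq.~\eqref{eq:buresInfoTraceNorm}, which it proves by the same Cauchy--Schwarz argument using $\|X\|_1=\sup_U|\Tr{XU}|$ over genuine unitaries (which sidesteps your rank-deficiency bookkeeping).
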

		There are a number of remarks to be made: first, it should be noticed that the relative entropy cannot be expressed in terms of $\chi^2_f$-divergences, so the result in Thm.~\ref{theo:chi2recoverybound} is not a direct generalisation of Eq.~\eqref{eq:106}. Nonetheless, it is a first step in this direction. Moreover, one should also notice the crucial role of the map $(\idO-\widetilde{\Phi}_{(f',f),\rho}\,\Phi)$: this was the key ingredient in proving Thm.~\ref{cf:theo:recovery_fisher} and it is now again at the centre of the above result. This further corroborates the interpretation of $\widetilde{\Phi}_{(f',f),\rho}$ as a retrieval map: indeed, the quality of the retrodiction can be assessed by comparing $\widetilde{\Phi}_{(f',f),\rho}\,\Phi$ (the map forward and backwards) with $\idO$ (not doing anything). The fact that the operator $(\idO-\widetilde{\Phi}_{(f',f),\rho}\,\Phi)$ naturally appears in many different constructions shows that this interpretation is well justified. Finally, we also point out that a similar result as Thm.~\ref{theo:chi2recoverybound} was found in~\cite{gao2023sufficient} (Lemma~4.9 therein).
		
		\begin{proof}
			The first equality in Eq.~\eqref{eq:108} directly follows from the definition of generalised Petz recovery map in Eq.~\eqref{cf:eq:generalisedPetzDef}:
			\begin{align}
				\chi^2_{f'}(\rho||\sigma) -\chi^2_f(\Phi(\rho)||\Phi(\sigma) )  &= \Tr{\Delta\, \J_{f'}^{-1}\big|_\rho[\Delta]} - \Tr{\Phi(\Delta)\, \J_{f}^{-1}\big|_{\Phi(\rho)}[\Phi(\Delta)]}  = \\
				&=\Tr{\Delta\, \J_{f'}^{-1}\big|_\rho[\Delta]} - \Tr{\Delta\, \J_{f'}^{-1}\big|_{\rho}[\widetilde{\Phi}_{(f',f),\rho}\Phi(\Delta)]} = \\
				&= \Tr{\Delta\,\J_{f'}^{-1}\big|_\rho\sqrbra{(\idO-\widetilde{\Phi}_{(f',f),\rho}\,\Phi)(\Delta)}}\,,
			\end{align}
			where we used the same definition of $\Delta$ as in the theorem. At this point we can use the property~\ref{item:spectrumInZeroOne} once more, which says that the spectrum of  $(\idO-\widetilde{\Phi}_{(f',f),\rho}\,\Phi)$ is in $[0,1]$ for $f$ and $f'$ as in the statement of the theorem. But from this it follows that $(\idO-\widetilde{\Phi}_{(f',f),\rho}\,\Phi)\geq (\idO-\widetilde{\Phi}_{(f',f),\rho}\,\Phi)^2$, which directly implies the first inequality in Eq.~\eqref{eq:109}. Finally, since for any $f$ it holds that $\mathcal{F}_{f,\rho}(\delta\rho)\geq \mathcal{F}_{B,\rho}(\delta\rho)\geq \left| \left|\delta\rho\right| \right|_1^2$ (see Sec.~\ref{Bures} where $\mathcal{F}_{B,\rho}(\delta\rho)$ is defined) the last inequality also follows. 
		\end{proof}
		
		We are now ready to generalise the result in Eq.~\eqref{eq:107}:
		\begin{corollary}\label{cor:eqChi2divergences}
			Suppose that $\chi^2_{f}(\rho||\sigma) = \chi^2_{f}(\Phi(\rho)||\Phi(\sigma) )$ for some standard monotone function $f$. Then, the map $\widetilde{\Phi}_{(f,f),\rho}$ satisfies:
			\begin{align}
				\widetilde{\Phi}_{(f,f),\rho}\,\Phi(\rho) = \rho \qquad\land\qquad\widetilde{\Phi}_{(f,f),\rho}\,\Phi(\sigma)=\sigma\,.\label{eq:113}
			\end{align}
			Moreover, if $\chi^2_{f}(\rho||\sigma) = \chi^2_{f}(\Phi(\rho)||\Phi(\sigma) )$ for $f_{SQ}(x) = \sqrt{x}$, then the equality holds also for all other $\chi^2_f$-divergences.
		\end{corollary}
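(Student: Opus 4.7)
The first claim follows almost immediately from Theorem~\ref{theo:chi2recoverybound}. The plan is to specialise the bound to $f'=f$ (which is allowed since the hypothesis $f'(x)\leq f(x)$ is satisfied with equality) and set $\Delta=\rho-\sigma$. Under the assumption $\chi^2_f(\rho||\sigma)=\chi^2_f(\Phi(\rho)||\Phi(\sigma))$ the left-hand side of Eq.~\eqref{eq:108} vanishes, so the chain of inequalities in Eq.~\eqref{eq:109} forces
\begin{align}
\left\| (\idO-\widetilde{\Phi}_{(f,f),\rho}\,\Phi)(\rho-\sigma)\right\|_1 = 0\,,
\end{align}
which is the statement $\widetilde{\Phi}_{(f,f),\rho}\,\Phi(\rho-\sigma)=\rho-\sigma$. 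The prior-recovery property~\ref{item:retrievesThePrior} collected in the previous box gives $\widetilde{\Phi}_{(f,f),\rho}\,\Phi(\rho)=\rho$, hence by linearity the residual identity reads $\widetilde{\Phi}_{(f,f),\rho}\,\Phi(\sigma)=\sigma$, establishing Eq.~\eqref{eq:113}.

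For the second claim the plan is to specialise the first part to $f=f_{SQ}(x)=\sqrt{x}$, which yields a map $\mathcal{R}:=\widetilde{\Phi}_{(f_{SQ},f_{SQ}),\rho}$ satisfying $\mathcal{R}\Phi(\rho)=\rho$ and $\mathcal{R}\Phi(\sigma)=\sigma$. The decisive point is that $\mathcal{R}$ is a bona fide CPTP channel: trace preservation follows from property~\ref{item:tracePreserving}, while complete positivity of the standard Petz recovery map is the well-known feature highlighted after property~\ref{item:involutivity} as the reason for singling out $f_{SQ}$. One can then sandwich any $\chi^2_f$-divergence between two applications of data processing (recalled in \hyperref[box:statisticalQuantifiers]{Box~3}):
\begin{align}
\chi^2_f(\rho||\sigma)\;\geq\;\chi^2_f(\Phi(\rho)||\Phi(\sigma))\;\geq\;\chi^2_f(\mathcal{R}\Phi(\rho)||\mathcal{R}\Phi(\sigma))=\chi^2_f(\rho||\sigma)\,,
\end{align}
so that all inequalities collapse to equalities and the data processing is saturated for every standard monotone $f$.

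The only subtle point — and the reason the second claim cannot be upgraded to start from a generic $f$ — is precisely the CPTP status of the recovery map. For arbitrary $f$ the object $\widetilde{\Phi}_{(f,f),\rho}$ need not be completely positive, so the lower bound in the above sandwich is not a priori a legitimate instance of the data processing inequality. The choice $f=f_{SQ}$ bypasses this obstacle because the Petz map is the unique generalised recovery that is universally CPTP, a fact that will be rigorously tied to the complete-positivity analysis of Sec.~\ref{sec:CPfisher}. No further ingredients are required: both parts of the corollary reduce to (i) the equality case of Theorem~\ref{theo:chi2recoverybound} and (ii) a double application of monotonicity using the Petz recovery as the second channel.
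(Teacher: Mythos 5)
Your proposal is correct and follows essentially the same route as the paper: the first claim via the equality case of Theorem~\ref{theo:chi2recoverybound} combined with the prior-recovery property, and the second via a double application of data processing with the Petz map as the recovery channel. The only cosmetic difference is that you conclude $(\idO-\widetilde{\Phi}_{(f,f),\rho}\,\Phi)(\Delta)=0$ from the vanishing of the trace-norm term in Eq.~\eqref{eq:109}, whereas the paper stops one step earlier and invokes the non-degeneracy of the Fisher scalar product; both are immediate and equivalent.
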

		\begin{proof}
			From Thm.~\ref{theo:chi2recoverybound} it directly follows that:
			\begin{align}
				0=\chi^2_{f}(\rho||\sigma) -\chi^2_f(\Phi(\rho)||\Phi(\sigma) ) \geq \mathcal{F}_{f,\rho}\norbra{(\idO-\widetilde{\Phi}_{(f,f),\rho}\,\Phi)(\Delta)} \geq 0\,,
			\end{align}
			which implies $\mathcal{F}_{f,\rho}\norbra{(\idO-\widetilde{\Phi}_{(f,f),\rho}\,\Phi)(\Delta)}=0$. Since the Fisher information arises from a non-degenerate scalar product, this means that  $(\idO-\widetilde{\Phi}_{(f,f),\rho}\,\Phi)(\Delta) = 0$ $\implies$ $\widetilde{\Phi}_{(f,f),\rho}\,\Phi(\Delta) = \Delta$. It follows from property~\ref{item:retrievesThePrior} that $\widetilde{\Phi}_{(f,f),\rho}\,\Phi(\rho) = \rho$. Hence, the only way for $\widetilde{\Phi}_{(f,f),\rho}\,\Phi$ to retrieve $\Delta =(\rho-\sigma)$ is that $\widetilde{\Phi}_{(f,f),\rho}\,\Phi(\sigma) = \sigma$, proving the first part of the claim. 
			
			As it was mentioned in \hyperref[box:proofGeneralisedPetz]{Box 8}, the case $f_{SQ}(x) = \sqrt{x}$ is the only one in which $\widetilde{\Phi}_{(f_{SQ},f_{SQ}),\rho}$ is CP in general (in the following we will denote this quantity by $\widetilde{\Phi}_{P,\rho}$). Then, if $\chi^2_{f_{SQ}}(\rho||\sigma) = \chi^2_{f_{SQ}}(\Phi(\rho)||\Phi(\sigma) )$, this implies the existence of a channel recovering both $\rho$ and $\sigma$. Then, by using the contractivity of $\chi^2_f$-divergences it follows that:
			\begin{align}
				\chi^2_{f}(\rho||\sigma) \geq \chi^2_{f}(\Phi(\rho)||\Phi(\sigma) ) \geq \chi^2_{f}(\widetilde{\Phi}_{P,\rho}\Phi(\rho)||\widetilde{\Phi}_{P,\rho}\Phi(\sigma) ) = \chi^2_{f}(\rho||\sigma)\,.
			\end{align}
			This directly implies that $\chi^2_{f}(\rho||\sigma) = \chi^2_{f}(\Phi(\rho)||\Phi(\sigma) )$.
		\end{proof}
		The proof above shows that the exceptionality of the square root divergence derives from the fact that this is the only case  in which $\widetilde{\Phi}_{(f,f),\rho}$ is a channel. Moreover, allowing the two defining functions $f$ and $f'$ to be different does not help in general, as the following corollary shows:
		\begin{corollary}\label{cor:generalisedUniversalRecovery}
			Suppose that $\chi^2_{f}(\rho||\sigma) = \chi^2_{f}(\Phi(\rho)||\Phi(\sigma) )$ for some standard monotone function $f$. Then, choosing $f'\neq f$, the map $\widetilde{\Phi}_{(f',f),\rho}$ satisfies: 
			\begin{align}
				\widetilde{\Phi}_{(f',f),\rho}\,\Phi(\rho) = \rho \qquad\land\qquad\widetilde{\Phi}_{(f',f),\rho}\,\Phi(\sigma)=\sigma\,.\label{eq:116}
			\end{align}
			if and only if $[\rho,\sigma]= 0$. In this case, the equality for any $\chi^2_f$-divergence such that $\J_f^{-1}$ is CP implies the equality for all other $\chi^2_{f'}$-divergences.
		\end{corollary}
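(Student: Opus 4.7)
The plan is to combine Corollary~\ref{cor:eqChi2divergences}, which under the hypothesis already furnishes the recovery identity $\widetilde{\Phi}_{(f,f),\rho}\Phi(\sigma) = \sigma$, with a direct spectral analysis. First, I would use this identity to eliminate $\Phi$ from Eq.~\eqref{eq:116}: rewriting $\J_f|_\rho\,\Phi^\dagger\, \J_f^{-1}|_{\Phi(\rho)}\Phi(\sigma) = \sigma$ as $\Phi^\dagger \J_f^{-1}|_{\Phi(\rho)}\Phi(\sigma) = \J_f^{-1}|_\rho(\sigma)$ and substituting this into the definition of $\widetilde{\Phi}_{(f',f),\rho}\Phi(\sigma)$ yields the clean reduction
\begin{align}
    \widetilde{\Phi}_{(f',f),\rho}\Phi(\sigma) = \J_{f'}|_\rho\,\J_f^{-1}|_\rho(\sigma)\,.
\end{align}
Consequently, Eq.~\eqref{eq:116} collapses to the $\Phi$-independent identity $\J_{f'}|_\rho(\sigma) = \J_f|_\rho(\sigma)$ on the state space alone.

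Next, I would analyse this reduced identity in the eigenbasis of $\rho = \sum_i \rho_i\,\ketbra{i}{i}$. Using $\J_g|_\rho(\ketbra{i}{j}) = \rho_j\,g(\rho_i/\rho_j)\,\ketbra{i}{j}$ and expanding $\sigma = \sum_{i,j}\sigma_{ij}\,\ketbra{i}{j}$, the equation becomes $\sigma_{ij}\bigl[f'(\rho_i/\rho_j) - f(\rho_i/\rho_j)\bigr] = 0$ for all $i,j$. The diagonal entries are automatic thanks to $f(1)=f'(1)=1$. For the ``if'' direction, $[\rho,\sigma]=0$ allows a common eigenbasis in which $\sigma_{ij}=0$ whenever $\rho_i \neq \rho_j$, so the equation is trivially satisfied. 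For the ``only if'' direction, the integral representation~\eqref{eq:21nuova} makes any standard monotone real-analytic on $(0,\infty)$, so two distinct choices $f\neq f'$ can coincide only on a nowhere-dense subset; this forces $\sigma_{ij} = 0$ whenever $\rho_i \neq \rho_j$, i.e., $[\rho,\sigma]=0$.

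For the final claim concerning the propagation of the equality to all $\chi^2_{f'}$-divergences, I would specialise the first part of the corollary to the choice $f' \to f_{SQ}(x) = \sqrt{x}$. The resulting recovery map $\widetilde{\Phi}_{(f_{SQ},f),\rho} = \J_{f_{SQ}}|_\rho \circ \Phi^\dagger \circ \J_f^{-1}|_{\Phi(\rho)}$ is a composition of three CP maps ($\J_{f_{SQ}}|_\rho$ is always CP, $\Phi^\dagger$ is CP since $\Phi$ is, and $\J_f^{-1}$ is CP by hypothesis) and is trace-preserving by property~\ref{item:tracePreserving}, hence a bona-fide quantum channel that recovers both $\rho$ and $\sigma$. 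Sandwiching with the contractivity of any $\chi^2_{f'}$ under channels,
\begin{align}
    \chi^2_{f'}(\rho||\sigma) = \chi^2_{f'}\bigl(\widetilde{\Phi}_{(f_{SQ},f),\rho}\Phi(\rho)\,||\,\widetilde{\Phi}_{(f_{SQ},f),\rho}\Phi(\sigma)\bigr) \leq \chi^2_{f'}(\Phi(\rho)||\Phi(\sigma)) \leq \chi^2_{f'}(\rho||\sigma)\,,
\end{align}
then forces equality throughout.

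The most delicate step will be the only-if direction: in principle $f$ and $f'$ could accidentally agree at some of the specific ratios $\rho_i/\rho_j$, allowing non-vanishing $\sigma_{ij}$ for $i\neq j$ there. The cleanest reading is to interpret the corollary as demanding Eq.~\eqref{eq:116} to hold for all $f'\neq f$ (or alternatively for generic $\rho$), in which case the analyticity argument above makes the implication unconditional.
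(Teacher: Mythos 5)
Your proposal is correct and follows essentially the same route as the paper's proof: reduce via Corollary~\ref{cor:eqChi2divergences} to the condition $\J_{f'}\big|_{\rho}\J_{f}^{-1}\big|_{\rho}[\sigma]=\sigma$, verify it in the eigenbasis of $\rho$, and propagate the equality to all $\chi^2$-divergences by sandwiching with the CP recovery channel. The only difference is that you spell out the coordinate computation and explicitly flag the caveat that $f$ and $f'$ could accidentally coincide at the specific ratios $\rho_i/\rho_j$ --- a point the paper's ``as it can be verified in coordinates'' silently glosses over, so your added care is a genuine (if minor) improvement rather than a deviation.
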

		\begin{proof}
			It should be noticed that the requirement $\widetilde{\Phi}_{(f',f),\rho}\,\Phi(\rho) = \rho$ directly follows from the property~\ref{item:retrievesThePrior}, so it can be imposed without problems. Let us now focus on the second part of Eq.~\eqref{eq:116}, namely:
			\begin{align}
				\widetilde{\Phi}_{(f',f),\rho}\,\Phi(\sigma) &=  \J_{f'}\big|_{\rho}\,\Phi^\dagger \;\J_f^{-1}\big|_{\Phi(\rho)}\,\Phi(\sigma) = \norbra{\J_{f'}\big|_{\rho}\J_{f}^{-1}\big|_{\rho}}\norbra{\J_{f}\big|_{\rho}\,\Phi^\dagger \;\J_f^{-1}\big|_{\Phi(\rho)}}\,\Phi(\sigma)=\\
				&=\norbra{\J_{f'}\big|_{\rho}\J_{f}^{-1}\big|_{\rho}}\widetilde{\Phi}_{(f,f),\rho}\,\Phi(\sigma) = \J_{f'}\big|_{\rho}\J_{f}^{-1}\big|_{\rho}[\sigma]\,,\label{eq:118}
			\end{align}
			where in the second line we used Corollary~\ref{cor:eqChi2divergences} to substitute $\widetilde{\Phi}_{(f,f),\rho}\,\Phi(\sigma) =\sigma$. Now, if $f\neq f'$, $\J_{f'}\big|_{\rho}\J_{f}^{-1}\big|_{\rho}[\sigma] =\sigma$ if and only if $[\rho,\sigma]= 0$, as it can be verified in coordinates (see Sec.~\ref{propertiesFisher}). 
			
			Regarding the second part of the Corollary suppose that there exists a standard monotone $f$ such that $\chi^2_{f}(\rho||\sigma) = \chi^2_{f}(\Phi(\rho)||\Phi(\sigma) )$ and $\J_f^{-1}$ is CP. Then, by choosing any $f'$ such that $\J_{f'}$ is CP, $\widetilde{\Phi}_{(f',f),\rho}$ is a quantum channel. Moreover, if $[\rho,\sigma]= 0$, it follows from Eq.~\eqref{eq:118} that $\widetilde{\Phi}_{(f',f),\rho}\,\Phi(\sigma)=\sigma$. Thus, for any  $\chi^2_{\tilde{f}}$-divergence it holds that:
			\begin{align}
				\chi^2_{\tilde{f}}(\rho||\sigma)\geq\chi^2_{\tilde{f}}(\Phi(\rho)||\Phi(\sigma) ) \geq \chi^2_{\tilde{f}}(\widetilde{\Phi}_{(f',f),\rho}\Phi(\rho)||\widetilde{\Phi}_{(f',f),\rho}\Phi(\sigma) )=\chi^2_{\tilde{f}}(\rho||\sigma)\,,
			\end{align}
			proving the claim. 
		\end{proof}
		In the result above it should be noticed that even if one needs the commutativity of $\rho$ and $\sigma$, their evolved versions $\Phi(\rho)$ and $\Phi(\sigma)$ can be general. Still, it should be noticed that if one requires $\widetilde{\Phi}_{(f',f),\rho}$ to be a channel, the only unconstrained recovery result of the form in Eq.~\eqref{eq:116} is given by the Petz recovery map, i.e., for $f(x) = f'(x) = \sqrt{x}$. 
		
		Finally, we provide yet another way of singling out the Petz recovery map from the generalised family $\widetilde{\Phi}_{(f',f),\rho}$. Indeed, this one corresponds to the best recovery in a specific sense:
		\begin{theorem}\label{thm:petzSup}
			Consider the family of positive superoperators $\widetilde{\Phi}_{(f',f),\rho}\Phi$ such that $\widetilde{\Phi}_{(f',f),\rho}$ is CP. One can introduce the partial order $\widetilde{\Phi}_{(f',f),\rho}\Phi\geq \widetilde{\Phi}_{(\tilde{f}',\tilde{f}),\rho}\Phi \iff(\widetilde{\Phi}_{(f',f),\rho}\Phi- \widetilde{\Phi}_{(\tilde{f}',\tilde{f}),\rho}\Phi)\geq0$. Then, there is a unique supremum given by:
			\begin{align}
				\sup_{f,f'} \, \widetilde{\Phi}_{(f',f),\rho}\Phi =  \widetilde{\Phi}_{P,\rho}\,\Phi\,.
			\end{align} 
		\end{theorem}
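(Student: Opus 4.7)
The plan is to exploit the integral representation of Eq.~\eqref{eq:95}, which characterises every CP generalised Petz recovery map as a symmetric average of modular-shifted standard Petz maps. Under this representation, the Petz is the extremal member of the family obtained for $f(x)=f'(x)=\sqrt{x}$, in which case the measures $\de\nu^+_{f'}$ and $\de\nu^-_{f}$ collapse to a single Dirac mass at the origin (and both $\J_{\sqrt{x}}\big|_\rho[A] = \rho^{1/2}A\rho^{1/2}$ and its inverse are manifestly CP). The intuition is that every other member of the family arises from smearing the Petz by modular rotations, and that this smearing produces a contraction.

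After checking that the Petz map is a legitimate member of the family, I would recast Eq.~\eqref{eq:95} using the modular-group factorisation $\mathcal{V}_\rho(\tfrac{1}{2} - it) = \mathcal{V}_\rho(\tfrac{1}{2})\circ\mathcal{U}_t$, where $\mathcal{U}_t[X] := \rho^{-it}X\rho^{it}$ is Hilbert--Schmidt unitary (the modular flow). Commuting the integrals past the outer $\mathcal{V}_\rho(\pm\tfrac{1}{2})$ factors yields
\begin{align*}
\widetilde{\Phi}_{(f',f),\rho}\Phi = \mathcal{V}_\rho(\tfrac{1}{2})\,\bar{\mathcal{U}}_{f'}\,\Phi^\dagger\,\bar{\mathcal{V}}_{f}\,\mathcal{V}_{\Phi(\rho)}(-\tfrac{1}{2})\,\Phi,
\end{align*}
where $\bar{\mathcal{U}}_{f'} := \int \de\nu^+_{f'}(t)\,\mathcal{U}_t$ and $\bar{\mathcal{V}}_{f}$ is the analogous symmetric average on the image side, built from the modular flow of $\Phi(\rho)$. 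The symmetry of the measures combined with the spectral decomposition of the modular flow on the eigenmodes of $\rho$ (with eigenvalues $e^{it\ln(\rho_i/\rho_j)}$) implies that both $\bar{\mathcal{U}}_{f'}$ and $\bar{\mathcal{V}}_{f}$ are Hilbert--Schmidt-self-adjoint with spectra contained in $[-1,1]$, attaining the value $1$ on every mode only when the measure is $\delta_0$.

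The next step is to apply the elementary identity $M - XMY = (\idO-X)M + XM(\idO-Y)$ to the sandwich $\Phi^\dagger - \bar{\mathcal{U}}_{f'}\Phi^\dagger\bar{\mathcal{V}}_{f}$, yielding
\begin{align*}
\widetilde{\Phi}_{P,\rho}\Phi - \widetilde{\Phi}_{(f',f),\rho}\Phi = \mathcal{V}_\rho(\tfrac{1}{2})\Big[(\idO-\bar{\mathcal{U}}_{f'})\Phi^\dagger + \bar{\mathcal{U}}_{f'}\Phi^\dagger(\idO-\bar{\mathcal{V}}_{f})\Big]\mathcal{V}_{\Phi(\rho)}(-\tfrac{1}{2})\Phi.
\end{align*}
Positivity of the right-hand side, in the partial order specified in the theorem, should then follow from the non-negativity of $\idO - \bar{\mathcal{U}}_{f'}$ and $\idO - \bar{\mathcal{V}}_{f}$, combined with the CP character of the flanking pieces $\mathcal{V}_\rho(\pm\tfrac{1}{2})$, $\Phi^\dagger$ and $\Phi$. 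Uniqueness of the supremum is then immediate: any pair $(f,f')$ reproducing $\widetilde{\Phi}_{P,\rho}\Phi$ must force $\bar{\mathcal{U}}_{f'} = \bar{\mathcal{V}}_{f} = \idO$ on every mode, which by the cosine-transform argument forces both measures to be $\delta_0$ and hence $f = f' = \sqrt{x}$.

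The main technical hurdle is the middle step, because the algebraic identity above only reduces the claim to showing that the two twisted contributions of the form $\Phi^\dagger(\idO-\bar{\mathcal{V}}_{f})$ remain positive once sandwiched between the Fisher-like factors $\mathcal{V}_\rho(\tfrac{1}{2})$ and $\mathcal{V}_{\Phi(\rho)}(-\tfrac{1}{2})\Phi$ in the scalar-product sense required by the theorem (under which $\widetilde{\Phi}_{(f',f),\rho}\Phi$ is self-adjoint by Property~\ref{item:positiveSelfAdjoint}). Establishing this cleanly requires either pushing the contraction through $\Phi^\dagger$ using the modular-group intertwiners, or verifying the claim directly on each spectral eigenmode of $\J_{f'}\big|_\rho\Phi^\dagger\J_f^{-1}\big|_{\Phi(\rho)}\Phi$ in the simultaneous diagonalisation obtained via the same similarity transformation $\J_{f'}^{1/2}\big|_\rho\,(\,\cdot\,)\,\J_{f'}^{-1/2}\big|_\rho$ used in the proof of Property~\ref{item:positiveSelfAdjoint}.
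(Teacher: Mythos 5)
Your route through the integral representation of Eq.~\eqref{eq:95} is genuinely different from the paper's, but it stops exactly where the real work begins, and the step you defer does not go through as stated. After the factorisation $\widetilde{\Phi}_{(f',f),\rho}\Phi = \mathcal{V}_\rho(\tfrac{1}{2})\,\bar{\mathcal{U}}_{f'}\,\Phi^\dagger\,\bar{\mathcal{V}}_{f}\,\mathcal{V}_{\Phi(\rho)}(-\tfrac{1}{2})\,\Phi$ and the identity $M-XMY=(\idO-X)M+XM(\idO-Y)$, you are left with two summands, each a product of superoperators that are individually positive (or contractions) but do not commute. A product of non-commuting positive operators is not positive, and the two summands need not even be individually positive in spectrum; only their sum is claimed to be, so the decomposition does not reduce the problem. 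Your fallback of "verifying the claim on each spectral eigenmode" also fails as stated, because different pairs $(f,f')$ yield maps $\widetilde{\Phi}_{(f',f),\rho}\Phi$ that are self-adjoint with respect to \emph{different} scalar products ($\J_{f'}^{-1}\big|_\rho$ depends on $f'$) and are not simultaneously diagonalisable, so there is no common eigenmode basis in which to compare them.

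The missing idea is that the comparison should be done one factor at a time, using the operator order on the Fisher superoperators rather than the integral representation. From the definition in Eq.~\eqref{cf:eq:generalisedPetzDef}, $\widetilde{\Phi}_{(f',f),\rho}\Phi = \J_{f'}\big|_{\rho}\circ M_f$ with $M_f:=\Phi^\dagger\,\J_f^{-1}\big|_{\Phi(\rho)}\,\Phi\geq 0$ fixed. The CP requirement forces $f'(x)\leq\sqrt{x}$, hence $\J_{f'}\big|_\rho\leq\J_{f_{SQ}}\big|_\rho$ by Eq.~\eqref{cf:eq:orderJ}; since $(\J_{f_{SQ}}\big|_\rho-\J_{f'}\big|_\rho)M_f$ is similar to $M_f^{1/2}(\J_{f_{SQ}}\big|_\rho-\J_{f'}\big|_\rho)M_f^{1/2}\geq0$, the supremum over $f'$ is attained at $f_{SQ}$. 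The CP requirement on $\J_f^{-1}\big|_{\Phi(\rho)}$ forces $f(x)\geq\sqrt{x}$, which (inversion reversing the order) gives $\J_f^{-1}\big|_{\Phi(\rho)}\leq\J_{f_{SQ}}^{-1}\big|_{\Phi(\rho)}$, and the same sandwich argument applied to the inner factor gives the optimum at $f_{SQ}$ again. The two optimisations are independent, yielding $\widetilde{\Phi}_{P,\rho}\Phi$. Your integral representation is not wasted — it is one way to see that $\bar{\mathcal{U}}_{f'}\leq\idO$, i.e., that $\J_{f'}\big|_\rho\leq\J_{f_{SQ}}\big|_\rho$ — but the monotonicity of the composition under a fixed positive factor is the step you must supply, and it replaces, rather than follows from, your algebraic decomposition.
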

		This results means that among all possible recovery channels $\widetilde{\Phi}_{(f',f),\rho}$, the Petz is the one that maximises the spectrum of $\widetilde{\Phi}_{(f',f),\rho}\Phi $. 
		\begin{proof}
			As it was mentioned in \hyperref[box:proofGeneralisedPetz]{Box 8}, if $f(x)\leq f'(x)$ for every $x\in\RR^+$, this implies that $\J_{f}\big|_{\rho}\leq\J_{f'}\big|_{\rho}$ (see Eq.~\eqref{cf:eq:orderJ}). Moreover, a necessary condition for $\J_{f}\big|_{\rho}$ to be CP is that $f(x)\leq \sqrt{x}$, so one can maximise $\widetilde{\Phi}_{(f',f),\rho}$ by choosing the maximum $f'$:
			\begin{align}
				\sup_{f'} \, \widetilde{\Phi}_{(f',f),\rho}\Phi = \widetilde{\Phi}_{(f_{SQ},f),\rho}\Phi\,,
			\end{align}
			where we denote by $f_{SQ}(x)=\sqrt{x}$. On the other hand, for $\J_{f}^{-1}\big|_{\rho}\geq\J_{f'}^{-1}\big|_{\rho}$ if and only if $f(x)\leq f'(x)$ for every $x\in\RR^+$. Moreover, a necessary condition for $\J_{f}^{-1}\big|_{\rho}$ to be CP is that $f(x)\geq \sqrt{x}$. Hence, the maximum $\widetilde{\Phi}_{(f',f),\rho}$ in this case is reached for the minimum function $f$, that is once again for $f_{SQ}$. Finally, the fact that the two supremum are independent of each other proves the claim.
		\end{proof}
	\end{additional-info}

	\subsection{Fisher information and detailed balance}\label{detailedBalance}
	
	In this section we explain how the notion of detailed balanced evolutions can be given in terms of self-adjointness with respect to the Fisher information scalar product. To this end, it is useful to start with the study of the classical case to avoid the complications arising from the variety of different quantum Fisher information.
	
	\paragraph*{{Classical case.}}
	As discussed in~\hyperref[box:markovianEvolutionsClassical]{Box 6}, the dynamics of a classical divisible evolution~$\Phi_t$ is described by a stochastic matrix, which can be equivalently characterised in terms of its rate matrices $R_s$ ($\forall s\leq t$, see Eq.~\eqref{eq:rateMatrixClassical}), which induce the dynamics: 
	\begin{align}
		\frac{\de}{\dt}\,\Phi_t(\rho) =R_t (\rho)\,,
	\end{align}
	where $\rho$ is in this case a diagonal matrix.
	As it was mentioned in~\hyperref[box:markovianEvolutionsClassical]{Box 6}, rate matrices can be generically decomposed as:
	\begin{align}
		R_t=\sum_{i\neq j} \; a^{(t)}_{i\leftarrow j}\left( \ketbra{i}{j}-\ketbra{j}{j} \right)\,,
	\end{align}
	where $a^{(t)}_{i\leftarrow j}$ are real coefficients that are non-negative for Markovian evolutions. In this context, one can formulate the condition of detailed balanced evolution (with respect to some state $\pi$) in terms of the rates alone:
	\begin{align}
		a^{(t)}_{i\leftarrow j}\,\pi_j = a^{(t)}_{j\leftarrow i}\,\pi_i \,, \label{cf:eq:classicalDB}
	\end{align}
	where the equation above holds for every $i$ and $j$. This condition directly implies that $\pi$ is a steady state, as it can be readily verified by a straightforward computation. 
	Indeed, detailed balance corresponds to a stronger notion of equilibration: not only the dynamics has $\pi$ as a fixed points, but it is also time symmetric at equilibrium. In fact, if one interprets the rates $a^{(t)}_{j\leftarrow i}$ as the probability per unit of time of the transition $j\leftarrow i$, Eq.~\eqref{cf:eq:classicalDB} can be read as the condition that the probability of observing the transition $j\leftarrow i$ is equal to the one for the reverse transition $j\rightarrow i$, when the system is at equilibrium. For this reason, detailed balance encodes the request of microscopic reversibility of the general dynamics, i.e., the fact that at a molecular level the equations of motion are time symmetric.
	
	We show now that Eq.~\eqref{cf:eq:classicalDB} can be naturally formulated in terms of the Fisher information scalar product. In analogy with the quantum case (Eq.~\eqref{cf:eq:monotoneMetrics}), we use the notation for the scalar product:
	\begin{align}
		K_\pi(\delta\rho,\delta\sigma) := \Tr{\delta\rho\,\cJ_\pi^{-1}[\delta\sigma]} =\Tr{\delta\rho \,\delta\sigma\,\pi^{-1}}\,,
	\end{align}
	where we implicitly defined $\cJ_\pi$ to be the component-wise multiplication by $\pi$, and all the operators involved commute. This scalar product naturally emerges when one is considering variations of states (as proved in Thm.~\ref{cf:thm:Ruskai}), thus from a differential geometric point of view, the two vectors $\delta\rho$ and $\delta\sigma$ should be elements of the tangent space of the state space (i.e., Hermitian, traceless operators and, in this case, diagonal). 
	
	Moreover, one can also interpret $\cJ_\pi$ as the scalar product on the cotangent space, i.e., the space of observables: indeed, a metric on the tangent space naturally induces one on its dual by taking the pointwise matrix inverse~\cite{oneillSemiRiemannianGeometryApplications1983}. Hence, one can also define the Fisher scalar product on the space of observables as:
	\begin{align}
		\label{cf:eq:classicalObsScalProd}
		{K}^{o}_\pi(A,B) := \Tr{A\,\cJ_\pi[B]} =\Tr{A \,B\,\pi}\,,
	\end{align}
	where in this case $A$ and $B$ are not required to be traceless. It should be noticed that quantities analogous to ${K}^o_\pi(A,B)$ naturally emerge in statistical mechanics and in linear response theory when studying two-point correlation functions.
	
	Since the definition of the adjoint of a superoperator $\Phi$ depends on the underlying scalar product, it is useful to present its expression when considering $K_\pi$. Then, in this context it satisfies the property:
	\begin{align}
		\nonumber
		K_\pi(\delta\rho,\Phi(\delta\sigma)) &= \Tr{\delta\rho\,\cJ_\pi^{-1}[\Phi(\delta\sigma)]} = \Tr{\cJ_\pi^{-1}[\delta\rho]\Phi(\delta\sigma)} =\Tr{\Phi^\dagger\circ\cJ_\pi^{-1}[\delta\rho]\,\delta\sigma} =\\
		&= \Tr{\norbra{\cJ_\pi\,\Phi^\dagger\,\cJ_\pi^{-1}}(\delta\rho)\,\cJ_\pi^{-1}[\delta\sigma]}= K_\pi (\widetilde{\Phi}(\delta\rho),\delta\sigma)\,,\label{cf:eq:242}
	\end{align}
	where we implicitly defined $\widetilde{\Phi}:= \cJ_\pi\,\Phi^\dagger\,\cJ_\pi^{-1}$. Since the equation above holds for any $\delta\rho$ and $\delta\sigma$ in the tangent space, we can identify  $\widetilde{\Phi}$ with the adjoint of $\Phi$ with respect to the Fisher information metric. Then, self-adjointness in this context takes the form $\widetilde{\Phi}=\Phi$, which can be rewritten as $\Phi\circ\cJ_\pi=\cJ_\pi\circ\Phi^\dagger$. In coordinates this condition reads:
	\begin{align}
		\Phi_{i,j}\,\pi_j ={\Phi}_{j,i}\,\pi_i\,,\label{eq:128x}
	\end{align}
	which has a striking similarity with Eq.~\eqref{cf:eq:classicalDB}. Indeed, it follows from Eq.~\eqref{eq:128x} that the requirement that $R_t$ is detailed balanced exactly means that $\widetilde{R}_t=R_t$, i.e., the generator of the dynamics is self-adjoint with respect to the Fisher information scalar product.
	
	This result was derived in the Schrödinger picture, meaning that the states are the only evolving objects, while observables are static quantities. The dual situation, dubbed Heisenberg picture, is the one in which states are fixed in time, while the whole dynamics is relegated to observables. In this case it is well known that the generator of the dynamics is given by $R_t^\dagger$. Moreover, as it was argued above, the natural Fisher scalar product is the one given by ${K}^o_\pi$. We denote the adjoint with respect to this scalar product by $\widetilde{\Phi}^o := \cJ_\pi^{-1}\,\Phi^\dagger\,\cJ_\pi$, where this condition can be verified carrying out calculations completely analogous to the one that led to Eq.~\eqref{cf:eq:242}. It is easy to verify that:
	\begin{align}
		\widetilde{\Phi}=\Phi\qquad\iff\qquad (\widetilde{\Phi^\dagger})^o = \Phi^\dagger\,,
	\end{align} 
	which implies that $R_t$ is $K_\pi$-self-adjoint if and only if $R_t^\dagger$ is $K^o_\pi$-self-adjoint. This shows that one can formulate the condition of being detailed balanced both in the Schrödinger and in the Heisenberg picture, resorting only to the use of Fisher scalar products.
	
	Putting these results together we obtain the theorem:
	\begin{theorem}\label{cf:thm:classicalDB}
		The following conditions are equivalent in the classical case:
		\begin{enumerate}
			\item the dynamics is detailed balanced, i.e., the rate matrix is characterised by coefficients satisfying 
			\begin{align}
				a^{(t)}_{i\leftarrow j}\,\pi_j = a^{(t)}_{j\leftarrow i}\,\pi_i\,;\label{cf:eq:classDBrates}
			\end{align}
			\item the rate matrix $R_t$ is self-adjoint with respect to the Fisher scalar product:
			\begin{align}
				\widetilde{R}_t=R_t\,;
			\end{align}
			\item the rate matrix in the Heisenberg picture (i.e., $R_t^\dagger$) is self-adjoint with respect to the dual Fisher metric:
			\begin{align}
				(\widetilde{R_t^\dagger })^o= R_t^\dagger\,.
			\end{align}
		\end{enumerate}
	\end{theorem}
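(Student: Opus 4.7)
The plan is to observe that both equivalences follow almost immediately by chaining together the coordinate and operator-level identities established just before the statement, so the work is essentially bookkeeping rather than new computation.

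First, I would prove (1) $\iff$ (2) by reading the condition in Eq.~\eqref{eq:128x} in components after substituting $\Phi \mapsto R_t$. Using the decomposition of the rate matrix recalled at the start of the section, the off-diagonal matrix elements are $(R_t)_{i,j} = a^{(t)}_{i\leftarrow j}$ for $i \neq j$, so the identity $(R_t)_{i,j}\,\pi_j = (R_t)_{j,i}\,\pi_i$ is exactly the detailed balance condition $a^{(t)}_{i\leftarrow j}\,\pi_j = a^{(t)}_{j\leftarrow i}\,\pi_i$, while for $i=j$ it is trivially satisfied. Conversely, detailed balance gives the coordinate condition in the off-diagonal sector and, upon summing over $i\neq j$, automatically enforces $R_t\,\pi = 0$, so the diagonal sector is also consistent.

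Next, I would prove (2) $\iff$ (3) by simply unpacking the two definitions of adjoint displayed above the theorem. Self-adjointness in the Schrödinger picture, $\widetilde{R}_t = R_t$, reads $\cJ_\pi\, R_t^\dagger\, \cJ_\pi^{-1} = R_t$, which is equivalent to $R_t^\dagger = \cJ_\pi^{-1}\, R_t\, \cJ_\pi$. On the other hand, applying the dual adjoint to $R_t^\dagger$ yields $(\widetilde{R_t^\dagger})^o = \cJ_\pi^{-1}\,(R_t^\dagger)^\dagger\, \cJ_\pi = \cJ_\pi^{-1}\, R_t\, \cJ_\pi$, so the requirement $(\widetilde{R_t^\dagger})^o = R_t^\dagger$ reduces to the same operator equation. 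The two conditions are therefore literally equal.

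The main (very mild) obstacle is notational: the two scalar products $K_\pi$ and $K^o_\pi$ induce adjoints in which $\Phi^\dagger$ is conjugated by $\cJ_\pi$ in \emph{opposite} orderings, so one must be careful to keep track of which object is the primary operator and which is its formal Hilbert--Schmidt adjoint. Once this is done, no further input is required, as Theorem~\ref{cf:thm:classicalDB} is simply the assembly of three equivalent rephrasings of the single algebraic identity $\Phi\circ \cJ_\pi = \cJ_\pi \circ \Phi^\dagger$, specialised to $\Phi = R_t$.
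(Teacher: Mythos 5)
Your proposal is correct and follows essentially the same route as the paper, which establishes the theorem by deriving the adjoint formulae $\widetilde{\Phi}= \cJ_\pi\,\Phi^\dagger\,\cJ_\pi^{-1}$ and $\widetilde{\Phi}^o= \cJ_\pi^{-1}\,\Phi^\dagger\,\cJ_\pi$, reading the self-adjointness condition in coordinates as Eq.~\eqref{eq:128x}, and noting the operator identity linking the Schr\"odinger and Heisenberg conditions. Your only addition — the observation that detailed balance forces $R_t\,\pi=0$ — is true but superfluous here, since the diagonal entries of the self-adjointness condition are identically satisfied regardless.
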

	
	This characterisation shows how the condition of being detailed balanced directly corresponds to the self-adjointness of the generator of the dynamics with respect to a properly defined scalar products. With this hindsight in mind, we can now pass to the quantum regime.
	
	\paragraph*{{Quantum case.}}
	Consider now the dynamics induced by the Lindbladian operator in the form in Eq.~\eqref{cf:eq:lindDiagonal}, i.e.:
	\begin{align}
		\lind[\rho] = -i[H,\rho] + \sum_{\alpha}^{d^2} \;\lambda_\alpha\,\norbra{A_\alpha \,\rho\, A_\alpha^\dagger - \frac{1}{2}\{A_\alpha^\dagger \,A_\alpha, \rho\}}\,,
	\end{align}
	where we dropped the time dependence to simplify the notation. Since the Fisher information is invariant under the action of a purely Hamiltonian dynamics, whereas it is contracting otherwise, we split the two terms in the Lindbladian by introducing the notation $\mathcal{U}(\rho) := -i[H,\rho]$, and we call \emph{dissipator} the difference $\mathcal{L}_\mathcal{D} := \lind-\mathcal{U}$. It should be noticed that $\mathcal{U}$ is skew-Hermitian with respect to the Hilbert-Schmidt scalar product, meaning that:
	\begin{align}
		\Tr{A \, \mathcal{U}(B)} = - \Tr{\mathcal{U}(A)\, B}\,.
	\end{align} 
	Given the structure of the Lindbladian above, we can introduce the notion of quantum detailed balance. Historically, one of the first formalisations of this notion was provided by Alicki in~\cite{alicki1976detailed}, and it is based on the following scalar product on the space of observables:
	\begin{align}
		{K}^{o}_\pi(A,B) := \Tr{A B\,\pi}\,,\label{cf:eq:quantumObsScalProd}
	\end{align}
	where in this case $A$, $B$ and $\pi$ are not required to commute in general (unlike in Eq.~\eqref{cf:eq:classicalObsScalProd}). Similarly to the case of classical systems, this scalar product is quite natural as it is related to two-points correlation functions, but it should also be kept in mind that it is not part of the Fisher family. Using the same notation as in the classical case, we denote  by $\widetilde{\Phi}^o$ the adjoint of the map $\Phi$ with respect to the scalar product in Eq.~\eqref{cf:eq:quantumObsScalProd}. Then, the definition proposed by Alicki reads:
	\begin{definition}[Heisenberg picture~\cite{alicki1976detailed}]\label{cf:def:alicki}
		The dynamics generated in the Heisenberg picture by the operator $\mathcal{L}^\dagger$ is detailed balanced if the three conditions are satisfied:
		\begin{enumerate}
			\item $\mathcal{L}^\dagger$ is normal with respect to the scalar product ${K}^{o}_\pi$:
			\begin{align}
				[\mathcal{L}^\dagger,(\widetilde{\mathcal{L}^\dagger})^o]=0\,;
			\end{align} 
			\item the commutator $\mathcal{U}$ is skew-Hermitian with respect to ${K}^{o}_\pi$:
			\begin{align}
				\widetilde{\mathcal{U}}^o = -\mathcal{U}\,;
			\end{align}
			\item the dissipator $\lind_\mathcal{D}^\dagger$ is self-adjoint with respect to ${K}^{o}_\pi$: 
			\begin{align}
				(\widetilde{\lind_\mathcal{D}^\dagger})^o=\lind_\mathcal{D}^\dagger \,.
			\end{align}
		\end{enumerate}
	\end{definition}
	Interestingly, from this definition one can deduce a structural characterisation of detailed balanced Lindbladians (see~\cite{alicki1976detailed}), which explicitly reads:
	\begin{definition}[Structural definition]
		\label{cf:def:breuer}
		The dynamics generated by the Lindbladian operator $\mathcal{L}$ satisfies detailed balance if its diagonal form can be written as:
		\begin{align}
			\lind(\rho)=-i[H,\rho]+\sum_{\omega, i} \;\lambda_{i}^\omega \left(A_i^{\omega}\,\rho \,(A_i^{\omega})^\dagger -\frac{1}{2}\{(A_i^{\omega})^\dagger A_i^{\omega},\rho \} \right)\,,\label{cf:eq:detailedBalancedLindbladian}
		\end{align}
		and the following conditions hold:
		\begin{enumerate}
			\item $[H,\pi]=0$;
			\item $(A_i^{\omega})^\dagger = A_i^{-\omega}$;
			\item $\pi \,A_i^{\omega}\, \pi^{-1}=e^{\omega}\, A_i^{\omega}$;\label{cf:item:cond3}
			\item $\lambda_{i}^\omega=e^{\omega}\,\lambda_{i}^{-\omega}$.\label{cf:item:DBrates}
		\end{enumerate} 
	\end{definition}
	In the literature one usually finds Def.~\ref{cf:def:breuer} as the usual definition of detailed balance, as it mirrors the same structural properties we saw for classical systems~\cite{breuer2002theory}. Indeed, from the condition~\ref{cf:item:cond3} we know that $A_i^{\omega}$ are eigenoperators of the modular operator $\LL_\pi\RR_\pi^{-1}$, with eigenvalue $e^\omega$. On the other hand,  all the eigenvalues of $\LL_\pi\RR_\pi^{-1}$ are of the form $\pi_i/\pi_j$,  so the only values of $\omega$ that are allowed are the ones satisfying the constraint $e^\omega = \pi_i/\pi_j$ for some $i$ and $j$. Then, by substituting this expression into condition~\ref{cf:item:DBrates}, one obtains the analogous characterisation at the level of the rates we found for classical detailed balanced dynamics (see Eq.~\eqref{cf:eq:classDBrates}).
	
	Despite this positive result, it should be noticed that the choice of the scalar product in Eq.~\eqref{cf:eq:quantumObsScalProd} is somehow arbitrary, as in the passage from commuting observables to the non-commuting case there are many different possible orderings that can be used to extend the multiplication operator $\cJ_\pi$.  Moreover, it should be noticed that the scalar product ${K}^{o}_\pi$ in Eq.~\eqref{cf:eq:quantumObsScalProd} is not monotone under CP-maps, a property that its classical counterpart had. For these reasons, in the following we show how one can define detailed balance through the help of quantum Fisher information metrics.
	
	A first possible definition is given by imposing $\widetilde \Phi_{f} = \Phi$ for some Fisher scalar product $K_{f,\pi}$ (see Eq.~\eqref{cf:eq:monotoneMetrics}), where we simplified the notation from the one used in the generalised Petz map\footnote{The reason why we do not consider the more general definition $\widetilde{\Phi}^{(f',f)}$ is that the adjoint operation should be involutive in finite dimensions, that is the adjoint of the adjoint should give back the original map. As explained in \hyperref[box:proofGeneralisedPetz]{Box 8} this is only true when $f'\equiv f$.} defined in Eq.~\eqref{cf:eq:generalisedPetzDef}, i.e., $\widetilde \Phi_{f}  :=\J_f\big|_{\pi}\,\Phi_t^\dagger\, \J_f^{-1}\big|_{\pi}$. As it could be expected, the structure induced is much richer in this case. In fact, one can show that different $K_{f,\pi}$ induce inequivalent notions of detailed balance, by constructing dynamics that are detailed balanced with respect to one $f$ but not with respect to others~\cite{temmeH2divergenceMixingTimes2010}. Moreover, these conditions are all weaker than the one provided by Def.~\ref{cf:def:alicki}. This means that dynamics that are detailed balanced with respect to $K_\pi^o$ are also detailed balanced with respect with $K_{f,\pi}$, but not the other way round.
	Since in principle there is no preferred definition of quantum Fisher information, we choose to impose the condition of detailed balance for all the $K_{f,\pi}$ at the same time: 
	\begin{definition}[Schroedinger picture]
		\label{cf:def:AS}
		The dynamics generated by the Lindbladian operator $\mathcal{L}$ is detailed balance if for every standard monotone function $f$ the following holds:
		\begin{enumerate}
			\item $\mathcal{L}$ is normal with respect to all the scalar products $K_{f,\pi}$:
			\begin{align}
				[\mathcal{L},\widetilde{\mathcal{L}}_f]=0\,;
			\end{align} 
			\item the commutator $\mathcal{U}$ is skew-Hermitian with respect to $K_{f,\pi}$:
			\begin{align}
				\widetilde{\mathcal{U}}_f = -\mathcal{U}\,;
			\end{align}
			\item the dissipator $\lind_\mathcal{D}$ is self-adjoint with respect to $K_{f,\pi}$: 
			\begin{align}
				(\widetilde{\lind_\mathcal{D}})_f=\lind_\mathcal{D}\,.
			\end{align}
		\end{enumerate}
	\end{definition}
	
	By this point we have three different definitions of what it means for a Markovian dynamics to be detailed balance. Then, we can wrap everything together, providing a characterisation of their interdependency:
	\begin{theorem}\label{cf:thm:eqDB}
		The following conditions are equivalent:
		\begin{enumerate}
			\item  the generator of the dynamics in the Heisenberg picture $\lind^\dagger$ satisfies the adjointness relations in Def.~\ref{cf:def:alicki};
			\item the Lindbladian $\lind$ satisfies the structural characterisation in Def.~\ref{cf:def:breuer}.
			\setcounter{counterN}{\value{enumi}}
		\end{enumerate}
		These conditions imply the condition:
		\begin{enumerate}
			\setcounter{enumi}{\value{counterN}}
			\item the generator of the dynamics in the Schroedinger picture $\lind$ satisfies the adjointness relations in Def.~\ref{cf:def:AS}.
		\end{enumerate}
		Moreover, if the Hamiltonian $H$ is non-degenerate the three conditions are equivalent.
	\end{theorem}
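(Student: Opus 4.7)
My plan is to establish the theorem by proving three implications in turn: $(1 \Leftrightarrow 2)$ and $(2 \Rightarrow 3)$ without assumptions, and $(3 \Rightarrow 1)$ under non-degeneracy of $H$. The first equivalence is the classical result of Alicki~\cite{alicki1976detailed}, which already identifies Def.~\ref{cf:def:alicki} with the structural characterisation of Def.~\ref{cf:def:breuer}. I would cite this and focus on the remaining two implications, which are the ones that genuinely involve the Fisher scalar products $K_{f,\pi}$.

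For $(2 \Rightarrow 3)$, the central observation is that $\J_f|_\pi = \RR_\pi\, f(\Delta_\pi)$, with $\Delta_\pi := \LL_\pi \RR_\pi^{-1}$, is a function of the modular operator, and that by condition~3 of Def.~\ref{cf:def:breuer} each $A_i^\omega$ is a $\Delta_\pi$-eigenoperator with eigenvalue $e^\omega$. This yields simple identities such as $\J_f[A_i^\omega] = f(e^\omega)\,A_i^\omega \pi$ and $\J_f^{-1}[A_i^\omega] = f(e^\omega)^{-1} A_i^\omega \pi^{-1}$. Expanding $\lind_\mathcal{D}\,\J_f$ and $\J_f\,\lind_\mathcal{D}^\dagger$ and matching their actions on these eigenoperators reduces the $K_{f,\pi}$-self-adjointness of $\lind_\mathcal{D}$ to an algebraic identity combining the rate relation $\lambda_i^\omega = e^\omega \lambda_i^{-\omega}$, the conjugation $(A_i^\omega)^\dagger = A_i^{-\omega}$, and the fundamental symmetry $f(x) = x\, f(x^{-1})$ satisfied by every standard monotone. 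The $K_{f,\pi}$-skew-Hermiticity of $\mathcal{U}$ follows from $[H,\pi]=0$ by functional calculus (both $\LL_H$ and $\RR_H$ then commute with $\Delta_\pi$, and hence with every $\J_f$), and normality of $\lind$ follows from $[\mathcal{U},\lind_\mathcal{D}]=0$, itself a consequence of $[H,\pi]=0$ together with the eigenoperator structure of the $A_i^\omega$.

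The main obstacle is $(3 \Rightarrow 1)$ under non-degeneracy of $H$. The idea is that demanding $K_{f,\pi}$-self-adjointness of $\lind_\mathcal{D}$ for every standard monotone $f$ (condition~3 of Def.~\ref{cf:def:AS}) is so rigid that it forces the modular-eigenoperator structure of Def.~\ref{cf:def:breuer}. Concretely, $\lind_\mathcal{D}^\dagger = \J_f^{-1}\,\lind_\mathcal{D}\,\J_f$ for every $f$ implies, upon comparing pairs $f_1, f_2$ and using $[\RR_\pi,\Delta_\pi]=0$, that $\J_{f_2}\J_{f_1}^{-1} = (f_2/f_1)(\Delta_\pi)$ and hence $[\lind_\mathcal{D},\, (f_2/f_1)(\Delta_\pi)] = 0$ for every admissible pair. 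Varying $f_1, f_2$ -- for instance by linearising near $f_1 = f_2$ along an admissible direction in the convex set of standard monotones (Sec.~\ref{sec:standardMonotonesProperties}) -- generates a family of functions of $\Delta_\pi$ rich enough to distinguish its eigenvalues, and so forces $[\lind_\mathcal{D},\Delta_\pi]=0$, yielding a decomposition $\lind_\mathcal{D} = \sum_\omega \lind_\omega$ into components along the $\Delta_\pi$-eigenspaces. The $K_{f,\pi}$-skew-Hermiticity of $\mathcal{U}$ yields $[H,\pi]=0$; non-degeneracy of $H$ then ensures the $\Delta_\pi$-eigenspaces are simply indexed by pairs of $H$-eigenstates, so the GKLS form of each $\lind_\omega$ produces jump operators obeying condition~3 of Def.~\ref{cf:def:breuer}, and imposing $K_{f,\pi}$-self-adjointness on a single $\lind_\omega$ pins down the rate relation $\lambda_i^\omega = e^\omega \lambda_i^{-\omega}$. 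The technically delicate step is this separation argument: rigorously justifying that ratios of standard monotones distinguish all eigenvalues of $\Delta_\pi$, which is precisely where non-degeneracy of $H$ becomes essential and where the proof will require genuine analytical care.
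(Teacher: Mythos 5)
Your treatment of $(1\Leftrightarrow 2)$ and $(2\Rightarrow 3)$ is sound and essentially matches the paper: the paper also reduces $(2\Rightarrow 3)$ to showing $[\lind,\LL_\pi\RR_\pi^{-1}]=0$ from the eigenoperator structure and then transfers self-adjointness from $K_\pi^o$ to every $K_{f,\pi}$ via the symmetry $f(x)=xf(x^{-1})$ (Lemma~\ref{appM:lem:equivalence_when_commuting} in App.~\ref{app:sec:DB}).

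The gap is in $(3\Rightarrow 1)$. Your separation argument cannot work, for a structural reason independent of any analytical care: every standard monotone satisfies $f(x)=xf(x^{-1})$, hence $f(e^{\omega})/f(e^{-\omega})=e^{\omega}$ for \emph{all} $f$, and therefore every ratio $h=f_2/f_1$ satisfies $h(x)=h(x^{-1})$. The family $\{(f_2/f_1)(\LL_\pi\RR_\pi^{-1})\}$ you generate consists only of inversion-symmetric functions of the modular operator and can never distinguish the eigenvalue $e^{\omega}$ from $e^{-\omega}$. Consequently $[\lind_\mathcal{D},(f_2/f_1)(\LL_\pi\RR_\pi^{-1})]=0$ for all pairs only forces the matrix elements of $\lind_\mathcal{D}$ to vanish unless $\pi_\alpha/\pi_\beta=(\pi_\gamma/\pi_\delta)^{\pm1}$, i.e., it yields the weaker block structure of Eq.~\eqref{appM:eq:D_prop_or}, not $[\lind_\mathcal{D},\LL_\pi\RR_\pi^{-1}]=0$. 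The residual $\omega\to-\omega$ transitions are precisely why Def.~\ref{cf:def:AS} is strictly weaker in general (the paper exhibits a qubit Lindbladian with $H\propto\id$ that is Fisher detailed balanced for every $f$ but violates Def.~\ref{cf:def:alicki}), so no refinement of the "vary $f$" argument alone can close this direction.

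What is missing is the normality condition of Def.~\ref{cf:def:AS}, which you never invoke in this direction. The paper's route is: skew-adjointness of $\mathcal{U}$ applied to $\id$ gives $[H,\pi]=0$; normality gives $[\mathcal{U}+\lind_\mathcal{D},-\mathcal{U}+\lind_\mathcal{D}]=0$, hence $[\mathcal{U},\lind_\mathcal{D}]=0$; and \emph{only then} does non-degeneracy of $H$ enter, through the non-degenerate-gap condition: the common eigenvectors $\ketbra{\alpha}{\beta}$ of $\mathcal{U}$ and $\LL_\pi\RR_\pi^{-1}$ have distinct $\mathcal{U}$-eigenvalues $-i(H_\alpha-H_\beta)$ for $\alpha\neq\beta$, so anything commuting with $\mathcal{U}$ commutes with $\LL_\pi\RR_\pi^{-1}$, and Lemma~\ref{appM:lem:equivalence_when_commuting} then upgrades Fisher self-adjointness to $K_\pi^o$-self-adjointness. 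You placed the burden of non-degeneracy on the separation argument, where it cannot help; it belongs in the step linking $\mathcal{U}$-commutation to modular-operator commutation.
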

	
	The proof of this result is provided in App.~\ref{app:sec:DB}. This shows that the definition based on Fisher scalar products is weaker (i.e., it includes a larger set of dynamics) even when taking into consideration all the possible scalar products at the same time. This should be contrasted with the definition by Alicki, in which a single scalar product is used. The difference between the two arises in the way in which coherences in the eigenbasis of $\pi$ are handled. Still, it should be noticed that in both cases the evolution induced by the unitary part decouples from the dissipative dynamics. In fact, thanks to the normality of the generator, one has:
	\begin{align}
		[\mathcal{U} + \lind_\mathcal{D},-\mathcal{U} + \lind_\mathcal{D}] = 0\qquad\implies\qquad[\mathcal{U} , \lind_\mathcal{D}] = 0\,.
	\end{align}
	This generic property can be used to further constrain $\lind_\mathcal{D}$ in the case in which $H$ is non-degenerate, allowing for the identification of Def.~\ref{cf:def:alicki} and Def.~\ref{cf:def:AS} in this case.
	
	If $H$ is degenerate, on the other hand, we can provide the following structural characterisation of dissipators $\lind_\mathcal{D}$ that are detailed balanced according to Def.~\ref{cf:def:AS}:
	\begin{align}\label{cf:eq:structAS}
		\lind_\mathcal{D}[\rho] = \sum_{\omega, i} \;\norbra{\lambda_{i}^\omega \left(A_i^{\omega}\,\rho \,(A_i^{\omega})^\dagger -\frac{1}{2}\{(A_i^{\omega})^\dagger A_i^{\omega},\rho \} \right)+\mu^\omega_i\,B_i^\omega \,\rho^T (B_i^\omega )^\dagger}\,,
	\end{align}
	where the following conditions are satisfied:
	\begin{enumerate}
		\item $(A_i^{\omega})^\dagger = A_i^{-\omega}$ and $(B_i^{\omega})^\dagger = B_i^{-\omega}$;
		\item $\pi \,A_i^{\omega}\, \pi^{-1}=e^{\omega}\, A_i^{\omega}$ and $\pi \,B_i^{\omega}\, \pi^{-1}=e^{\omega}\, B_i^{\omega}$;
		\item $\lambda_{i}^\omega=e^{\omega}\,\lambda_{i}^{-\omega}$ and $\mu_{i}^\omega=e^{\omega}\,\mu_{i}^{-\omega}$;
		\item $\lambda_i^\omega \geq 0 $ and $\sum_i \mu_{i}^\omega = 0$ $\forall\omega$. \label{cf:item:cond44}
	\end{enumerate} 
	We defer the proof to App.~\ref{appM:sec:structuralCh}. It should be noticed that the characterisation in Eq.~\eqref{cf:eq:structAS} clearly generalises Def.~\ref{cf:def:breuer}, as the first can be considered a restriction of the latter to the case in which $\mu_i^\omega= 0$ for all $i$ and $\omega$. Moreover, it should also be pointed out that, thanks to the second part of condition~\ref{cf:item:cond44},  the extra term only acts on off-diagonal elements, i.e, it only affects the dynamics of the coherences.
	
	\section{Mathematical properties of  quantum Fisher information}\label{propertiesFisher}
	
	The aim of this section is to explore the main properties of the superoperator $\J_f\big|_\pi $, whose definition we repeat here for convenience:
	\begin{align}
		\J_f\big|_\pi := \RR_\pi \,f\norbraB{\LL_\pi\RR_\pi^{-1}}\,.\label{145}
	\end{align}
	Before doing so, however, it is useful to study the properties of the set of standard monotone functions, which we will denote by $\mathcal{M}$. 
	
	\subsection{The set of standard monotone functions}\label{sec:standardMonotonesProperties}
	As it was discussed in Sec.~\ref{contrastFunctionsToFisher},  every $f\in \mathcal{M}$ satisfies the following three properties:
	\begin{enumerate}
		\item is matrix monotone;
		\item is Fisher adjusted, meaning that $f(1)=1$;
		\item satisfies the symmetry ${f(x) = x f(x^{-1})}$.
	\end{enumerate}
	In~\hyperref[box:standardMonotones]{Box 2} it was shown that one can pointwise bound all the standard monotones as:
	\begin{align}
		\frac{2x}{x+1}\, \leq f(x)\leq\, \frac{x+1}{2}\,,\label{cf:eq:rangeStandardMonotones3}
	\end{align}
	which automatically proves that all $f\in \mathcal{M}$ are bounded for  $x\in[0,1]$, and diverge as $x\rightarrow\infty$ at most in a linear manner. We denote by $f_H(x)$ and $f_B(x)$ the smallest and the largest standard monotone functions. Thanks to the pointwise characterisation of Eq.~\eqref{cf:eq:rangeStandardMonotones3}, it is then natural to introduce the partial order on $\mathcal{M}$ defined by $f_1\leq f_2$ if and only if $f_1(x)\leq f_2(x)$ for every $x\in\RR^+$. Then, every $f\in \mathcal{M}$ satisfies $f_H\leq \,f\,\leq f_B$ (simply rewriting Eq.~\eqref{cf:eq:rangeStandardMonotones3}), but it should be noticed that this is not an if and only if condition: there are functions for which Eq.~\eqref{cf:eq:rangeStandardMonotones3} holds but that are not standard monotones\footnote{One can easily construct examples of this: consider for instance the function $f(x):= \cos(x+x^{-1})^2 \,f_B(x)+\sin(x+x^{-1})^2 f_H(x)$. By construction it is Fisher adjusted, satisfies the symmetry condition $f(x) = x f(x^{-1})$, and satisfies $f_H\leq \,f\,\leq f_B$. Still, it is clearly not monotone, as it oscillates infinitely many times between $f_H$ and $f_B$.}.

	
	There is a fundamental symmetry of $\mathcal{M}$ that was exploited to prove Eq.~\eqref{cf:eq:rangeStandardMonotones3}, given by the transformation $f(x)\rightarrow [Tf](x):= x/f(x)$. Thanks to condition~\ref{item:tTransform} in Lemma~\ref{cf:lemma:TF}, this  transformation maps standard monotones to standard monotones (i.e, $T:\mathcal{M}\rightarrow\mathcal{M}$). Moreover, $T$ is also order-reversing ($f\leq h \iff Th\leq Tf$), from which it directly follows that $Tf_B = f_H$: indeed, since for every $f$ in $\mathcal{M}$ one has $f\,\leq f_B$, then it also holds that $\forall f \in\mathcal{M}$, $Tf_B\,\leq Tf$, which is the defining property of $f_H$ (this is exactly part of the derivation used in  \hyperref[box:standardMonotones]{Box 2} to prove Eq.~\eqref{cf:eq:rangeStandardMonotones3}). Moreover, thanks to the symmetry $f(x) = x f(x^{-1})$ we can rewrite the action of $T$ as:
	\begin{align}
		[Tf](x) = \frac{x}{f(x)} = \frac{1}{f(x^{-1})}\,.\label{cf:eq:85}
	\end{align}
	From this expression one can directly verify that $T$ is involutive, i.e., $TTf = f$. One first consequence is that $Tf_H = TTf_B = f_B$. Moreover, this also means that we can partition $\mathcal{M}$ into two sets $\mathcal{M}_0$ and $\mathcal{M}_1$ such that $\mathcal{M}_0\cup \mathcal{M}_1 = \mathcal{M}$ and $\mathcal{M}_1 = T\mathcal{M}_0$. Moreover, we can choose the intersection between this two sets to be only one element, i.e., the only function for which $f = Tf$, namely $f_{SQ}(x) = \sqrt{x}$. Then, by defining $\mathcal{M}_{00}:= \mathcal{M}_0\setminus f_{SQ}$, we can express the set of standard monotones as:
	\begin{align}
		\mathcal{M} = \norbra{\bigsqcup_{f\in \mathcal{M}_{00} } \{f, Tf \}} \bigsqcup \,f_{SQ}\,,\label{eq:standardMonotonesDecomposition}
	\end{align}
	that is, as the disjoint union of sets of only two elements and, additionally, the square root function. 
	
	Thanks to the possibility of rewriting $T$ as in Eq.~\eqref{cf:eq:85}, this transformation is also particularly useful because it allows to lift some properties of $f\in \mathcal{M}$ to their multiplicative inverse, and vice versa. For example, it was proved in~\cite{hansen2008metric} that the inverse of a standard monotone can be rewritten as:
	\begin{align}
		\frac{1}{f(x)} = \int_0^1\de\mu_f(\lambda)\; \norbra{\frac{\lambda+1}{2}}\norbra{\frac{1}{x+\lambda}+\frac{1}{1+\lambda \,x}}\,,\label{cf:eq:80}
	\end{align}
	where $\de\mu_f(\lambda)$ is a probability measure on $[0,1]$. Moreover, by plugging into the integral above an arbitrary probability measure, the  resulting function is always the inverse of a standard monotone. It then follows from Eq.~\eqref{cf:eq:85} that we can express standard monotone functions as:
	\begin{align}
		f(x) = \frac{x}{Tf(x)}= \int_0^1\de\mu_{Tf}(\lambda)\; \norbra{\frac{1+\lambda}{2}}\norbra{\frac{x}{x+\lambda}+\frac{x}{1+\lambda \,x}}= \int_0^1\de\mu_{Tf}(\lambda)\;f_\lambda(x)\,.\label{cf:eq:81}
	\end{align}
	This shows that $\mathcal{M}$ is a convex set, which coincides with the convex hull of the extreme points denoted by $f_\lambda(x)$ (in particular since the set of $f_\lambda(x)$ is closed, this space has the structure of a Bauer simplex~\cite{hiai2013families}).
	
	\subsection{Properties of quantum Fisher operators}\label{sec:FisherOperatorProperties}
	Now that the main properties of the set $\mathcal{M}$ have been discussed, we can start studying how these reflect on the operators $\J_f\big|_\pi $ and $\J_f^{-1}\big|_\pi $. First, let us explore the main implications of the three defining properties:
	\begin{enumerate}
		\item the monotonicity of $f$ implies that for every CPTP map $\Phi$ the two inequalities hold:
		\begin{align}
			\Phi^\dagger\norbra{\J_f^{-1}\big|_{\Phi(\pi)}}\Phi\;&\leq\; {\J_f^{-1}\big|_{\pi}}\,,\qquad\qquad\qquad
			\Phi\norbra{\J_f\big|_{\pi}}\Phi^\dagger\;\leq\; {\J_f\big|_{\Phi(\pi)}}\,;\label{eq:monotonicityPropertiesPolar}
		\end{align}
		\item the property of $f$ to be Fisher adjusted implies that $\J_f\big|_{\pi}[A] = A\pi$ and $\J_f^{-1}\big|_{\pi}[A] = A\pi^{-1}$ for every $A$ such that $[A,\pi]=0$;
		\item the symmetry ${f(x) = x f(x^{-1})}$ implies that $\J_f\big|_{\pi}$ is adjoint-preserving, i.e., it maps self-adjoint operators into self-adjoint operators.
	\end{enumerate}
	The proofs of these facts, and the other manipulations of the section are deferred to App.~\ref{box:fisherComputations}. 
	
	In order to make the discussion more concrete, it is useful to give the coordinate expression of $\J_f\big|_{\pi}$ in the eigenbasis of $\pi$. To this end, we express the state in coordinates as:
	\begin{align}
		\pi = \sum_i \pi_i \ketbra{i}{i}\,.
	\end{align}
	It should be noticed that the modular operator is diagonal in the basis given by $\{\ketbra{\pi_i}{\pi_j}\}$, as
	$\LL_\pi\RR_\pi^{-1} [\ketbra{\pi_i}{\pi_j}] = \frac{\pi_i}{\pi_j}\ketbra{\pi_i}{\pi_j}$. Thanks to this, it is straightforward to compute the action of $\J_f\big|_{\pi}$ on $\ketbra{i}{j}$ by directly using the definition in Eq.~\eqref{145}:
	\begin{align}
		&\J_f\big|_{\pi} \sqrbra{\ketbra{i}{j}}  = f\norbra{\frac{\pi_i}{\pi_j}}\pi_j \; \ketbra{i}{j}\,;\qquad\qquad\qquad
		\J_f ^{-1}\big|_{\pi}\sqrbra{\ketbra{i}{j}}  = \norbra{f\norbra{\frac{\pi_i}{\pi_j}}\pi_j}^{-1} \; \ketbra{i}{j}\,.\label{cf:eq:182x}
	\end{align}
	Consider now an operator $A$ which can be written in coordinates as $A  := \sum A_{i,j}\ketbra{i}{j}$. Then, $\J_f\big|_{\pi}$ acts as:
	\begin{align}\label{cf:eq:183as}
		\J_f\big|_{\pi}[A] =  \sum_{i,j}  f\norbra{\frac{\pi_i}{\pi_j}}\pi_j\; A_{i,j} \ketbra{i}{j} = \sum({J}_{f,\pi} \circ A)_{i,j} \ketbra{i}{j}\,,
	\end{align}
	where we introduced the matrix ${J}_{f,\pi} := \sum_{i,j} \J_f\big|_{\pi}[\ketbra{i}{j}]$, and we use the circle to denote the Hadamard product, that is a product acting as $A\circ B = \sum_{i,j} A_{i,j}B_{i,j}\ketbra{i}{j}$. Analogously, the same computation can be carried out for $\J_f^{-1}\big|_{\pi}$.
	
	Since all the $\J_f\big|_{\pi}$ are diagonal in the same basis, we can lift the partial order from $\mathcal{M}$ to the operators. Indeed, as it can be verified in coordinates, one has that:
	\begin{align}
		f_1\leq f_2 \quad\implies\quad\J_{f_1}\big|_{\pi} \leq\, \J_{f_2}\big|_{\pi} \,.\label{cf:eq:orderJ}
	\end{align}  
	Hence, it also follows that for every $f\in \mathcal{M}$ the corresponding operator satisfies $\J_{f_H}\big|_{\pi}\leq \J_{f}\big|_{\pi}\leq \J_{f_B}\big|_{\pi}$. This allows to prove a key property of the Fisher operators: if one restricts the spectrum of $\pi$ to be bounded away from zero, then $ \J_{f}^{-1}\big|_{\pi}$ is a bounded operator (as a function of $\pi$). Since for every $\J_{f}^{-1}\big|_{\pi}$ it holds that $\J_{f_B}^{-1}\big|_{\pi}\leq \J_{f}^{-1}\big|_{\pi}\leq \J_{f_H}^{-1}\big|_{\pi}$ (as the inversion reverts the inequalities), we can verify this fact just by studying the behaviour of $\J_{f_B}^{-1}\big|_{\pi}$. Then, by explicitly computing the spectrum using Eq.~\eqref{cf:eq:182x} for $f_B$, which takes the form $\curbra{\frac{2}{\pi_i+\pi_j}}_{i,j}$, we see that this is a bounded set  whenever $\pi$ is bounded away from zero. This property was the key ingredient in the proof of Thm.~\ref{cf:theo:CPiffContracts}.
	
	As it was mentioned in the previous section, the transformation $T$ lifts properties of $f$ to its inverse $1/f$. Indeed, the same also holds when considering Fisher information operators, thanks to the following chain of equalities:
	\begin{align}
		\J_{Tf}\big |_\pi &= \RR_\pi \,Tf(\LL_\pi\RR_\pi^{-1}) = \RR_\pi \frac{1}{f}(\LL_\pi^{-1}\RR_\pi) =\RR_{\pi^{-1}}^{-1}\frac{1}{f}(\LL_{\pi^{-1}}\RR_{\pi^{-1}}^{-1}) = \J^{-1}_{f}\big |_{\pi^{-1}}\,.\label{cf:eq:TTransformApp}
	\end{align}
	This means that the properties of $\J_{f}\big |_\pi$ are in one to one correspondence with the properties of $\J^{-1}_{Tf}\big |_{\pi^{-1}}$, fact that will be particularly relevant in the next section.
	
	Finally, it should be noticed that the convex structure of $\mathcal{M}$ also reflects on the operators $\J_{f}\big|_{\pi}$. For this reason, it is interesting to study the operators associated to $f_\lambda$, namely:
	\begin{align}
		\J_{f_\lambda} \big|_{\pi}[A] 
		&= \norbra{\frac{1+\lambda}{2}} \norbra{(\LL_\pi+\lambda\RR_\pi)^{-1} + (\RR_\pi+\lambda\LL_\pi)^{-1}}[\pi A\pi] \,,
	\end{align}
	Then, thanks to Eq.~\eqref{cf:eq:81}, we have that any arbitrary quantum Fisher operator $\J_{f}\big|_{\pi}$ can be written as:
	\begin{align}
		\J_{f} \big|_{\pi}[A]&=\int_0^1\de\mu_{Tf}(\lambda)\; \norbra{\frac{1+\lambda}{2}} \norbra{(\LL_\pi+\lambda\RR_\pi)^{-1} + (\RR_\pi+\lambda\LL_\pi)^{-1}}[\pi A\pi]\,.\label{eq:157}
	\end{align}
	Moreover, using the transformation in Eq.~\eqref{cf:eq:TTransformApp}, we also  have the expression for generic $\J_{f}^{-1}\big|_{\pi}$, namely:
	\begin{align}
		\J_{f}^{-1} \big|_{\pi}[A] &= \J_{Tf} \big|_{\pi^{-1}}[A]=\int_0^1\de\mu_{f}(\lambda)\; \norbra{\frac{1+\lambda}{2}} \norbra{(\LL_\pi+\lambda\RR_\pi)^{-1} + (\RR_\pi+\lambda\LL_\pi)^{-1}}[ A]\,.\label{eq:158}
	\end{align}
	The two formulas above are derived in App.~\ref{box:fisherComputations}. It should be noticed that the expression for $\J_{f}^{-1} \big|_{\pi}$ could  also be directly inferred from the integral expansion of $1/f$ in Eq.~\eqref{cf:eq:80}.
	
	\begin{additional-info}{Box 10. Some additional properties of contrast functions~\hyperref[sec:CPfisher]{$\rightarrow$}}\label{box:contrastAdditionalProperties}
		Now that the quantum Fisher operators have been more thoroughly characterised, we can discuss some other properties of the contrast functions. In particular, by using their integral expression in the symmetrised case (and taking the adjoint of the second term in Eq.~\eqref{cf:eq:AppIntegralExpression}) we have up to corrections of order $ \bigo{\varepsilon^3}$ that:
		\begin{align}
			H^{\rm symm}_g(\pi + \varepsilon A || \pi+ \varepsilon B)  &=  \frac{\varepsilon^2}{2}\int_0^1 \de N_g(s)\;\Tr{(A-B)\norbra{(\LL_\pi + s \RR_\pi)^{-1}+(\RR_\pi + s \LL_\pi)^{-1}}[(A-B)]}=\label{eq:hgSymmetrisedLocal}\\
			&=\frac{\varepsilon^2}{2}\, \Tr{(A-B)\,\J_f^{-1}\big|_{\pi} [(A-B)]}\,,\label{cf:eq:AppIntegralExpression2}
		\end{align}
		where in the second line we simply used Thm.~\ref{cf:thm:Ruskai}. The expression in Eq.~\eqref{eq:hgSymmetrisedLocal} should be compared with the one obtained in Eq.~\eqref{eq:158}, which shows that the two defining measures are connected by the relation:
		\begin{align}
			\de N_g(\lambda) := \de\mu_{f}(\lambda)\; \norbra{\frac{1+\lambda}{2}}\,.\label{eq:cf:102o}
		\end{align}
		Since $\de\mu_{f}$ is a probability distribution, this also means that $\de N_g$ satisfies:
		\begin{align}
			\int_0^1 \de N_g(s)\; \frac{2}{1+s} = 1\,.\label{eq:normalisationFisherMeasure}
		\end{align}
		Indeed, this normalisation condition could have even been verified directly from the requirement that $H^{\rm symm}_g(\pi + \varepsilon A || \pi+ \varepsilon B)$ needs to reduce to the classical Fisher information for commuting variables. Indeed, for $[A,\pi]=0$ one has $(\LL_\pi + s \RR_\pi)^{-1}[A] =\frac{A\pi^{-1}}{(1+s)}$. Moreover, it also directly follows by comparing Eq.~\eqref{eq:21nuova} with Eq.~\eqref{cf:eq:80}. 
		
		Let us define the set of standard convex functions to be the set of functions $g\in\mathcal{G}$ satisfying the following three properties:
		\begin{enumerate}
			\item $g$ is matrix convex;
			\item $g$ is Fisher adjusted, that is $g''(1) = 1$;
			\item $g$ satisfies the symmetry $g(x)=x g(x^{-1})$.
		\end{enumerate}
		As it was mentioned in condition~\ref{item:3standardMonotoneFirst} (Sec.~\ref{contrastFunctionsToFisher}), the normalisation in the second requirement is chosen so that the corresponding contrast function will correctly reduce to a quantum Fisher information. Finally, since $H_g(\rho||\sigma) $ coincides with $H_{\tilde g}(\sigma||\rho) $, with $\tilde{g}(x)= x g(x^{-1})$ (as discussed in Eq.~\eqref{cf:eq:integralRepresentationG}), the last requirement restricts our attention to symmetric contrast functions. Then, there exists a bijective map between the set of standard convex functions and the set of standard monotones $L:\mathcal{G}\rightarrow\mathcal{M}$ which, thanks to Eq.~\eqref{cf:eq:correspondenceFG}, can be explicitly expressed as $Lg(x) =\frac{ (x-1)^2}{2 g(x)}$. Moreover, it is easy to verify that this transformation is involutive, meaning $L^2g = g$. Thanks to Eq.~\eqref{cf:eq:gIntegralExpression}, together with the discussion at the beginning of the section, we can also see that $\mathcal{G}$ has the structure of a Bauer simplex, with extreme points given by:
		\begin{align}
			g_\lambda(x) = \norbra{\frac{\lambda+1}{2}}\norbra{\frac{(x-1)^2}{x+\lambda}+\frac{(x-1)^2}{1+\lambda \,x}}\,.
		\end{align}
		
		It should be noticed that $L$ reverses the partial order $\leq$ defined on $\mathcal{M}$, meaning that $f_1\leq f_2 \iff Lf_2\leq Lf_1$. This implies that there is a larger and a smaller elements in $\mathcal{G}$, respectively corresponding to $Lf_H$ and $Lf_B$. In formulae, this reads:
		\begin{align}
			\frac{(x-1)^2}{x+1}\,\leq \,g(x) \,\leq\, \frac{(x+1)(x-1)^2}{4x}\,.
		\end{align}
		Interestingly, the same ordering is also present for symmetrised contrast functions. Indeed, we have that for every $g\in\mathcal{G}$ it holds that (see App.~\ref{app:orderContrast}):
		\begin{align}
			H^{\rm symm}_{Lf_B} ( \rho || \sigma)\leq H_g^{\rm symm} ( \rho || \sigma) \leq H_{Lf_H}^{\rm symm} ( \rho || \sigma)\,,\label{eq:165}
		\end{align}
		where we highlighted the fact that the contrast functions considered here are symmetric. 
		
		Finally, it should be noticed that we can lift the transformation $T:\mathcal{M}\rightarrow\mathcal{M}$ to $\tilde{T}:\mathcal{G}\rightarrow\mathcal{G}$ by requiring the following diagram to commute:
		\begin{center}
			\begin{tikzcd}
				\mathcal{G}\arrow{r}{\tilde{T}} \arrow[swap]{d}{{ L}} & \tilde T\mathcal{G}\arrow{d}{{ L}} \\%
				\mathcal{M}\arrow{r}{T}& T\mathcal{M}\;.
			\end{tikzcd}
		\end{center}
		Thanks to the involutivity of $L$, this is given by $\tilde{T} : = LTL$, which explicitly reads:
		\begin{align}
			[\tilde Tg](x) := \frac{(x-1)^4}{4\,x\,g(x)}\,.
		\end{align}
		The procedure just presented exemplifies a general method to lift transformations from $\mathcal{M}$ to $\mathcal{G}$ and vice-versa.
	\end{additional-info}
	
	\subsection{Complete positivity of the Fisher information functionals}\label{sec:CPfisher}
	
	A linear map is physically realisable only if it is completely positive. For this reason, it is particularly relevant to give a characte
	risation of the cases in which $\J_f\big|_{\pi}$ or its inverse are CP. To this end, we introduce the two sets:
	\begin{align}
		\mathcal{M}^{+}:= \curbra{f\in \mathcal{M}\,| \,\J_f\big|_{\pi} \,\text{is CP }\forall\,\pi}\,;\qquad\qquad
		\mathcal{M}^{-}:= \curbra{f\in \mathcal{M} \,|\, \J_f^{-1}\big|_{\pi} \,\text{is CP }\forall\,\pi}\,.
	\end{align} 
	Before starting with the exploration of this two sets, it should be noticed that  in the context of Fisher operators positivity implies complete positivity. In fact, $\J_f\big|_{\pi}$ is CP means that $\J_f\big|_{\pi} \otimes \idO_n$ is P for any $n$. Then, thanks to the identity $\J_f\big|_{\pi} \otimes \idO_n=\J_f\big|_{\pi\otimes \id_n} $, if $\J_f\big|_{\pi}$ is P for all states (and all dimensions), it automatically follows that it will also be CP, proving the claim. 
	
	The expression in Eq.~\eqref{cf:eq:183as} is particularly useful in this context because it says that one can interpret the application of $\J_f|_\pi$ to a state $\rho$ as the Hadamard product of $J_{f,\pi}$ with $\rho$. Then, thanks to Schur product theorem~\cite{hiaiIntroductionMatrixAnalysis2014}, the resulting matrix is positive if both $J_{f,\pi}$ and $\rho$ are positive (the latter is true as $\rho$ is a state). Since this should hold for any $\rho$, it directly follows that $\J_f|_\pi$ is CP if and only if $J_{f,\pi}\geq0$.
	
	Following the discussion above, it is easy to find that a necessary condition for $f\in\mathcal{M^+}$ is that $f\leq f_{SQ}$. Indeed, since every principal sub-matrix of a positive matrix is also positive semidefinite, imposing the positivity of the determinant for the $2\times2$ matrix containing only the $(i,j)$-components of $J_{f,\pi}$, namely $\pi_i\pi_j-f(\pi_i/\pi_j)^2 \pi_j^2\geq0$, we obtain the condition $f(\pi_i/\pi_j)\leq \sqrt{\pi_i/\pi_j}$. Moreover, from the fact that this should hold for any non-zero probability vector $\{\pi_i\}_{i\in\{1,\dots,n\}}$ and any $n\in\mathbb{N}$, one can deduce that a necessary condition for $\J_f|_\pi$  to be positive preserving is that $f(x) \leq \sqrt{x}$ for every $x$. A similar argument can also be given for $\mathcal{M}^-$, showing that if $f\in\mathcal{M}^-$ then $f_{SQ}\leq f$.
	
	It should be noticed though that the one just presented is not a sufficient condition. Indeed, if one considers the family of extremal functions $f_\lambda$ defined in  Eq.~\eqref{cf:eq:81}, it holds that $f_\lambda\leq f_{SQ}$ for all $\lambda\in[3-2\sqrt{2},1]$. Still, it has been proved in~\cite{hiai2013families} that for all $\lambda\neq 1$ $f_\lambda\notin\mathcal{M}^+$. Moreover, we can deduce another interesting property of the set of monotones from the inspection of its extreme points: for any $\lambda\in(0,3-2\sqrt{2})$ the corresponding function has two additional crossing with the graph of the square root (other than at $x=0$ and $x=1$) implying that $f_\lambda$ neither lays all beneath nor all above $f_{SQ}$. Since these are necessary condition for $f_\lambda$ to be in $\mathcal{M}^+$ or $\mathcal{M}^-$, this remark implies that $\mathcal{M} \neq \mathcal{M}^+\cup \mathcal{M}^-$, that is, there are standard monotone functions for which neither $\J_f|_\pi$ or $\J_f^{-1}|_\pi$ are CP.
	
	It was shown in Eq.~\eqref{eq:standardMonotonesDecomposition} that we could partition $\mathcal{M}$ using the transformation $T$, and that this could be used to express $\J_f|_\pi$ in terms of $\J_{Tf}^{-1}|_{\pi^{-1}}$ (see Eq.~\eqref{cf:eq:TTransformApp}). Thanks to this last property it follows that $T\mathcal{M}^+ = \mathcal{M}^-$. Indeed, suppose $f\in\mathcal{M}^+$. Then, $\J_f|_\pi$ admits the Kraus form:
	\begin{align}
		\J_f\big|_{\pi} \sqrbra{\rho} = \sum_i K_i(\pi)\,\rho\, K_i(\pi)^\dagger\,,
	\end{align}
	where $i$ ranges on a possibly uncountable set, and $K_i(\pi)$ are $\pi$-dependent Kraus operators. Then, thanks to Eq.~\eqref{cf:eq:TTransformApp}, we can express $\J^{-1}_{Tf}|_\pi$ as:
	\begin{align}
		\J_{Tf}^{-1}\big|_{\pi} \sqrbra{\rho}= \J_f\big|_{\pi^{-1}} \sqrbra{\rho} = \sum_i K_i(\pi^{-1})\,\rho \,K_i(\pi^{-1})^\dagger\,,
	\end{align}
	proving that $\J^{-1}_{Tf}|_\pi$ is also CP, and thus $Tf\in\mathcal{M}^-$. This shows that the subset $\mathcal{M}^+\cup \mathcal{M}^-$ of $\mathcal{M}$ is stable under the transformation $T$. It should be noticed that since this is a strict subset, it was not obvious from the beginning that this would be the case. 
	
	We have seen that one can give a necessary condition for $f$ to be in $\mathcal{M}^+$ in terms of the partial order induced by the pointwise order of real functions. It is then a remarkable fact that one can introduce a partial order that implies the pointwise one, and that can be used to completely characterise $\mathcal{M}^+$ and $\mathcal{M}^-$. To this end, we need to define the concept of  positive definite continuous functions. These are functions $h:\RR\rightarrow\mathbb{C}$ such that for any vector of reals $\{t_i\}_{i\in\{1,\dots,n\}}$ of arbitrary size $n$, the matrix $A$ defined in coordinates as $A_{i,j} = h(t_i-t_j) $ is positive semidefinite. This class of functions is closed under multiplication, and the elements of the set are uniformly bounded by their value in zero. Finally, a key result called Bochner's theorem says that a function is positive definite if and only if it is the Fourier transform of a finite positive measure on $\RR$~\cite{reed1975ii}. This last property is the key ingredient of Thm.~\ref{cf:thm:cpCharacterisation}. 
	
	Then, we define the partial order $\preceq$ on $\mathcal{M}$ by saying that $f_1\preceq f_2$ if $f_1(e^t)/f_2(e^t)$ is positive definite or, equivalently, if the matrix with entries:
	\begin{align}
		A_{i,j} := \frac{f_1 (\pi_i/\pi_j)}{f_2 (\pi_i/\pi_j)}
	\end{align}
	is positive semidefinite for any non-zero probability vector $\{\pi_i\}_{i\in\{1,\dots,n\}}$ of any fixed size $n$. Before proving that $\preceq$ is an actual order relation, it is useful to point out that $f_1\preceq f_2$ implies $f_1\leq f_2$~\cite{hiai1999means}. Then, we can verify that $\preceq$ satisfies all the necessary conditions to induce a partial order:
	\begin{enumerate}
		\item \emph{reflexivity:} i.e., $f\preceq f$, which follows from the fact that a matrix with all entries equal to $1$ is positive semidefinite;
		\item \emph{antisymmetry:} i.e., the two relations $f_1\preceq f_2$ and $f_2\preceq f_1$ imply $f_1= f_2$. Since $f_1\preceq f_2\,\implies\,f_1\leq f_2$, then antisymmetry is a consequence of the same relation for the pointwise order;
		\item \emph{transitivity:} this condition says that if $f_1\preceq f_2$ and $f_2\preceq f_3$, then it should hold that $f_1\preceq f_3$. The proof of this fact is a consequence of the closure of the class of positive definite functions under multiplication, as one can rewrite:
		\begin{align}
			\frac{f_1(e^t)}{f_3(e^t)} =\norbra{ \frac{f_1(e^t)}{f_2(e^t)}}\norbra{ \frac{f_2(e^t)}{f_3(e^t)}}\,.
		\end{align}
		Since the two functions on the right hand side of the equation are positive definite (as it follows from the assumption that $f_1\preceq f_2$ and $f_2\preceq f_3$) then also their product is. This proves the claim.
	\end{enumerate}
	
	The definition of $\preceq$ in this context could look rather arbitrary. Still, it can be argued that it is actually a very natural relation when it comes to the characterisation of the sets of completely positive quantum Fisher operators. Indeed, one can specify all the functions $f\in\mathcal{M}^+$ as exactly the ones satisfying $f\preceq f_{SQ}$. The reason for this is simple: $f\preceq f_{SQ}$ if and only if the matrix with entries: 
	\begin{align}
		A_{i,j} :&= f \norbra{\frac{\pi_i}{\pi_j}}\sqrt{\frac{\pi_j}{\pi_i}} = \norbra{f \norbra{\frac{\pi_i}{\pi_j}} \pi_j} \frac{1}{\sqrt{\pi_i\pi_j}}=\frac{1}{\sqrt{\pi_i}}\,(J_{f,\pi})_{i,j}\, \frac{1}{\sqrt{\pi_j}}\,,
	\end{align}
	is positive semidefinite. Interestingly, the last equality shows that $A$ is connected to $J_{f,\pi}$ by a transformation which preserves positivity. In other words, the condition that $f\preceq f_{SQ}$ is equivalent to the requirement that $J_{f,\pi}$ is positive semidefinite for any $\pi$, which exactly corresponds to $\J_f|_\pi$ being CP. Thus, for any $f\in\mathcal{M}$ there is the equivalence $f\preceq f_{SQ}\iff f\in\mathcal{M}^+$.
	
	It is also easy to verify that $T$ reverts the order $\preceq$. Indeed, if $f_1\preceq f_2$, then it follows from:
	\begin{align}
		\frac{f_1(e^{t})}{f_2(e^{t})}=\frac{\cancel{e^{-t}}\,Tf_2(e^t)}{\cancel{e^{-t}}\,Tf_1(e^t)}\,,
	\end{align}
	which directly implies $Tf_2\preceq Tf_1$. Indeed, thanks to the bijection between $\mathcal{M}^+$ and $\mathcal{M}^-$, it also follows that we can completely characterise the latter as  all the functions satisfying $f_{SQ}\preceq f$. This also shows that we can partition $\mathcal{M}$ as:
	\begin{align}
		\mathcal{M} = \norbra{\mathcal{M}^+ \cup \mathcal{M}^-}\bigsqcup \bigg( \bigsqcup_{f\npreceq f_{SQ}\land  f_{SQ}\npreceq f}\{f \}\bigg)\,,
	\end{align}
	i.e., in functions for which either $\J_f|_\pi$ or $\J_f^{-1}|_\pi$ are CP, and ones that are incomparable with $f_{SQ}$.	Moreover, from the reflexivity of $\preceq$  we also have that $\mathcal{M}^+ \cap \mathcal{M}^-  = f_{SQ}$ (this could also be deduced from the fact that $T$ has a unique fixed point, together with the fact that $T\mathcal{M}^+ = \mathcal{M}^-$).
	
	The use of positive definite functions does not just help in characterising which elements of $\mathcal{M}$ are in $\mathcal{M}^+$ or $\mathcal{M}^-$, but it also allows to give an analytical expression for general CP quantum Fisher operators~\cite{hiai2013families}:
	\begin{theorem}\label{cf:thm:cpCharacterisation}
		For any $f\in\mathcal{M}^+$ there exists a symmetric probability distribution $\de\nu^+_f(s)$ on $\RR$ such that:
		\begin{align}
			\J_f|_{\pi}[A] = \int_{-\infty}^{\infty} \de\nu^+_f(s)\; \pi^{is+\frac{1}{2}}\,A\,\pi^{-is+\frac{1}{2}}\,.\label{cf:eq:115z}
		\end{align}
		Similarly, for any $f\in\mathcal{M}^-$ there exists a symmetric probability distribution $\de\nu^-_f(s)$ on $\RR$ such that:
		\begin{align}
			\J_f^{-1}|_{\pi}[A] = \int_{-\infty}^{\infty} \de\nu^-_f(s)\; \pi^{is-\frac{1}{2}}\,A\,\pi^{-is-\frac{1}{2}}\,.\label{cf:eq:116z}
		\end{align}
		Moreover, the defining probability distributions coincide when mapping $\mathcal{M}^+$ into $\mathcal{M}^-$ through $T$, that is, for every $f\in \mathcal{M}^+$ we have that $\de\nu^+_f = \de\nu^-_{Tf}$.
	\end{theorem}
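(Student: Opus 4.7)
The plan is to leverage the characterisation already established in the excerpt, namely that $f\in\mathcal{M}^+$ if and only if the function $t\mapsto f(e^t)/f_{SQ}(e^t) = f(e^t)\,e^{-t/2}$ is positive definite on $\mathbb{R}$. Bochner's theorem then immediately yields a finite positive measure $d\nu^+_f$ on $\mathbb{R}$ whose Fourier transform is $f(e^t)\,e^{-t/2}$, which is the single analytic input the proof really requires.

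First I would check that $d\nu^+_f$ is actually a \emph{symmetric probability} measure. Normalisation follows from the Fisher-adjusted condition: evaluating the Fourier representation at $t=0$ gives $\int d\nu^+_f(s) = f(1)/1 = 1$. Symmetry $d\nu^+_f(s) = d\nu^+_f(-s)$ follows from the reflection symmetry $f(x) = x\,f(x^{-1})$ of standard monotones, which when rewritten in terms of $t$ reads $f(e^t)\,e^{-t/2} = f(e^{-t})\,e^{t/2}$, i.e.\ the positive-definite function is even; hence its Fourier-measure is symmetric.

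Next I would translate the scalar Fourier representation into the claimed operator identity. Working in the eigenbasis of $\pi$ with $\pi = \sum_i \pi_i \ketbra{i}{i}$, the explicit coordinate expression in Eq.~\eqref{cf:eq:182x} gives $\J_f|_\pi[\ketbra{i}{j}] = f(\pi_i/\pi_j)\,\pi_j\,\ketbra{i}{j}$. Setting $t_{ij} := \log(\pi_i/\pi_j)$, Bochner gives
\begin{align}
f(\pi_i/\pi_j)\,\pi_j = \sqrt{\pi_i\pi_j}\int_{-\infty}^{\infty} d\nu^+_f(s)\,(\pi_i/\pi_j)^{is} = \int_{-\infty}^{\infty} d\nu^+_f(s)\,\pi_i^{is+\frac{1}{2}}\,\pi_j^{-is+\frac{1}{2}}\,,
\end{align}
so that $\J_f|_\pi[\ketbra{i}{j}] = \int d\nu^+_f(s)\,\pi^{is+1/2}\,\ketbra{i}{j}\,\pi^{-is+1/2}$. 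Extending by linearity to arbitrary $A$ yields Eq.~\eqref{cf:eq:115z}.

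For the inverse statement, rather than repeating the argument I would apply the involution $T$: by Eq.~\eqref{cf:eq:TTransformApp} we have $\J_f^{-1}|_\pi = \J_{Tf}|_{\pi^{-1}}$, and $f\in\mathcal{M}^-$ is equivalent to $Tf\in\mathcal{M}^+$. Substituting the representation just obtained for $\J_{Tf}|_{\pi^{-1}}$ gives
\begin{align}
\J_f^{-1}|_\pi[A] = \int_{-\infty}^{\infty} d\nu^+_{Tf}(s)\,\pi^{-is-\frac{1}{2}}\,A\,\pi^{is-\frac{1}{2}}\,,
\end{align}
and symmetry of $d\nu^+_{Tf}$ allows the substitution $s\mapsto -s$, producing Eq.~\eqref{cf:eq:116z} with $d\nu^-_f := d\nu^+_{Tf}$. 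The final identity $d\nu^+_f = d\nu^-_{Tf}$ is then automatic from this definition together with $T^2 = \mathrm{id}$. The only step with real content is the invocation of Bochner's theorem; the rest is bookkeeping in the $\pi$-eigenbasis, so I do not expect a genuine obstacle, only the need to be careful with the exponents and with the order-of-integration when extending to non-diagonal $A$.
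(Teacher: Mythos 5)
Your proposal is correct and follows essentially the same route as the paper's proof: the key input in both is that $f\preceq f_{SQ}$ (equivalently $f\in\mathcal{M}^+$) means $e^{-t/2}f(e^t)$ is positive definite, so Bochner's theorem supplies the symmetric probability measure, and the inverse case is handled by the involution $T$ together with the change of variables $s\mapsto -s$. The only cosmetic difference is that you verify the operator identity in the eigenbasis of $\pi$ while the paper applies the functional calculus $\RR_\pi\,f(\LL_\pi\RR_\pi^{-1}) = \int\de\nu^+_f(s)\,(\LL_\pi\RR_\pi^{-1})^{is+1/2}\RR_\pi$ directly; these are equivalent.
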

	\begin{proof}
		This result is a straightforward application of Bochner's theorem. Let us first analyse the case in which $f\in\mathcal{M}^+$. Since $f\preceq f_{SQ}$, this means that $e^{-t/2} f(e^t)$ is positive definite. Hence, from Bochner's theorem there exists a unique probability measure $\de\nu^+_f(s)$ on $\RR$ such that:
		\begin{align}
			e^{-t/2} f(e^t) = \int_{-\infty}^{\infty} \de\nu^+_f(s)\;e^{i t s}\,.\label{cf:eq:117z}
		\end{align}
		Since $f$ is standard it follows from the symmetry $e^{-t/2} f(e^{t})=e^{-t/2}e^{t} f(e^{-t})  = e^{t/2} f(e^{-t})$ that $\de\nu^+_f(s)$ has to be symmetric with respect to zero. Then, using the definition of $\J_f\big|_{\pi}$ in Eq.~\eqref{cf:eq:65} we have:
		\begin{align}
			\J_f\big|_\pi[A] &= \RR_\pi \,f\norbraB{\LL_\pi\RR_\pi^{-1}}[A] = \RR_\pi\int_{-\infty}^{\infty} \de\nu^+_f\;(\LL_\pi\RR_\pi^{-1})^{is +1/2}[A]
			=\int_{-\infty}^{\infty} \de\nu^+_f(s)\;\pi^{is +1/2}\,A\,\pi^{-is +1/2}\,,
		\end{align}
		proving Eq.~\eqref{cf:eq:115z}. At this point we can use the bijection between $\mathcal{M}^+$ and $\mathcal{M}^-$ to show that for every $f\in\mathcal{M}^-$ we can express $\J_f^{-1}\big|_\pi$ as:
		\begin{align}
			\J_f^{-1}\big|_\pi[A] = \J_{Tf}\big|_{\pi^{-1}}[A] = \int_{-\infty}^{\infty} \de\nu^+_{Tf}(s)\;\pi^{-is -1/2}\,A\,\pi^{is -1/2} = \int_{-\infty}^{\infty} \de\nu^-_{f}(s)\;\pi^{is -1/2}\,A\,\pi^{-is -1/2}\,,
		\end{align}
		where in the last step we used the symmetry of $\de\nu^+_{Tf}$ to perform the change of variables $s\rightarrow(-s)$ and we implicitly defined $\de\nu^-_{f}$. This equality proves Eq.~\eqref{cf:eq:116z} and the last claim.
	\end{proof}
	
	The theorem just presented gives the most general expression of Fisher operators $\J_f|_{\pi}$ and $\J_f^{-1}|_{\pi}$ that are CP. Still, it should be noticed that not all symmetric probability distributions generate a quantum Fisher operator, as it can be explicitly verified by plugging an arbitrary probability distribution in Eq.~\eqref{cf:eq:117z}. For this reason, the next result gives a characterisation of the possible measures one can use:
	\begin{theorem}\label{thm:generalExpressionCPmeasure}
		For every $f\in\mathcal{M}^+$ the defining measure $\de\nu^+_f$ in Eq.~\eqref{cf:eq:115z} satisfies:
		\begin{align}
			\de\nu^+_f(s) = \frac{1}{2\pi} \int_{-\infty}^\infty \dt \; e^{- (\frac{1}{2}+is)t}f(e^t) = \int_0^1\de\mu_{Tf}(\lambda)\;\cosh\norbra{\frac{\log\lambda}{2}}\frac{\cos(s\log\lambda)}{\cosh \pi s}\,;\label{180echis}
		\end{align}
		where $\de\mu_{Tf}$ is the probability distribution on $[0,1]$ defined in Eq.~\eqref{cf:eq:81}.
	\end{theorem}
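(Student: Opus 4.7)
The plan is to obtain both equalities from the same starting point, namely the Bochner-type representation in Eq.~\eqref{cf:eq:117z}, which reads $e^{-t/2} f(e^t)= \int_{-\infty}^\infty \de\nu^+_f(s)\, e^{ist}$. The first equality in the theorem is then essentially just Fourier inversion, while the second equality follows by injecting the extremal decomposition of $f$ (Eq.~\eqref{cf:eq:81}) into the inverted formula and evaluating a simple beta-type integral for each extremal piece. I expect the calculation to be essentially mechanical; the only delicate point will be the justification of the exchange between the Fourier inversion integral and the integral over $\lambda$, which is handled most cleanly by working through the extremal decomposition from the start, since each $f_\lambda$ gives an absolutely convergent Fourier integral.

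For the first equality, I would simply apply Fourier inversion to Eq.~\eqref{cf:eq:117z}. Since $f\in\mathcal{M}^+$ implies $f\leq f_{SQ}$, we have $e^{-t/2}f(e^t)\leq 1$, so the function is bounded and, treated as a tempered distribution (or as an $L^1$ function once one restricts to each $f_\lambda$, cf. below), its Fourier transform is well defined. Multiplying the relation $e^{-t/2}f(e^t)=\int \de\nu^+_f(s)\, e^{ist}$ by $e^{-ist'}/(2\pi)$ and integrating in $t$ yields $\de\nu^+_f(s) = \tfrac{1}{2\pi}\int_{-\infty}^\infty \dt\, e^{-(\frac{1}{2}+is)t}\,f(e^t)$, which is the first claimed identity.

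For the second equality, I would substitute $f(e^t)=\int_0^1\de\mu_{Tf}(\lambda)\,f_\lambda(e^t)$ from Eq.~\eqref{cf:eq:81} and exchange the order of integration. This reduces the problem to computing, for each $\lambda\in(0,1]$, the quantity $\tfrac{1}{2\pi}\int_{-\infty}^\infty \dt\, e^{-(\frac{1}{2}+is)t} f_\lambda(e^t)$. Writing $f_\lambda(e^t)=\frac{\lambda+1}{2}\bigl(\tfrac{e^t}{e^t+\lambda}+\tfrac{e^t}{1+\lambda e^t}\bigr)$ and performing the substitution $u=e^t$, each of the two terms becomes an integral of the form $\int_0^\infty u^{a-1}/(u+c)\,\de u$ with $a=\tfrac{1}{2}-is$ and $c\in\{\lambda,\lambda^{-1}\}$, so both fall in the strip of absolute convergence. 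Using the standard beta-function identity $\int_0^\infty \frac{u^{a-1}}{u+c}\de u=\frac{\pi c^{a-1}}{\sin(\pi a)}$ together with $\sin(\pi(\tfrac{1}{2}-is))=\cosh(\pi s)$, the two integrals combine to
\begin{align}
\frac{1}{2\pi}\int_{-\infty}^\infty \dt\, e^{-(\frac{1}{2}+is)t} f_\lambda(e^t) = \frac{\lambda+1}{2}\,\frac{\lambda^{-1/2}}{\cosh(\pi s)}\bigl(\lambda^{-is}+\lambda^{is}\bigr)/2 \cdot 2 = \cosh\!\bigl(\tfrac{\log\lambda}{2}\bigr)\frac{\cos(s\log\lambda)}{\cosh(\pi s)}\,,
\end{align}
where I used $\tfrac{\lambda+1}{2}\lambda^{-1/2}=\cosh(\tfrac{\log\lambda}{2})$ and $\lambda^{\pm is}=e^{\pm is\log\lambda}$. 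Integrating against $\de\mu_{Tf}(\lambda)$ yields the second claimed identity. The exchange of integrals is justified a posteriori by the fact that each $\lambda$-integrand is absolutely integrable in $t$ and bounded uniformly by an $L^1(\de\mu_{Tf})$ envelope once multiplied by $|e^{-(\frac{1}{2}+is)t}|$; if needed, this can be formalised by Fubini on the product measure. The principal bookkeeping hazard is tracking the factors of $\lambda^{\pm 1/2}$ and the sign of $is$ in the two integrals, so I would double-check the computation by recovering the known measures for $f_{SQ}$ (giving a Dirac delta at $s=0$) and for the Bures monotone $f_B$, where Eq.~\eqref{180echis} should reduce to $1/(2\cosh\pi s)$.
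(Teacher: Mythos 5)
Your proof is correct and follows exactly the route the paper itself indicates: plug the extremal decomposition of Eq.~\eqref{cf:eq:81} into the Bochner representation Eq.~\eqref{cf:eq:117z} and invert the Fourier transform, with the beta-function evaluation of each $f_\lambda$ giving precisely $\cosh\norbra{\tfrac{\log\lambda}{2}}\cos(s\log\lambda)/\cosh(\pi s)$ as you computed (the stray ``$/2\cdot 2$'' in your displayed intermediate expression is only a bookkeeping typo, since your final line is right). One small caveat on your proposed sanity check: $f_B\notin\mathcal{M}^+$ (it lies above $f_{SQ}$), so the relevant consistency check is $f_H=Tf_B$, for which Eq.~\eqref{180echis} gives $\de\nu^+_{f_H}(s)=\de s/\cosh(\pi s)$, a genuine probability density.
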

	One can obtain this result by simply plugging the decomposition of general standard monotone functions from Eq.~\eqref{cf:eq:81} into Eq.~\eqref{cf:eq:117z} and performing an inverse Fourier transform. This theorem shows the difficulty of giving  a full characterisation of the allowed measures $\de\nu^+_f$: indeed, it should be noticed that the functions in the last integral of Eq.~\eqref{180echis} are not probability distributions (they are quasi-probabilities, i.e, they can be negative in general) so plugging generic $\de\mu_{Tf}$ will give negativities in the corresponding $\de\nu^+_f$. Indeed, whereas the decomposition in Eq.~\eqref{cf:eq:81} holds for any $f\in\mathcal{M}$, it makes sense to define $\de\nu^+_f$ only for $f\in\mathcal{M}^+$.
	
	Finally, it should also be pointed out that whereas $f\in\mathcal{M} \implies f_H\,\leq \,f \, \leq\,f_B$, the same does not hold for the partial order $\preceq$. Indeed, for $\lambda$ small enough (different from zero), the extreme points defined in Eq.~\eqref{cf:eq:81} satisfy $f_\lambda\npreceq f_B$: this can be proved by noticing that $f_\lambda(e^t)/f_B(e^t)$ is not a positive definite function, as it does not arise as the Fourier transform of a positive measure on $\RR$. This also justifies the need to introduce the two partial orders $\leq$ and $\preceq$: one gives necessary conditions to be in $\mathcal{M}$, the other completely specifies $\mathcal{M}^+$ and $\mathcal{M}^-$.
	
	\begin{additional-info}{Box 11. The set of inverse standard monotone functions~\hyperref[gardenFisher]{$\rightarrow$}}\label{box:inverseMonotones}
		As it was discussed in Sec.~\ref{sec:standardMonotonesProperties}, the set $\mathcal{M}$ is not only convex but it even has the structure of a simplex with a continuous number of vertices. It is easy to verify that the convex structure of $\mathcal{M}$  is also present for $\mathcal{M}^+$, since $\J_{c f_1+(1-\alpha) f_2}\big|_\pi = \alpha\,\J_{ f_1}\big|_\pi+(1-\alpha)\,\J_{ f_2}\big|_\pi$ and  convex sums preserve complete positivity. Still, the simplicial structure appears to be lost since the only extreme point for which $f_\lambda\in \mathcal{M}^+$ is for $\lambda = 1$.
		
		When considering $\mathcal{M}^-$ the situation becomes even worst: it has been shown in~\cite{hiai2013families} that there are pairs of functions in $\mathcal{M}^-$ for which any non-trivial convex sum is not an element of $\mathcal{M}^-$. Since convexity is a desirable property for analytical manipulations, we introduce in here a new set that will facilitate the treatment of $\mathcal{M}^-$. In particular, define $\mathcal{K}$ to be the set of inverse standard monotones, i.e., the set of functions given by:
		\begin{align}
			\mathcal{K} := \curbra{k\,\bigg| k(x) =\frac{1}{f(x)}, \, f\in\mathcal{M}}\,.
		\end{align}
		The members of $\mathcal{K}$ are matrix convex functions that satisfy $k(x) = k(x^{-1})x^{-1}$ and $k(1)=1$. Actually, since the inverse of a matrix convex function is matrix monotone~\cite{hiaiIntroductionMatrixAnalysis2014}, one could equivalently define $\mathcal{M}$ as the set of inverses of the members of $\mathcal{K}$. Indeed, in the literature both choices are used interchangeably, and which set to use boils down to a question of taste. We can pass from one set to the other using the two bijections $I_{1,2}:\mathcal{M}\rightarrow\mathcal{K}$ given by:
		\begin{align}
			[I_1f](x) = \frac{1}{f(x)}\,,\qquad\qquad\qquad[I_2f](x) = f(x^{-1})\,.
		\end{align}
		It should be noticed that $I_1$ and $I_2$ are both involutive, they commute (in the sense that $I_1I_2 f = I_2I_1 f $), and they satisfy the relation $I_1I_2 = T$, as it can be directly verified from Eq.~\eqref{cf:eq:85}. This also means that the two transformations are related two one another by $I_{1/2} = I_{2/1} T$.
		
		From the expression in  Eq.~\eqref{cf:eq:80} one can deduce that also $\mathcal{K}$ has the structure of a Bauer simplex, and that the extreme points are given by:
		\begin{align}
			k_\lambda(x) := \norbra{\frac{\lambda+1}{2}}\norbra{\frac{1}{x+\lambda}+\frac{1}{1+\lambda \,x}}\,.
		\end{align}
		Moreover, both partial orders $\leq$ and $\preceq$ can be defined on $\mathcal{K}$, and in this context it should be noticed that $I_1$ is order reversing, while $I_2$ preserves the order, since it is the composition of two order reversing maps. Hence, we can also bound the elements of $\mathcal{K}$ by using the relation $f_H\leq \,f\,\leq f_B$ and mapping it through either of $I_{1/2}$. Hence, any $k\in\mathcal{K}$ satisfies:
		\begin{align}
			[I_1 f_B](x) = [I_2 f_H](x) = \frac{2}{x+1} \leq\, k(x)\,\leq \frac{x+1}{2x} = [I_1 f_H](x) = [I_2 f_B](x)\,.
		\end{align}
		Finally, one can also define the quantum Fisher information operators directly in terms of $k\in\mathcal{K}$, where we use the new notation:
		\begin{align}
			\Omega_k\big|_\pi := k(\LL_\pi\RR_\pi^{-1})\,\R_\pi^{-1} = \J_{I_1k}^{-1}\big|_\pi = \J_{I_2k}\big|_{\pi^{-1}} \,.\label{cf:eq:106biss}
		\end{align}
		Thanks to this definition, one can lift the convex structure of $\mathcal{K}$ to the quantum Fisher operators, since $\Omega_{\alpha k_1 + (1-\alpha)k_2}|_\pi  = \alpha\,\Omega_{ k_1}|_\pi + (1-\alpha)\Omega_{ k_2}|_\pi $. If we then define the two sets:
		\begin{align}
			\mathcal{K}^{+}:= \curbra{k\in \mathcal{K}\,| \,\Omega_k\big|_{\pi} \,\text{is CP }\forall\,\pi}\,;\qquad\qquad
			\mathcal{K}^{-}:= \curbra{k\in \mathcal{K}\,| \,\Omega_k^{-1}\big|_{\pi} \,\text{is CP }\forall\,\pi}\,,
		\end{align}
		it follows from the remark above that $\mathcal{K}^+$ is a convex set, while it was proved in~\cite{hiai2013families} that $\mathcal{K}^-$ is not. Interestingly, it holds that $\mathcal{K}^{\pm} = I_1\mathcal{M}^{\mp}$, so this shows that depending on whether one is more interested in the complete positivity of $\J_f\big|_\pi$ or of $\J_f^{-1}\big|_\pi$ , it will more suitable to work with $\mathcal{M}$ or $\mathcal{K}$. 
		
		The following diagram summarises all the relations between the sets introduced in this section:
		\begin{center}
			\begin{tikzcd}
				\mathcal{M}^+\arrow{r}{{T}} \arrow[swap]{d}{{I_2}}\arrow{rd}[above]{~I_1} & \mathcal{M}^-\arrow{d}{{\rm I_2}} \\%
				\mathcal{K}^+\arrow{r}[below]{T}\arrow{ur}& \mathcal{K}^-\;.
			\end{tikzcd}
		\end{center}
		It should be noticed that all the transformations are involutive, so all the arrows can be reversed without changing the corresponding label. Moreover, the two sets on the right ($\mathcal{M}^+$ and $\mathcal{K}^+$) are convex.
	\end{additional-info}

	\section{A garden of quantum Fisher information}\label{gardenFisher}
	
	Now that the general theory of quantum Fisher information has been laid down, it is time to enter the tangle of disparate expressions that this quantity can present. One of the founders of the subject, Dénes Petz, having to define the richness of different examples one is confronted with, chose to call it ``a garden of monotonic metrics''~\cite{petz2002covariance}. In his honour, we use here the same term to designate this section, which is aimed at providing a field guide for the reader to orient themself in this florid jungle.

	To facilitate the identification of different metrics, we have summarised the ones treated here in Table~\ref{cf:fig:figstandardmonotones}, where we list the standard monotones in decreasing order (according to $\leq$), the corresponding contrast functions and quantum Fisher operators. The letter on each row refers to the subsection in which each of the different cases is treated, which we list here for convenience together with their main properties:
	\begin{enumerate}[label=\textbf{\Alph*.}]
		\item \hyperref[Bures]{\textbf{The Bures metric}:} \emph{the smallest among all the Fisher informations, and for this reason also the most studied in the literature. This is one of the two cases for which a closed form of the geodesic distance is known (see Eq.~\eqref{cf:eq:buresDistance}). Moreover, it also appears in a central result in estimation theory, the quantum Cramér-Rao bound (Eq.~\eqref{cf:eq:cramerRao130}), connecting the Fisher information to the minimal variance of an unbiased estimator. We give a generalised version of this bound in Eq.~\eqref{cf:eq:132b}.}
		\item \hyperref[Heinz]{\textbf{The Heinz family}:} \emph{a one-parameter family of functions in $\mathcal{M}^-$.} 
		\item \hyperref[alpha]{\textbf{The family of $\alpha$-divergences}:} \emph{a fundamental class of standard monotone functions. The corresponding contrast functions are related to Rényi divergences through Eq.~\eqref{contrastRenyi}. Most of the examples in this list are particular cases of this family.}
		\item \hyperref[wignerYanase]{\textbf{The Wigner-Yanase skew information ($\alpha = 1/2$)}:} \emph{the largest function in the family of $\alpha$-divergences. It's the only metric having constant positive curvature, making the space of states in this case isometric to a $n$-sphere. This allows to find a closed expression of the geodesic distance and of the geodesics (see Eq.~\eqref{cf:eq:wyGeodesicDistance} and Eq.~\eqref{cf:eq:Geodesic}). Moreover, it naturally appears in the context of hypothesis testing, in particular in the quantum Chernoff bound, expressed in Eq.~\eqref{chernoffBoundLocal}.}
		\item \hyperref[relativeEntropy]{\textbf{The relative entropy ($\alpha=0$)}:} \emph{the most famous among the contrast functions, the corresponding metric, called Kubo-Mori-Bogoliubov, also appears often in statistical mechanics, as it is related to the linear response of thermal states. In particular, in Eq.~\eqref{cf:eq:181} we discuss the interpretation of the generalised Cramer-Rao bound is this context.}
		\item  \hyperref[cf:sec:relEntVariance]{\textbf{The quantum information variance}:} \emph{this metric is related to the second derivative of the $\alpha$-divergences in zero (see Eq.~\eqref{265x}), and also to estimation theory for thermal states.}
		\item \hyperref[geometricMean]{\textbf{The geometric mean}:} \emph{the only standard monotone in the intersection of $\mathcal{M}^+$ and $\mathcal{M}^-$. This was used in Sec.~\ref{retrodiction} to define the Petz recovery map.}
		\item \hyperref[harmonic]{\textbf{The harmonic mean ($\alpha=2$)}:} \emph{the largest Fisher information (and the smallest standard monotone). It is also the only case in which the corresponding contrast function can be expressed as a $\chi_f^2$-divergence (see \hyperref[box:statisticalQuantifiers]{Box 3}).}
	\end{enumerate}
	
	\begin{figure}
		\centering
		\includegraphics[width=1\linewidth]{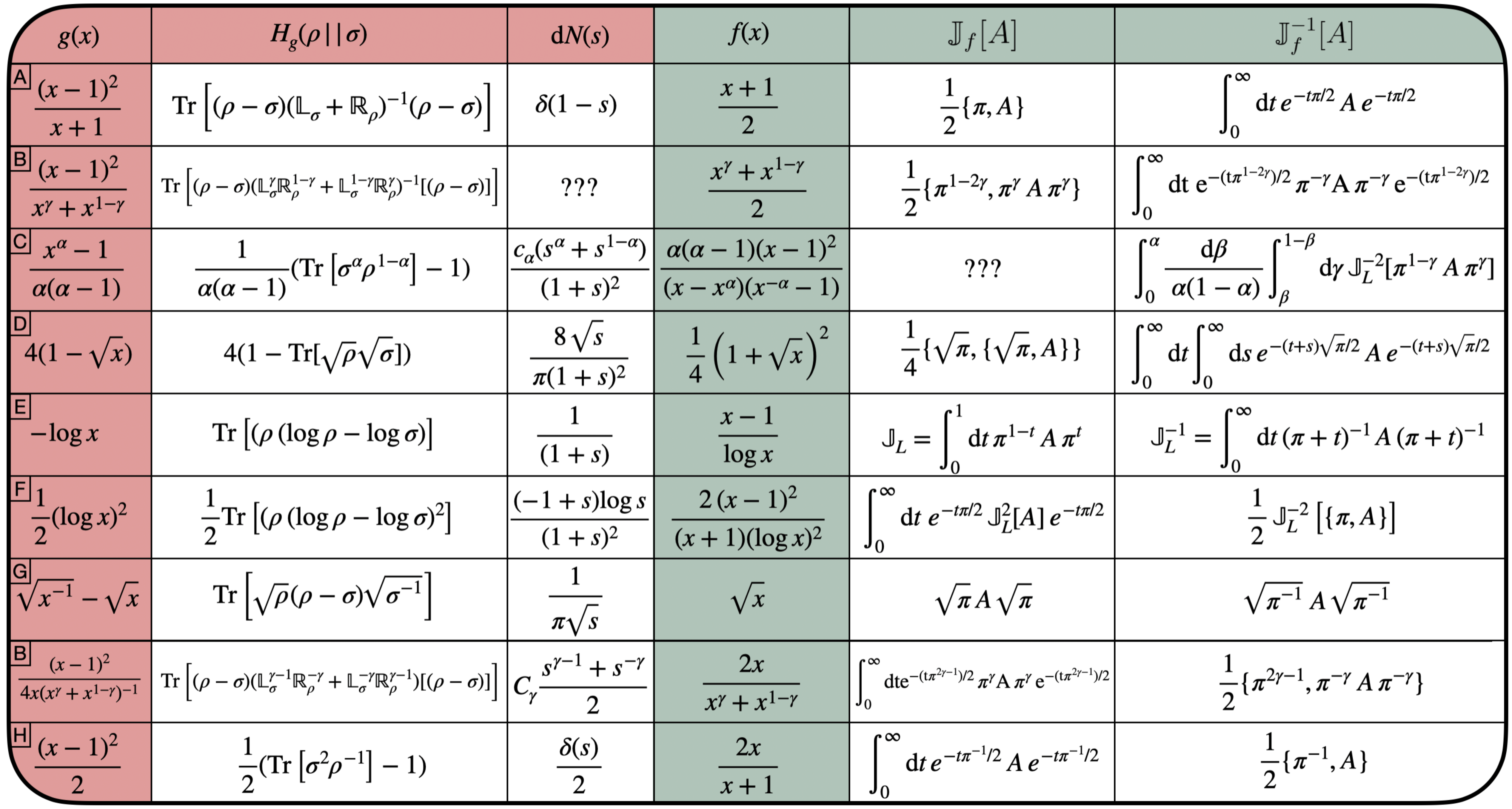}
		\caption{In the table above we summarise the expressions of standard monotone functions $f(x)$ analysed in the text. The table is divided in two parts: in the first half we present the properties of the associated contrast function, in the second the corresponding standard monotone. In particular, notice that the measure $\de N(s)$ is the one defined in Eq.~\eqref{cf:eq:AppIntegralExpression} for symmetric contrast functions. The question marks correspond to the entries for which no explicit form has been found. For reason of space, we have also introduced the constants $c_\alpha := \frac{\sin (\pi  \alpha )}{\pi  \alpha(1-\alpha )  }$ and $C_\gamma = \frac{\sin (\pi  \gamma )}{\pi}$. The letter on each row refers to the subsection in which each of the different cases is treated.}
		\label{cf:fig:figstandardmonotones}
	\end{figure}
	
	\subsection{The Bures metric}\label{Bures}
	Among the standard monotone functions, the maximum is given by:
	\begin{align}
		f_B(x) = \frac{x+1}{2}\,.\label{cf:eq:BuresFunction}
	\end{align}
	The associated standard convex function $g_B(x)$ can be obtained using the map $L$ defined in \hyperref[box:contrastAdditionalProperties]{Box 10}, giving:
	\begin{align}
		g_B(x) = [Lf_B](x)= \frac{(x-1)^2}{x+1} = \frac{1}{2}\int_0^1 \de N_B(s)\;\norbra{\frac{(x-1)^2}{x+s}+\frac{(x-1)^2}{1+s x}}\,,
	\end{align}
	where we implicitly defined the measure $\de N_B(s) := \delta(1-s)$.
	The corresponding contrast function is given by:
	\begin{align}
		H_{ B} ( \rho || \sigma)  =  \Tr{(\rho-\sigma)(\LL_\sigma+\RR_\rho)^{-1}(\rho-\sigma)} = \int_0^{\infty} \dt\; \Tr{(\rho-\sigma)\,e^{-t\sigma}\,(\rho-\sigma)e^{-t\rho}}\,,\label{eq:BuresContrastFunction}
	\end{align}
	where we used the integral representation of the operator $(\LL_\sigma+\RR_\rho)^{-1}$ proved in App.~\ref{app:inverseAnticomm}. By doing so we also highlight the symmetry in the contrast function, i.e., $H_B(\rho||\sigma) = H_B(\sigma||\rho)$, which directly follows from the fact that $L$ maps standard monotones in standard convex functions (see the discussion in \hyperref[box:contrastAdditionalProperties]{Box 10} for more details). 
	
	\begin{figure}
		\centering
		\hspace{-0.75cm}
		\includegraphics[width=.9\linewidth]{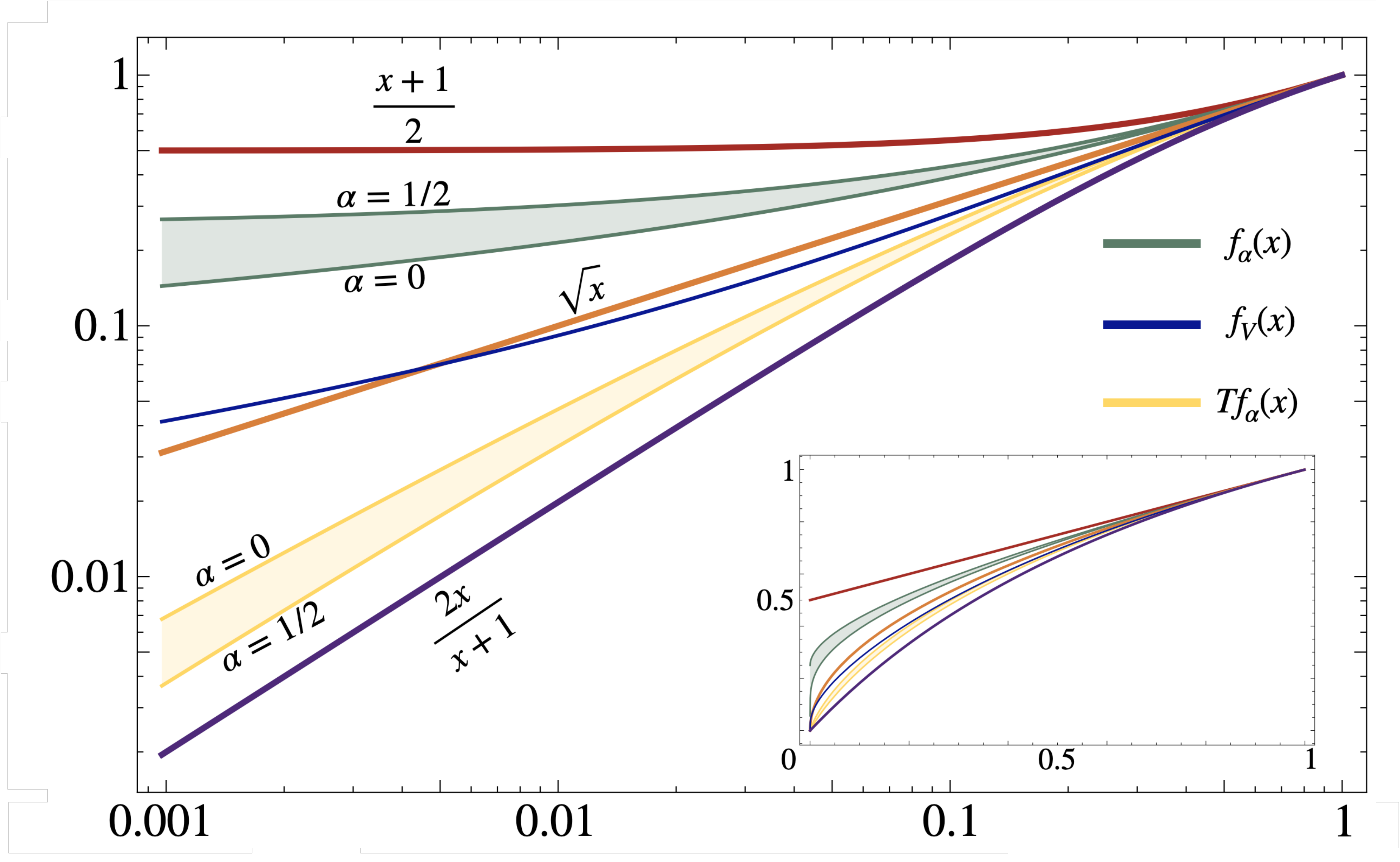}
		\caption{In the figure some of the most notable standard monotones are presented in a log-log scale (in the inset we show their behaviour in linear coordinates). In particular, we show the two extrema ($f_B$ from Sec.~\ref{Bures} and $f_H$ from Sec.~\ref{harmonic}), the square root (Sec.~\ref{geometricMean}), the family $f_\alpha$ of $\alpha$-divergences in the range $\alpha\in[0,1]$ and its transform $Tf_\alpha$ (Sec.~\ref{alpha}), together with the standard monotone associated with the quantum information variance $f_V(x)$ defined in Eq.~\eqref{cf:eq:fVariance} (Sec.~\ref{cf:sec:relEntVariance}). The shading in the two curves associated to $f_\alpha$ indicates that this family interpolates between  $f_0$, the standard monotone for the relative entropy (Sec.~\ref{relativeEntropy}), and the maximum value $f_{1/2} = \frac{1}{4}(1+\sqrt{x})^2$, corresponding to the Wigner-Yanase skew information (Sec.~\ref{wignerYanase}). It is interesting to notice that the monotone associated to the quantum information variance  does not satisfy $f_V(x)\geq\sqrt{x}$ nor $f_V(x)\leq\sqrt{x}$. This shows that there are monotones for which both $\J_f\big|_\pi$ and $\J_f^{-1}\big|_\pi$ are not CP (see Sec.~\ref{sec:CPfisher}).}
		\label{cf:fig:figstandardmonotones2}
	\end{figure}
	
	Applying Thm.~\ref{cf:thm:Ruskai}, it directly follows from Eq.~\eqref{eq:BuresContrastFunction} that the quantum Fisher operators are given by:
	\begin{align}
		\J_B\big|_{\pi}[A]  = \frac{1}{2}\,(\LL_\pi+\RR_\pi)[A] = \frac{1}{2}\,\{\pi, A\}\,; \qquad\qquad\qquad\J_B^{-1}\big|_{\pi}[A]  = 2(\LL_\pi+\RR_\pi)^{-1}[A] = \int_0^{\infty} \dt \; e^{-t\pi/2} \, A \,e^{-t\pi/2}\,.\label{cf:eq:buresJs}
	\end{align}
	It should be noticed that since $\J_B^{-1}\big|_{\pi}$ is in Kraus form, $f_B\in\mathcal{M}^-$.
	
	The metric generated by $\J_B^{-1}\big|_\pi$ is called Bures metric.
	Interestingly, this is one of the two examples for which a closed form for the geodesic distance exists~\cite{uhlmann1995geometric}, which is given by:
	\begin{align}
		d_B(\rho,\,\sigma)&= 2\arccos \norbra{\max\curbra{\Tr{WX^\dagger}\big| WW^\dagger = \rho,\;XX^\dagger = \sigma}} =2\arccos \norbra{\Tr{\sqrt{\sqrt{\rho}\sigma\sqrt{\rho}}}} =\\
		&= 2\arccos \norbra{F(\rho,\,\sigma)}\,,\label{cf:eq:buresDistance}
	\end{align}
	where in the last step we have implicitly defined the fidelity $F(\rho,\sigma)$. This quantity can be related to another quantifier of statistical distance, the Bures length, which takes the form~\cite{uhlmannSpheresHemispheresQuantum1996}:
	\begin{align}
		D_B(\rho,\,\sigma)^2 &= \min\curbra{\Tr{(W-X)(W-X)^\dagger}\big| WW^\dagger = \rho,\;XX^\dagger = \sigma} = 2\norbra{1-\Tr{\sqrt{\sqrt{\rho}\sigma\sqrt{\rho}}}} =\\&= 2\norbra{1-\cos\norbra{\frac{d_B(\rho,\,\sigma)}{2}}}\,.\label{cf:eq:bureslength}
	\end{align}
	It is  worthwhile to briefly sketch the main ideas that lead to these expressions, but we refer the interested reader to~\cite{bhatia2019bures,bengtssonGeometryQuantumStates2017} for a more extensive discussion. Consider the set of full rank matrices $\mathbb{M}_n$, endowed with the Hilbert-Schmidt scalar product (defined as $\langle A|B\rangle := \Tr{A^\dagger B}$), which can be thought as an extension of the Euclidean metric to matrices. There is a canonical projection $\pi:\mathbb{M}_n\rightarrow\mathbb{P}_n $on the set of positive matrices $\mathbb{P}_n$ defined as $A\rightarrow \rho := A A^\dagger$. It should be noticed that $A$ and $AU$ , with $U$ an arbitrary unitary, project on the same state. This gives to $\mathbb{M}_n$ the structure of a principal bundle, called the purification bundle. Moreover, $\mathbb{P}_n = \mathbb{M}_n/ U(n)$, that is positive matrices are the quotient space of full rank ones under the equivalence relation $A = B$ if $A = B U$ for some unitary $U$. Then, there exists a unique metric on $\mathbb{P}_n$ that makes $\pi$ an isometry on the image~\cite{gallot1990riemannian}. This metric turns out to be exactly the Bures one. Then, the set of matrices that project onto normalised states coincides with the sphere given by the equation $\Tr{AA^\dagger} = 1$. At this point, there are two natural distances defined in this space: the one on the surface of the sphere (i.e., the angle spanned by the great circle passing between any two points), and the chordal one (given by the shortest line in the Euclidean ambient space). Indeed, this is exactly what was computed in Eq.~\eqref{cf:eq:buresDistance} and Eq.~\eqref{cf:eq:bureslength} respectively, exploiting the gauge freedom in the purification to obtain the minimal distances in each case. 
	
	Moreover, one can easily show that the Fisher information associated to $\J_B^{-1}\big|_\pi$ upper bounds the square of the trace norm~\cite{temmeH2divergenceMixingTimes2010}. Indeed, thanks to the identity $||X||_1  =\sup_U |\Tr{XU}|$, it follows from a straightforward application of the Cauchy-Schwarz inequality that:
	\begin{align}
		||X||_1^2  &= \sup_{U| UU^\dagger = \id} |\Tr{XU}|^2  =  \sup_{U| UU^\dagger = \id} \left|\Tr{\J_B^{-1/2}\big|_\pi[X]\, \J_B^{1/2}\big|_\pi[U]}\right|^2  \leq\\
		&\leq \Tr{\J_B^{-1/2}[X]\,\J_B^{-1/2}[X]}\,\sup_{U| UU^\dagger = \id}\Tr{\J_B^{1/2}\big|_\pi[U^\dagger]\,\J_B^{1/2}\big|_\pi[U]}\,.
	\end{align}
	The first factor in the second line simply coincides with $\mathcal{F}_{B,\pi}(X)$. The second term instead can be explicitly computed as:
	\begin{align}
		\sup_{U| UU^\dagger = \id}\Tr{U^\dagger\,\J_B\big|_\pi[U]} = \frac{1}{2}\,\sup_{U| UU^\dagger = \id}\, \norbra{\Tr{U^\dagger \pi U}+\Tr{U^\dagger  U\pi}} = 1\,,
	\end{align}
	thanks to the cyclicity of the trace. Then, wrapping everything together, we finally obtain:
	\begin{align}
		||X||_1^2\,\leq \,\mathcal{F}_{B,\pi}(X)\,\leq\, \mathcal{F}_{f,\pi}(X)\,,\label{eq:buresInfoTraceNorm}
	\end{align}
	where the last inequality follows from the discussion in Sec.~\ref{sec:FisherOperatorProperties}. This shows that the Fisher information associated to the Bures metric is an upper bound to the square of the trace norm, and lowers bound all other Fisher informations.
	
	It should be pointed out that in the physics literature when one refers to the quantum Fisher information, actually what one has in mind is usually the Bures metric. Its importance is justified by the prominent role it plays in quantum metrology: indeed, one of the main results in this context is the famous Cramér-Rao bound, which gives a bound on the quality with which one can estimate parameters encoded in a state. For the sake of the discussion, consider in fact a family of density matrices $\rho(\theta)$ depending on some parameters $\theta$. Without loss of generality, suppose the true value of the parameter to be $\theta=0$ (this just corresponds to a change of variables). We define a locally unbiased estimator to be an observable satisfying~\cite{petzIntroductionQuantumFisher2011}:
	\begin{align}
		\frac{\partial}{\partial\theta}\,\Tr{\rho(\theta) A}\bigg|_{\theta=0} = 1\,.\label{cf:eq:localUnbiasedOperators}
	\end{align}
	This equation tells us that by measuring $A$ in a neighbourhood of $\theta=0$, one obtains the correct value of $\theta$ on average:
	\begin{align}
		\exists\,\varepsilon\;\big|\quad\Tr{\rho(\theta) A} = \theta \quad\quad \forall \theta \in(-\varepsilon, \varepsilon)\,.
	\end{align}
	This is a desirable feature, as it means that no transformation on the measured average is needed to give a good estimate of $\theta$. 
	Define now the symmetric logarithmic derivative (SLD) of $\rho(\theta)$ to be the operator $L_B$ such that, for any $X$, we have:
	\begin{align}
		\frac{\partial}{\partial\theta}\,\Tr{X\,\rho(\theta)}\bigg|_{\theta=0} = {\rm Re}\sqrbra{\,\Tr{X\,L_B\,\rho(0)}} =\frac{1}{2}\norbra{\Tr{X\,L_B\,\rho(0)}+\Tr{\rho(0)\,L_B\,X}}= \Tr{X\,\J_B\big|_{\rho(0)}[L_B]}\,,\label{cf:eq:102}
	\end{align}
	where one takes the real part since the left hand side is real, and then in the last equality we used the cyclicity of the trace to introduce $\J_B\big|_{\rho(0)}$.
	Since this relation holds for arbitrary $X$, one can impose the equality at the operator level, giving:
	\begin{align}
		\frac{\partial\rho(\theta)}{\partial\theta}\bigg|_{\theta=0} =\J_B\big|_{\rho(0)}[L_B]= \frac{1}{2}\norbra{L_B\rho(0) + \rho(0)L_B} \,.\label{cf:eq:103}
	\end{align}
	This expression makes it clear where the name for SLD comes from, as $L_B$ can be thought as a symmetric generalisation of the differential operator to non-commutative variables. We are now ready to prove the Cramer-Rao bound: this is a fundamental limit on how small the variance of locally unbiased operators can be. Writing it down explicitly, we have:
	\begin{align}
		&\Tr{\rho(0) A^2} =\, \Tr{A\,\J_B\big|_{\rho(0)}[A]}\geq \frac{\left| \Tr{A\,\J_B\big|_{\rho(0)}[L_B]}\right|^2 }{\Tr{L_B\,\J_B\big|_{\rho(0)}[L_B]}} = \frac{1}{\Tr{\partial_\theta \rho(\theta)\,\J^{-1}_B\big|_{\rho(\theta)}[\partial_\theta \rho(\theta)]}\big|_{\theta=0}}\,,\label{cf:eq:cramerRao130}
	\end{align}
	where the inequality is a simple application of  Cauchy-Schwartz for $\J_B\big|_{\rho(0)}$, and in the last step we used the definition of local unbiased operators (Eq.~\eqref{cf:eq:localUnbiasedOperators}) and inverted Eq.~\eqref{cf:eq:103}. This shows that the ability to estimate the parameter $\theta$ is intrinsically connected to the statistical difference between $\rho(0)$ and $\rho(0+\partial\theta)$. 
	
	It should be pointed out that the steps presented above can in principle be replicated for  any other quantum Fisher information. In fact, it is sufficient to define the generalised derivative $L_f$ as:
	\begin{align}
		\frac{\partial\rho(\theta)}{\partial\theta}\bigg|_{\theta=0} = \J_f\big|_{\rho(0)} [L_f]\,,\label{exponentialFamily}
	\end{align}
	to obtain:
	\begin{align}
		\Tr{A\,\J_f\big|_{\rho(0)} [A]} \geq \frac{1}{\Tr{L_f\,\J_f\big|_{\rho(0)} [L_f]} } = \frac{1}{\Tr{\partial_\theta \rho(\theta)\,\J_f^{-1}\big|_{\rho(0)} [\partial_\theta \rho(\theta)]}\big|_{\theta=0} }\,.\label{cf:eq:132b}
	\end{align}
	This procedure gives a whole family of bounds. Whereas for the Bures case the variance of an observable is a quite standard object, the problem here is to find an operational interpretation to the generalised variance on the left in the above equation. Still, this can be done in some contexts, as for example for the relative entropy, Sec.~\ref{relativeEntropy}.
	
	\subsection{The Heinz family}\label{Heinz}
	This family of standard monotone functions is given by: 
	\begin{align}
		f_{\gamma_>}(x) := \frac{x^\gamma+x^{1-\gamma}}{2}\,.\label{cf:eq:xGamma}
	\end{align}	
	Thanks to the  Löwner-Heinz inequality we know that $x^\gamma$ is matrix monotone for all $\gamma\in[0,1]$~\cite{hiaiIntroductionMatrixAnalysis2014}, which means that $f_{\gamma_>}(x)$ is matrix monotone as well. Hence, we only consider this range for the parameter $\gamma$. Then, using the transformation $L$ from \hyperref[box:contrastAdditionalProperties]{Box 10} we can also introduce the corresponding standard convex functions $g_{\gamma_>}(x)$ as:
	\begin{align}
		g_{\gamma_>}(x) = [L f_{\gamma_>}](x)= \frac{(x-1)^2}{x^\gamma+x^{1-\gamma}}\,,
	\end{align}
	for $\gamma\in [0,1]$, which give rise to the contrast functions:
	\begin{align}
		H_{\gamma_>}(\rho|| \sigma) = \Tr{(\rho-\sigma)(\LL_\sigma^\gamma\RR_\rho^{1-\gamma} +\LL_\sigma^{1-\gamma}\RR_\rho^{\gamma})^{-1}[(\rho-\sigma)]}\,,\label{cf:eq:144}
	\end{align}
	It is useful for what follows to give an integral expression to this quantity. To this end, we point out the following rewriting:
	\begin{align}
		(\LL_\sigma^\gamma\RR_\rho^{1-\gamma} +\LL_\sigma^{1-\gamma}\RR_\rho^{\gamma})[A] &= \sigma^{\gamma} \,A\,\rho^{1-\gamma} + \sigma^{1-\gamma} \,A\,\rho^{\gamma}=(\sigma^{\gamma} \,A\,\rho^\gamma)\rho^{1-2\gamma} + \sigma^{1-2\gamma} \,(\sigma^{\gamma} \,A\,\rho^\gamma)=\\
		&= (\LL_\sigma^{1-2\gamma} +\RR_\rho^{1-2\gamma})\LL_\sigma^{\gamma}\RR_\rho^\gamma[A] \,.
	\end{align}
	Then, the inverse in Eq.~\eqref{cf:eq:144} can be taken by inverting first  $(\LL_\sigma\RR_\rho)^\gamma$ (which simply gives $(\LL_\sigma\RR_\rho)^{-\gamma}$) and the superoperator $(\LL_\sigma^{1-2\gamma} +\RR_\rho^{1-2\gamma})$ independently. As the latter can be rewritten as $(\LL_{\sigma^{1-2\gamma}} +\RR_{\rho^{1-2\gamma}})$, we can use the result from App.~\ref{app:inverseAnticomm} to finally express Eq.~\eqref{cf:eq:144} as:
	\begin{align}
		H_{\gamma_>}(\rho|| \sigma) = \int_0^\infty \dt \; \Tr{(\rho-\sigma)e^{-t\sigma^{1-2\gamma}}\sigma^{-\gamma}(\rho-\sigma)\rho^{-\gamma}e^{-t\rho^{1-2\gamma}}}\,.\label{cf:eq:149x}
	\end{align}
	Once again the symmetry in the arguments is evident, and it follows from the fact that $g_{\gamma_>}$ is standard convex. Moreover, one can also apply Thm.~\ref{cf:thm:Ruskai} to the expression just obtained to derive the two Fisher operators:
	\begin{align}
		\J_{f_{\gamma_>}}\big|_{\pi}[A] &= \frac{1}{2}\,(\LL_\pi^{1-2\gamma} +\RR_\pi^{1-2\gamma})(\LL_\pi\RR_\pi)^\gamma[A] = \frac{1}{2}\, \{\pi^{1-2\gamma},\pi^\gamma \,A\,\pi^\gamma\}\,;\label{cf:eq:155x}\\
		\J^{-1}_{f_{\gamma_>}}\big|_{\pi}[A] &=2\,(\LL_\pi^{1-2\gamma} +\RR_\pi^{1-2\gamma})^{-1}(\LL_\pi\RR_\pi)^{-\gamma}[A]=\int_0^{\infty} \dt \; e^{-(t\pi^{1-2\gamma})/2}\,\pi^{-\gamma}A\,\pi^{-\gamma}\,e^{-(t\pi^{1-2\gamma})/2}\,.\label{cf:eq:154x}
	\end{align}
	Since $\J^{-1}_{f_{\gamma_>}}\big|_{\pi}$ is given in Kraus form, it is immediately clear that it is completely positive. This shows that $f_{\gamma_>}\in\mathcal{M}^{-}$ for all $\gamma\in[0,1]$. This not only provides a one-parameter family of functions in $\mathcal{M}^{-}$, but also allows to define a whole class of functions in the set. Indeed, thanks to the convex structure of $\mathcal{K}^+$ (see \hyperref[box:inverseMonotones]{Box 11}) one also has that:
	\begin{align}
		f_{\de\mu(\gamma)_>} (x) := \norbra{\int_0^1\de\mu(\gamma)\;\frac{2}{x^\gamma+x^{1-\gamma}} }^{-1}\,,\label{210}
	\end{align}
	for any probability distribution $\de\mu(\gamma)$ on $[0,1]$ is also in $\mathcal{M}^-$. This exemplifies how one can exploit convexity to pass from a one-parameter family to a much bigger class of standard monotones satisfying the same property.
	
	There are some more remarks to be made about the defining functions $f_{\gamma_>}$. First, it should be noticed that all of these functions lay above $f_{SQ}$, as $x^\gamma$ has a unique minimum as a function of $\gamma$ exactly for $\gamma = 1/2$, in which case $f_{(1/2)_>} = f_{SQ}$. Moreover, it is straightforward to verify that $f_{0_>}=f_{1_>}=f_B$, so the Heinz family interpolates between the largest and the smallest elements of $\mathcal{M}^-$ (according to $\leq$). It is then an open question whether one could obtain all the elements of $\mathcal{M}^-$ through the procedure in Eq.~\eqref{210} (it should be noticed that this would imply that $\mathcal{K}^+$ is a Bauer simplex). 
	
	Finally, we can use the integral expression for powers $\gamma\in(0,1)$:
	\begin{align}
		x^\gamma &= \frac{\sin\pi\gamma}{\pi}\int_0^\infty\de\lambda\;\lambda^{\gamma-1}\frac{x}{x+\lambda} =\frac{\sin\pi\gamma}{\pi}\int_0^1\de\lambda\;\norbra{\lambda^{\gamma-1}\frac{x}{x+\lambda}+\lambda^{-\gamma}\frac{x}{1+\lambda\,x}} \,,\label{211}
	\end{align}
	to rewrite $f_{\gamma_>}$ in terms of the extreme points (defined in Eq.~\eqref{cf:eq:81}):
	\begin{align}
		f_{\gamma_>}(x) = \int_0^1\de\lambda\norbra{\frac{\sin\pi\gamma}{\pi}\norbra{\frac{\lambda^{\gamma-1}+\lambda^{-\gamma}}{1+\lambda}}}\,f_\lambda (x)\,,\label{cf:eq:153o}
	\end{align}
	where \ms{the terms inside the parenthesis} define the measure $\de\mu_{Tf_{\gamma_>}}(\lambda)$. This equation can be used to give alternative expressions to the Fisher operator in Eq.~\eqref{cf:eq:155x}.
	
	Interestingly, we can use the map $T$ to define an similar family on $\mathcal{M}^{+}$, namely:
	\begin{align}
		f_{\gamma_<}(x) := [Tf_{\gamma_>}](x)= \frac{2x}{x^\gamma + x^{1-\gamma}}\,.\label{cf:eq:156z}
	\end{align}
	
	The standard convex functions in this case are given by:
	\begin{align}
		g_{\gamma_<}(x) = [L f_{\gamma_<}](x)= \frac{(x^\gamma+x^{1-\gamma})(x-1)^2}{4x}\,
	\end{align}
	which result in contrast functions of the form:
	\begin{align}
		H_{\gamma_>}(\rho|| \sigma) =\frac{1}{4}\; \Tr{(\rho-\sigma)(\LL_\sigma^{\gamma-1}\RR_\rho^{-\gamma} +\LL_\sigma^{-\gamma}\RR_\rho^{\gamma-1})[(\rho-\sigma)]}
	\end{align}
	Interestingly, thanks to the integral expression in Eq.~\eqref{211}, we can also express the standard convex functions as:
	\begin{align}
		g_{\gamma_<}(x) =\frac{1}{2}\;\int_0^1\de s\norbra{\frac{\sin\pi\gamma}{\pi}\norbra{\frac{s^{\gamma-1}+s^{-\gamma}}{2}}}\norbra{\frac{(x-1)^2}{x+s}+\frac{(x-1)^2}{1+s x}}\,
	\end{align}
	showing that the defining measure (see Eq.~\eqref{cf:eq:AppIntegralExpression}) in this case is given by $\de N_g(s):= \frac{\sin\pi\gamma}{\pi}\norbra{\frac{s^{\gamma-1}+s^{-\gamma}}{2}}\de s$.  
	
	Thanks to the relation between $\J_f\big|_\pi$ and $\J_{Tf}^{-1}\big|_{\pi^{-1}}$  given in Eq.~\eqref{cf:eq:TTransformApp}, we can directly deduce from Eq.~\eqref{cf:eq:155x} and Eq.~\eqref{cf:eq:154x} that:
	\begin{align}
		&\J_{f_{\gamma_<}}\big|_{\pi}[A]  = \J_{Tf_{\gamma_>}}^{-1}\big|_{\pi^{-1}}[A] = \int_0^{\infty} \dt \; e^{-(t\pi^{2\gamma-1})/2}\,\pi^{\gamma}A\pi^{\gamma}\,e^{-(t\pi^{2\gamma-1})/2}\,;\label{cf:eq:111}\\
		&\J_{f_{\gamma_<}}^{-1}\big|_{\pi}[A] =\J_{Tf_{\gamma_>}}\big|_{\pi^{-1}}[A]  =  \frac{1}{2} \,\{ \pi^{2\gamma-1}, \pi^{-\gamma} A \pi^{-\gamma} \}\,.\label{cf:eq:112}
	\end{align}
	As it was expected from the action of $T$, all $\J_{f_{\gamma_<}}$ are CP, as they are presented in Kraus form. Hence, it should be noticed that the convex hull of $f_{\gamma_<}$ is contained in $\mathcal{M}^+$, and that this family ranges from the smallest element in the set, $f_{0_<} = f_{1_<}=f_H$, to the largest, namely $f_{(1/2)_<} = f_{SQ}$. Interestingly, we can also define a family of functions in $\mathcal{M}^+$ in a way complete analogous with the procedure presented in Eq.~\eqref{210}. Indeed, since $\mathcal{M}^+$ is convex, we have that the functions defined as:
	\begin{align}
		f_{\de\mu(\gamma)_<} (x) := \int_0^1\de\mu(\gamma)\;\frac{2x}{x^\gamma+x^{1-\gamma}} \,,
	\end{align}
	are also in $\mathcal{M}^+$ for any probability distribution $\de\mu(\gamma)$ on $[0,1]$. Moreover, it is straightforward to verify that $Tf_{\de\mu(\gamma)_>} =	f_{\de\mu(\gamma)_<}$ for the same probability distribution.
	
	\subsection{The family of $\alpha$-divergences}\label{alpha}
	This family of standard monotone functions takes the form:
	\begin{align}
		f_\alpha(x) = \frac{\alpha(1-\alpha )(x-1)^2}{(1-x^\alpha)(1-x^{1-\alpha})}\,.\label{eq:alphaMonotones}
	\end{align}
	Despite the complicated expression of Eq.~\eqref{eq:alphaMonotones}, this arises from the family of functions:
	\begin{align}
		g_\alpha(x) =\frac{x^{\alpha}-1}{\alpha(\alpha -1)} \,,\label{cf:eq:159d}
	\end{align}
	which are matrix convex for $\alpha\in[-1,2]$ and give rise to the following contrast functions:
	\begin{align}
		H_\alpha (\rho|| \sigma) &= \frac{1}{\alpha(\alpha -1)} \norbra{ \Tr{\sigma^\alpha\rho^{1-\alpha}}-1}= \frac{1}{\alpha(\alpha -1)}\int_0^\alpha \de \beta \; \Tr{ \rho^{1-\beta}(\log\sigma - \log\rho)\sigma^{\beta}}\,,\label{cf:eq:hAlphaDivergence}
	\end{align}
	called $\alpha$-divergences. The integral expression in the equation above is obtained simply by differentiating with respect to $\alpha$ and integrating again. A similar family is the one of Rényi divergences, given by:
	\begin{align}
		S_\alpha (\rho|| \sigma) :&= \frac{1}{\alpha-1} \log \, \Tr{\rho^\alpha\sigma^{1-\alpha}} = \frac{1}{\alpha-1} \log \norbra{1 + \int_0^\alpha \de \beta \;\Tr{\rho^{\beta} (\log\rho - \log\sigma)\sigma^{1-\beta}}}\,.\label{cf:eq:intExpr}
	\end{align}
	It is easy to verify that the two are related by the equation:
	\begin{align}
		H_\alpha (\rho|| \sigma)  = \frac{e^{-\alpha S_{1-\alpha}(\rho||\sigma)} -1}{\alpha(\alpha -1)}\,.\label{contrastRenyi}
	\end{align}
	Interestingly, the $\alpha$-contrast functions and the corresponding Rényi divergences locally give rise to the same metric structure. Before entering into this, it should be noticed that we can find an integral expression for $g_\alpha$ of the form in Eq.~\eqref{cf:eq:gIntegralExpression}, namely:
	\begin{align}
		g_\alpha(x) &= \frac{\sin \pi \alpha}{\pi\alpha(1-\alpha)}\int_0^\infty \de s \; \frac{s^\alpha}{(1+s)^2}\,\norbra{\frac{(x-1)^2}{x+s}} - \frac{1}{\alpha-1}(1-x)=\\
		&=\frac{\sin \pi \alpha}{\pi\alpha(1-\alpha)}\norbra{\int_0^1 \de s \; \frac{s^\alpha}{(1+s)^2}\,\norbra{\frac{(x-1)^2}{x+s}}+\int_0^1 \de s \; \frac{s^{1-\alpha}}{(1+s)^2}\,\norbra{\frac{(x-1)^2}{1+s x}}} - \frac{1}{\alpha-1}(1-x)\,,\label{eq:224}
	\end{align}
	where we can ignore the extra linear term as it does not contribute to the contrast function (see Sec.~\ref{contrastFunctionsToFisher}). Indeed, once one symmetrises this function, any linear terms erase, so that we have:
	\begin{align}
		g^{\rm symm}_\alpha(x) = [Lf_\alpha](x)  =\frac{(1-x^\alpha)(1-x^{1-\alpha})}{2\alpha(1-\alpha)} = \frac{\sin \pi \alpha}{2\pi\alpha(1-\alpha)} \int_0^1 \de s \;\frac{(s^\alpha+s^{1-\alpha})}{(1+s)^2}\norbra{\frac{(x-1)^2}{x+s}+\frac{(x-1)^2}{1+s x}}\,,
	\end{align}
	where we can identify the defining measure from Eq.~\eqref{cf:eq:AppIntegralExpression} to be $\de N_g(s):= \frac{\sin \pi \alpha}{\pi\alpha(1-\alpha)}\frac{(s^\alpha+s^{1-\alpha})}{(1+s)^2}\,\de s$. It should be noticed that this expression of $\de N_g(s)$ corrects the one used in most of the literature, most probably arising from a typo in~\cite{lesniewskiMonotoneRiemannianMetrics1999}.
	
	In order to give a closed expression for $\J_\alpha^{-1}\big|_{\pi}$ it is useful to first study how one can rewrite $H^{\rm symm}_\alpha (\rho|| \sigma)$ using the integral expansion in Eq.~\eqref{cf:eq:hAlphaDivergence}. In particular, carrying out the same procedure one more time, we obtain:
	\begin{align}
		&2\alpha(\alpha -1)\,H^{\rm symm}_\alpha (\rho|| \sigma) = \norbra{ \Tr{\sigma^\alpha\rho^{1-\alpha}}+\Tr{\rho^\alpha\sigma^{1-\alpha}}-2}=\\
		&= \int_0^\alpha \de \beta \; \norbra{\Tr{ \rho^{1-\beta}(\log\sigma - \log\rho)\sigma^{\beta}} + \Tr{ \rho^{\beta}(\log\rho - \log\sigma)\sigma^{1-\beta}}}=\label{cf:eq:intStep114}\\
		&=\int_0^\alpha \de\beta \left( \int_0^\beta\de\gamma\; \Tr{\rho^{1-\gamma} (\log\sigma - \log\rho)\sigma^{\gamma} (\log\sigma - \log\rho)} -\int_\beta^1\de\gamma\; \Tr{\rho^{\gamma} (\log\rho - \log\sigma)\sigma^{1-\gamma} (\log\rho - \log\sigma)}\right)\,,\label{cf:eq:intStep130}
	\end{align}
	where the easiest way to verify the last equality is to explicitly carry out the integrals in Eq.~\eqref{cf:eq:intStep130} and see that it correctly retrieves Eq.~\eqref{cf:eq:intStep114}. Substituting $\gamma\rightarrow1-\gamma$ in the second integral of Eq.~\eqref{cf:eq:intStep130}, one finally obtains:
	\begin{align}
		2\alpha(\alpha -1)\,H^{\rm symm}_\alpha (\rho|| \sigma)&=\int_0^\alpha \de\beta \norbra{\int_0^\beta\de\gamma - \int_{0}^{1-\beta}\de\gamma} (\Tr{\rho^{1-\gamma} (\log\sigma - \log\rho)\sigma^{\gamma} (\log\sigma - \log\rho)})=\\
		&=-\int_0^\alpha \de\beta \int_{\beta}^{1-\beta}\de\gamma \;\Tr{\rho^{1-\gamma} (\log\sigma - \log\rho)\sigma^{\gamma} (\log\sigma - \log\rho)}\,.\label{cf:eq:118}
	\end{align}
	It should be noticed that the symmetry in the arguments of $H^{\rm symm}_\alpha (\rho|| \sigma)$ is reflected in the symmetry under the transformation $\alpha\rightarrow1-\alpha$. Moreover, thanks to the appearance of the two differences of logarithms in Eq.~\eqref{cf:eq:118}, it is straightforward to give the local expansion
	of the $\alpha$-divergences. In fact, denote by  $\J_L^{-1}\big|_{\pi}$ the Fréchet derivative of the logarithm, which reads in formulae~\cite{hiaiIntroductionMatrixAnalysis2014}:
	\begin{align}
		\J_L^{-1}\big|_{\pi}[A] := \lim_{\varepsilon\rightarrow0}\frac{\log(\pi + \varepsilon A) - \log\pi  }{\varepsilon} = \int_0^{\infty} \dt \; (\pi + t)^{-1} \, A \,(\pi + t)^{-1}\,. \label{eq:logExpansion}
	\end{align}
	One can use this expression to approximate the operator $\log(\pi + \varepsilon A)$ by $\log(\pi) + \varepsilon\, \J_L^{-1}\big|_{\pi}[A]$. Then, a simple substitution shows that the expansion of the $\alpha$-divergences is given by:
	\begin{align}
		H_\alpha (\pi|| \pi + \varepsilon \delta \rho) &=  \frac{S_{1-\alpha}(\pi|| \pi + \varepsilon \delta \rho)}{1-\alpha} = \frac{\varepsilon^2}{2\,\alpha(1-\alpha)}\int_0^\alpha \de \beta \int_\beta^{1-\beta}\de \gamma \;\Tr{\pi^{1-\gamma}\, \J_L^{-1}\big|_{\pi}[\delta\rho]\,\pi^\gamma \J_L^{-1}\big|_{\pi}[\delta\rho]}+\bigo{\varepsilon^3}=\\
		&= \frac{\varepsilon^2}{2\,\alpha(1-\alpha)}\int_0^\alpha \de \beta \int_\beta^{1-\beta}\de \gamma \;{\rm cov}^\gamma_\pi(\J_L^{-1}\big|_{\pi}[\delta\rho],\, \J_L^{-1}\big|_{\pi}[\delta\rho])+\bigo{\varepsilon^3}\,,\label{cf:eq:alphaDivergences}
	\end{align}
	where in the last line we implicitly defined the $\gamma$-covariance, which generically reads:
	\begin{align}
		{\rm cov}^\gamma_\pi(A,B) := \Tr{\pi^{1-\gamma} A\pi^{\gamma} B} - \Tr{ \pi\,A}\Tr{\pi\,B}\,.\label{eq:gammaCovDef}
	\end{align}
	It should be noticed that, since perturbations of states are traceless $\Tr{\pi\,\J_L^{-1}\big|_{\pi}[\delta\rho]} =\Tr{\delta\rho} = 0$. For this reason, the second term in Eq.~\eqref{eq:gammaCovDef} has zero contribution.
	
	We are now ready to apply Thm.~\ref{cf:thm:Ruskai} to Eq.~\eqref{cf:eq:alphaDivergences}, which allows to deduce that the corresponding family of quantum Fisher information operators takes the form:
	\begin{align}
		\J_\alpha^{-1}\big|_{\pi} [A]  =  \frac{1}{\,\alpha(1-\alpha)}\int_0^\alpha \de \beta \int_\beta^{1-\beta}\de \gamma \;\J_L^{-1}\big|_{\pi}\sqrbra{\,\pi^{1-\gamma}\, \J_L^{-1}\big|_{\pi}[A]\,\pi^\gamma}\,.\label{cf:eq:183x}
	\end{align}
	This expression is valid in general, i.e., for $\alpha\in[-1,2]$. Still, if one restricts to the smaller interval given by $\alpha\in(0,1)$, it was shown in~\cite{besenyei2011completely} that one can actually express $\J_\alpha^{-1}\big|_{\pi} $ as:
	\begin{align}
		\J_\alpha^{-1}\big|_{\pi} [A]=  \frac{\sin^2 (\pi \alpha)}{\,\pi^2 \alpha(1-\alpha)}\int_0^\infty \de s\int_0^\infty \de t \; s^\alpha\,t^{1-\alpha}\, (\pi+s)^{-1}(\pi+t)^{-1}A(\pi+s)^{-1}(\pi+t)^{-1}\,.
	\end{align}
	Since $\J_\alpha^{-1}\big|_{\pi}$ is in Kraus form, this shows that $f_\alpha\in\mathcal{M}^{-}$ for $\alpha\in[0,1]$. Moreover, it was also proved in~\cite{hiai2013families} that $f_\alpha\in\mathcal{M}^{+}$ for $\alpha\in[-1,-\frac{1}{2}]\cup[\frac{3}{2},2]$ (i.e., in this parameter range $\J_\alpha\big|_{\pi}$ is CP), and that for  $\alpha\in(-\frac{1}{2},0)\cup(1,\frac{3}{2})$ neither $\J_\alpha\big|_{\pi}$ nor $\J_\alpha^{-1}\big|_{\pi}$ are CP. This gives an exhaustive list of the possible behaviours of the Fisher information operators associated to $f_\alpha$. Unfortunately, we were not able to find a general expression for $\J_\alpha\big|_{\pi}$, not even in the parameter range for which it is completely positive.
	
	Before moving on to the next sections, it is interesting to connect the local behaviour of the $\alpha$-divergences to the Wigner-Yanase-Dyson skew information, which is defined by the formula:
	\begin{align}
		I^\gamma(\pi, X ) &= -\frac{1}{2}\Tr{[\pi^{\gamma}, X][\pi^{1-\gamma}, X]}=\Tr{X^2\,\pi} -\Tr{\pi^{1-\gamma}\, X\,\pi^{\gamma}\, X}\,.\label{cf:eq:wyd}
	\end{align}
	This object can be interpreted as a quantifier of the quantum uncertainty of the observable $X$ as measured in the state $\pi$~\cite{wignerINFORMATIONCONTENTSDISTRIBUTIONS1963}. Then, by adding and subtracting the variance of $\J_L^{-1}\big|_{\pi}[\delta\rho]$ to Eq.~\eqref{cf:eq:alphaDivergences}, one obtains:
	\begin{align}
		H_\alpha &(\pi|| \pi + \varepsilon \delta \rho) =\frac{\varepsilon^2}{2}  \Tr{\norbra{\J_L^{-1}\big|_{\pi}[\delta\rho]}^2\,\pi} +  \frac{\varepsilon^2}{2\,\alpha(\alpha-1)}\int_0^\alpha \de \beta \int_\beta^{1-\beta}\de \gamma \;I^\gamma(\pi, \J_L^{-1}\big|_{\pi}[\delta\rho] )\,.
	\end{align}
	This expression is particularly useful when one wants to isolate the effects of the coherences in the basis of $\pi$. Notice in fact that for a full rank state $\pi$, the Wigner-Yanase-Dyson skew information $I^\gamma(\pi, \J_L^{-1}\big|_{\pi}[\delta\rho] ) = 0$  if and only if  $[\pi,\delta\rho]=0$, while it is positive in all other cases. This means that all $\alpha$-divergences can be locally decomposed in a sum of the variance of $\J_L^{-1}\big|_{\pi}[\delta\rho]$ with a positive correction accounting for the non-commutativity between $\pi$ and $\delta\rho$. Moreover, if $\pi$ and $\delta\rho$ all the $\alpha$-divergences collapse into the same quantity (which is a direct consequence of the uniqueness of the classical Fisher information).
	
	In the next sections, we present different examples of quantum Fisher information as $\alpha$ varies. First, it should be noticed that thanks to the symmetry $f_{\alpha} = f_{1-\alpha}$, it is sufficient to characterise the interval $[0,2]$ alone, as $[0,1]$ is mapped into itself, and $[1,2]$ into $[-1,1]$, completing the range of allowed parameters. Then, for $\alpha\in[0,1/2]$ the value $f_\alpha(x)$ is monotonically increasing in $\alpha$, whereas for $\alpha\in[1/2,2]$ this behaviour inverts, and $f_\alpha(x)$ becomes monotonically decreasing. There are three limits that are notable enough to deserve a name: the Fisher information associated with the relative entropy, obtained in the limit $\alpha\rightarrow0$, called Kubo-Mori-Bogoliubov (KMB) inner product; the one given by the limit $\alpha\rightarrow1/2$, i.e, the largest element of the family according to $\leq$, called the Wigner-Yanase metric; and finally, as $\alpha\rightarrow2$ one gets the minimal function $f_H(x) = 2x/(x+1)$, called the harmonic mean.
	
	\subsection{The Wigner-Yanase skew information ($\alpha = 1/2$)}\label{wignerYanase}
	The first case we consider is the family of $\alpha$-divergence is the one of the Wigner-Yanase metric, corresponding to the case $\alpha = \frac{1}{2}$. In this context, the convex function in Eq.~\eqref{cf:eq:159d} and the corresponding divergence are given by:
	\begin{align}
		\label{eq:WY_contrast}
		g_{WY} (x)=  4(1-\sqrt{x})\,,\qquad H_{WY} (\rho|| \sigma)  =  4 ( 1- \Tr{\sqrt{\rho}\,\sqrt{\sigma}})\,,
	\end{align}
	while the standard monotone function takes the particularly simple form:
	\begin{align}
		f_{WY} (x)=  \norbra{\frac{1+\sqrt{x}}{2}}^2\,.
	\end{align}
	By comparing this expression with $f_B$, it can be easily verified that $f_{WY} (x)\equiv \norbra{f_B(\sqrt{x})}^2$. This relation allows to compute the quantum Fisher operators for the Wigner-Yanase metric directly from the ones for the Bures. Indeed, it follows from the straightforward manipulations:
	\begin{align}
		\J_{WY}\big|_{\pi}[A] = \RR_\pi \,f_{WY}(\LL_\pi\RR_\pi^{-1})[A] = \RR_{\sqrt{\pi}}^2\, f_{B}(\LL_{\sqrt{\pi}}\RR_{\sqrt{\pi}}^{-1})^2[A] = \J_{B}\big|_{\sqrt{\pi}}[\J_{B}\big|_{\sqrt{\pi}}[A]]\,,\label{245}
	\end{align}
	that one can rewrite the Wigner-Yanase Fisher operators as:
	\begin{align}
		&\J_{WY}\big|_{\pi} [A]  = \frac{1}{4}\,\{\sqrt{\pi},\{\sqrt{\pi}, A\}\}\,;
		\qquad\qquad\J_{WY}^{-1}\big|_{\pi} [A]  = \int_0^{\infty} \dt \int_0^{\infty} \de s \; e^{-(t+s)\sqrt{\pi}/2} \, A \,e^{-(t+s)\sqrt{\pi}/2}\,.
	\end{align}
	The identification in Eq.~\eqref{245} also allows to explicitly compute the geodesics associated to the Wigner-Yanase metric. The construction needed is completely analogous to the one for classical states~\cite{bengtssonGeometryQuantumStates2017}, which we briefly sketch  for completeness. Denote the set of Hermitian matrices of dimension $n$ by $\mathcal{H}_n$. We then define the map\footnote{This corresponds to a global section in the purification bundle discussed in the context of the Bures metric (see Sec.~\ref{Bures}).} $S: \mathcal{S}_n\rightarrow\mathcal{H}_n$ associating to each state $\pi$ its unique positive square root $\sqrt{\pi}$. The target space is naturally endowed with the Hilbert-Schmidt scalar product, so that the image of $\mathcal{S}_d$ can be characterised by the equation:
	\begin{align}
		S(\mathcal{S}_n) := \curbra{X\in\mathcal{H}_n \;|\; X\geq 0 \;\;\land\;\;\Tr{X^2} = 1}\,,\label{247}
	\end{align}
	which means that $S(\mathcal{S}_n)$ is just given by the positive octant of a $(d^2-1)$-sphere in $\mathcal{H}_n$. The geodesic distance in this context is well known: since geodesics are given by great circles, the geodesic distance simply coincides with the angle $\theta$ they subsume. Moreover, since on  $X,Y\in S(\mathcal{S}_d)$ it holds that $\cos\theta = \Tr{XY}$, we finally obtain:
	\begin{align}
		d_{S(\mathcal{S}_n) } = \arccos \Tr{X Y}\,.
	\end{align}
	Given the simplicity of this construction, it is interesting to consider what is the pullback of the Hilbert-Schmidt metric on $\mathcal{S}_n$. To this end, we need to compute the differential of the map $S$, which is defined by:
	\begin{align}
		\de S\big|_\pi[\delta\rho]  := \frac{\de}{\de\varepsilon}\,\sqrt{\pi+\varepsilon\,\delta\rho}\;\Big|_{\varepsilon=0}\,.\label{248}
	\end{align}
	This can be computed by noticing that:
	\begin{align}
		\norbra{\de S\big|_\pi[\delta\rho]}\,\sqrt{\pi} + \sqrt{\pi} \,\norbra{\de S\big|_\pi[\delta\rho]}= \frac{\de}{\de\varepsilon}\norbra{\sqrt{\pi+\varepsilon\,\delta\rho}\,\sqrt{\pi+\varepsilon\,\delta\rho}} \big|_{\varepsilon=0}= \frac{\de}{\de\varepsilon}\norbra{\pi+\varepsilon\,\delta\rho}\big|_{\varepsilon=0} = \delta\rho\,,
	\end{align}
	where the first equality can be explicitly verified from the definition in Eq.~\eqref{248}. Then, since the equation above can be rewritten as $\de S\big|_\pi[\delta\rho] = \frac{1}{2} \,\J^{-1}_{B}\big|_{\sqrt{\pi}}[\delta\rho]$, we also obtain that the pullback of the Hilbert-Schmidt metric takes the form:
	\begin{align}
		\Tr{\de S\big|_\pi[A]\,\de S\big|_\pi[B]} = \frac{1}{4}\,\Tr{\J^{-1}_{B}\big|_{\sqrt{\pi}}[A]\,\J^{-1}_{B}\big|_{\sqrt{\pi}}[B]} = \frac{1}{4}\,\Tr{A\,\J^{-1}_{WY}\big|_{\pi}[B]}\,.
	\end{align}
	This proves a remarkable fact: that, up to a factor $\frac{1}{4}$, the Wigner-Yanase skew information is the pullback of the Hilbert-Schmidt metric by the root map $S$. This means that, thanks to Eq.~\eqref{247},  the space $\mathcal{S}_n$ with the metric $\J_{WY}^{-1}\big|_{\pi}$  is isometric\footnote{This was already noticed in~\cite{gibiliscoWignerYanaseInformation2003}, where it was pointed out that the Wigner-Yanase metric is the only known Fisher information that has a constant positive curvature, a property that uniquely identifies subsets of hyperspheres.} to an $n$-sphere of radius 2.  Thus, one can give a closed form for the geodesic distance, given by:
	\begin{align}
		d_{WY}(\rho,\sigma) = 2\,\arccos \Tr{\sqrt{\rho}\sqrt{\sigma}}\,,\label{cf:eq:wyGeodesicDistance}
	\end{align}
	and a simple expression for the geodesic path connecting any two density matrices $\rho$ and $\sigma$, namely:
	\begin{align}
		\gamma_{WY}^{\rho\rightarrow\sigma} (t) = \frac{\norbra{(1-t)\,\sqrt{\rho}+t\,\sqrt{\sigma}}^2}{\Tr{\norbra{(1-t)\,\sqrt{\rho}+t\,\sqrt{\sigma}}^2}}\,.\label{cf:eq:Geodesic}
	\end{align}
	It is interesting to compare Eq.~\eqref{cf:eq:wyGeodesicDistance} with the Bures geodesic distance obtained in Eq.~\eqref{cf:eq:buresDistance}: these two quantities coincide for commuting states, whereas in general one has $d_{B}(\rho,\sigma)\leq d_{WY}(\rho,\sigma)$, due to the inequality $\Tr{\sqrt{\rho}\sqrt{\sigma}}\leq F(\rho,\sigma)$~\cite{friedland1994product} and the fact that the arccosine is monotonically decreasing. To the best of the authors' knowledge, these are the only two cases for which one has an analytical expression for the geodesic distance. 
	
	Another important property of this metric is that it can be used to express the quantum Chernoff bound~\cite{audenaert2007discriminating}. This arises in the following setting: consider the task of distinguishing two different states $\rho_0$ and $\rho_1$, knowing that each is prepared with a probability $p_0$ and $p_1$. In this context, the symmetric distinguishability
	problem consists in finding a POVM (positive operator-valued measure) $\{E_0,E_1\}$ such that the probability of error $P_e:= p_0 \Tr{E_1\rho_0}+p_1 \Tr{E_0\rho_1}$ is minimal. By defining the positive and negative part of a Hermitian operator as $A_{\pm}:= (|A|\pm A)/2$, one can prove that the optimal measurement is obtained by setting $E_1$ to be the projector on the range of $(p_1\rho_1-p_0\rho_0)_+$, yielding the following expression for the minimum error probability~\cite{audenaert2007discriminating}:
	\begin{align}
		P_{e,min} = \frac{1}{2}(1-\Tr{|p_1\rho_1-p_0\rho_0|})\,.
	\end{align}
	This discussion was done in the single copy scenario. If one allows more copies of $\rho_{0/1}$ to be prepared at the same time with the probability $p_{0/1}$, one can again infer that the optimal error probability is given by:
	\begin{align}
		P_{e,min,n} = \frac{1}{2}(1-\Tr{|p_1\rho_1^{\otimes n}-p_0\rho_0^{\otimes n}|})\,.
	\end{align}
	Differently from what happened for the single copy scenario, though, this probability scales with $n$, and in particular it asymptotically decreases as $P_{e,min,n}\simeq e^{-\xi_{QCB} n}$ for $n\gg1$. In~\cite{audenaert2007discriminating} it was proved that the exponent takes the form:
	\begin{align}
		\xi_{QCB} := -\lim_{n\rightarrow\infty} \, \frac{\log P_{e,min,n}}{n} = \max_{0\leq s\leq 1}(-\log \Tr{\rho_0^{s}\rho_1^{1-s}})\,.\label{cf:eq:chernoffBound}
	\end{align}
	This result goes under the name of quantum Chernoff bound. The position at which the maximum is found usually depends on the particular form of $\rho_0$ and $\rho_1$. Still, if one restricts to the  case in which $\rho_1 = \rho_0+\delta\rho$, with $\delta\rho\ll1$, then one can apply the methods from Sec.~\ref{alpha} to express $\xi_{QCB}$ in terms of $\J_\alpha^{-1}\big|_{\rho_0}$. In this context, the unique maximum is attained for $s=\frac{1}{2}$, meaning that:
	\begin{align}
		\xi_{QCB} = \frac{1}{8}\Tr{\delta\rho \, \J_{WY}^{-1}\big|_{\rho_0} [ \delta\rho] }\,.\label{chernoffBoundLocal}
	\end{align}
	This further motivates the interest in the Wigner-Yanase metric.
	
	\subsection{The relative entropy ($\alpha=0$)}\label{relativeEntropy}
	The most renowned among the $\alpha$-divergences, and among the contrast functions in general, is the one obtained in the limit $\alpha\rightarrow0$, namely the relative entropy. In fact, carrying out the limit of 
	Eq.~\eqref{cf:eq:alphaDivergences} one gets:
	\begin{align}
		g_L(x) := \lim_{\alpha\rightarrow0}\; \frac{1}{\alpha(\alpha -1)}\int_0^\alpha \de \beta \; (x^ \beta \, \log x) = -\log x \,.
	\end{align}
	The corresponding contrast function takes the familiar form:
	\begin{align}
		S(\rho||\sigma) := H_L (\rho|| \sigma)  =\Tr{\rho \,(\log\rho-\log\sigma)}  \,.\label{cf:eq:relativeEntropy}
	\end{align}
	Moreover, its symmetrised version has the following integral expression:
	\begin{align}
		\frac{g_L(x)+x \,g_L(x^{-1})}{2}= \frac{1}{2}\int_0^1 \de s\;\frac{1}{(1+s)}\,\norbra{\frac{(x-1)^2}{x+s}+\frac{(x-1)^2}{1+s x}}\,,
	\end{align}
	allowing to identify $\de N_L(s) := \frac{1}{(1+s)} \de s$. This divergence has the special property that it is additive on tensor products:
	\begin{align}\label{cf:eq:tensorLog}
		S(\rho_A\otimes\rho_B||\sigma_A\otimes\sigma_B) = 	S(\rho_A||\sigma_A) + 	S(\rho_B||\sigma_B)\,,
	\end{align}
	whereas in general one can just prove that $H_g (\rho\otimes\tau||\sigma\otimes\tau)= H_g (\rho||\sigma)$ (see Eq.~\eqref{cf:eq:131n}).
	The standard monotone function in this case is given by:
	\begin{align}
		f_L(x) = \frac{x-1}{\log x} = \int_0^1 \de\gamma \; x^\gamma = \int_0^1 \de\gamma \; f_{\gamma_>}(x)\,,\label{eq:253}
	\end{align}
	where the last equality shows how $f_L$ can be defined as a uniform mixture of $f_{\gamma_>}$. Moreover, this integral expression allows the immediate calculation of $\J_L\big|_{\pi}$ as:
	\begin{align}
		&\J_L\big|_{\pi}[A]  = \RR_\pi \int_0^1 \de\gamma \;(\LL_\pi\RR_\pi^{-1})^\gamma[A]= \int_0^{1} \de\gamma\; \pi^{\gamma}  A \,\pi^{1-\gamma} \,.\label{cf:eq:appJL}
	\end{align}
	Interestingly, this superoperator is the same one gets from the Dyson series of the exponential, i.e., $e^{\log(\pi)+\varepsilon A}\simeq\pi + \varepsilon\,\J_L\big|_{\pi}[A]$~\cite{hiaiIntroductionMatrixAnalysis2014}. Then, thanks to this identification, we can deduce that $\J^{-1}_L[A]$ will be given by the first term in the expansion of $\log(\pi + \varepsilon A)$. Indeed, this follows from the equalities:
	\begin{align}
		\pi +\varepsilon A = e^{\log(\pi + \varepsilon A)} = e^{\log(\pi) + \varepsilon\, \J_L^{-1}\big|_{\pi}[A]} = \pi + \varepsilon\,\J_L\big|_{\pi}\J_L^{-1}\big|_{\pi}[A]	+\bigo{\varepsilon^2}\,,
	\end{align}
	where we expanded at first order in $\varepsilon$ and used the derivative of the logarithm defined in Eq.~\eqref{eq:logExpansion}. Thus, it follows that $\J^{-1}_L\big|_{\pi}$ coincides with the Fréchet derivative of the logarithm, i.e.:
	\begin{align}
		\J^{-1}_L\big|_{\pi}[A]  = \int_0^{\infty} \dt \; (\pi + t)^{-1} A \,(\pi + t)^{-1}\,.\label{cf:eq:appJL1}
	\end{align}
	Since $\J^{-1}_L\big|_{\pi}$ is in Kraus form, it follows that $f_L\in\mathcal{M}^{-}$. It should be noticed that this was not clear a priori, as $f_L$ is obtained from the convex combination of elements in $\mathcal{M}^{-}$, which is not a convex set~\cite{hiai2013families}: indeed, one should compare the expression for the subfamily of $\mathcal{M}^-$ generated by $f_{\gamma_>}$ in Eq.~\eqref{210}, with the one given in Eq.~\eqref{eq:253}.
	
	Interestingly, one can express the relative entropy in terms of the corresponding Fisher information operator. Indeed, define the path $\gamma(t)$ connecting $\rho$ to $\sigma$ as $\gamma(t):= (1-t)\rho + t\,\sigma$. Then, the following equalities hold:
	\begin{align}
		S(\rho||\sigma)  &= S(\rho||\gamma(1))-S(\rho||\gamma(0)) = \int_0^1\dt\; \norbra{\frac{\de}{\dt}\, S(\rho||\gamma(t))} = -\int_0^1\dt\;\norbra{\frac{\de}{\dt}\Tr{\rho \log\gamma(t)}} = \\
		&=\int_0^1\dt\;\Tr{\rho\, \J_L^{-1}\big|_{\gamma(t)}[(\rho-\sigma)]} \,.
	\end{align}
	This identity is particularly useful when one needs to compare global and local behaviours, as it allows to express the contrast function with the same operator appearing in the definition of the Fisher information~\cite{hiai2016contraction}.
	
	The scalar product defined by $\J_L\big|_\pi$ on the space of observables is called Kubo-Mori-Bogoliubov (KMB) inner product, which is of key importance in the context of linear response theory of thermal states. Indeed, using the definition of Gibbs states $\pi (H) := \frac{e^{-\beta H}}{\mathcal{Z}(H)}$, with $\mathcal{Z}(H) := \Tr{e^{-\beta H}}$ denoting the partition function, one can derive the following identity from the Dyson series in Eq.~\eqref{cf:eq:appJL}:
	\begin{align}\label{cf:eq:diffFreeEn}
		\frac{\partial^2}{\partial x \partial y}\,\log\mathcal{Z}(H+x\,A+y\,B)\bigg|_{x=y=0} = \beta^2\,\Tr{\Delta_{\pi (H)} A\; \J_L\big|_{\pi (H)} [\Delta_{\pi (H)}B]}\,.
	\end{align}
	It should be noticed that in the right hand side of the equation we implicitly defined $\Delta_{\pi(H)}X : = X -\id \Tr{X\,\pi (H)}$. When $A=B$,  Eq.~\eqref{cf:eq:diffFreeEn} describes the (quantum) thermal fluctuations of the operator $A$. Moreover, this expression is connected to transport coefficients in the linear response by the fluctuation-dissipation relation (see~\cite{petzBogoliubovInnerProduct1993c,altland2010condensed} for more details), showing the physical significance of $\J_L\big|_\pi$. On top of this, we can also give an interpretation to the Cramer-Rao bound of the type in Eq.~\eqref{cf:eq:132b} in this context. Indeed, suppose we have a Hamiltonian $H(\theta)$ depending on some unknown parameter $\theta$ that we want to estimate. To this end, we use a locally unbiased estimator $A$, defined by the condition:
	\begin{align}
		\frac{\partial}{\partial\theta}\,\Tr{\pi_\beta (H(\theta))\, A}\bigg|_{\theta=0} = 1\,.
	\end{align}
	Moreover, it is useful to give an explicit form to the generalised derivative $L_f$ (defined in Eq.~\eqref{exponentialFamily}) in this context, which reads.,
	\begin{align}
		\frac{\partial}{\partial\theta}\pi(H(\theta))\bigg|_{\theta=0} = \J_L\big|_{\pi(H(0))} [L_L]\,.\label{eq:261}
	\end{align}
	Explicitly carrying out the differentiation on the left, one can verify that $L_L= -\beta \Delta_{\pi (H(0))} \dot H$, where we slightly abused the notation to define $ \dot H:= \partial_\theta H(\theta)\big|_{\theta=0}$.
	Then, using the generalised Cramer-Rao bound in Eq.~\eqref{cf:eq:132b}, together with the expression in Eq.~\eqref{cf:eq:diffFreeEn}, we have the following inequality:
	\begin{align}
		\frac{\partial^2}{\partial x^2}\,\log\mathcal{Z}(H(0)+x\,A)\bigg|_{x=0}&\geq\frac{1}{\Tr{\Delta_{\pi (H(0))} \dot H\,\J_L\big|_{\pi (H(0))} [\Delta_{\pi (H(0))} \dot H]}\bigg|_{\theta=0} } = \\
		&= \beta^2\,\norbra{\frac{\partial^2}{\partial \theta^2}\log\mathcal{Z}(H(0)+\theta\,\dot H)}^{-1}\bigg|_{\theta=0}\,.\label{cf:eq:181}
	\end{align}
	This equation gives us an interesting result: the minimum fluctuations of an unbiased estimator cannot be smaller than the inverse of the fluctuations of the operator inducing the change in the Hamiltonian (namely, $\dot H$). Moreover, in order for the bound to be saturated, one should choose $A$ to be parallel to $\dot H$. A practical application of this result can be seen in the context of thermometry~\cite{mehboudi2019thermometry}: in there, the optimal measurement turns out to be the one of the energy (since the operator associated to the variation of $\beta$ is exactly the Hamiltonian), and the fluctuations in the estimator are bounded by the inverse of the heat capacity. It directly follows, then, that in order to have a good estimation of the temperature one needs to choose a thermal state with the heat capacity as big as possible.
	
	\subsection{The quantum information variance}\label{cf:sec:relEntVariance}
	The function that we consider in this case is not in the family of $\alpha$-divergences, but it is closely related to them:
	\begin{align}
		f_V(x) = \frac{2\, (x-1)^2}{(x+1)(\log x)^2}\,.\label{cf:eq:fVariance}
	\end{align}
	Indeed, this standard monotone is associated to the following matrix convex function:
	\begin{align}
		g_V(x)  = \frac{1}{2}\frac{\partial^2}{\partial\alpha^2}\, \norbra{\alpha(\alpha -1)g_\alpha(x)}\bigg|_{\alpha=0} = \frac{1}{2}\frac{\partial^2}{\partial\alpha^2}\,x^{\alpha}\bigg|_{\alpha=0}= \frac{1}{2}\,(\log x)^2\,,\label{265x}
	\end{align}
	which gives rise to what is called the quantum information variance, given by:
	\begin{align}
		H_V (\rho|| \sigma) = \frac{1}{2} \, \Tr{\rho \,(\log\rho-\log\sigma)^2}\,.\label{eq:qInfoVariance}
	\end{align}
	It was shown in~\cite{tomamichel2013hierarchy} that this quantity could be interpreted as a quantifier of the fluctuations in the distinguishability of $\rho$ with respect to $\sigma$.
	
	Differentiating Eq.~\eqref{eq:224} we can also obtain the integral expression of $g_V$ as:
	\begin{align}
		g_V(x) = \int_0^\infty \de s \; \frac{(-\log s)}{(1+s)^2}\,\norbra{\frac{(x-1)^2}{x+s}} = \int_0^1 \de s \; \frac{(-\log s)}{(1+s)^2}\,\norbra{\frac{(x-1)^2}{x+s}}+\int_0^1 \de s \; \frac{s\,\log s}{(1+s)^2}\,\norbra{\frac{(x-1)^2}{1+s x}}\,.\label{cf:eq:gVIntegral}
	\end{align}
	Interestingly, we can express the function in Eq.~\eqref{cf:eq:fVariance} as $f_V(x) = f_L(x)^2/f_B(x)$. Similarly to what happened for the Wigner-Yanase skew information (see Sec.~\ref{wignerYanase}), this identification allows us to directly compute the Fisher operators in terms of the one associated to the Bures metric and to the relative entropy. Indeed, these take the form:
	\begin{align}
		\J_V\big|_{\pi} [A] =\J_B^{-1}\big|_{\pi}\,\J_L^{2}\big|_{\pi} [A] = \int_0^{\infty} \dt \; e^{-t\pi/2} \, \J_L^2[A] \,e^{-t\pi/2}\,; \qquad\qquad\J_V^{-1}\big|_{\pi} [A] =\J_B\big|_{\pi}\,\J_L^{-2}\big|_{\pi}[A] = \frac{1}{2}\,\J_L^{-2}\sqrbra{\{\pi,A\}}\,.
	\end{align}
	Even if $\J_B^{-1}\big|_{\pi}$ is completely positive, this is not the case for $\J_L\big|_{\pi}$. Thus $\J_V\big|_{\pi} $ is not CP in general, and $f_V\notin\mathcal{M}^+$. A similar reasoning also holds for $\J_V^{-1}\big|_{\pi}$, proving that $f_V\notin\mathcal{M}^-$ as well. Indeed, one could reach this conclusion by noticing neither $f_V\leq f_{SQ}$ nor $f_{SQ}\leq f_{V}$ hold in this case (see Fig.~\ref{cf:fig:figstandardmonotones2}), which are necessary conditions for the membership in $\mathcal{M}^+$ or $\mathcal{M}^-$.
	
	The formal similarity with the Bures metric allows for the following application. Suppose we want to estimate a parameter $\theta$ encoded in the Hamiltonian of a thermal state. Then, it follows from Eq.~\eqref{cf:eq:cramerRao130} that the variance of a locally unbiased operator $A$ can be bounded as:
	\begin{align}
		\Tr{\pi(H(0)) A^2} \geq\frac{1}{\Tr{\partial_\theta \pi(H(\theta))\,\J_B^{-1}\big|_{\pi(H(\theta))}[\partial_\theta \pi(H(\theta))]}\Big|_{\theta=0}}\,.\label{269}
	\end{align}
	At this point, applying Eq.~\eqref{eq:261}, together with the fact that $L_L = -\beta \Delta_{\pi (H(0))} \dot H$,  we obtain~\cite{abiuso2024fundamental}:
	\begin{align}
		\Tr{\pi(H(0)) A^2} \geq\frac{1}{\beta^2\,\Tr{\Delta_{\pi (H(0))} \dot H\,\J_V\big|_{\pi(H(0))}[\Delta_{\pi (H(0))} \dot H]}}\,,\label{cramerRaoFluc}
	\end{align}
	Thus, the Cramer-Rao bound associated to the Bures metric when considering the variation of states (the right hand side of Eq.~\eqref{269}), translates to a bound involving the variation of the Hamiltonian, expressed in terms of $\J_V\big|_{\pi}$.
	
	\subsection{The geometric mean}\label{geometricMean}
	In the set of standard monotone functions a special role is taken by the square root, i.e.:
	\begin{align}
		f_{SQ}(x) = \sqrt{x}\,.
	\end{align}
	Indeed, as it was discussed in Sec.~\ref{sec:CPfisher}, the two sets $\mathcal{M}^+$ and $\mathcal{M}^-$ can be defined in terms of $f_{SQ}$. Moreover, it is the only element in the intersection 
	$\mathcal{M}^+\cap\mathcal{M}^-$, i.e., the only case in which both $\J_{SQ}\big|_{\pi}$ and $\J_{SQ}^{-1}\big|_{\pi}$ are CP. Some consequences of this property were explored in \hyperref[box:contractionCoefficients]{Box 9} in the context of recovery maps.
	
	The corresponding convex function is given by:
	\begin{align}
		g_{SQ}(x) = \sqrt{x^{-1}}-\sqrt{x}\,,
	\end{align}
	which gives rise to the contrast function:
	\begin{align}
		H_{SQ} (\rho|| \sigma)  = \Tr{\sqrt{\rho}\,(\rho-\sigma)\sqrt{\sigma^{-1}}}\,.
	\end{align}
	Its symmetrised version has the following integral expression:
	\begin{align}
		g_{SQ}^{\rm symm} (x) =\frac{(x-1)^2}{2\sqrt{x}} = \frac{1}{2}\int_0^1 \de s\; \frac{1}{\pi\sqrt{s}} \;\norbra{\frac{(x-1)^2}{x+s}+\frac{(x-1)^2}{1+s x}}\,,
	\end{align}
	so that we can identify $\de N_g(s) :=\frac{1}{\pi\sqrt{s}} $. 
	
	Finally, the Fisher operators in this case are given by:
	\begin{align}
		\J_{SQ}\big|_{\pi}[A]  = \sqrt{\pi} \, A \,\sqrt{\pi}\,; \qquad\qquad\qquad\J_{SQ}^{-1}\big|_{\pi}[A]  = \sqrt{\pi^{-1}}\, A \,\sqrt{\pi^{-1}}\,,
	\end{align}
	which explicitly shows that both $\J_{SQ}\big|_{\pi}$ and $\J_{SQ}^{-1}\big|_{\pi}$ are CP, as they are given in Kraus form.
	
	\subsection{The harmonic mean ($\alpha=2$)}\label{harmonic}
	We close the review with the smallest among the standard monotone functions, namely the harmonic mean:
	\begin{align}
		f_H(x) = \frac{2x}{x+1}\,.
	\end{align}
	This is part of the family of $\alpha$-divergences, and in particular it is obtained in the limit $\alpha\rightarrow2$. Then, the corresponding convex function is given by:
	\begin{align}
		g_H(x) = \frac{x^{2}-1}{2} = \frac{(x-1)^{2}}{2}+(x-1)\,.
	\end{align}
	As usual, we can discard the linear contribution. Then, its symmetrised version has the following integral expression: 
	\begin{align}
		g_H^{\rm symm}(x)&= \frac{1}{2}\norbra{\frac{(x-1)^{2}}{2}+ \frac{(x-1)^{2}}{2x}} =\frac{1}{2}\int_0^1 \de N_{H}(s)\;\norbra{\frac{(x-1)^2}{x+s}+\frac{(x-1)^2}{1+s x}}\,,
	\end{align}
	with $\de N_{H}(s) = \delta(s)/2$. The corresponding  contrast function reads:
	\begin{align}
		H_{H} ( \rho || \sigma)  =  \frac{1}{2}\,\Tr{(\rho-\sigma)\rho^{-1}(\rho-\sigma)} = \frac{1}{2}\,\norbra{\Tr{\sigma^2\rho^{-1}}-1}\,.
	\end{align}
	It was shown in \hyperref[box:statisticalQuantifiers]{Box 3} that this is the only contrast function that can be expressed in terms of a $\chi^2_f$-divergence, namely for $f = f_H$. Then, from Eq.~\eqref{eq:165} it follows that:
	\begin{align}
		H_g^{\rm symm} ( \rho || \sigma) \leq H_{H}^{\rm symm} ( \rho || \sigma) =  \frac{1}{2}\,\norbra{\chi^2_{f_H}(\rho||\sigma)+\chi^2_{f_H}(\sigma||\rho)}\,.
	\end{align}
	
	The Fisher information operators can be expressed in terms of the Bures ones, since $f_H = Tf_B$. Then,  by applying the relation in Eq.~\eqref{cf:eq:TTransformApp} one directly obtains:
	\begin{align}
		&\J_H\big|_{\pi}[A]  = \int_0^{\infty} \dt \; e^{-t\pi^{-1}/2} \, A \,e^{-t\pi^{-1}/2}\,; \qquad\qquad\qquad\J_H^{-1}\big|_{\pi}[A]  = \frac{1}{2}\{\pi^{-1}, A\}\,.
	\end{align}
	Since $\J_H\big|_{\pi}$ is in Kraus form, it follows that $f_H\in\mathcal{M}^+$.
	
	This concludes the survey of the quantum Fisher information metrics.
	
	\section{Conclusions and open questions}\label{conclusions}

	

	Fisher information is ubiquitous in classical and quantum physics, as it quantifies key figures of merit in different fields - parameter estimation precision in metrology, states' distinguishability in information theory, work dissipation and fluctuations in thermodynamics, to name a few. 
	This fact alone, that the same quantity fundamentally characterises such variety of distinct physical scenarios, looks almost miraculous.
	From a completely different angle, the classical Chentsov theorem guarantees that the Fisher metric is, for the set of physical states, the unique Riemannian metric that contracts under physical evolutions.
	These remarkable properties make the case for a detailed study of Fisher information, both in the classical and quantum domain, where
	there exists a family of such quantum Fisher informations (Eq.~\eqref{cf:eq:monotoneMetrics}), characterised by the Petz theorem (Thm.~\ref{cf:thm:Petz}). All the elements of this family 
	collapse to the classical Fisher information in the case of commutative diagonal states, while in the presence of off-diagonal contributions they present a rich phenomenology. This wide range of possible behaviours makes the Fisher operators mathematically interesting objects with different operational properties (cf. Sec.~\ref{gardenFisher}).
	
	In this work we reviewed, systematized and introduced new results regarding the dynamical and mathematical properties of the quantum Fisher informations. From a statistical point of view, these can be thought as the local expansion of contrast functions, which are introduced in Sec.~\ref{contrastFunctionsToFisher} together with a self-contained historical introduction to the topic. From the geometrical point of view, a natural perspective is that of studying the Fisher information metrics in Eq.~\eqref{cf:eq:monotoneMetrics}, their properties as scalar products and their interplay with the action of completely positive linear maps (i.e., physical evolutions). 
	In fact, an attentive look at the contraction properties of the Fisher information on the set of physical states unveils the deep connection between physical dynamics and such geometric structures.
	So much so that one can even define physical evolutions as exactly the ones that contract the Fisher metric, as showed in our Theorem~\ref{cf:thm:contractIf_P} and its corollary, which can be considered as a dual of the Chentsov-Petz theorem (also see the related work~\cite{abiuso2022characterizing,scandi2023physicality}).
	Moreover, the discussions of Sec.~\ref{FisherDynamics}, corroborate the claim that the Fisher information is an inherently dynamical quantity, a fact that is not completely acknowledged in the literature. In fact, the quantum Fisher information metrics can be used: 
	\ref{markovianity}) to fully characterize Markovianity, when identified with CP-divisibility, as well as operationally detect non-Markovian evolutions;
	\ref{retrodiction}) to generalize the notion of physical retrodiction based on Bayes/Petz maps;
	\ref{detailedBalance}) to characterize quantum microreversibility, as detailed balance becomes a geometric property of the dynamical generators.

	On a more technical side, in Sec.~\ref{sec:FisherOperatorProperties} we provided an organic discussion of many mathematical properties of the Fisher information functionals that are both relevant to the proofs, as well as retaining their own mathematical interest. In particular we focused on the characterisation of matrix monotone functions and their connection to the complete positivity of the induced functionals. 
	Finally in Sec.~\ref{gardenFisher} we detailed an extensive list of several quantum Fisher information metrics and their applications to different areas of quantum information science.
	
	Our work can therefore serve as a self-contained introduction to the topic, as well as a manual for researchers working in the related areas. 
	In particular, we gave a comprehensive review of many results scattered in the literature, that are not always easy to combine. In addition, we complemented them with different new results, most of which in Section~\ref{FisherDynamics}, as well as technical developments scattered throughout the work.
	
	The applications and mathematical properties of the Fisher information metrics are an ongoing theme of study, and a number of problems remain open. A particularly natural question is the following: once the class of Fisher $f$-metrics in Eq.~\eqref{cf:eq:monotoneMetrics} is introduced, is there a closed-form for each corresponding  $f$-geodesic distance? And what is the relation between such finite distance on the set of states and the contrast functions introduced in Sec.~\ref{contrastFunctionsToFisher}? To answer this question, one needs to solve the geodesics relative to each metrics. This is nontrivial, and an analytic answer, to the best of our knowledge, only exists for the case of the Bures metric (Sec.~\ref{Bures}) and the Wigner-Yanase metric (Sec.~\ref{wignerYanase}). The closed-form expression for any $f$ remains unknown. 
	For the interested reader, in \hyperref[box:statisticalQuantifiers]{Box 12} we show a universal mapping that might simplify the quest of such geodesics lengths.

	Finally,
	our work corroborates the identification of the Fisher information as a fundamental object describing statistical, geometric and dynamical properties of classical and quantum systems. It is then natural to wonder up to what degree these insights point towards the foundations of physics, if deeper connections have yet to be unveiled, and whether a similar approach could be used to constrain and interpret theories beyond quantum mechanics. We cannot but speculate that Fisher information tools might be key in formalising the connection of underlying informational principles to the mathematical structure of reality.

	\begin{additional-info}{Box 12. Fisher geodesics for normalised states vs general positive operators
		}\label{box:geodesics}
		
		In the quantum information literature, and throughout this work, we considered the quantum Fisher scalar product 
		$K_{f,\rho}(A,B)	 := \Tr{A\, \,\J_f^{-1}\big|_\rho [B]}$
		to induce a metric on the set of normalised quantum states $\rho$, whose tangent space is given by operators $\delta\rho$ that are Hermitian and traceless, i.e.,
		\begin{align}
			\Tr{\rho}=1\;,\ \rho\geq 0\;. 
			\qquad
			\Tr{\delta\rho}=0\;,\ \delta\rho^\dag=\delta\rho\;.
		\end{align}
		It is however straightforward to extend the scalar product to the set of all Hermitian operators. This corresponds to removing the trace constraints above and consider unnormalised states $\rho'$ with tangent vectors $\delta\rho'$ that only satisfy
		\begin{align}
			\rho'\geq 0\;,\ \delta\rho'^\dag=\delta\rho'\; .
		\end{align}
		Clearly, each unnormalised state can be decomposed in a scalar times a normalised state
		\begin{align}
			\rho'=r\rho\;, \quad r=\Tr{\rho'}\;.
		\end{align}
		When considering the Fisher metrics applied to the set of unnormalised states $\rho'$, one gets explicitly in these coordinates
		\begin{align}
			\Tr{\delta\rho' \,\J_f^{-1}\big|_{\rho'} [\delta\rho']}= 
			\frac{1}{r} \Tr{\delta\rho' \,\J_f^{-1}\big|_{\rho} [\delta\rho']}= 
			\frac{1}{r} \Tr{(\delta r\rho+r\delta\rho) \,\J_f^{-1}\big|_{\rho} [\delta r\rho+r\delta\rho]}\;.
		\end{align}
		Thanks to the fact that $\J_f^{-1}\big|_{\rho} [\rho]=\id$, and the self-adjointness of $\J_f\big|_{\rho}$, it is easy to realize that the cross-terms are zero in such expression, thanks to the relation $\Tr{\delta \rho \,\J_f^{-1}\big|_{\rho} [\rho]}=\Tr{\delta\rho}=0$. Therefore we find:
		\begin{align}
			\Tr{\delta\rho' \,\J_f^{-1}\big|_{\rho'} [\delta\rho']}=\frac{(\delta r)^2}{r}+r\,\Tr{\delta\rho \,\J_f^{-1}\big|_{\rho} [\delta\rho]}.
		\end{align}
		The infinitesimal squared distance $\delta l^2=\Tr{\delta\rho' \,\J_f^{-1}\big|_{\rho'} [\delta\rho']}$  can thus be written as
		\begin{align}
			\label{eq:q_geo_spheric}
			\delta l^2=4 \left( \delta q^2+ q^2 \delta\theta^2\right)\;,\quad
			\text{with}\quad q\equiv\sqrt{r}\;,\quad \delta\theta^2\equiv\frac{1}{4}\Tr{\delta\rho\,\J_f^{-1}\big|_{\rho} [\delta\rho]}\;.
		\end{align}
		We see then that it is possible to separate the contribution due to the normalization coordinate $r\equiv q^2$, and the resulting underlying geometry is spherical with radius $2q$. This has a nontrivial consequence. Namely, for all $f$-defined Fisher metrics in Eq.~\eqref{cf:eq:monotoneMetrics}, solving geodesics on the set of physical states ($r=q^2=1$), is equivalent to solving geodesics on the set of positive operators ($r\in(0,\infty)$). 
		Formally speaking, one defines the geodesic length on the normalised ($\rho$) and unnormalised ($\rho'$) set as
		\begin{align}
			\label{eq:geo_def_L}
			(\Delta l_f)^2:=& \min_{\rho'(t)|\rho'(0)=r_0\rho_0,\rho'(1)=r_1\rho_1} \int_0^1 \de t\; K_{f,\rho'(t)}(\dot{\rho}'(t),\dot{\rho}'(t))\;,\\
			\label{eq:geo_def_theta}
			(\Delta \theta_f)^2:=& \min_{\rho(t)|\rho(0)=\rho_0,\rho(1)=\rho_1}  \frac{1}{4}\int_0^1 \de t\; K_{f,\rho(t)}(\dot{\rho}(t),\dot{\rho}(t))\;.
		\end{align}
		The direct consequence of the above discussion resulting in Eq.~\eqref{eq:q_geo_spheric} is the following relation, that holds independently from the choice of $f$,
		\begin{align}
			\label{eq:deltaL_deltatheta}
			(\Delta l_f)^2 = (r_0+r_1)-2 \sqrt{r_0 r_1}\cos(\Delta \theta_f)\;.
		\end{align}
		For example, in the case of the Bures metric~\ref{Bures}, one has that the angle $\Delta\theta_{\rm Bures}\equiv d_{B}/2$ is the Bures distance given in Eq.~\eqref{cf:eq:buresDistance} and $\cos(\Delta \theta_{\rm Bures})=\Tr{\sqrt{\sqrt{\rho_0}\rho_1\sqrt{\rho_0}}}$ corresponds to the fidelity between initial and final state, leading to the length $D_B$ in Eq.~\eqref{cf:eq:bureslength}.
		Similarly, for the Wigner-Yanase metric one has $\Delta\theta_{\rm WY}\equiv d_{\rm WY}/2$(Eq.~\eqref{cf:eq:wyGeodesicDistance}), and therefore $\cos(\Delta \theta_{\rm WY})=\Tr{{\sqrt{\rho_0}\sqrt{\rho_1}}}$, from which it follows immediately that the unnormalised geodesics length coincides with the contrast function in Eq.~\eqref{eq:WY_contrast}, i.e., $\Delta l_{\rm WY}\equiv \sqrt{H_{\rm WY}}$.
		
		The above equality Eq.~\eqref{eq:deltaL_deltatheta} shows that integrating any $f$-metrics on the set of normalised and unnormalised states is equivalent. This consideration might help in finding geodesics for other Fisher metrics by suitably choosing the problem between Eq.~\eqref{eq:geo_def_L} and Eq.~\eqref{eq:geo_def_theta} that is easier to solve.

		Finally, notice that in the classical, diagonal  case~\cite{abiuso2022characterizing,bengtssonGeometryQuantumStates2017}, $\Delta \theta$ and $\Delta l$ are known in the literature respectively as the \emph{Bhattacharyya angle} (or Bhattacharyya distance) and the \emph{Hellinger distance}.
		
	\end{additional-info}
	\newpage
	\section*{Acknowledgements}
	\begin{wrapfigure}{R}{-3cm}
		\includegraphics[width=2cm]{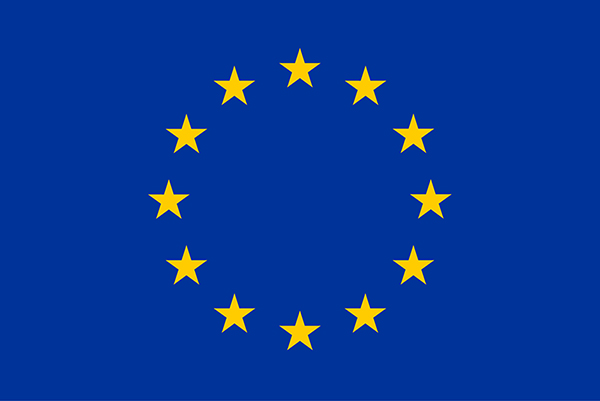}
	\end{wrapfigure}
	We warmly thank the late Mary-Beth Ruskai for pointing us to very useful literature, and for her determinant contribution to the field.
	M. S. acknowledges support from the European Union’s Horizon 2020 research and innovation programme under the Marie Sklodowska-Curie grant agreement No 713729, and from the Government of Spain (FIS2020-TRANQI and Severo Ochoa CEX2019-000910- S), Fundació Cellex, Fundació Mir-Puig, Generalitat de Catalunya (SGR 1381 and CERCA Programme). P.A. is supported by the QuantERA II programme that has received funding from the European Union’s Horizon 2020 research and innovation programme under Grant Agreement No 101017733, and from the Austrian Science Fund (FWF), project I-6004. Research at the Perimeter Institute for Theoretical Physics is supported by the Government of Canada through the Department of Innovation, Science and Economic Development Canada and by the Province of Ontario through the Ministry of Research, Innovation and Science. D.D.S. is supported
	by the research project “Dynamics and Information Research
	Institute - Quantum Information, Quantum Technologies”
	within the agreement between UniCredit Bank and Scuola
	Normale Superiore di Pisa (CI14\_UNICREDIT\_MARMI),
	the Spanish Government (FIS2020-TRANQI and Severo
	Ochoa CEX2019-000910-S), the ERC AdG CERQUTE, the
	AXA Chair in Quantum Information Science, Fundacio
	Cellex, Fundacio Mir-Puig and Generalitat de Catalunya
	(CERCA, AGAUR SGR 1381).
	
	\newpage
	\appendix
	
	\section{Derivations from Sec.~\ref{contrastFunctionsToFisher}}\label{app:contrastFunctions}
	
	We present here some derivations that were left implicit in Sec.~\ref{contrastFunctionsToFisher}. First, it should be noticed that the canonical expression of matrix convex functions such that $g(1) = 0$ is given by~\cite{lesniewskiMonotoneRiemannianMetrics1999}:
	\begin{align}
		g(x) = a\,(x-1) + b \, (x-1)^2 + c\, \frac{(x-1)^2}{x} + \int_0^\infty\de\mu_g(s)\; \frac{(x-1)^2}{x+s}\,,\label{app:eq:contrastFunctionsBasic}
	\end{align}
	where $b$ and $c$ are positive constants, and $\de\mu_g(s)$ is a positive measure with finite mass (i.e., $\int_0^\infty\de\mu_g(s)<\infty$). First, it should be noticed that we can ignore the linear term since, as it is discussed in the main text, it does not contribute to the contrast function. Moreover, it is useful to point out that the limit of integration can be made bounded as follows:
	\begin{align}
		\int_0^\infty\de\mu_g(s)\; \frac{(x-1)^2}{x+s} &= \int_0^1\de\mu_g(s)\; \frac{(x-1)^2}{x+s} + \int_1^\infty\de\mu_g(s)\; \frac{(x-1)^2}{x+s}= \\
		&= \int_0^1\de\mu_g(s)\; \frac{(x-1)^2}{x+s} + \int_0^1\frac{\de\mu_g(s^{-1})}{s}\; \frac{(x-1)^2}{1+sx}\,,\label{eq:A3}
	\end{align}
	where we performed the change of variable $s\rightarrow s^{-1}$ in the second integral. Then, redefining the measure, we can include the second and the third term in Eq.~\eqref{app:eq:contrastFunctionsBasic} into the integral. This is accomplished by introducing the measure $\de\nu_g(s) = \de{\mu}_g(s) + b\,\delta(s^{-1})/s + c \,\delta(s)$, where $\delta(x)$ denote the Dirac delta function. Plugging this in, we obtain the general form of matrix convex functions such that $g(1)= 0$:
	\begin{align}
		g(x) =\int_0^1\de\nu_g(s)\; \frac{(x-1)^2}{x+s} + \int_0^1\frac{\de\nu_g(s^{-1})}{s}\; \frac{(x-1)^2}{1+sx}\,,\label{app:eq:integralExpressionG}
	\end{align}
	up to irrelevant linear terms.  
	
	Another expression that was not proved in the main text is Eq.~\eqref{cf:eq:HGAlternative}.This follows from the two identities:
	\begin{align}
		&(\LL_\sigma\RR_\rho^{-1}-\id)\sqrbra{\rho^{1/2}} = (\sigma - \rho)\rho^{-1/2} =  \RR_\rho^{-1/2}(\sigma - \rho)\,,\label{app:eq:a6}\\
		&(\RR_\sigma\LL_\rho^{-1}-\id)\sqrbra{\rho^{1/2}}= \rho^{-1/2}(\sigma - \rho) =  \LL_\rho^{-1/2}(\sigma - \rho)\,,\label{app:eq:a7}
	\end{align}
	which can be verified by explicit computation. Then, one can rewrite the contrast functions in Eq.~\eqref{cf:eq:HGexpressionApp} as:
	\begin{align}
		H_g (\rho||\sigma) &= \Tr{ g\boldsymbol{(}\LL_\sigma\RR_\rho^{-1}\boldsymbol{)}\sqrbra{\rho}} =  \Tr{ \rho^{1/2}\,g\boldsymbol{(}\LL_\sigma\RR_\rho^{-1}\boldsymbol{)}\sqrbra{\rho^{1/2}}} =\\
		&=\Tr{{\Big((\RR_\sigma\LL_\rho^{-1}-\id)^{-1}[(\rho-\sigma)]\Big)}\;{\Big(g\boldsymbol{(}\LL_\sigma\RR_\rho^{-1}\boldsymbol{)}(\LL_\sigma\RR_\rho^{-1}-\id)^{-1}[(\rho-\sigma)]\Big)}\;\rho^{-1}}=\\
		&= \Tr{(\rho-\sigma)\,\Big(g\boldsymbol{(}\LL_\sigma\RR_\rho^{-1}\boldsymbol{)}(\LL_\sigma\RR_\rho^{-1}-\id)^{-2}[(\rho-\sigma)]\Big)\,\rho^{-1}}=\\
		&=\Tr{(\rho-\sigma)\, \RR_\rho^{-1}h\boldsymbol{(}\LL_\sigma\RR_\rho^{-1}\boldsymbol{)}[(\rho-\sigma)]} ,
	\end{align}
	where in the first line we used the fact that since the superoperator $g\boldsymbol{(}\LL_\sigma\RR_\rho^{-1}\boldsymbol{)}$ acts on $\rho$ on the right only through $\RR_\rho^{-1}$ one can extract $\rho^{1/2}$ from the square parenthesis, and use the cyclicity of the trace to put it in front; in the second line we used both Eq.~\eqref{app:eq:a6} and Eq.~\eqref{app:eq:a7}, in the third line the fact that $(\RR_\sigma\LL_\rho^{-1}-\id)^\dagger = (\LL_\sigma\RR_\rho^{-1}-\id)$ and finally the definition of the function $h(x) := g(x)/(x-1)^{2}$. This proves Eq.~\eqref{cf:eq:HGAlternative}.
	
	Hence, from the integral expression in Eq.~\eqref{app:eq:integralExpressionG} it follows that we can rewrite generic contrast functions as:
	\begin{align}
		H_g( \rho || \sigma)  = &\int_0^1\de\nu_g(s) \,\Tr{(\rho-\sigma)\, \RR_\rho^{-1}(\LL_\sigma\RR_\rho^{-1}+s)^{-1}[(\rho-\sigma)]}+\\
		&+\int_0^1\frac{\de\nu_g(s^{-1})}{s} \,\Tr{(\rho-\sigma)\, \RR_\rho^{-1}(1+s\LL_\sigma\RR_\rho^{-1})^{-1}[(\rho-\sigma)]}\,.
	\end{align}
	Let us first focus on the first trace. In particular, it should be noticed that:
	\begin{align}
		\RR_\rho^{-1}(\LL_\sigma\RR_\rho^{-1}+s)^{-1} = ((\LL_\sigma\RR_\rho^{-1}+s)\RR_\rho)^{-1}  =(\LL_\sigma+s\RR_\rho)^{-1} \,,
	\end{align}
	proving that the first integral coincides with the first integral of Eq.~\eqref{cf:eq:integralRepresentationG}. Doing the same transformation on the second trace we obtain:
	\begin{align}
		\Tr{(\rho-\sigma)\, \RR_\rho^{-1}(1+s\LL_\sigma\RR_\rho^{-1})^{-1}[(\rho-\sigma)]} &= \Tr{(\rho-\sigma)\, (\RR_\rho+s\LL_\sigma)^{-1}[(\rho-\sigma)]} = \\
		&=\Tr{\big ((\LL_\rho+s\RR_\sigma)^{-1}[(\rho-\sigma)]\big )\, (\rho-\sigma)}\label{app:eq:a16}\,,
	\end{align}
	where in the last step we used the fact that $(\RR_\rho+s\LL_\sigma)^\dagger = (\LL_\rho+s\RR_\sigma)$. Hence, putting everything together, and using the cyclicity of the trace in Eq.~\eqref{app:eq:a16}, we finally obtain:
	\begin{align}
		H_g( \rho || \sigma)  
		= \int_0^1 \de\nu_g(s)\;\Tr{(\rho-\sigma)(\LL_\sigma + s \RR_\rho)^{-1}[(\rho-\sigma)]}+\int_0^1\frac{\de\nu_g(s^{-1})}{s}\;\Tr{(\rho-\sigma)(\LL_\rho + s \RR_\sigma)^{-1}[(\rho-\sigma)]}\,.
	\end{align}
	proving Eq.~\eqref{cf:eq:integralRepresentationG}.

	\section{Derivation of the flux of Fisher information}\label{app:sec:fisherFlow}
	We present here the derivation of Thm.~\ref{cf:thm:FisherFlow}. In particular, we want to study the evolution of the Fisher information:
	\begin{align}
		\mathcal{F}_{f,t}:= \Tr{\delta\rho_t\,\J_f^{-1}\big|_{\pi_t} [\delta\rho_t]}\,,
	\end{align}
	where $\pi_t := \Phi_t(\pi)$ and $\delta\rho_t := \Phi_t(\delta\rho)$. Using the integral expression in Eq.~\eqref{cf:eq:thm2HGexpIntegral}, we can rewrite the Fisher information as:
	\begin{align}
		\mathcal{F}_{f,t}: = 2\,{\rm Re}\int_0^1 \de N_g (s)\; \Tr{\delta\rho_t\, (\LL_{\pi_t}+s\,\RR_{\pi_t})^{-1}[\delta\rho_t]}\,,
	\end{align}
	where it is useful to highlight the fact that this quantity is real. This expression is particularly convenient for calculations, due to the simple form that the derivative of $(\LL_{\pi_t}+s\,\RR_{\pi_t})^{-1}$ takes. In fact, this is given by:
	\begin{align}\label{appM:eq:3}
		\frac{\de}{\dt}(\LL_{\pi_t}+s\,\RR_{\pi_t})^{-1} = - (\LL_{\pi_t}+s\,\RR_{\pi_t})^{-1}(\LL_{\dot\pi_t}+s\,\RR_{\dot\pi_t})(\LL_{\pi_t}+s\,\RR_{\pi_t})^{-1}\,,
	\end{align}
	where $\dot \pi_t$ is simply the derivative of the state. This expression can be proved by noticing that $\frac{\de}{\dt}\LL_{\pi_t} = \LL_{\dot{\pi}_t}$ (and similarly for $\RR_{\pi_t}$) and by taking the derivative of:
	\begin{align}
		\frac{\de}{\dt}\norbra{(\LL_{\pi_t}+s\,\RR_{\pi_t})(\LL_{\pi_t}+s\,\RR_{\pi_t})^{-1}} &= (\LL_{\dot{\pi}_t}+s\,\RR_{\dot{\pi}_t})(\LL_{\pi_t}+s\,\RR_{\pi_t})^{-1} +(\LL_{\pi_t}+s\,\RR_{\pi_t})\frac{\de}{\dt}(\LL_{\pi_t}+s\,\RR_{\pi_t})^{-1}=\\
		&= \frac{\de}{\dt} \,\idO = 0\,,
	\end{align}
	which directly implies Eq.~\eqref{appM:eq:3}. Given this technical tool, we can start analysing the evolution of $\mathcal{F}_{f,t}$ under the dynamics generated by the Lindbladian:
	\begin{align}
		\lind_t[\rho] = -i[H_t,\rho] + \sum_{\alpha}^{d^2} \;\lambda_\alpha(t)\,\norbra{A_\alpha(t) \,\rho\, A_\alpha(t)^\dagger - \frac{1}{2}\{A_\alpha(t)^\dagger \,A_\alpha(t), \rho\}}\,.
	\end{align}
	Notice that, since $\mathcal{F}_{f,t}$ is invariant under unitary transformations, there is no contribution coming from the commutator in the previous equation. Moreover, since the derivative is linear, it decomposes into a sum of the form:
	\begin{align}
		\mathcal{F}'_{f,t} = \sum_\alpha\; \lambda_\alpha(t)\, \,\mathcal{I}^f_\alpha(t)\,,
	\end{align}
	where each current $\mathcal{I}^f_\alpha(t)$ only contains the corresponding jump operator $A_\alpha(t)$, together with its adjoint. For this reason, without loss of generality, we consider here Lindblad operators generated by a single jump operator. In order to shorten the notation we also assume that the jump operator, denoted by $A$, is time independent, again without loss of generality. We start by rewriting the derivative of the Fisher information as:
	\begin{align}
		\mathcal{F}'_{f,t}=2\,{\rm Re}\int_0^1 \de N_g (s)\; \bigg (2\, &\Tr{\delta\dot\rho_t (\LL_{\pi_t}+s\,\RR_{\pi_t})^{-1}[\delta\rho_t]} + \Tr{\delta\rho_t \norbra{\frac{\de}{\dt}(\LL_{\pi_t}+s\,\RR_{\pi_t})^{-1} }[\delta\rho_t]}\bigg )\,.\label{appM:eq:9}
	\end{align}
	The second term in the integral  can be expanded as:
	\begin{align}
		\Tr{\delta\rho_t \norbra{ \frac{\de}{\dt}(\LL_{\pi_t}+s\,\RR_{\pi_t})^{-1} }[\delta\rho_t]} &=- \Tr{\delta\rho_t (\LL_{\pi_t}+s\,\RR_{\pi_t})^{-1}(\LL_{\dot\pi_t}+s\,\RR_{\dot\pi_t})(\LL_{\pi_t}+s\,\RR_{\pi_t})^{-1}[\delta\rho_t]} = \\
		&= - \Tr{B_{s}(t)^\dagger \,\dot\pi_t \,B_{s}(t)} - s\, \Tr{B_{s}(t)^\dagger  B_{s}(t)\,\dot\pi_t} =\\
		&= - \Tr{B_{s}(t)^\dagger \,\lind({\pi_t}) \,B_{s}(t)} - s\, \Tr{B_{s}(t)^\dagger  B_{s}(t)\,\lind({\pi_t}) }\,,\label{cf:eq:B8}
	\end{align}
	where we introduced the notation $B_{s}(t) := (\LL_{\pi_t}+s\,\RR_{\pi_t})^{-1}[\delta\rho_t]$.  On the other hand, the first term in Eq.~\eqref{appM:eq:9} simply gives:
	\begin{align}
		2\, \Tr{\delta\dot\rho_t (\LL_{\pi_t}+s\,\RR_{\pi_t})^{-1}[\delta\rho_t]} =
		&= 2 \,\Tr{\lind(\delta\rho_t) B_{s}(t)} = 2\, \Tr{\lind((\RR_{\pi_t}+s\,\LL_{\pi_t})[ B_{s}(t)^\dagger])  B_{s}(t)}=\\
		&=2 \,\Tr{\lind( B_{s}(t)^\dagger \pi_t) B_{s}(t)} + 2s\, \Tr{\lind(\pi_t  B_{s}(t)^\dagger ) B_{s}(t)}\,,\label{cf:eq:B10}
	\end{align}
	where in the first line we have multiplied and divided by $(\RR_{\pi_t}+s\,\LL_{\pi_t})$ to obtain $ B_{s}(t)^\dagger $. We can now proceed in summing up Eq.~\eqref{cf:eq:B8} and Eq.~\eqref{cf:eq:B10}. Due to the number of terms that will appear, though, we first consider the first traces in both equations, and then the second ones. Hence, summing the first term in  Eq.~\eqref{cf:eq:B8} and the first of Eq.~\eqref{cf:eq:B10}, and explicitly expanding the Lindbladian, we obtain:
	\begin{align}
		&2 \,\Tr{A\, B_{s}(t)^\dagger \pi_tA^\dagger  B_{s}(t)} - \cancel{\Tr{ B_{s}(t)^\dagger \pi_tA^\dagger A\,  B_{s}(t)}}-  \Tr{A^\dagger A\, B_{s}(t)^\dagger \pi_t B_{s}(t)} +\nonumber\\
		&- \Tr{ B_{s}(t)^\dagger A\,\pi_t A^\dagger\,  B_{s}(t)} +\cancel{\frac{1}{2} \Tr{ B_{s}(t)^\dagger\pi_t A^\dagger A\,  B_{s}(t)}}+\cancel{\frac{1}{2} \Tr{ B_{s}(t)^\dagger A^\dagger A\,\pi_t   B_{s}(t)}}\,,
	\end{align}
	where the first three terms come from Eq.~\eqref{cf:eq:B10}, and the last three from Eq.~\eqref{cf:eq:B8}.
	Notice that the second term corresponding to Eq.~\eqref{cf:eq:B10} cancels with the last two terms coming from Eq.~\eqref{cf:eq:B8}, since we can take the Hermitian conjugate of the last trace without affecting the result (thanks to the real part in Eq.~\eqref{appM:eq:9}).  The remaining terms can be further simplified to give:
	\begin{align}
		2&\Tr{\pi_tA^\dagger  B_{s}(t)A\, B_{s}(t)^\dagger } -  \Tr{ \pi_t B_{s}(t)A^\dagger A\, B_{s}(t)^\dagger}- \Tr{\pi_t A^\dagger\,  B_{s}(t) B_{s}(t)^\dagger A} =\\
		&\qquad= \Tr{\pi_t\,[A^\dagger,  B_{s}(t)]\,A B_{s}(t)^\dagger } + \Tr{\pi_t\,A^\dagger  B_{s}(t)\,[A, B_{s}(t)^\dagger] }=\\
		&\qquad= -\Tr{\pi_t\, B_{s}(t)A^\dagger\,[A,  B_{s}(t)^\dagger]} + \Tr{\pi_t\,A^\dagger  B_{s}(t)\,[A, B_{s}(t)^\dagger] } =\\
		&\qquad=\Tr{\pi_t\,[A^\dagger,  B_{s}(t)][A, B_{s}(t)^\dagger] } =\\
		&\qquad= -  \Tr{\pi_t\,[A,  B_{s}(t)^\dagger]^\dagger[A, B_{s}(t)^\dagger] }\,, 
	\end{align}
	where in the first line we used the cyclicity of the trace to put in evidence $\pi_t$, then in the third line we took the complex conjugate of the trace (again exploiting the fact that only the real part contributes to the Fisher information), and used the identity $[X,Y]^\dagger = [Y^\dagger,X^\dagger] =-[X^\dagger,Y^\dagger] $.
	
	We can now pass to examine the last two terms in Eq.~\eqref{cf:eq:B8} and Eq.~\eqref{cf:eq:B10}. Since the steps are completely analogous to the ones in the previous derivation, we present them all together. Carrying them out gives:
	\begin{align}
		&2s\, \Tr{A\pi_t  B_{s}(t)^\dagger A^\dagger  B_{s}(t)} - \cancel{s\Tr{A^\dagger A\pi_t  B_{s}(t)^\dagger  B_{s}(t)}}
		- s\,\Tr{\pi_t  B_{s}(t)^\dagger A^\dagger A\, B_{s}(t)} +\\
		&- s\, \Tr{ B_{s}(t)^\dagger   B_{s}(t)A\pi_tA^\dagger}+\cancel{\frac {s}{2}\, \Tr{ B_{s}(t)^\dagger   B_{s}(t)A^\dagger A\pi_t}}+\cancel{\frac {s}{2}\, \Tr{ B_{s}(t)^\dagger   B_{s}(t)\pi_tA^\dagger A}} = \\
		&= s\, \Tr{\pi_t \,[ B_{s}(t)^\dagger, A^\dagger ] \,B_{s}(t)A}+s\, \Tr{\pi_t\,  B_{s}(t)^\dagger A^\dagger\, [ B_{s}(t),A]} =\\
		&= - s\, \Tr{\pi_t \,A^\dagger  B_{s}(t)^\dagger\, [ B_{s}(t), A ]} +s\, \Tr{\pi_t \, B_{s}(t)^\dagger A^\dagger\, [ B_{s}(t),A]}=\\ 
		&= s\, \Tr{\pi_t\, [ B_{s}(t)^\dagger, A^\dagger] [ B_{s}(t),A]} =\\
		&= -s\, \Tr{\pi_t\, [ B_{s}(t), A]^\dagger [ B_{s}(t),A]}=\\
		& = -s\, \Tr{\pi_t \,[A, B_{s}(t)]^\dagger [A, B_{s}(t)]} \,.
	\end{align}
	This concludes the proof of Thm.~\ref{cf:thm:FisherFlow}. In fact, it is straightforward to generalise to the case of many time dependent jump operators, which gives:
	\begin{align}
		&\mathcal{F}'_{f,t} = \sum_\alpha\; \lambda_\alpha(t)\, \,\mathcal{I}^f_\alpha(t)= \\
		&=-2\, \sum_\alpha\; \lambda_\alpha(t)\,\int_0^1 \de N_g (s)\; \bigg(\Tr{\pi_t\,[A_\alpha(t),  B_{s}(t)^\dagger]^\dagger[A_\alpha(t), B_{s}(t)^\dagger] } +s\, \Tr{\pi_t\, [A_\alpha(t), B_{s}(t)]^\dagger [A_\alpha(t), B_{s}(t)]} \bigg)\,,
	\end{align}
	where we can drop the real part, as both terms in the integral are positive definite.

	\section{Derivations from Sec.~\ref{detailedBalance}}\label{app:sec:DB}
	
	In this appendix we first prove Thm.~\ref{cf:thm:eqDB}, and then proceed to derive the structural characterisation of Fisher detailed balanced Lindbladians presented in Eq.~\eqref{cf:eq:structAS}. 
	
	Due to the amount of different notions of adjoints used in the following, we remind the reader about the notation. There are three different scalar product used, namely the Hilbert-Schmidt one, $K_\pi^o$ and $K_{f,\pi}$, which induce the following adjoints:
	\begin{align}
		\text{Hilbert}&\text{-Schmidt:}  \qquad \;\;\Tr{A X(B)}=\Tr{X^\dagger(A) B}\,;\label{eq:appHS}\\
		&K_\pi^o: \qquad\qquad \;\;\,\Tr{A\,\cJ_\pi[OB]} = \Tr{\widetilde{O}^o(A) \cJ_\pi[B]}\,;\label{eq:appAlickiProduct}\\
		&K_{f,\pi}: \qquad \qquad\Tr{A \,\J_f^{-1}\big|_\pi[\mathcal{O} B]} = \Tr{(\widetilde{\mathcal{O}}_f A) \J_f^{-1}\big|_\pi[ B]}\,,\label{eq:appASProduct}
	\end{align}
	where $O$ and $\mathcal{O}$ are a superoperator on the space of observables or on the state space, respectively, while $X$ is a generic bounded operator. We also remind the reader that we use the notation $\cJ_\pi:= \RR_\pi$. Then, the adjoint with respect to $K_\pi^o$ or $K_{f,\pi}$ are related to the Hilbert-Schmidt one by the relation:
	\begin{align}
		\widetilde{O}^o = \cJ_\pi^{-1}\circ O^\dagger\circ\cJ_\pi\,;\qquad\qquad\qquad
		\widetilde{\mathcal{O}}_f  =  \J_f\big|_{\pi}\circ \mathcal{O}^\dagger\circ \J_f^{-1}\big|_{\pi}\,,
	\end{align}
	as it can be verified directly from the definition. In this context, self-adjointness with respect to $K^o_\pi$ is equivalent to the condition $\cJ_\pi\circ O= O^\dagger\circ\cJ_\pi$, while for the Fisher scalar product it can be expressed by the equality $\mathcal{O}\circ\J_f\big|_{\pi} = \J_f\big|_{\pi}\circ \mathcal{O}^\dagger$. Thanks to this characterisation we can prove the following useful result:
	\begin{lemma}\label{appM:lem:equivalence_when_commuting} 
		Suppose $\mathcal{O}$ and $O$ are adjoint of each other, $\mathcal{O}^\dagger \equiv O$. Then, if $[\mathcal{O},\LL_\pi\RR_\pi^{-1}]=0$, the two conditions of self-adjointness and skew-self-adjointness with respect to $K_{f,\pi}$ and $K_\pi^o$ coincide.
	\end{lemma}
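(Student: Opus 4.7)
The plan is to show that, under the commutation hypothesis, the $K_{f,\pi}$-adjoint of $\mathcal{O}$ collapses to an $f$-independent expression, which can then be compared directly with the $K_\pi^o$-adjoint of $O$ via the relation $\mathcal{O}^\dagger \equiv O$. Recall the two explicit forms
\begin{align}
\widetilde{\mathcal{O}}_f = \RR_\pi\,f(\LL_\pi\RR_\pi^{-1})\circ \mathcal{O}^\dagger \circ f(\LL_\pi\RR_\pi^{-1})^{-1}\RR_\pi^{-1}\,,\qquad \widetilde{O}^o = \RR_\pi^{-1}\circ O^\dagger \circ \RR_\pi\,,
\end{align}
so the task reduces to moving $f(\LL_\pi\RR_\pi^{-1})$ past $\mathcal{O}^\dagger$ and matching the resulting identities.

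First I would transfer the commutation hypothesis from $\mathcal{O}$ to $\mathcal{O}^\dagger$: since $\pi$ is Hermitian and $[\LL_\pi,\RR_\pi]=0$, the superoperator $\LL_\pi\RR_\pi^{-1}$ is self-adjoint with respect to the Hilbert--Schmidt product, hence taking the Hilbert--Schmidt adjoint of $[\mathcal{O},\LL_\pi\RR_\pi^{-1}]=0$ gives $[\LL_\pi\RR_\pi^{-1},\mathcal{O}^\dagger]=0$, and therefore $\mathcal{O}^\dagger$ commutes with any function of $\LL_\pi\RR_\pi^{-1}$, in particular with $f(\LL_\pi\RR_\pi^{-1})$. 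The two occurrences of $f(\LL_\pi\RR_\pi^{-1})$ in $\widetilde{\mathcal{O}}_f$ then collide and cancel, yielding the $f$-independent expression $\widetilde{\mathcal{O}}_f = \RR_\pi\circ \mathcal{O}^\dagger\circ \RR_\pi^{-1}$.

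Substituting $\mathcal{O}^\dagger = O$ and rearranging, the $K_{f,\pi}$-self-adjointness condition $\widetilde{\mathcal{O}}_f=\mathcal{O}$ becomes $\mathcal{O}\,\RR_\pi = \RR_\pi\,\mathcal{O}^\dagger$, whereas the $K_\pi^o$-self-adjointness $\widetilde{O}^o=O$ becomes $O^\dagger\,\RR_\pi = \RR_\pi\,O$, i.e., $\mathcal{O}\,\RR_\pi = \RR_\pi\,\mathcal{O}^\dagger$ once $O=\mathcal{O}^\dagger$ is used. The two identities coincide (they are in fact Hilbert--Schmidt adjoints of one another, using $\RR_\pi^\dagger=\RR_\pi$). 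Repeating the same manipulation with an extra global sign gives the equivalence for skew-self-adjointness, proving the lemma. The only non-routine step is the transfer of the commutation condition from $\mathcal{O}$ to $\mathcal{O}^\dagger$; everything else is a direct algebraic identification of the two adjoints once $f(\LL_\pi\RR_\pi^{-1})$ has been removed.
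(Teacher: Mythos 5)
Your proof is correct and rests on the same mechanism as the paper's, but it packages it differently. The paper works directly with the identity $\widetilde{\mathcal{O}}_f=\pm\mathcal{O}$, rewriting it as the (skew-)self-adjointness of $\mathcal{O}\,\J_f\big|_\pi$ and then using the symmetry $f(x)=x\,f(x^{-1})$ on one side and the commutation $[\mathcal{O},\LL_\pi\RR_\pi^{-1}]=0$ on the other before cancelling $f(\LL_\pi\RR_\pi^{-1})$. You instead first transfer the commutation to $\mathcal{O}^\dagger$ and show that $\widetilde{\mathcal{O}}_f=\RR_\pi\,\mathcal{O}^\dagger\,\RR_\pi^{-1}$ is $f$-independent, which is slightly stronger (all the $K_{f,\pi}$-adjoints coincide, not merely the self-adjointness conditions) and avoids invoking the symmetry of $f$ a second time, since it is already baked into the formula $\widetilde{\mathcal{O}}_f=\J_f\big|_\pi\,\mathcal{O}^\dagger\,\J_f^{-1}\big|_\pi$ that you quote. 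One imprecision worth fixing: the adjoints in this lemma are taken with respect to the bilinear pairing $\Tr{A\,X(B)}=\Tr{X^\dagger(A)\,B}$ (the one for which $\mathcal{O}^\dagger\equiv O$ and $\RR_\pi^\dagger=\LL_\pi$), and under that pairing the modular operator is \emph{not} self-adjoint; rather $(\LL_\pi\RR_\pi^{-1})^\dagger=(\LL_\pi\RR_\pi^{-1})^{-1}$. Your transfer step survives because taking the adjoint of $[\mathcal{O},\LL_\pi\RR_\pi^{-1}]=0$ yields $[\mathcal{O}^\dagger,(\LL_\pi\RR_\pi^{-1})^{-1}]=0$, and commuting with an invertible operator is equivalent to commuting with its inverse; you should state it that way rather than via self-adjointness of the modular operator.
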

	\begin{proof}
		First of all, it should be noticed that the adjoint of the modular operator takes the form $(\LL_\pi\RR_\pi^{-1})^\dagger = \LL_\pi^{-1}\RR_\pi = (\LL_\pi\RR_\pi^{-1})^{-1}$, while for the right multiplication operator it holds that $(\RR_\pi)^\dagger = \LL_\pi$. Using these two properties we can show that the (skew-)self-adjointness with respect to $K_{f,\pi}$ is equivalent to the corresponding notion for  $K_\pi^o$. In fact, the following relations are equivalent:
		\begin{align}
			&\widetilde{\mathcal{O}}_f = \pm \mathcal{O} \iff
			(\mathcal{O} \J_f\big|_{\pi})^\dagger = \pm\,\mathcal{O}\J_f\big|_{\pi}\; \iff\;f((\LL_\pi\RR_\pi^{-1})^{-1}) \LL_\pi O = \pm\,\mathcal{O} \RR_\pi f(\LL_\pi\RR_\pi^{-1})   \; \iff\\
			&\iff\; \cancel{f(\LL_\pi\RR_\pi^{-1}) }\RR_\pi O = \pm\,\cancel{f(\LL_\pi\RR_\pi^{-1})} O^\dagger  \RR_\pi \iff\; \widetilde{O}^o = \pm O \,,
		\end{align}
		where in the first line we used the definition $\mathcal{O}^\dagger \equiv O$, and in the second line we exploited the property $f(x) = x f(x^{-1})$, together with the commutation between $\mathcal{O}$ and $\LL_\pi\RR_\pi^{-1}$ to push $f(\LL_\pi\RR_\pi^{-1})$ to the left of both equations. 
	\end{proof} 
	
	This lemma is particularly useful because it allows to reduce the question about the equivalence of Def.~\ref{cf:def:alicki} and Def.~\ref{cf:def:AS} to the decision about the commutation of the Lindbladian with the modular operator $\LL_\pi\RR_\pi^{-1}$.
	
	\subsection{Proof of Theorem~\ref{cf:thm:eqDB}}
	
	The aim of this section is to prove Thm.~\ref{cf:thm:eqDB}, which we repeat here for convenience:
	\begin{theorem*}\emph{
			The following conditions are equivalent:
			\begin{enumerate}
				\item  the generator of the dynamics in the Heisenberg picture $\lind^\dagger$ satisfies the adjointness relations in Def.~\ref{cf:def:alicki};\label{appM:cond:al}
				\item the Lindbladian $\lind$ satisfies the structural characterisation in Def.~\ref{cf:def:breuer}.\label{appM:cond:breuer}
				\setcounter{counterN}{\value{enumi}}
			\end{enumerate}
			These conditions imply the condition:
			\begin{enumerate}
				\setcounter{enumi}{\value{counterN}}
				\item the generator of the dynamics in the Schroedinger picture $\lind$ satisfies the adjointness relations in Def.~\ref{cf:def:AS}.\label{appM:cond:AS}
			\end{enumerate}
			Moreover, if the Hamiltonian $H$ is non-degenerate the three conditions are equivalent.}
	\end{theorem*}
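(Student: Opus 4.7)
The plan is to establish the three implications \ref{appM:cond:breuer}~$\implies$~\ref{appM:cond:al}, \ref{appM:cond:al}~$\implies$~\ref{appM:cond:breuer}, and \ref{appM:cond:breuer}~$\implies$~\ref{appM:cond:AS}, followed by \ref{appM:cond:AS}~$\implies$~\ref{appM:cond:al} under the additional hypothesis that $H$ is non-degenerate. The equivalence \ref{appM:cond:al}~$\iff$~\ref{appM:cond:breuer} is essentially Alicki's original derivation. For \ref{appM:cond:breuer}~$\implies$~\ref{appM:cond:al}, I would plug Eq.~\eqref{cf:eq:detailedBalancedLindbladian} directly into the three Alicki conditions and verify them term by term: the eigenoperator relation $\pi A_i^\omega \pi^{-1} = e^\omega A_i^\omega$ together with the balance $\lambda_i^\omega = e^\omega \lambda_i^{-\omega}$ pair transitions with their reverses, yielding self-adjointness of $\lind_\mathcal{D}^\dagger$ with respect to $K_\pi^o$, while $[H,\pi]=0$ delivers skew-self-adjointness of $\mathcal{U}^\dagger$. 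The converse \ref{appM:cond:al}~$\implies$~\ref{appM:cond:breuer} proceeds by diagonalising $\lind_\mathcal{D}^\dagger$ in the eigenbasis of the modular operator $\LL_\pi\RR_\pi^{-1}$ and reading off the constraints on jump operators and rates from self-adjointness with respect to $K_\pi^o$.

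For \ref{appM:cond:breuer}~$\implies$~\ref{appM:cond:AS}, I invoke Lemma~\ref{appM:lem:equivalence_when_commuting} after showing that both $\mathcal{U}$ and $\lind_\mathcal{D}$ commute with $\LL_\pi \RR_\pi^{-1}$. The commutation of $\mathcal{U}$ reduces to $[H,\pi]=0$, which is the first requirement of Def.~\ref{cf:def:breuer}. For $\lind_\mathcal{D}$, a direct computation using $\pi A_i^\omega \pi^{-1} = e^\omega A_i^\omega$ and $\pi (A_i^\omega)^\dagger \pi^{-1} = e^{-\omega} (A_i^\omega)^\dagger$ shows that the sandwich piece $A_i^\omega \rho (A_i^\omega)^\dagger$ maps each eigenspace of $\LL_\pi \RR_\pi^{-1}$ to itself (the two exponential factors $e^\omega$ and $e^{-\omega}$ cancel), while $\{(A_i^\omega)^\dagger A_i^\omega, \rho\}$ trivially preserves them since $(A_i^\omega)^\dagger A_i^\omega$ commutes with $\pi$. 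Lemma~\ref{appM:lem:equivalence_when_commuting} then converts each of the three Alicki relations into the corresponding Fisher relation for every standard monotone $f$.

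The converse \ref{appM:cond:AS}~$\implies$~\ref{appM:cond:al} under non-degenerate $H$ requires more work. First, the skew-self-adjointness $\widetilde{\mathcal{U}}_f = -\mathcal{U}$, combined with $\mathcal{U}^\dagger = -\mathcal{U}$, reduces to $[\mathcal{U}, \J_f\big|_\pi]=0$ for every $f$, which in turn forces $[\mathcal{U}, \RR_\pi]=0$, i.e., $[H,\pi]=0$. Normality of $\lind$ plus skew-self-adjointness of $\mathcal{U}$ then yield $[\mathcal{U}, \lind_\mathcal{D}]=0$, as explained in the main text. Non-degeneracy of $H$ enters at this point: since $[H,\pi]=0$, the spectral decomposition of $H$ also diagonalises $\pi$, and the off-diagonal matrix units $|E_j\rangle\langle E_k|$ ($j\neq k$) span the non-zero eigenspaces of $\mathcal{U}$ — all one-dimensional by non-degeneracy — which are therefore mapped into themselves by $\lind_\mathcal{D}$, forcing it to act as a scalar on each. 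Since these matrix units are simultaneously eigenvectors of $\LL_\pi \RR_\pi^{-1}$, and since the diagonal subspace is the unit eigenspace of $\LL_\pi \RR_\pi^{-1}$ (on which commutation is automatic), we conclude $[\lind_\mathcal{D}, \LL_\pi \RR_\pi^{-1}]=0$. Lemma~\ref{appM:lem:equivalence_when_commuting} applied in the reverse direction completes the implication.

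The main obstacle will be justifying this last step, and specifically the role of non-degeneracy. The subtlety is that the eigenvalues $\pi_j/\pi_k$ of $\LL_\pi \RR_\pi^{-1}$ can be degenerate even when $H$ is non-degenerate, so the commutation $[\lind_\mathcal{D}, \LL_\pi \RR_\pi^{-1}]=0$ does not follow directly from preservation of its eigenspaces. The argument instead routes through the finer eigenstructure of $\mathcal{U}$, whose off-diagonal eigenvalues $-i(E_j-E_k)$ are all distinct precisely under non-degeneracy of $H$; this rigidity is exactly what fails in the general case, leaving room for the extra $B_i^\omega$ terms in the structural characterisation Eq.~\eqref{cf:eq:structAS}, which can mix matrix units sharing the same value of $\pi_j/\pi_k$ but belonging to distinct Bohr frequencies.
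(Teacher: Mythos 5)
Your overall route coincides with the paper's: for \ref{appM:cond:breuer}~$\implies$~\ref{appM:cond:AS} you show $[\lind,\LL_\pi\RR_\pi^{-1}]=0$ from the structural form and invoke Lemma~\ref{appM:lem:equivalence_when_commuting}, and for \ref{appM:cond:AS}~$\implies$~\ref{appM:cond:al} you extract $[H,\pi]=0$ from the skew-adjointness of $\mathcal{U}$, get $[\mathcal{U},\lind_\mathcal{D}]=0$ from normality, and try to upgrade this to $[\lind_\mathcal{D},\LL_\pi\RR_\pi^{-1}]=0$. (That you re-derive the Alicki equivalence \ref{appM:cond:al}~$\iff$~\ref{appM:cond:breuer} directly, where the paper cites~\cite{alicki1976detailed} and recovers the structural form as the $\mu_i^\omega=0$ case of its App.~\ref{appM:sec:structuralCh} characterisation, is an acceptable, more self-contained variant.)

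There is, however, a genuine gap in your final step. You assert that the off-diagonal eigenvalues $-i(E_j-E_k)$ of $\mathcal{U}$ ``are all distinct precisely under non-degeneracy of $H$'', and hence that the corresponding eigenspaces are one-dimensional. This is false: non-degeneracy of the \emph{spectrum} of $H$ does not imply non-degeneracy of its \emph{gaps}. For an equally spaced spectrum, e.g.\ $H=\diag(0,1,2)$, one has $E_2-E_1=E_3-E_2$, so the eigenspace of $\mathcal{U}$ with eigenvalue $-i$ is two-dimensional, spanned by $\ketbra{2}{1}$ and $\ketbra{3}{2}$; a map commuting with $\mathcal{U}$ may then mix these two matrix units, which carry \emph{different} eigenvalues $\pi_2/\pi_1\neq\pi_3/\pi_2$ of $\LL_\pi\RR_\pi^{-1}$ for generic $\pi$ commuting with $H$. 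Your conclusion $[\lind_\mathcal{D},\LL_\pi\RR_\pi^{-1}]=0$ therefore does not follow from $[\mathcal{U},\lind_\mathcal{D}]=0$ and non-degeneracy of $H$ alone. The paper is aware of exactly this point: it explicitly introduces an additional ``non-degenerate gap condition'' on the Bohr frequencies, which it justifies ``under the assumption of continuity under small perturbations'', and it separately notes that the obstruction disappears when $\pi$ is a function of $H$ (e.g.\ the thermal case $\LL_\pi\RR_\pi^{-1}=e^{-i\beta\,\mathcal{U}}$), since then degenerate gaps of $\mathcal{U}$ force the corresponding eigenvalues of the modular operator to coincide as well. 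You would need to either add the non-degenerate-gap hypothesis (with some argument for why the general case reduces to it), or restrict to $\pi=\pi(H)$, to close this step; as written, the claimed one-dimensionality is the precise point where the proof breaks.
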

	
	First, it should be noticed that the equivalence between condition~\ref{appM:cond:al} and~\ref{appM:cond:breuer} was already proved by Alicki in~\cite{alicki1976detailed}, so we postpone the proof to App.~\ref{appM:sec:structuralCh} where we characterise the Lindbladians satisfying condition~\ref{appM:cond:AS}. Then, if $H$ is non-degenerate, this provides a proof of the structural definition of condition~\ref{appM:cond:breuer}.  
	
	\begin{proof}
		First it should be noticed that if condition~\ref{appM:cond:breuer} is satisfied, the Lindbladian commutes with the modular operator $\LL_\pi\RR_\pi^{-1}$. In fact, starting from the characterisation:
		\begin{align}
			\lind(\rho)=-i[H,\rho]+\sum_{\omega, i} \;\lambda_{i}^\omega \left(A_i^{\omega}\,\rho \,(A_i^{\omega})^\dagger -\frac{1}{2}\{(A_i^{\omega})^\dagger A_i^{\omega},\rho \} \right)\,,
		\end{align}
		it is a matter of straightforward calculations to verify that:
		\begin{align}
			\LL_\pi\RR_\pi^{-1}(\lind(\rho))&=-i\pi[H,\rho]\pi^{-1}+\sum_{\omega, i} \;\lambda_{i}^\omega \left((\pi\,A_i^{\omega}\pi^{-1})\,\pi\,\rho \,\pi^{-1}(\pi^{-1}A_i^{\omega}\,\pi)^\dagger -\frac{1}{2}\pi\{(A_i^{\omega})^\dagger A_i^{\omega},\rho \}\pi^{-1} \right)=\\
			&= -i[H,\pi\rho\pi^{-1}]+\sum_{\omega, i} \;\lambda_{i}^\omega \cancel{e^{\omega}}\cancel{e^{-\omega}}\left(A_i^{\omega}\,\pi\,\rho \,\pi^{-1}(A_i^{\omega})^\dagger -\frac{1}{2}\{(A_i^{\omega})^\dagger A_i^{\omega},\pi\,\rho\pi^{-1} \} \right)=\\
			&=\lind(\LL_\pi\RR_\pi^{-1}(\rho))
		\end{align}
		where we used the condition $[H,\pi]=0$, together with $\LL_\pi\RR_\pi^{-1}(A_i^\omega) = e^\omega \, A_i^{\omega}$ and $(A_i^{\omega})^\dagger = A_i^{-\omega}$. Since condition~\ref{appM:cond:breuer} is equivalent to condition~\ref{appM:cond:al}, and $[\mathcal{L},\LL_\pi\RR_\pi^{-1}]=0$, thanks to Lemma~\ref{appM:lem:equivalence_when_commuting} this means that $\lind$ has the same self-adjointness properties with respect to $K_\pi^o$ and $K_{f,\pi}$. This proves the forward implication.
		
		Let us prove the reverse, namely that condition~\ref{appM:cond:AS} is equivalent to condition~\ref{appM:cond:al} for non-degenerate Hamiltonians. Let us first focus on the unitary part. Then, using $\mathcal{U}^\dagger = -\mathcal{U}$ we can rewrite the skew-adjointness condition as:
		\begin{align}
			\mathcal{U} \circ\J_f\big|_{\pi} = -\J_f\big|_{\pi}\circ \mathcal{U}^\dagger \qquad \iff\qquad\mathcal{U} \circ\J_f\big|_{\pi} = \J_f\big|_{\pi}\circ \mathcal{U}\,.
		\end{align}
		Then, applying the two operators in the last equation to the identity, we can verify that:
		\begin{align}
			\mathcal{U} \circ\J_f\big|_{\pi}(\id) = \J_f\big|_{\pi}\circ \mathcal{U}(\id) &\implies \mathcal{U}(\pi) = -i\,\J_f\big|_{\pi}([H,\id]) \implies [H,\pi] = 0\,.\label{appM:eq:commU}
		\end{align}
		Notice that this result is generic, i.e., no assumptions on the spectrum of $H$ need to be made. This directly implies the commutation $[\mathcal{U},\LL_\pi\RR_\pi^{-1}]=0$, so again thanks to Lemma~\ref{appM:lem:equivalence_when_commuting} we have that  for the unitary part condition~\ref{appM:cond:AS} and~\ref{appM:cond:al} are always equivalent.
		
		Let us now focus on Eq.~\eqref{appM:eq:commU}. Since $H$ and $\pi$ commute, we can find a common set of eigenvectors, which we denote by $\{\ket{\alpha}\}$. Moreover, the basis $\{\ket{\alpha}\bra{\beta}\}$ gives a set of common eigenvectors to $\mathcal{U}$ and $\LL_\pi\RR_\pi^{-1}$, as it can be verified by direct calculation:
		\begin{align}
			\mathcal{U}(\ket{\alpha}\bra{\beta})=-i(H_\alpha-H_\beta)\ket{\alpha}\bra{\beta}\,; \qquad\qquad\qquad \LL_\pi\RR_\pi^{-1}(\ket{\alpha}\bra{\beta})=\frac{\pi_\alpha}{\pi_\beta}\ket{\alpha}\bra{\beta} \, .
		\end{align}
		Both superoperators have constant eigenvalues for all eigenvectors of the form $\ket{\alpha}\bra{\alpha}$. Under the assumption of continuity under small perturbations, we can also assume that each eigenvector such that $\alpha\neq\beta$ has a different eigenvalue (non-degenerate gap condition). These two observations together then imply that any superoperator commuting with $\mathcal{U}$ needs to commute with $\LL_\pi\RR_\pi^{-1}$ as well. In the main text we showed how normality of $\lind$ implies $[\mathcal{U},\lind_\mathcal{D}]=0$. Then, thanks
		to the considerations above, we also have that:
		\begin{align}
			[\lind_\mathcal{D},\LL_\pi\RR_\pi^{-1}]=0\ .\label{appM:eq:106}
		\end{align}
		Thus, thanks once again to Lemma~\ref{appM:lem:equivalence_when_commuting}, we have that $(\widetilde{\lind_\mathcal{D}})_f=\lind_\mathcal{D}$ implies $(\widetilde{\lind_\mathcal{D}^\dagger})^o=\lind_\mathcal{D}^\dagger $. This concludes the proof.
	\end{proof}
	
	It should be noticed that the non-degeneracy of the spectrum is needed to prove the commutation relation in Eq.~\eqref{appM:eq:106}. The same equivalence can be proved if $\LL_\pi\RR_\pi^{-1}$ has some functional dependence on $\mathcal{U}$. Take for example the thermal scenario, i.e., $\pi\propto \exp[-\beta H]$. Then the modular operator takes the form $\LL_\pi\RR_\pi^{-1}=\exp[-i\beta \mathcal{U}]$. Due to normality of the Lindbladian (which implies $[\mathcal{U},\lind_\mathcal{D}]=0$), we directly obtain Eq.~\eqref{appM:eq:106}, proving the equivalence without further assumptions on the spectrum of $\pi$ or $H$.
	
	\subsection{Def.~\ref{cf:def:AS} is weaker in general}
	
	Whereas the constraints coming from Def.~\ref{cf:def:alicki} imply the ones in Def.~\ref{cf:def:AS}, the reverse does not hold in general. In fact, this is connected with the commutation between $\lind_\mathcal{D}$ and $\LL_\pi\RR_\pi^{-1}$. Whereas in the first definition of detailed balance these two operators always commute, this is not the case for the Fisher detailed balance dissipators. This leads to a less constrained evolution of the coherences, as it will be shown in the following.
	
	First, as it was discussed in the proof in the previous section, if $\{\ket{\alpha}\}$ is an eigenbasis for $\pi$, then $\{\ket{\alpha}\bra{\beta}\}$ are eigenvectors for $\LL_\pi\RR_\pi^{-1}$, with eigenvalues:
	\begin{align}
		\LL_\pi\RR_\pi^{-1}(\ket{\alpha}\bra{\beta})=\frac{\pi_\alpha}{\pi_\beta}\ket{\alpha}\bra{\beta} \, .
	\end{align}
	Since the steady state $\pi$ is always assumed to be full rank, proving that $\lind_\mathcal{D}$ commutes with $\LL_\pi\RR_\pi^{-1}$ is equivalent to requiring that the matrix elements
	\begin{align}
		(\lind_\mathcal{D})_{\delta|\beta}^{\gamma|\alpha}\equiv\bra{\gamma}\lind_\mathcal{D}(\ketbra{\alpha}{\beta})\ket{\delta}
	\end{align}
	satisfy the following condition
	\begin{align}
		[\lind_\mathcal{D},\Phi_\pi]=0 \quad\iff\quad 	(\lind_\mathcal{D})_{\delta|\beta}^{\gamma|\alpha}\left(\frac{\pi_\alpha}{\pi_\beta} -\frac{\pi_\gamma}{\pi_\delta}\right)=0\,.
	\end{align}
	Equivalently, this means that matrix elements of $\mathcal{D}$ can be nonzero only if:
	\begin{align}
		\label{appM:eq:D_comm_coordinates}
		(\lind_\mathcal{D})_{\delta|\beta}^{\gamma|\alpha}\neq 0 \quad\implies\quad \frac{\pi_\alpha}{\pi_\beta} =\frac{\pi_\gamma}{\pi_\delta}\,.
	\end{align}
	
	In the following we show that a slightly more general condition follows from the requirement that $(\widetilde{\lind_\mathcal{D}})_f  = \lind_\mathcal{D}$ for all standard monotone functions $f$. Indeed, this condition can be written in coordinates as:
	\begin{align}
		\lind_\mathcal{D}&\circ\J_f\big|_{\pi} = \J_f\big|_{\pi}\circ \lind_\mathcal{D}^\dagger \quad\iff\quad (\lind_\mathcal{D})^{\gamma|\alpha}_{\delta|\beta}\, f\norbra{\frac{\pi_\alpha}{\pi_\beta}}\pi_\beta  = (\lind_\mathcal{D})_{\alpha|\gamma}^{\beta|\delta}\,f\norbra{\frac{\pi_\gamma}{\pi_\delta}}\pi_\delta\,.
		\label{appM:eq:coordinateExpression}
	\end{align}
	At this point it is useful to introduce the notation $e^{-\omega_1} := \pi_\alpha/\pi_\beta$ and $e^{-\omega_2} := \pi_\gamma/\pi_\delta$.
	Grouping the functional dependence on one side of the equation we obtain:
	\begin{align}
		\frac{(\lind_\mathcal{D})^{\gamma|\alpha}_{\delta|\beta}\;\pi_\beta}{(\lind_\mathcal{D})_{\alpha|\gamma}^{\beta|\delta}\;\pi_\delta}  = \frac{f\norbra{e^{-\omega_2}}}{f\norbra{e^{-\omega_1}}}\,.\label{appM:eq:coordinateExpression2}
	\end{align}
	It should be noticed that the left hand side of the equation does not depend on the function $f$, so the coordinates of $\lind_\mathcal{D}$ are zero unless $\omega_1=\pm\omega_2$ (notice that one cannot rule out the case $\omega_1=-\omega_2$, since this follows from the symmetry of standard monotone functions $f(x) = x\,f(x^{-1})$).
	
	Then, the only non-zero elements of  a Fisher self-adjoint $\lind_\mathcal{D}$ are the ones for which either of the conditions:
	\begin{align}
		\label{appM:eq:D_prop_or}
		(\lind_\mathcal{D})^{\gamma|\alpha}_{\delta|\beta}\neq 0 \implies
		\left(\frac{\pi_\alpha}{\pi_\beta}=\frac{\pi_\gamma}{\pi_\delta}\right) 
		\lor  
		\left(\frac{\pi_\alpha}{\pi_\beta}=\frac{\pi_\delta}{\pi_\gamma}\right)\, ,
	\end{align}
	are satisfied.
	Comparing this result with Eq.~\eqref{appM:eq:D_comm_coordinates} directly shows that in general $\lind_\mathcal{D}$ does not commute with $\LL_\pi\RR_\pi^{-1}$, so Def.~\ref{cf:def:AS} is weaker than Def.~\ref{cf:def:alicki} (at the end of the section we present an explicit example showing this). For this reason, it is interesting to explore which constraints Eq.~\eqref{appM:eq:D_prop_or} imposes on the Lindbladian. First, it should be noticed that since $\lind_\mathcal{D}$ is adjoint preserving, its coordinates satisfy 
	\begin{align}
		\label{appM:eq:D_prop_dagger}
		\lind_\mathcal{D}(A^\dagger)=\lind_\mathcal{D}(A)^\dagger 
		\quad \iff \quad
		(\lind_\mathcal{D})^{\gamma|\alpha}_{\delta|\beta}=\overline{(\lind_\mathcal{D})_{\gamma|\alpha}^{\delta|\beta}} \,.
	\end{align}
	Then, combining Eq.~\eqref{appM:eq:coordinateExpression2}, Eq.~\eqref{appM:eq:D_prop_or}, and Eq.~\eqref{appM:eq:D_prop_dagger}, we can see that:
	\begin{itemize}
		\item populations and coherences do not mix, and the populations on the diagonal satisfy the classical detailed balance condition. In fact, from Eq.~\eqref{appM:eq:D_prop_or} we see that from $\alpha=\beta$ it follows that $\gamma=\delta$ (assuming that $\pi$ is non-degenerate), and the transition probabilities are related by the standard detailed balance condition:
		\begin{align}
			(\lind_\mathcal{D})^{\gamma|\alpha}_{\gamma|\alpha}\,\pi_\alpha=(\lind_\mathcal{D})^{\alpha|\gamma}_{\alpha|\gamma}\,\pi_\gamma\,.
		\end{align}
	\end{itemize}
	Moreover, the dynamics of the coherences can be split in two cases:
	\begin{itemize}
		\item the one for which  $\frac{\pi_\alpha}{\pi_\beta}=\frac{\pi_\gamma}{\pi_\delta}$, implying the following relation: 
		\begin{align}
			\label{appM:eq:old_case}
			(\lind_\mathcal{D})^{\gamma|\alpha}_{\delta|\beta}\,\pi_\beta =
			(\lind_\mathcal{D})^{\beta|\delta}_{\alpha|\gamma}\,\pi_\delta
			=\overline{(\lind_\mathcal{D})_{\beta|\delta}^{\alpha|\gamma}}\,\pi_\delta\, .
		\end{align}
		This property is satisfied also in the Alicki's definition of detailed balanced generator. 
		\item the additional transitions between coherences, corresponding to the case $\frac{\pi_\alpha}{\pi_\beta}=\frac{\pi_\delta}{\pi_\gamma}$.
		In this case the rate are given by:
		\begin{align}
			\label{appM:eq:new_case}
			(\lind_\mathcal{D})^{\gamma|\alpha}_{\delta|\beta}\,\pi_\beta =
			(\lind_\mathcal{D})^{\beta|\delta}_{\alpha|\gamma}\,\pi_\gamma
			=\overline{(\lind_\mathcal{D})_{\beta|\delta}^{\alpha|\gamma}}\,\pi_\gamma\, .
		\end{align}
		These rates are the only novelty compared with the ones coming from Def.~\ref{cf:def:alicki}, and are the ones that prevent $\lind_\DD$ from commuting with $\LL_\pi\RR_\pi^{-1}$.
	\end{itemize}
	
	In order to justify the preference for Def.~\ref{cf:def:AS} as the quantum generalisation of detailed balance, we argue here that this last case is still physically sensible. Consider indeed two coherences terms $\ket{\alpha}\bra{\beta}$ and   $\ket{\gamma}\bra{\delta}$ such that $\frac{\pi_\alpha}{\pi_\beta}=\frac{\pi_\gamma}{\pi_\delta}$. Then, from Eq.~\eqref{appM:eq:old_case} it follows that the ratio between the currents induced between the two coherences is given by:
	\begin{align}
		\frac{\big|(\lind_\mathcal{D})^{\gamma|\alpha}_{\delta|\beta}\big|}{\big|{(\lind_\mathcal{D})_{\beta|\delta}^{\alpha|\gamma}}\big|}=
		\frac{\pi_\delta}{\pi_\beta}=\frac{\pi_\gamma}{\pi_\alpha}\, .\label{appM:eq:118}
	\end{align}
	The additional freedom given by  Eq.~\eqref{appM:eq:new_case} corresponds to  the possibility of the matrix element $(\lind_\mathcal{D})^{\delta|\alpha}_{\gamma|\beta}$ to be non-zero. It should be noticed, though, that the current between the two coherences is consistent with Eq.~\eqref{appM:eq:118}:
	\begin{align}
		\frac{\big|(\lind_\mathcal{D})^{\delta|\alpha}_{\gamma|\beta}\big|}{\big|({\lind}_\mathcal{D})^{\alpha|\delta}_{\beta|\gamma}\big|}=
		\frac{\pi_\delta}{\pi_\beta}=\frac{\pi_\gamma}{\pi_\alpha}\, .\label{appM:eq:123c}
	\end{align}
	Thus, the difference between the detailed balance condition in Def.~\ref{cf:def:alicki} and the Fisher one (i.e., Def.~\ref{cf:def:AS}), is that the first allows the coherences $\ket{\alpha}\bra{\beta}$ and   $\ket{\gamma}\bra{\delta}$ to communicate but prohibits interactions between $\ket{\alpha}\bra{\beta}$ and   $\ket{\delta}\bra{\gamma}$, while the latter allows the dynamics to connect  both off-diagonal elements, while still keeping the ratio between the two currents detailed balanced. Since there is no clear argument to disregard this second set of transitions, it seems that Def.~\ref{cf:def:AS} should be preferred. 
	
	Finally, before moving on to the characterisation of the Lindbladian operators satisfying Def.~\ref{cf:def:AS} we present here an example of $\lind$ which is detailed balance in the Fisher sense, but not according to Def.~\ref{cf:def:alicki}. Consider a two-levels system equilibrating to the state
	\begin{align}
		\pi_\beta=\frac{\ketbra{0}{0}+e^{-\beta}\ketbra{1}{1}}{1+e^{-\beta}}\, .
	\end{align}
	We consider the Hamiltonian of the system to be completely degenerate, i.e., $H\propto \id$, which implies $\mathcal{U}=0$. Then, it follows from an explicit calculation that the Lindbladian  
	\begin{align}
		\mathcal{L}(\rho)=A\rho A^\dagger -\frac{1}{2}\{\rho,A^\dagger A\}\,,
	\end{align}
	with the jump operator given by $ A=\ketbra{0}{1} + \sqrt{e^{-\beta}}\ketbra{1}{0}$ satisfies $\widetilde{\lind}_f=\lind$, but not $\widetilde{\lind}^o=\lind$.
	
	\subsection{Structural characterisation of Def.~\ref{cf:def:AS}}\label{appM:sec:structuralCh}
	
	This section is devoted to the derivation of the structural form of Lindbladians satisfying the detailed balance condition in Def.~\ref{cf:def:AS}. Restricting the derivation to the case in which $\lind$ commutes with $\LL_\pi\RR^{-1}_\pi$ also proves the equivalence between 
	Def.~\ref{cf:def:alicki} and Def.~\ref{cf:def:breuer}.
	
	With the hindsight of the previous section it is useful to expand the Lindbladian operator in terms of the following eigenbasis:
	\begin{align}
		F^\omega_\alpha := \curbra{\ketbra{\gamma}{\alpha}\bigg|\; \frac{\pi_\gamma}{\pi_\alpha} = e^{\omega}}\,.\label{appM:eq:a122}
	\end{align}
	It is also useful to introduce the function $\beta_\alpha(\omega) := \curbra{\beta|\, \pi_\beta = \pi_\alpha e^{\omega}}$, namely a function that returns the index $\beta$ such that $\frac{\pi_\beta}{\pi_\alpha} = e^\omega$. In order to keep the notation clear we also define $\gamma_\alpha(\omega)$ and $\delta_\alpha(\omega)$ exactly in the same way. The elements of the eigenbasis in Eq.~\eqref{appM:eq:a122} have the property that:
	\begin{align}
		&(F^\omega_\alpha)^\dagger = F^{-\omega}_{\gamma_\alpha(\omega)}\,;\qquad\qquad\pi F^\omega_\alpha = e^{\omega} \,F^\omega_\alpha \,\pi\,.
	\end{align}
	Generically, one can express the action of the dissipator as:
	\begin{align}
		\lind_\mathcal{D}(\rho) :&= \sum_{\substack{\alpha, \beta,\\\gamma,\delta}}\; (\lind_\mathcal{D})^{\gamma|\alpha}_{\delta|\beta}\; \ketbra{\gamma}{\alpha}\, \rho \,(\ketbra{\delta}{\beta})^\dagger =\sum_{\substack{\alpha, \omega,\\\omega_1,\omega_2}}\; (\lind_\mathcal{D})^{\gamma_\alpha(\omega)|\alpha}_{\delta_\alpha(\omega+ \omega_2)|\beta_\alpha(\omega_1)}\; F^{\omega}_\alpha\, \rho \,(F^{\omega-\omega_1+\omega_2}_{\beta_\alpha(\omega_1)})^\dagger\,,
	\end{align}
	where we implicitly defined $ \pi_\gamma/\pi_\alpha=:e^{\omega}$, $ \pi_\beta/\pi_\alpha =: e^{\omega_1}$ and $\pi_\delta/\pi_\gamma =: e^{\omega_2}$. This expression is particularly useful because it allows for a straightforward  application of the constraints in Eq.~\eqref{appM:eq:D_prop_or}. In fact, we have that:
	\begin{align}
		\frac{\pi_\alpha}{\pi_\beta}=\frac{\pi_\gamma}{\pi_\delta}\qquad&\iff\qquad \omega_1 = \omega_2\,;\\
		\frac{\pi_\alpha}{\pi_\beta}=\frac{\pi_\delta}{\pi_\gamma}\qquad&\iff\qquad \omega_1 = -\omega_2 \,.
	\end{align}
	Hence, the sum above can be restricted to the case $\omega_1 = \pm \omega_2$, giving:
	\begin{align}
		\lind_\mathcal{D}(\rho) =&\sum_{\substack{\alpha,\omega\\\omega_1}}\; (\lind_\mathcal{D})^{\gamma_\alpha(\omega)|\alpha}_{\delta_\alpha(\omega+ \omega_1)|\beta_\alpha(\omega_1)}\; F^{\omega}_\alpha\, \rho \,(F^{\omega}_{\beta_\alpha(\omega_1)})^\dagger + \sum_{\substack{\alpha, \omega,\\\omega_1\neq 0}}\; (\lind_\mathcal{D})^{\gamma_\alpha(\omega)|\alpha}_{\delta_\alpha( \omega-\omega_1)|\beta_\alpha(\omega_1)}\; F^{\omega}_\alpha\, \rho \,(F^{\omega-2\omega_1}_{\beta_\alpha(\omega_1)})^\dagger\,.\label{eq:dissipatorSplit}
	\end{align}
	It should be noticed that the case $\omega_1 =0 $ is included in the first sum, so that we have to impose the constraint $\omega_1 \neq0 $ in the second sum.
	In this way $\lind_\DD$ naturally splits in two parts, $\lind_{\DD_1}$ (i.e., the operator in the first sum) corresponding to the component of the dissipator commuting with $\LL_\pi\RR_\pi^{-1}$, and $\lind_{\DD_2}$. Interestingly, the condition $\widetilde\lind_\DD^o = \lind_\DD$ implies $\lind_\DD \equiv\lind_{\DD_1}$, so characterising the latter provides the structural form of  Lindbladians detailed balance according to Def.~\ref{cf:def:alicki}. 
	
	Whereas $\lind_{\DD_1}$ directly commutes with $\LL_\pi\RR_\pi^{-1}$, we can apply a transformation to $\lind_{\DD_2}$ to make it commuting. In particular, it should be noticed that $\lind_{\DD_2}$ transforms under the transposition superoperator $\Theta$ as:
	\begin{align}
		[\Theta\,\lind_{\DD_2}](\rho) &= \sum_{\substack{\alpha, \omega,\\\omega_1\neq 0}}\; (\lind_\mathcal{D})^{\gamma_\alpha(\omega)|\alpha}_{\delta_\alpha( \omega-\omega_1)|\beta_\alpha(\omega_1)}\; \norbra{F^{\omega}_\alpha\, \rho \,(F^{\omega-2\omega_1}_{\beta_\alpha(\omega_1)})^\dagger}^T =\\
		&= \sum_{\substack{\alpha, \omega,\\\omega_1\neq 0}}\; (\lind_\mathcal{D})^{\gamma_\alpha(\omega)|\alpha}_{\delta_\alpha( \omega-\omega_1)|\beta_\alpha(\omega_1)}\; \ketbra{\pi_\alpha e^{\omega-\omega_1}}{\pi_\alpha} \rho \ketbra{\pi_\alpha e^{\omega_1}}{\pi_\alpha e^{\omega}} = \\
		&=\sum_{\substack{\alpha, \omega,\\\omega_1\neq 0}}\; (\lind_\mathcal{D})^{\gamma_\alpha(\omega)|\alpha}_{\delta_\alpha( \omega-\omega_1)|\beta_\alpha(\omega_1)}\; F^{\omega-\omega_1}_\alpha\, \rho \,(F^{\omega-\omega_1}_{\beta_\alpha(\omega_1)})^\dagger=\sum_{\substack{\alpha, \omega,\\\omega_1\neq 0}}\; (\lind_\mathcal{D})^{\gamma_\alpha(\omega+\omega_1)|\alpha}_{\delta_\alpha( \omega)|\beta_\alpha(\omega_1)}\; F^{\omega}_\alpha\, \rho \,(F^{\omega}_{\beta_\alpha(\omega_1)})^\dagger\,,\label{appMeq:d2Transpose}
	\end{align}
	where in the second line we used the abuse of notation $\ket{\pi_\alpha}$ for $\ket{\alpha}$, and in the last line we made an implicit change of variables. From  Eq.~\eqref{appMeq:d2Transpose} we can see that $\Theta\lind_{\DD_2}$ takes a form completely analogous to $\lind_{\DD_1}$. For this reason, it directly follows that:
	\begin{align}
		[\Theta\lind_{\DD_2}, \LL_\pi\RR_\pi^{-1}] = 0\,.
	\end{align}
	This condition allows to lift the characterisation for $\lind_{\DD_1}$ to $\lind_{\DD_2}$, as it will be shown in the following.
	
	We begin by studying $\lind_{\DD_1}$. Notice that the coordinates of $\lind_\DD$ are related by:
	\begin{align}
		(\lind_\mathcal{D})^{\gamma_\alpha(\omega)|\alpha}_{\delta_\alpha(\omega+ \omega_1)|\beta_\alpha(\omega_1)} =e^\omega\, (\lind_\mathcal{D})_{\alpha|\gamma_\alpha(\omega)}^{\beta_\alpha(\omega_1)|\delta_\alpha(\omega+ \omega_1)}\,, \label{appM:eq:appRateRelations}
	\end{align}
	as it can be verified from Eq.~\eqref{appM:eq:coordinateExpression2}. Then, we can proceed with the standard approach to diagonalise $\lind_{\DD_1}$~\cite{alicki1976detailed, carlen2017gradient}. To this end, it is useful to introduce a new basis of operators, given by $\curbra{X^\omega_m} = \curbra{\Delta_i}_{1\leq i\leq d}\cup \curbra{F^\omega_\alpha}_{\omega\neq0}$, where $\curbra{\Delta_i}_{1\leq i\leq d}$ is an orthonormal basis for the diagonal matrices, and $\Delta_1 = \id/\sqrt{d}$. Then, we can rewrite $\lind_{\DD_1}$ in this basis as:
	\begin{align}
		\lind_{\DD_1}(\rho) &= \sum_{\substack{\alpha,\beta}}\; (\lind_\mathcal{D})^{\gamma_\alpha(0)|\alpha}_{\delta_\beta( 0)|\beta}\; F^{0}_\alpha\, \rho \,(F^{0}_{\beta})^\dagger+\sum_{\substack{\alpha,\beta\\\omega\neq 0}}\; (\lind_\mathcal{D})^{\gamma_\alpha(\omega)|\alpha}_{\delta_\beta(\omega)|\beta}\; F^{\omega}_\alpha\, \rho \,(F^{\omega}_{\beta})^\dagger =\\
		&=\sum_{\substack{i,j}}\; D_{i,j}^0\; \Delta_i\, \rho \,(\Delta_j)^\dagger+ \sum_{\substack{\alpha, \beta,\\\omega\neq 0}} \;D_{\alpha,\beta}^\omega\; F^{\omega}_\alpha\, \rho \,(F^{\omega}_{\beta})^\dagger\,,
	\end{align} 
	where we introduced the coefficients $D_{\alpha,\beta}^\omega:=(\lind_\mathcal{D})^{\gamma_\alpha(\omega)|\alpha}_{\delta_\beta(\omega)|\beta}$ and
	\begin{align}
		D^0_{i,j} = \sum_{\alpha,\beta}\;(\lind_\mathcal{D})^{\alpha|\alpha}_{\beta|\beta}\, \bar{U}_{\alpha,i} \,U_{j,\beta}\,,
	\end{align}
	where $U$ is the unitary defined by $\Delta_i := \sum_\alpha U_{i,\alpha} F_\alpha$. In order to make the dissipator explicitly trace preserving, we highlight the terms containing the identity operator. This leads to the form:
	\begin{align}
		\lind_{\mathcal{D}_1}(\rho) =\curbra{K,\rho} +\sum_{\substack{i,j \neq (1,1)\\ \omega}} D^\omega_{i,j}\; X^\omega_i\, \rho \,(X_j^\omega)^\dagger\,,
	\end{align}
	where $K := \frac{D^0_{1,1}}{2 d}\id + \frac{1}{\sqrt{d}}\sum_i D^0_{i,1}\Delta_i$, and there is no additional term thanks to the reality of $D^0_{i,j}$. It should be noticed that $\lind_{\mathcal{D}_2}$ only gives contributions out of the diagonal, so one can impose trace preservation on $\lind_{\mathcal{D}_1}$ alone, i.e., $\lind_{\mathcal{D}_1}^\dagger(\id) = 0$. This directly implies that $K = -\frac{1}{2}\norbra{\sum_{\substack{i,j \neq (1,1)}} D^\omega_{i,j}(X_j^\omega)^\dagger X_i^\omega}$,  so the dissipator can be rewritten in Lindblad form: 
	\begin{align}
		\lind_{\mathcal{D}_1}(\rho) = \sum_{\substack{i,j \neq (1,1)\\ \omega}} D^\omega_{i,j}\; \norbra{X_i^\omega\, \rho \,(X_j^\omega)^\dagger -\frac{1}{2}\curbra{(X_j^\omega)^\dagger X_i^\omega,\rho}}\,.\label{appM:eq:141}
	\end{align}
	The property of the dissipator of being adjoint preserving implies that the matrix $D^\omega_{i,j}$ is Hermitian. Then, there exists a unitary matrix $V$, such that $D^\omega_{i,j} = \sum_{m,n}V^\omega_{i,m}(\lambda^\omega_m \delta^{m}_n)(V^\omega)_{n,j}^\dagger $. We can then define the jump operators as $A_m^\omega := \sum_{i}X^\omega_i V^\omega_{i,m}$. This allows us to recast Eq.~\eqref{appM:eq:141} in the form:
	\begin{align}\label{appM:eq:142}
		\lind_{\mathcal{D}_1}(\rho)  &= \sum_{\substack{i,j \neq (1,1)\\\omega}}\; \lambda^\omega_m \; V^\omega_{i,m}(V^\omega)_{m,j}^\dagger \norbra{X^\omega_i\, \rho \,(X^\omega_j)^\dagger -\frac{1}{2}\curbra{(X^\omega_j)^\dagger X^\omega_i,\rho}} = \\
		&=\sum_{m,\omega} \;\lambda^\omega_m \;\norbra{A^\omega_m\, \rho \,(A^\omega_m)^\dagger -\frac{1}{2}\curbra{(A^\omega_m)^\dagger A^\omega_m,\rho}}\,.
	\end{align}
	We can now characterise the properties of the jump operators and of the rates in the previous equation. First, it should be noticed that since $X_i^\omega$ are eigenoperators of the  modular operator $\LL_\pi\RR_\pi^{-1}$, the same holds for $A_m^\omega$, as the unitary does not mix $X_i^\omega$s  with different $\omega$s. Moreover, Eq.~\eqref{appM:eq:appRateRelations} implies that $D^\omega_{\alpha,\beta} = e^{\omega}\,D^{-\omega}_{\delta_\beta(\omega), \gamma_\alpha(\omega)}$, where we used the same indices as in the equation. This relation shows that $D^{-\omega}_{\delta_\beta(\omega), \gamma_\alpha(\omega)}$ can be diagonalised as $e^{-\omega}\lambda^{\omega}_m\delta_n^m =\sum_{\alpha,\beta}(V^\omega)_{m,\alpha}^\dagger D^{-\omega}_{\delta_\beta(\omega), \gamma_\alpha(\omega)}(V^\omega)_{\beta,n}$. This allows to deduce the following two facts: first, the spectrum of $D^\omega_{\alpha,\beta}$ satisfies $\lambda_{i}^\omega=e^{\omega}\,\lambda_{i}^{-\omega}$; second, since $(X^\omega_{\alpha})^\dagger = X^{-\omega}_{\gamma_\alpha(\omega)}$, it also holds that:
	\begin{align}
		(A^\omega_i)^\dagger = \sum_{\alpha}\;X^{-\omega}_{\gamma_\alpha(\omega)} (V^\omega_{\alpha,i})^\dagger = \sum_{\alpha}\;X^{-\omega}_{\gamma_\alpha(\omega)} V^{-\omega}_{\gamma_\alpha(\omega),i} = A^{-\omega}_i\,.
	\end{align}
	Thus, the dissipator $\lind_{\mathcal{D}_1}$ satisfies the same conditions of Def.~\ref{cf:def:breuer}, namely:
	\begin{enumerate}
		\item $(A_i^{\omega})^\dagger = A_i^{-\omega}$;
		\item $\pi \,A_i^{\omega}\, \pi^{-1}=e^{\omega}\, A_i^{\omega}$ ;
		\item $\lambda_{i}^\omega=e^{\omega}\,\lambda_{i}^{-\omega}$.
	\end{enumerate}
	Since $\lind_{\DD_1}$ is the only component of the dissipator if one uses Def.~\ref{cf:def:alicki}, this proves the equivalence between this notion of detailed balance and the structural characterisation in Def.~\ref{cf:def:breuer}.
	
	We can now pass to characterise $\lind_{\mathcal{D}_2}$. Thanks to Eq.~\eqref{appMeq:d2Transpose} we can rewrite it as:
	\begin{align}
		\lind_{\mathcal{D}_2}(\rho)&=\sum_{\substack{\alpha, \omega,\\\omega_1\neq 0}}\; (\lind_\mathcal{D})^{\gamma_\alpha(\omega+\omega_1)|\alpha}_{\delta_\alpha( \omega)|\beta_\alpha(\omega_1)}\;(F^{\omega}_\alpha\, \rho \,(F^{\omega}_{\beta_\alpha(\omega_1)})^\dagger)^T=\sum_{\substack{\alpha\neq\beta,\\\omega}}\; (\lind_\mathcal{D})^{\gamma_\beta(\omega)|\alpha}_{\delta_\alpha( \omega)|\beta}\;F^{\omega}_{\beta}\, \rho^T \,(F^{\omega}_\alpha)^\dagger =\\
		&=\sum_{\substack{\alpha\neq\beta,\\\omega}}\; (\lind_\mathcal{D})^{\delta_\beta(\omega)|\alpha}_{\gamma_\alpha( \omega)|\beta}\;F^{\omega}_{\beta}\, \rho^T \,(F^{\omega}_\alpha)^\dagger  \,,\label{appM:eq:144}
	\end{align}
	where we eliminated the dependence on $\omega_1$ by enforcing the constraint $\alpha\neq\beta$. Finally, in the last line we exchanged the dummy indexes $\gamma$ and $\delta$ to highlight the analogy with the other part of the Lindbladian. Indeed, define the matrix $\widetilde{D_{\alpha,\beta}^\omega}:= (\lind_\mathcal{D})^{\delta_\beta(\omega)|\alpha}_{\gamma_\alpha( \omega)|\beta}$ for any $\alpha\neq\beta$ and zero on the diagonal. It is interesting to compare it to the off-diagonal elements of $D_{\alpha,\beta}^\omega = (\lind_\mathcal{D})^{\gamma_\alpha(\omega)|\alpha}_{\delta_\beta( \omega)|\beta}$: as it can be seen, the two are related by an exchange $\gamma_\alpha(\omega)\leftrightarrow\delta_\beta(\omega)$. Moreover, thanks to Eq.~\eqref{appM:eq:123c} it also holds that:
	\begin{align}
		\widetilde{D_{\alpha,\beta}^\omega}  = e^\omega\,\widetilde{D}_{\delta_\beta(\omega),\gamma_\alpha(\omega)}^{-\omega}\,,
	\end{align}
	which shows the analogy with $D_{\alpha,\beta}^\omega$  even further.  Using Eq.~\eqref{appM:eq:D_prop_dagger} it also follows that $\widetilde{D_{\alpha,\beta}^\omega}$ is Hermitian, so there exists a unitary matrix $W$ such that  $\widetilde{D_{\alpha,\beta}^\omega} =\sum_{m,n} W^\omega_{\alpha,m}(\mu^\omega_m \delta^{m}_n)(W^\omega)_{n,\beta}^\dagger $. We can then define the jump operators $B_m^\omega := \sum_{\alpha}(W^\omega_{m,\alpha})^\dagger F^\omega_\alpha $. This allows to rewrite $\lind_{\mathcal{D}_2}$ as:
	\begin{align}
		\lind_{\mathcal{D}_2}(\rho)&= \sum_{\substack{\alpha\neq\beta,\\\omega}}\; W^\omega_{\alpha,m}(\mu^\omega_m \delta^{m}_n)(W^\omega)_{n,\beta}^\dagger \;F^{\omega}_{\beta}\, \rho^T \,(F^{\omega}_\alpha)^\dagger  =\sum_{m}\;\mu^\omega_m\,B_m^\omega \,\rho^T (B_m^\omega )^\dagger\,.
	\end{align}
	In analogy with the previous case it also holds that $\mu^\omega_i = e^\omega\mu^{-\omega}_i$ and $(B_m^\omega)^\dagger = B_m^{-\omega}$, together with the fact that $B_m^\omega$ is an eigenoperator of $\LL_\pi \RR_\pi^{-1}$. Still, there is one crucial difference with $\lind_{\mathcal{D}_1}$: since all the diagonal elements of $\widetilde{D_{\alpha,\beta}^\omega}$ are zero, this matrix is traceless, meaning that the sum of the eigenvalues is also zero. Hence, whereas $\lambda_i^\omega\geq0$ for all $i$ and $\omega$, we have the extra constraint $\sum_i \mu_i^\omega = 0 $, implying the negativity of some of the $\mu_i^\omega$. Putting everything together, we finally obtain the characterisation in Eq.~\eqref{cf:eq:structAS}.
	
	\section{Derivations from Sec.~\ref{sec:FisherOperatorProperties}}\label{box:fisherComputations}
	
	We report here the explicit computation that were omitted in Sec.~\ref{sec:FisherOperatorProperties}. We start by discussing how the defining properties of the standard monotone functions are reflected on the Fisher information operators. First, the property that $f(1)=1$ implies that $\J_f\big|_{\pi}$ reduces to the multiplication by $\pi$ for commuting operators, as it was shown in Eq.~\eqref{eq:classicalFisher}. We repeat the derivation here for completeness: it should be noticed that the modular operator acts trivially on the commutant of $\pi$, that is $\LL_\pi\RR_\pi^{-1}[A] = A$  for $[\pi,A] = 0$. For this reason $f(\LL_\pi\RR_\pi^{-1})[A]  = f(\idO)[A] = A$, where in the last step we used the fact that $f(1)=1$. Hence, if $[\pi,A] = 0$, then $\J_f\big|_{\pi}[A] = \RR_\pi [A] = A\pi$.
	
	We can now pass to the second condition that standard monotone functions should satisfy, namely ${f(x) = x f(x^{-1})}$. This enforces $\J_f\big|_{\pi}$ to be adjoint preserving. Indeed, one has that, if $A  = A^\dagger$, it follows that:
	\begin{align}
		(\J_f\big|_{\pi} [A])^\dagger &=  (\RR_\pi f\boldsymbol{(}\LL_{\pi }\RR_{\pi  }^{-1}\boldsymbol{)} [A])^\dagger = \LL_\pi f\boldsymbol{(}\RR_{\pi }\LL_{\pi  }^{-1}\boldsymbol{)} [A^\dagger]
		=\cancel{\LL_\pi} \,\cancel{\LL_{\pi  }^{-1}}\RR_{\pi } f\boldsymbol{(}\LL_{\pi  }\RR_{\pi }^{-1}\boldsymbol{)} [A]=\J_f\big|_{\pi} [A]\,,
	\end{align}
	where we used the fact that $(\LL_\pi[A])^\dagger = \RR_\pi[A^\dagger]$ (and similarly for $\RR_\pi$), the commutation $[\LL_\pi, \RR_\pi] = 0$ and the property ${f(x) = x f(x^{-1})}$.
	
	Finally, from the monotonicity of $f$ it follows that (see Thm.~\ref{cf:thm:Petz}):
	\begin{align}
		\Tr{\Phi(A) \, \J_f^{-1}\big|_{\Phi(\pi)} [ \Phi(A)]} \leq \Tr{A \, \J_f^{-1}\big|_{\pi} [ A]}\,,\label{cf:eq:A30}
	\end{align}
	for every self-adjoint operator $A$ and CPTP map $\Phi$. By polarisation this means that $\Phi^\dagger\norbra{\J_f^{-1}\big|_{\Phi(\pi)}}\Phi\leq \J_f^{-1}\big|_{\pi}$, which is the first of the two conditions in Eq.~\ref{eq:monotonicityPropertiesPolar}. By multiplying the two sides of the inequality by $\J_f^{1/2}\big|_{\pi}$ on the right and on the left we obtain:
	\begin{align}
		\J_f^{1/2}\big|_{\pi}\Phi^\dagger\norbra{\J_f^{-1}\big|_{\Phi(\pi)}}\Phi\J_f^{1/2}\big|_{\pi} = \norbra{\J_f^{1/2}\big|_{\pi}\Phi^\dagger\J_f^{-1/2}\big|_{\Phi(\pi)}}\norbra{\J_f^{1/2}\big|_{\pi}\Phi^\dagger\J_f^{-1/2}\big|_{\Phi(\pi)}}^\dagger\leq \idO\,.
	\end{align}
	This inequality is of the form $AA^\dagger \leq \idO$, for $A = \J_f^{1/2}\big|_{\pi}\Phi^\dagger\J_f^{-1/2}\big|_{\Phi(\pi)}$. But then it follows that also  $A^\dagger A \leq \idO$, which expands to:
	\begin{align}
		\J_f^{-1/2}\big|_{\Phi(\pi)}\Phi\,\J_f\big|_{\pi}\Phi^\dagger\J_f^{-1/2}\big|_{\Phi(\pi)}\leq \idO\,.
	\end{align}
	Multiplying both sides on the left and on the right by $\J_f^{1/2}\big|_{\Phi(\pi)}$ we finally obtain $\Phi\norbra{\J_f\big|_{\pi}}\Phi^\dagger\;\leq\; {\J_f\big|_{\Phi(\pi)}}$, which proves the second condition in Eq.~\ref{eq:monotonicityPropertiesPolar}.
	
	After having solved the question about the properties of $\J_f\big|_{\pi}$ we present here the explicit derivation of Eq.~\eqref{eq:157} and Eq.~\eqref{eq:158}. First it should be noticed that we can rewrite $\J_{f_\lambda} \big|_{\pi}$ as:
	\begin{align}
		\J_{f_\lambda} \big|_{\pi}[A] &= \RR_\pi \,f_\lambda\norbraB{\LL_\pi\RR_\pi^{-1}}[A] =\norbra{\frac{1+\lambda}{2}} \norbra{\frac{\LL_\pi\RR_\pi^{-1}}{(\LL_\pi\RR_\pi^{-1}+\lambda)} +\frac{\LL_\pi\RR_\pi^{-1}}{(1+\lambda\LL_\pi\RR_\pi^{-1})}  }[ A\pi]=\\
		&=\norbra{\frac{1+\lambda}{2}} \norbra{\frac{\LL_\pi\cancel{\RR_\pi^{-1}}}{\cancel{\RR_\pi^{-1}}(\LL_\pi+\lambda\RR_\pi)} +\frac{\LL_\pi\cancel{\RR_\pi^{-1}}}{\cancel{\RR_\pi^{-1}}(\RR_\pi+\lambda\LL_\pi)}  }[ A\pi]=\\
		&= \norbra{\frac{1+\lambda}{2}} \norbra{(\LL_\pi+\lambda\RR_\pi)^{-1} + (\RR_\pi+\lambda\LL_\pi)^{-1}}[\pi A\pi] \,,
	\end{align}
	where we used the commutation between $\LL_\pi$ and $\RR_\pi$ to simplify the manipulations. Integrating over $\lambda$ then gives Eq.~\eqref{eq:157}. A similar manipulation can be carried out for $\J_{f}^{-1} \big|_{\pi}$, which gives:
	\begin{align}
		\J_{f}^{-1} \big|_{\pi}[A] &= \J_{Tf} \big|_{\pi^{-1}}[A]=\int_0^1\de\mu_{f}(\lambda)\; \norbra{\frac{1+\lambda}{2}} \norbra{(\LL_\pi^{-1}+\lambda\RR_\pi^{-1})^{-1} + (\RR_\pi^{-1}+\lambda\LL_\pi^{-1})^{-1}}[\pi^{-1} A\pi^{-1}]= \\
		&=\int_0^1\de\mu_{f}(\lambda)\;\norbra{\frac{1+\lambda}{2}} \norbra{\frac{\LL_\pi^{-1}\RR_\pi^{-1}}{(\LL_\pi^{-1}+\lambda\RR_\pi^{-1})} +\frac{\LL_\pi^{-1}\RR_\pi^{-1}}{(\RR_\pi^{-1}+\lambda\LL_\pi^{-1})}  }[ A]=\\
		&= \int_0^1\de\mu_{f}(\lambda)\; \norbra{\frac{1+\lambda}{2}} \norbra{(\LL_\pi+\lambda\RR_\pi)^{-1} + (\RR_\pi+\lambda\LL_\pi)^{-1}}[ A]\,,
	\end{align}
	which proves Eq.~\eqref{eq:158}.
	
	\section{Ordering of symmetrised contrast functions}\label{app:orderContrast}
	
	In this section we prove that symmetrised contrast functions satisfy the ordering:
	\begin{align}
		H^{\rm symm}_{Lf_B} ( \rho || \sigma)\leq H_g^{\rm symm} ( \rho || \sigma) \leq H_{Lf_H}^{\rm symm} ( \rho || \sigma)\,.\label{eq:E1}
	\end{align}
	The two main ingredients we use are the integral expression in Eq.~\eqref{cf:eq:AppIntegralExpression}, namely:
	\begin{align}
		H^{\rm symm}_g(\rho || \sigma)  &=  \frac{1}{2}\int_0^1 \de N_g(s)\;\Tr{(\rho-\sigma)\norbra{(\LL_\sigma + s \RR_\rho)^{-1}+(\LL_\rho + s \RR_\sigma)^{-1}}[(\rho-\sigma)]}\label{eq:E2}
	\end{align}
	and the following Lemma:
	\begin{lemma}\label{lemmino}
		For any two full-rank states $\rho$ and $\sigma$ and any $s\in[0,1]$ the following chain of operator inequalities hold:
		\begin{align}
			2\,(\LL_\sigma+\RR_\rho)^{-1}\leq \norbra{\frac{1+s}{2}}\norbra{(\LL_\sigma+s\RR_\rho)^{-1}+(\RR_\rho+s\LL_\sigma)^{-1}}\leq \frac{1}{2}\,(\LL_\sigma^{-1} + \RR_\rho^{-1})\,.\label{eq:E3}
		\end{align}
	\end{lemma}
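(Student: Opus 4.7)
The plan is to reduce the operator inequality to a pair of scalar inequalities on the joint spectrum of $\LL_\sigma$ and $\RR_\rho$, then verify the scalar inequalities by an elementary AM-GM argument.

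First I would observe that, as left and right multiplication operators, $\LL_\sigma$ and $\RR_\rho$ always commute, and that they are positive operators on $\mathcal{M}_d(\mathbb{C})$ (equipped with the Hilbert-Schmidt inner product) whenever $\rho,\sigma>0$, since they have joint eigenbasis $\{\ketbra{\sigma_i}{\rho_j}\}$ with joint eigenvalues $(\sigma_i,\rho_j)$. Hence the entire chain of inequalities in Eq.~\eqref{eq:E3} is a statement in the commutative functional calculus of the pair $(\LL_\sigma,\RR_\rho)$, and it suffices to prove the corresponding pointwise inequalities for every $l,r>0$ and every $s\in[0,1]$:
\begin{align}
\frac{2}{l+r}\;\leq\;\frac{1+s}{2}\left(\frac{1}{l+sr}+\frac{1}{r+sl}\right)\;\leq\;\frac{1}{2}\left(\frac{1}{l}+\frac{1}{r}\right).
\end{align}

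Next I would compute the middle expression explicitly: since $(l+sr)+(r+sl)=(1+s)(l+r)$, one has
\begin{align}
\frac{1+s}{2}\left(\frac{1}{l+sr}+\frac{1}{r+sl}\right)\;=\;\frac{(1+s)^2 (l+r)}{2(l+sr)(r+sl)}.
\end{align}
For the lower bound, after clearing the positive denominator $2(l+sr)(r+sl)(l+r)$, the inequality becomes $4(l+sr)(r+sl)\leq (1+s)^2(l+r)^2$, which follows from the AM-GM inequality applied to $a:=l+sr$ and $b:=r+sl$, since $a+b=(1+s)(l+r)$ and hence $4ab\leq (a+b)^2$. For the upper bound, after cross-multiplying, one has to check $(1+s)^2\,lr\leq (l+sr)(r+sl)=(1+s^2)lr+s(l^2+r^2)$, i.e.\ $2s\,lr\leq s(l^2+r^2)$, which is again AM-GM.

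The translation from the pointwise scalar inequalities back to the operator inequality is immediate by functional calculus on the commuting pair $(\LL_\sigma,\RR_\rho)$. The only subtle point is the commutativity $[\LL_\sigma,\RR_\rho]=0$, which is what allows the joint diagonalisation; once this is noted, the whole argument is elementary and no genuine obstacle arises. Finally, substituting Eq.~\eqref{eq:E3} into the integral representation Eq.~\eqref{eq:E2} and using the normalisation $\int_0^1 dN_g(s)\,\tfrac{2}{1+s}=1$ from Eq.~\eqref{eq:normalisationFisherMeasure} will yield Eq.~\eqref{eq:E1} directly, since the extremal measures $dN_{Lf_B}(s)=\delta(s-1)\,ds$ and $dN_{Lf_H}(s)=\tfrac{1}{2}\delta(s)\,ds$ precisely pick out the two bounding superoperators in Eq.~\eqref{eq:E3}.
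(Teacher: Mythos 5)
Your proof is correct and follows essentially the same route as the paper's: both reduce the operator chain to pointwise scalar inequalities by exploiting the joint eigenbasis $\{\ketbra{\sigma_j}{\rho_i}\}$ of the commuting pair $(\LL_\sigma,\RR_\rho)$. The only difference is that you explicitly verify the resulting scalar inequalities via AM-GM (correctly), whereas the paper simply asserts them.
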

	\begin{proof}
		The three operators in Eq.~\eqref{eq:E3} are all diagonal in the basis given by $\{\ketbra{\sigma_j}{\rho_i}\}$, as it can be readily verified from the equalities:
		\begin{align}
			2\,(\LL_\sigma+\RR_\rho)^{-1}[\ketbra{\sigma_j}{\rho_i}]&=\frac{2}{\sigma_j+\rho_i}\ketbra{\sigma_j}{\rho_i}\,;\\
			\norbra{\frac{1+s}{2}}\norbra{(\LL_\sigma+s\RR_\rho)^{-1}+(\RR_\rho+s\LL_\sigma)^{-1}}[\ketbra{\sigma_j}{\rho_i}]&=\norbra{\frac{1+s}{2}}\norbra{\frac{1}{\sigma_j+s\,\rho_i}+\frac{1}{s\,\sigma_j+\rho_i}}\ketbra{\sigma_j}{\rho_i}\,;\\
			{\frac{1}{2}}\,(\LL_\sigma^{-1} + \RR_\rho^{-1})[\ketbra{\sigma_j}{\rho_i}]&={\frac{1}{2}}\,\norbra{\frac{1}{\sigma_j}+\frac{1}{\rho_i}}\ketbra{\sigma_j}{\rho_i}\,.
		\end{align}
		Then the claim follows from the chain of inequalities:
		\begin{align}
			\frac{2}{1+x}\leq \norbra{\frac{1+s}{2}}\norbra{\frac{1}{1+s\,x} +  \frac{1}{s+x} }\leq \frac{1}{2}+\frac{1}{2x}\,,
		\end{align}
		which holds for any positive $x$ and $s\in[0,1]$. Moreover, the function in the middle is monotonically decreasing in $s$.
	\end{proof}
	We are now ready to prove the claim. Indeed, starting from Eq.~\eqref{eq:E2}, and using Eq.~\eqref{eq:cf:102o} we have:
	\begin{align}
		H^{\rm symm}_g(\rho || \sigma)  &=  \frac{1}{2}\int_0^1 \de N_g(s)\;\Tr{(\rho-\sigma)\norbra{(\LL_\sigma + s \RR_\rho)^{-1}+(\LL_\rho + s \RR_\sigma)^{-1}}[(\rho-\sigma)]} =\\
		&= \frac{1}{2}\int_0^1\de\mu_{f}(s)\; \norbra{\frac{1+s}{2}}\Tr{(\rho-\sigma)\norbra{(\LL_\sigma + s \RR_\rho)^{-1}+(\RR_\rho + s \LL_\sigma)^{-1}}[(\rho-\sigma)]}\,,
	\end{align}
	where in the second line we took the adjoint of the second term in the trace. Since $\de\mu_{f}(s)$ is a probability distribution and thanks to Lemma~\ref{lemmino}, any contrast function is smaller than the one corresponding to $\delta\mu_{f_H}(s) = \delta(s)$ and larger than the one corresponding to $\delta\mu_{f_B}(s) = \delta(s-1)$. Hence, for any symmetrised contrast function Eq.~\eqref{eq:E1} holds.
	
	\section{Integral expression of $(\LL_\sigma+\RR_\rho)^{-1}$}\label{app:inverseAnticomm}
	
	Consider the equation:
	\begin{align}
		(\LL_\sigma+\RR_\rho)[A]= \sigma A+ A\rho = X\,.\label{f1}
	\end{align}
	One can implicitly solve for $A$ as $A= (\LL_\sigma+\RR_\rho)^{-1}[X]$. In this section, we prove that this solution can be rewritten as:
	\begin{align}
		A = \int_0^{\infty} \dt \; e^{-t\sigma}\,X\,e^{-t\rho}\,.\label{cf:eq:148x}
	\end{align}
	If this were the case, it follows from  Eq.~\eqref{f1} that multiplying the right hand side of the equation by 
	$(\LL_\sigma+\RR_\rho)$ would give back $X$. Indeed, this can be straightforwardly verified as:
	\begin{align}
		(\LL_\sigma+\RR_\rho)\int_0^{\infty} \dt \; e^{-t\sigma}\,X\,e^{-t\rho} &= \int_0^{\infty} \dt \; \norbra{\sigma\,e^{-t\sigma}\,X\,e^{-t\rho}+e^{-t\sigma}\,X\,e^{-t\rho}\,\rho} =\\
		&= -\int_0^{\infty} \dt \; \norbra{\frac{\de}{\dt}\,e^{-t\sigma}\,X\,e^{-t\rho}} = \norbra{e^{-t\sigma}\,X\,e^{-t\rho}}\bigg|_{t=\infty}^{t=0}= X \,,
	\end{align}
	which proves the claim.

	\bibliography{Bib.bib}
\end{document}